\title{Integer Linear-Exponential Programming in~\np\\ by Quantifier Elimination
}
\titlerunning{Integer Linear-Exponential Programming in NP by Quantifier Elimination}
\author{Dmitry Chistikov\footnote{During the work on this paper, DC was a visitor to the Max Planck Institute for Software Systems (MPI-SWS), Kaiserslautern and Saarbr\"ucken, Germany, a visiting fellow at St~Catherine's College and a visitor to the Department of Computer Science at the University of Oxford, United Kingdom.}}{Centre for Discrete Mathematics and its Applications (DIMAP) \&\newline{}
Department of Computer Science, University of Warwick, Coventry, UK}{d.chistikov@warwick.ac.uk}{0000-0001-9055-918X}{Supported in part by the Engineering and Physical Sciences Research Council [EP/X03027X/1].}
\author{Alessio Mansutti}{IMDEA Software Institute, Madrid, Spain}{alessio.mansutti@imdea.org}{0000-0002-1104-7299}{funded by the Madrid Regional Government (César Nombela grant 2023-T1/COM-29001), and by MCIN/AEI/10.13039/501100011033/FEDER, EU
(grant  PID2022-138072OB-I00).}
\author{Mikhail R.~Starchak}{St.~Petersburg State University, Russia}{m.starchak@spbu.ru}{0000-0002-2288-9483}{Supported by the Russian Science Foundation, project 23-71-01041.}
\authorrunning{D.~Chistikov, A.~Mansutti, and M.~Starchak} 
\keywords{decision procedures, integer programming, quantifier elimination}
\newcommand{\tikzmark}[1]{\tikz[overlay,remember picture] \node (#1) {};}
\newcommand*{\AddNote}[4]{%
    \begin{tikzpicture}[overlay, remember picture]
        \draw [decoration={brace,amplitude=0.4em},decorate,ultra thick,gray]
          ($(#3)!(#2.south)!($(#3)-(0,1)$)$) --  
          ($(#3)!(#1.south)!($(#3)-(0,1)$)$)  
                node [align=center, text width=2.5cm, pos=0.5, right=-1.9cm] {\rotatebox{90}{\textsc{#4}}};
    \end{tikzpicture}
}%
\definecolor{newblue}{rgb}{0.1,0.1,0.7}
\algnewcommand\algorithmiccontext{\textbf{Context:}}
\algnewcommand\Context{\item[\algorithmiccontext]}
\algnewcommand\algorithmicbranchoutput{\textbf{Branch Output:}}
\algnewcommand\BranchOutput{\item[\algorithmicbranchoutput]}
\algnewcommand\algorithmicndbranchoutput{\textbf{Output of each branch ($\beta$):}}
\algnewcommand\NDBranchOutput{\item[\algorithmicndbranchoutput]}
\algnewcommand\algorithmicglobaloutput{\textbf{Global output:}}
\algnewcommand\GlobalOutput{\item[\algorithmicglobaloutput]}
\algnewcommand\algorithmicglobalspec{\textbf{Ensuring:}}
\algnewcommand\GlobalSpec{\item[\algorithmicglobalspec]}
\algnewcommand\algorithmicswitch{\textbf{switch}}
\algnewcommand\algorithmiccase{\textbf{case}}
\algnewcommand\algorithmicforeach{\textbf{foreach}}
\algnewcommand\algorithmicnondet{\textbf{nondet}}
\algnewcommand\algorithmicor{\textbf{or}}
\algnewcommand\algorithmicassert{\textbf{assert}}
\algnewcommand\algorithmiclet{\textbf{let}}
\algnewcommand\Assert[1]{\State \algorithmicassert(#1)}
\newcounter{stepnum}
\newenvironment{specification}[1][htb]{%
  \captionsetup[algorithm]{name=Specification}
  \renewcommand{\thealgorithm}{\arabic{stepnum}}
  \begin{algorithm}[#1]%
  }{\end{algorithm}
    \stepcounter{stepnum}
    \renewcommand{\thealgorithm}{\arabic{algorithm}}
  }
\colorlet{dmitry}{green!50!black!80}
\colorlet{alessio}{blue!70!black!80}
\definecolor{mikhail}{HTML}{9F2D20}
\newcommand{\negspace}{}
\newcommand{\newextmathcommand}[2]{%
    \newcommand{#1}{\ensuremath{#2}\xspace}
}
\newcommand{\renewextmathcommand}[2]{%
    \renewcommand{#1}{\ensuremath{#2}\xspace}
}
\newcommand{\labeltext}[2]{%
  #1%
  \@bsphack%
  \csname phantomsection\endcsname 
  \def\@currentlabel{#1}{\label{#2}}%
  \@esphack%
}
\newcommand{\customlabel}[2]{%
  \@bsphack%
  \csname phantomsection\endcsname 
  \def\@currentlabel{#1}{\label{#2}}%
  \@esphack%
}
\newenvironment{ddescription}%
               {\list{}{\leftmargin=0pt
                        \labelwidth\z@ \itemindent-\leftmargin
                        }}%
               {\endlist}
\theoremstyle{definition}
\newtheorem*{myextpa}{Extension to existential Presburger arithmetic}
\newextmathcommand{\N}{\mathbb{N}}
\newextmathcommand{\Np}{\Nat_+}
\newextmathcommand{\Z}{\mathbb{Z}}
\newextmathcommand{\Q}{\mathbb{Q}}
\newextmathcommand{\R}{\mathbb{R}}
\newextmathcommand{\ptime}{\textup{\textsc{PTime}}\xspace}
\newextmathcommand{\np}{\textup{\textsc{NP}}\xspace}
\newextmathcommand{\pspace}{\textup{\textsc{PSpace}}\xspace}
\newextmathcommand{\nexptime}{\textup{\textsc{NExpTime}}\xspace}
\newextmathcommand{\expspace}{\textup{\textsc{ExpSpace}}\xspace}
\newextmathcommand{\threeexptime}{\textup{\textsc{3ExpTime}}\xspace}
\newextmathcommand{\tower}{\textup{\textsc{Tower}}\xspace}
\renewextmathcommand{\phi}{\varphi}
\renewcommand{\vec}{\bm}
\newextmathcommand{\lcm}{{\rm lcm}}
\newextmathcommand{\totient}{\Phi}
\newcommand{\abs}[1]{\ensuremath{\left|#1\right|}\xspace}
\newcommand{\ceil}[1]{\ensuremath{\left\lceil#1\right\rceil}\xspace}
\newcommand{\floor}[1]{\ensuremath{\left\lfloor#1\right\rfloor}\xspace}
\newcommand{\frpart}[1]{\ensuremath{\left\{#1\right\}}\xspace}
  \DeclareSymbolFont{stix@largesymbols}{LS2}{stixex}{m}{n}
  \DeclareMathDelimiter{\lBrace}{\mathopen} {stix@largesymbols}{"E8}%
                                            {stix@largesymbols}{"0E}
  \DeclareMathDelimiter{\rBrace}{\mathclose}{stix@largesymbols}{"E9}%
                                            {stix@largesymbols}{"0F}
\newcommand{\defeq}{\coloneqq}
\newcommand{\eqdef}{\defeq}
\newcommand{\sub}[2]{\ensuremath{[#1\,/\,#2]}\xspace}
\newcommand{\vigsub}[2]{\ensuremath{{[\mkern-1mu[#1\mathbin{/}#2]\mkern-1mu]}}\xspace}
\newcommand{\varsofsub}{\mathrm{vars}}
\newcommand{\norm}[1]{\ensuremath{\lVert{#1}\rVert}\xspace}
\newcommand{\onenorm}[1]{\ensuremath{\lVert{#1}\rVert_{1}}\xspace}
\newextmathcommand{\card}{\#}
\newcommand{\linnorm}[1]{\ensuremath{\lVert{#1}\rVert_{\!\mathfrak{L}}}\xspace}
\newextmathcommand{\V}{V_2}
\newextmathcommand{\fterms}{\textit{terms}}
\newextmathcommand{\fmod}{\textit{mod}}
\newextmathcommand{\divides}{\mathrel{|}}
\newextmathcommand{\fdiv}{\textit{div}}
\newextmathcommand{\lst}{\textit{lst}}
\newextmathcommand{\lead}{\ell}
\newextmathcommand{\prevlead}{p}
\newcommand{\adjj}[1]{\mathrm{adj}(#1)}
\definecolor{light-gray}{gray}{0.95}
\newcolumntype{g}{>{\columncolor{light-gray}}r}
\newcommand{\rad}{\text{rad}}
\newcommand{\srad}{\text{S}}
\newcommand{\israd}{\mathbb{S}}
\newcommand{\tocheck}[1]{#1}
\newcommand{\myif}{\textbf{if}\xspace}
\newcommand{\myelse}{\textbf{else}\xspace}
\newcommand{\mycontinue}{\textbf{continue}\xspace}
\newcommand{\myguess}{\textbf{guess}\xspace}
\newextmathcommand{\GaussianQE}{\textsc{GaussQE}}
\newextmathcommand{\ExtGaussianQE}{\textsc{ExtGaussQE}}
\newextmathcommand{\Master}{\textsc{LinExpSat}}
\newextmathcommand{\ElimMaxVar}{\textsc{ElimMaxVar}}
\newextmathcommand{\SolvePrimitive}{\textsc{SolvePrimitive}}
\newcommand{\auxi}{\vec u}
\newcommand{\quot}{\vec q}
\newcommand{\rema}{\vec r}
\newcommand{\xquot}{x'}
\newcommand{\xrema}{z'}
\begin{document}

\maketitle

\begin{abstract}
  This paper provides an \np procedure that decides whether a \emph{linear-exponential system} of constraints has an integer solution.
  Linear-exponential systems extend standard integer linear programs with exponential terms $2^x$ and remainder terms ${(x \bmod 2^y)}$.
  Our result implies that the existential theory of the structure 
  $(\N,0,1,+,2^{(\cdot)},\V(\cdot,\cdot),\leq)$ has an \np-complete satisfiability problem,
  thus improving upon a recent \expspace upper bound.
  This theory extends the existential fragment of Presburger arithmetic
  with the exponentiation function $x \mapsto 2^x$ and the binary predicate 
  $\V(x,y)$ that is true whenever $y \geq 1$ is the largest power of $2$ dividing $x$.

  Our procedure for solving linear-exponential systems
  uses the method of quantifier elimination.
  As a by-product, we modify the classical Gaussian variable elimination into a non-deterministic 
  polynomial-time procedure for integer linear programming (or: existential Presburger arithmetic).
\end{abstract}

\section{Introduction}
\label{sec:introduction}

Integer (linear) programming is the problem of deciding whether a system of
linear inequalities has a solution over the integers ($\Z$).
It is a textbook fact that this problem is \np-complete; however, proof of
membership in \np is not trivial.
It is established~\cite{BT76,P81}
by showing that, if a given system has a solution over $\Z$, then it
also has a \emph{small} solution. 
The latter means that the bit size of all components can be bounded from
above by a polynomial in the bit size of the system.
Integer programming is an important language that can encode many combinatorial problems
and constraints from multiple application domains; see, e.g.,~\cite{50yIP,Schrijver99}.

In this paper we consider more general systems of constraints,
which may contain not only linear inequalities (as in integer programming)
but also constraints of the form $y = 2^x$ (exponentiation base~$2$)
and $z = (x \bmod 2^y)$ (remainder modulo powers of~$2$).
Equivalently, and embedding both new operations into a uniform syntax,
we look at a conjunction of inequalities of the form
\begin{equation}
\label{eq:exp-lin-inequality}
\sum\nolimits_{i=1}^n \Big(a_i \cdot x_i + b_i \cdot 2^{x_i} + \sum\nolimits_{j=1}^n c_{i,j} \cdot (x_i \bmod 2^{x_j})\Big) + d \le 0\,,
\end{equation}
referred to as an \emph{(integer) linear-exponential system}.
In fact, the linear-exponential systems that we consider can also feature equalities $=$ and strict inequalities $<$\,.%
{\interfootnotelinepenalty=10000
\footnote{While equalities are considered for convenience only (they can be encoded with a pair of inequalities~$\leq$), the addition of~$<$ is of interest. Indeed, differently from standard integer programming, one cannot define~$<$ in terms of~$\leq$,
since~$2^y$ is not an integer for $y < 0$.
Observe that~$(x \bmod 2^y) = 0$ when $y < 0$, 
because over the reals~$(a \bmod m) = a - m \floor{\frac{a}{m}}$,
where $\floor{.}$ is the floor function.
}}

Observe that a linear-exponential system of the form $x_1 = 1 \land \bigwedge_{i=1}^n (x_{i+1} = 2^{x_i})$ states that $x_{n+1}$ is the tower of 2s of height~$n$. This number is huge, and makes proving an analogue of the small solution property described above a hopeless task in our setting.
This obstacle was recently shown avoidable~\cite{DraghiciHM24}, however, and
an exponential-space procedure for linear-exponential programs was found,
relying on automata-theoretic methods.
Our main result is that, in fact, the problem belongs to \np.

\begin{theorem}
  \label{thm:buchi-semenov-in-np-z}
  Deciding whether a linear-exponential system over $\Z$ has a
  solution is in \np.
\end{theorem}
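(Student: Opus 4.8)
The plan is to decide satisfiability of a linear-exponential system by a quantifier-elimination procedure that repeatedly eliminates the variable carrying the largest exponent, guessing along the way a small amount of combinatorial data so that the whole run is nondeterministic polynomial time. First I would fix an arbitrary putative solution and guess an ordering $x_{\sigma(1)} \le x_{\sigma(2)} \le \dots \le x_{\sigma(n)}$ of the variables (together with which of these inequalities are strict), so that we know which variable attains the maximum; call it $x = x_{\sigma(n)}$. Under this ordering, every term $2^{x_i}$ with $x_i < x$ is tiny compared with $2^x$, and every remainder term $(x_i \bmod 2^{x_j})$ is determined in a controlled way by whether $x_j \le x_i$ or not. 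The key structural observation to establish is a \emph{spectrum dichotomy} for the maximal variable: either $x$ is ``small'' (bounded by a polynomial-size number, in which case we can guess its value in binary and substitute), or $x$ is ``large'', in which case $2^x$ dominates all linear contributions and all lower exponentials, so the sign of each inequality \eqref{eq:exp-lin-inequality} is forced by the coefficient $b$ of $2^x$ together with the coefficients of the remainder terms $(x_i \bmod 2^x)$; here one writes $x_i = q_i \cdot 2^x + r_i$ with $0 \le r_i < 2^x$, i.e.\ $r_i = (x_i \bmod 2^x)$, but since $x_i \le x < 2^x$ for large $x$ one in fact has $q_i = 0$ and $r_i = x_i$, which collapses the remainder terms involving the maximal modulus to ordinary linear terms. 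This is the heart of \ElimMaxVar: after the dichotomy, the exponential $2^x$ and the modulus $2^x$ disappear from the system, replaced by finitely many cases, and what remains is a linear-exponential system on the strictly smaller variable set $\{x_{\sigma(1)},\dots,x_{\sigma(n-1)}\}$ together with fresh linear constraints recording the guessed sign information.

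Next I would iterate \ElimMaxVar\ $n$ times. The danger is the usual one for quantifier elimination over the integers: the bit-lengths of coefficients can blow up, and each elimination step may split into several cases, so a naive recursion gives an exponential-size object. To keep the procedure in \np\ I would maintain the invariant that after $k$ eliminations the surviving system has coefficients of bit-length polynomial in the original input and in $k$; this requires, in each step, reducing the newly generated linear constraints modulo appropriate least common multiples and clearing denominators in a controlled way — exactly the place where the modified Gaussian elimination advertised in the abstract comes in. Concretely, equalities produced along the way are eliminated immediately by the non-deterministic polynomial-time Gaussian procedure (\GaussianQE), which substitutes a variable while guessing the small quotient/remainder data needed so that the substitution stays integral and small; the restated Presburger fact (existential Presburger arithmetic is in \np, via the small-solution property of \cite{BT76,P81}) is invoked to discharge any purely linear residue at the end, or equivalently is re-derived as the base case of the recursion once all exponential and remainder terms are gone.

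Putting it together: the overall algorithm guesses the variable ordering and strictness pattern ($O(n\log n)$ bits), then performs $n$ rounds of \ElimMaxVar, in each round guessing whether the current maximal variable is small (and if so, its polynomially-many bits) or large (and if so, the forced signs), and finally checks the resulting purely linear system for integer feasibility in \np. Soundness is immediate because every guess corresponds to a genuine case that some solution satisfies; completeness requires showing that from any integer solution of the original system one can read off guesses that survive all $n$ rounds, which follows by induction from the spectrum dichotomy. The running time is polynomial provided the coefficient-growth invariant holds, so the single step I expect to be the main obstacle is precisely the quantitative control of bit-lengths through the $n$ eliminations — proving that the \ElimMaxVar / \GaussianQE\ combination does not amplify the magnitude of coefficients beyond a fixed polynomial, despite repeatedly introducing moduli and new variables. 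Everything else is bookkeeping around the dominance of $2^x$ over the linear part when $x$ is large.
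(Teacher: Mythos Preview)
Your ``spectrum dichotomy'' on the maximal variable $x$ is where the plan breaks. You claim that when $x$ is large, $2^x$ dominates all lower exponentials and so the sign of each constraint is forced by the coefficient of $2^x$. That is false: a constraint such as $2^x - 3\cdot 2^y \le 0$, where $2^y$ is the second-largest exponential, has its truth value determined by $x-y$, not by the magnitude of $x$. Both $x$ and $y$ may be towers of $2$s while $x-y$ is $1$ or $2$, so no polynomial threshold on $x$ separates the cases. The paper's fix is structural, not just a sharper threshold: before touching $2^x$, every variable $z$ that may exceed $2^y$ is rewritten as $z = q\cdot 2^y + r$ with $0 \le r < 2^y$ (the \emph{quotient system}), and each constraint is split into a ``left part'' in the quotient variables and $2^{x-y}$, and a ``right part'' in the remainders, which are bounded by $2^y$. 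Only then is the small/large dichotomy applied --- to $u = x-y$, inside \SolvePrimitive\ --- and even there the ``large'' branch must linearise divisibilities $d \mid 2^u - c$ via a discrete-logarithm and multiplicative-order computation, which your outline does not mention.

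You correctly identify that controlling bit growth across the $n$ eliminations is the main obstacle, but you offer no mechanism, and the paper needs several nontrivial ones: Bareiss-style division in \GaussianQE\ to keep coefficients as subdeterminants; a memoisation map $\Delta$ so that each distinct ``least significant part'' $\rho$ contributes only two inequalities to $\psi$ rather than one per occurrence (otherwise the constraint count blows up); and a number-theoretic lemma bounding iterates of $x \mapsto \lcm(x,\totient(\alpha x))$ to control the growth of the moduli introduced by the multiplicative-order step. Finally, the statement is over $\Z$, and the paper's proof of \Cref{thm:buchi-semenov-in-np-z} is actually a reduction to the $\N$ case: one guesses the signs of the variables, which makes some exponents negative, and then eliminates the resulting $2^{-x_i}$ terms by a further case split --- a step your plan omits entirely.
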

We highlight that the choice of the base~$2$ for the exponentials
is for the convenience of exposition:
our result holds for any positive integer base
given in binary as part of the~input. 



As an example showcasing the power of integer linear-exponential systems,
consider computation of discrete logarithm base~$2$:
given non-negative integers~$m,r \in \N$, producing
an $x \in \N$ such that $2^x - r$ is divisible by~$m$.
As sketched in~\cite{Haase20},
this problem is reducible to checking feasibility
(existence of solutions)
of at most $\log m$ linear-exponential systems in two variables,
by a binary search for a suitable exponent~$x$.
Hence, improving Theorem~\ref{thm:buchi-semenov-in-np-z} from \np to \ptime
for the case of linear-exponential systems with a \emph{fixed} number of variables 
would require a major breakthrough in number theory.
In contrast, under this restriction, feasibility of standard integer linear programs can be determined in \ptime~\cite{Lenstra83}.

For the authors of this paper,
the main motivation for looking at linear-exponential
systems stems from logic.
Consider the first-order theory of the structure
${(\N,0,1,+,2^{(\cdot)},\V(\cdot,\cdot),\leq)}$,
which we refer to as the \emph{B\"uchi--Semenov arithmetic}.
In this structure, the signature~$(0,1,+,\leq)$ of linear arithmetic is extended with the function 
symbol $2^{(\cdot)}$, 
interpreted as the function $x \mapsto 2^{x}$, 
and the binary predicate symbol $\V$, interpreted as
$\{(x,y) \in \N \times \N : {y} \text{ is the largest}$ 
$\text{power of $2$ that divides $x$}\}$. 
Importantly,
the predicate $\V$ can be replaced in this definition
with the function ${x \bmod 2^y}$, because the two are mutually expressible:
\begin{align*}
  \V(x,y) 
  & \ \iff \exists v  \,\big( 2\cdot y = 2^v \land 2 \cdot (x \bmod 2^v) = 2^v\big),
  \\
  (x \bmod 2^y) = z
  & \ \iff  z \leq 2^y-1 \land \big(x = z \lor \exists u \, ( \V(x-z,2^u) \land 2^y \leq 2^u)\big).
\end{align*}
Above, the subtraction symbol can be expressed in 
the theory in the obvious way (perhaps with the help of an auxiliary existential quantifier for expressing $x-z$). 


B\"uchi--Semenov
arithmetic subsumes logical theories known as B\"uchi arithmetic
and
Semenov arithmetic; see Section~\ref{sec:related}.
%
As a consequence of \Cref{thm:buchi-semenov-in-np-z}, we
show: 

\begin{theorem}
  \label{thm:buchi-semenov-np}
  The satisfiability problem of existential B\"uchi--Semenov
  arithmetic is in \np.
\end{theorem}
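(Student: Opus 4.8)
The plan is to establish \Cref{thm:buchi-semenov-np} by a nondeterministic polynomial-time reduction from the satisfiability problem of existential B\"uchi--Semenov arithmetic to the feasibility problem for linear-exponential systems over~$\Z$, and then to invoke \Cref{thm:buchi-semenov-in-np-z}. Throughout, the input is a sentence $\exists x_1 \cdots \exists x_n\,\phi$ with $\phi$ quantifier-free over the signature $(0,1,+,2^{(\cdot)},\V,\leq)$, and all original variables range over~$\N$. The reduction has a deterministic preprocessing phase that massages~$\phi$ into a Boolean combination of linear-exponential (in)equalities, followed by a nondeterministic phase that turns such a Boolean combination into a single linear-exponential system.

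\textbf{Deterministic preprocessing.} First I would eliminate $\V$: every atom $\V(s,t)$ and every negated atom $\neg\V(s,t)$ of~$\phi$ is rewritten into an equivalent positive existential formula over the extended signature that additionally provides the term $(x \bmod 2^y)$ and the relation~$<$. For $\V$ this is exactly the equivalence already recorded in the excerpt, $\V(s,t) \iff \exists v\,\big(2\cdot t = 2^v \land 2\cdot(s \bmod 2^v) = 2^v\big)$. For $\neg\V$ I would use that whenever $s \geq 1$ the value $t'$ with $\V(s,t')$ is unique (it is the largest power of~$2$ dividing~$s$), so that $\neg\V(s,t)$ is equivalent to $s = 0 \lor \exists t'\,\big(\V(s,t') \land (t < t' \lor t' < t)\big)$, and then expand the inner $\V(s,t')$ as above. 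Each such rewriting adds one fresh existentially quantified variable. Next I would purify the terms: processing innermost subterms first, replace each occurrence of $2^u$ by a fresh variable $w$ with $w = 2^{w'}$ and $w' = u$, and each occurrence of $(u \bmod 2^{u'})$ by a fresh variable $w$ with $w = (w'' \bmod 2^{w'})$, $w'' = u$, and $w' = u'$; the defining equations are linear. Adding a constraint $x \geq 0$ for every variable (old and new), one obtains an existential formula, still of polynomial size despite any nested exponentials, whose quantifier-free matrix is a Boolean combination of inequalities of the shape~\eqref{eq:exp-lin-inequality}, possibly with equalities and strict inequalities of the same shape.

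\textbf{Nondeterministic phase.} To remove the Boolean structure, I would guess a truth value for each of the polynomially many atoms of the matrix, check in polynomial time that this guess satisfies the Boolean skeleton of the matrix, and then keep the atoms guessed true together with the negations of those guessed false. A negated non-strict inequality $\neg(t_1 \leq t_2)$ becomes $t_2 + 1 \leq t_1$ (all quantities are integers and all variables are non-negative), a negated strict inequality $\neg(t_1 < t_2)$ becomes $t_2 \leq t_1$, and a negated equality $\neg(t_1 = t_2)$ becomes the disjunction $t_1 + 1 \leq t_2 \lor t_2 + 1 \leq t_1$, whose disjunct I would again guess. On each nondeterministic branch what remains is a single linear-exponential system over~$\Z$, and the disjunction over all branches is equi-satisfiable with the original sentence over~$\N$. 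Running \Cref{thm:buchi-semenov-in-np-z} on the system obtained in each branch, and observing that the whole transformation runs in nondeterministic polynomial time, yields the claimed \np upper bound.

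\textbf{Main obstacle.} The one place that needs genuine care rather than routine bookkeeping is the elimination of~$\V$, and in particular of $\neg\V$: one must make sure that all degenerate cases are covered --- $t$ not a power of~$2$, $t < 1$, and $s = 0$ --- which is exactly where the uniqueness of the $2$-adic valuation is used, and one must check that the combined translation stays polynomial in the presence of towers of exponentials (it does, since each nested exponential contributes only one fresh variable and one linear defining equation). Everything else --- purification, guessing a satisfying Boolean assignment, and the passage from~$\N$ to~$\Z$ via non-negativity constraints --- is standard; \Cref{thm:buchi-semenov-in-np-z} is the component doing the real work.
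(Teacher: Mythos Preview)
Your proposal is correct and follows essentially the same route as the paper: eliminate $\V$ (and its negation) via the $\bmod$ encoding, flatten non-variable arguments of $2^{(\cdot)}$ and $\V$ by introducing fresh variables, resolve the Boolean structure nondeterministically, and hand the resulting linear-exponential system to the main feasibility procedure. The differences are cosmetic---the paper first passes to negation normal form and then picks disjuncts of a positive Boolean combination rather than guessing atom truth values, and it invokes the $\N$ procedure (\Cref{prop:master-correct,prop:master-in-np}) directly rather than \Cref{thm:buchi-semenov-in-np-z} with added $x\geq 0$ constraints---and your handling of $\lnot\V$ is in fact more careful about the $s=0$ corner case than the paper's one-line rewrite.
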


Theorems~\ref{thm:buchi-semenov-in-np-z} and~\ref{thm:buchi-semenov-np} improve upon
several results in the literature.
The most recent such result is the exponential-space procedure~\cite{DraghiciHM24} already mentioned above.
In~2023, two elementary decision procedures were developed concurrently and independently for
integer linear programs with exponentiation constraints ($y = 2^x$),
or equivalently for the existential fragment of Semenov arithmetic:
they run in non-deterministic exponential time~\cite{BenediktCM23}
and in triply exponential time~\cite{ZhanEtAl23}, respectively.
Finally, another result subsumed by~\Cref{thm:buchi-semenov-np} is
the membership in \np for the existential fragment of B\"uchi arithmetic~\cite{GuepinHW19}.

Theorem~\ref{thm:buchi-semenov-in-np-z} is established by designing
a non-deterministic polynomial-time decision procedure, which,
unlike those in papers~\cite{DraghiciHM24,GuepinHW19}
but similarly to~\cite{BenediktCM23,ZhanEtAl23},
avoids automata-theoretic methods and instead relies on
\emph{quantifier elimination}.
This is a powerful method (see, e.g.,~\cite{Cooper72} as well as Section~\ref{sec:related}) that
can be seen as a bridge between logic and integer programming.
Presburger~\cite{Pre29} used it to show decidability of linear integer arithmetic
(and Tarski for real arithmetic with addition and multiplication).
For systems of linear equations, quantifier elimination is essentially
Gaussian elimination.
As a little stepping stone, which was in fact one of the springboards for our paper, we extend
the \ptime integer Gaussian elimination procedure by Bareiss~\cite{Bareiss68,Weispfenning97}
into an \np procedure for solving systems of inequalities
over~$\Z$ (thus re-proving membership of integer linear programming in \np).

\subparagraph*{A look ahead.}
The following Section~\ref{sec:related} recalls some relevant related work
on logical theories of arithmetic.
At the end of the paper (Section~\ref{sec:conclusion}) this material is complemented
by a discussion of future research directions, along with several more
key references.

The \np procedure for integer programming is
given as \Cref{algo:gaussianqe} in Section~\ref{sec:gaussian-elimination}.
In this extended abstract, we do not provide a proof of correctness or analysis
of the running time, but instead compare the algorithm
with the classic Gauss--Jordan variable elimination and
with Bareiss' method for systems of equations (that is, equalities).
Necessary definitions and background information are provided
in the Preliminaries (Section~\ref{sec:preliminaries}).

Our core result is an \np procedure for solving linear-exponential systems over~$\N$.
Its pseudocode is split into \Cref{algo:master,algo:elimmaxvar,algo:linearize}.
These are presented in the same imperative style with non-deterministic branching
as \Cref{algo:gaussianqe}, and in fact
\Cref{algo:elimmaxvar} relies on~\Cref{algo:gaussianqe}.
\Cref{sec:procedure} provides a high-level overview of all three algorithms together.
To this end, we introduce several new auxiliary concepts: quotient systems and quotient terms,
delayed substitution, and primitive linear-exponential systems. 
After this, technical details of~\Cref{algo:master,algo:elimmaxvar,algo:linearize} are given.
\Cref{sec:procedure-details} sketches key ideas behind the correctness argument, 
and the text within this section is thus to be read alongside the pseudocode of Algorithms.
An overview of the analysis of the worst-case running time is presented in \Cref{sec:complexity}.
The basic definitions are again those from Preliminaries, and
of particular relevance are the subtleties of the action of term substitutions.

Building on the core procedure,
in \Cref{sec:wrappers}
we show how to solve in \np linear-exponential systems not only over \N but also over \Z
(\Cref{thm:buchi-semenov-in-np-z})
and how to decide B\"uchi--Semenov arithmetic in \np (\Cref{thm:buchi-semenov-np}). 
The modifications to this procedure that enable proving both results for
a different integer base $b > 2$ for the exponentials are given in \Cref{app:inter-def}.

\section{Arithmetic theories of B\"uchi, Semenov, and Presburger}
\label{sec:related}
In this section, we review results on arithmetic theories that are the most relevant to~our~study. 

\emph{B\"uchi arithmetic} is the first-order theory of the structure ${(\N,0,1,+,\V(\cdot,\cdot),\leq)}$.
By the celebrated
  B\"uchi--Bruy\`ere theorem~\cite{BruyereHMV94,Buchi60},
  a set $S \subseteq \N^d$ is definable in
  ${(\N,0,1,+,\V(\cdot,\cdot),\leq)}$ if and only if the
  representation of $S$ as a language over the alphabet 
  $\{0,1\}^d$ is recognisable by a deterministic finite
  automaton (DFA). The theorem is effective, and implies that
  the satisfiability problem for B\"uchi arithmetic is in
  \tower (in fact, \tower-complete)~\cite{Schmitz16,StockmeyerM73}.
  The situation is different for the existential
  fragment of this theory. The satisfiability problem is 
  now~\np-complete~\cite{GuepinHW19}, but 
  existential formulae are less expressive~\cite{HaaseR21}. 
  In particular, this fragment fails to capture the binary language $\{10 , 01\}^*$. 
  Decision procedures for
  B\"uchi arithmetic have been successfully implemented and
  used to automatically prove many results in
  combinatorics on words and number theory~\cite{Shallit22}.

\emph{Semenov arithmetic} is the first-order theory of the structure $(\N,0,1,+,2^{(\cdot)},\leq)$.
Its decidability follows from
  the classical work of Semenov on sparse
  predicates~\cite{Semenov80,Semenov84}, and an explicit
  decision procedure was given by~Cherlin and
  Point~\cite{CherlinP86,Point00}. Similarly to B\"uchi
  arithmetic, Semenov arithmetic is \tower-complete~\cite{ComptonH90,Point07}; however, its existential fragment has
  only been known to be in \nexptime~\cite{BenediktCM23}.
  The paper~\cite{ZhanEtAl23} provides
  applications of this fragment to solving
  systems of string constraints with string-to-integer
  conversion functions. 
  
\emph{B\"uchi--Semenov arithmetic} is a natural combination of these two theories.
Differently from B\"uchi and Semenov arithmetics,
  the $\exists^*\forall^*$-fragment of this logic
  is undecidable~\cite{CherlinP86}. 
  In view of this, the recent result
  showing that the satisfiability problem of existential
  B\"uchi--Semenov arithmetic is in~\expspace~\cite{DraghiciHM24} is surprising. 
  The proof technique, moreover, establishes the membership in~\expspace of the
  extension of existential \mbox{B\"uchi--Semenov} arithmetic with
  arbitrary regular predicates given on input as DFAs. 
  Since this extension can express the
  intersection non-emptiness problem for DFAs, its satisfiability problem is~\pspace-hard~\cite{Kozen77}. 
  The decision procedure of~\cite{DraghiciHM24} was applied
  to give an algorithm for solving real-world instances of word equations with length
  constraints. 

Both first-order theories of the structures $(\N,0,1,+,\leq)$ and $(\Z,0,1,+,\leq)$ are usually referred to as \emph{Presburger arithmetic},
because the decision problems for these theories are logspace inter-reducible, 
meaning that each structure can be interpreted in the other.
The procedures that we propose in this paper
  build upon a version of the quantifier-elimination procedure for 
  the first-order theory of the structure $(\Z,0,1,+,\leq)$. 
  Standard procedures for this theory~\cite{Cooper72,Oppen73,Weispfenning90} 
  are known to be suboptimal when applied to the existential fragment:
  throughout these procedures, the bit size of the numbers in the formulae grow exponentially faster than in, e.g.,~geometric procedures for the theory~\cite{ChistikovHM22}.
  A remedy to this well-known issue was proposed by Weispfenning~\cite[Corollary~4.3]{Weispfenning97}.     
  We~develop~his~observation~in~\Cref{sec:gaussian-elimination}.

\section{Preliminaries}
\label{sec:preliminaries}

We usually write $a,b,c,\dots$ for integers, $x,y,z,\dots$ for integer variables, and $\vec a,\vec b,\vec c,\dots$ and $\vec x,\vec y,\vec z,\dots$ for vectors of those.
By $\vec x \setminus y$ we denote the vector obtained by removing the variable $y$ from $\vec x$.
We denote linear-exponential systems and logical formulae with the letters $\phi,\chi,\psi,\dots$, and write $\phi(\vec x)$ when the (free) variables of $\phi$ are among $\vec x$.

For $a \in \R$, we write $\abs{a}$, $\ceil{a}$, and $\log a$ for the \emph{absolute value}, \emph{ceiling},
and (if $a > 0$) the \emph{binary logarithm} of $a$.
All numbers encountered by our algorithm are encoded in binary;
note that $n \in \N$ can be represented using $\ceil{\log(n+1)}$ bits. For $n,m \in \Z$, denote $[n,m] \coloneqq \{n,n+1,\dots,m\}$.
The set~$\N$ of non-negative integers contains~$0$.

\subparagraph*{Terms.}
\addcontentsline{toc}{subsection}{\emph{Definition:} Linear-exponential terms}
As in~\Cref{eq:exp-lin-inequality}, 
a \emph{(linear-exponential) term} is an expression of the form 
\begin{equation}
  \label{eq:a-term}
  \sum\nolimits_{i=1}^n \big(a_i \cdot x_i + b_i \cdot 2^{x_i} + \sum\nolimits_{j=1}^n c_{i,j} \cdot (x_i \bmod 2^{x_j})\big) + d,
\end{equation}
where $a_i,b_i,c_{i,j} \in \Z$ are the \emph{coefficients} of the term and $d \in \Z$ is its \emph{constant}. 
If all $b_i$~and~$c_{i,j}$ are zero then the term is said to be \emph{linear}.
We denote terms by the letters~$\rho,\sigma,\tau,\dots$, and write $\tau(\vec x)$ if all variables of the term~$\tau$ are in $\vec x$. 
For a term $\tau$ in~\Cref{eq:a-term},
its $1$-norm is $\onenorm{\tau} \coloneqq \sum_{i=1}^n (\abs{a_i} + \abs{b_i} + \sum_{j=1}^n \abs{c_{i,j}}) + \abs{d}$.

We use the words `system' and `conjunction' of constraints interchangeably.
While equalities and inequalities of a linear-exponential system are always of the form $\tau = 0$, $\tau \leq 0$, and $\tau < 0$, 
for the convenience of exposition we often rearrange left- and right-hand sides and write, e.g., $\tau_1 \leq \tau_2$.
In our procedures, linear-exponential systems 
may contain equalities, inequalities, and also \emph{divisibility constraints} $d \divides \tau$, 
where $\tau$ is a term as in~\Cref{eq:a-term}, $d \in \Z$ is non-zero, and $\divides$ is the \emph{divisibility predicate},
$\{(d,n) \in \Z \times \Z : \text{$n = kd$ for some $k \in \Z$} \}$. 
We write $\fmod(\phi)$ for the (positive) least common multiple of all divisors $d$ appearing in divisibility constraints 
$d \divides \tau$ of a system~$\phi$.
For purely syntactic reasons, it is sometimes convenient to see a divisibility constraint $d \divides \tau_1 - \tau_2$ 
as a \emph{congruence} $\tau_1 \equiv_d \tau_2$, where $d \geq 1$ with no loss of generality.
We use the term \emph{divisibility constraint} also for these congruences.

\negspace
\subparagraph*{Substitutions.}
\addcontentsline{toc}{subsection}{\emph{Definitions:} Substitutions}
Our procedure uses several special kinds of substitutions. Consider a linear-exponential system $\phi$, a term $\tau$, two variables $x$ and $y$, and $a \in \Z \setminus \{0\}$.
\begin{itemize}
\item We write $\phi\sub{\tau}{x}$ for the system obtained from $\phi$ by replacing every \emph{linear occurrence} of~$x$ outside modulo operators with $\tau$. To clarify,~this substitution only modifies the ``$a_i \cdot x_i$'' parts of the term in~\Cref{eq:a-term}, but not the ``$c_{i,j} \cdot (x_i \bmod 2^{x_j})$'' parts.
\item We write $\phi\sub{\tau}{x \bmod 2^y}$ and $\phi\sub{\tau}{2^x}$ for the system obtained from $\phi$ by replacing with $\tau$ every occurrence of $(x \bmod 2^y)$ and $2^x$, respectively.
\item We write $\phi\vigsub{\frac{\tau}{a}}{x}$ for the \emph{vigorous substitution} of $\frac{\tau}{a}$ for $x$. This substitution works as follows. \textbf{1:} Multiply every equality and inequality by $a$, flipping the signs of inequalities if $a < 0$; this step also applies to inequalities in which $x$ does not occur.
\textbf{2:} Multiply both sides of divisibility constraints in which $x$ occurs by $a$, i.e., $d \divides \tau$ becomes $a \cdot d \divides a \cdot \tau$.
\textbf{3:} Replace with $\tau$ every linear occurrence of $a \cdot x$ outside modulo operators.
Note that, thanks to step~1, each coefficient of $x$ in the system can be factorised as $a \cdot b$ for some~${b \in \Z}$.
\end{itemize}
We sometimes see substitutions~$\sub{\tau}{\tau'}$ as first-class citizens: functions mapping systems to systems. When adopting this perspective, $\phi\sub{\tau}{\tau'}$ is seen as a function application.
 

\section{Solving systems of linear inequalities over $\Z$}
\label{sec:gaussian-elimination}

In this section we present \Cref{algo:gaussianqe} (\GaussianQE),
a non-deterministic polynomial time quantifier elimination (QE) procedure for solving systems of linear inequalities over $\Z$, or in other words for
integer programming.
A constraint (equality, inequality, or divisibility) is \emph{linear}
if it only contains linear terms, as defined in Section~\ref{sec:preliminaries}.
Our algorithm assumes that all inequalities are non-strict ($\tau \le 0$).

We already mentioned in Section~\ref{sec:introduction} that
$\textsc{Integer Programming} \in \np$ is a standard result.
Intuitively, the range of each variable is infinite,
which necessitates a proof that a suitable (and \emph{small}) range
suffices; see, e.g.,~\cite{BT76,P81,GS78}. 
Methods developed in these references, however, do not enjoy
the flexibility of quantifier elimination: e.g.,
they either do not preserve formula equivalence
or are not actually removing quantifiers.

{
\begin{algorithm}[t]
  \caption{\GaussianQE: Gauss--Jordan elimination for integer programming.}
  \label{algo:gaussianqe}
  \begin{algorithmic}[1]
    \Require $\vec x:{}$ sequence of variables; \ $\phi(\vec x, \vec z) :{}$ system of linear constraints without $<$.
    \NDBranchOutput
      system $\psi_\beta(\vec z)$ of linear constraints. 
    \GlobalSpec
      $\bigvee_{\beta} \psi_\beta$ is equivalent to $\exists \vec x \,\phi$.
    \medskip
    \State replace each inequality $\tau \le 0$ in $\phi$ with $\tau + y = 0$, where $y$ is a fresh slack variable \label{gauss:introduce-slack}
    \State $\lead \gets 1$; \quad
           $s \gets ()$ \label{gauss:bareissfactors}
    \Comment{$s$ is an empty sequence of substitutions}
    \ForEach{$x$ in $\vec x$} \label{gauss:mainloop}
       \If{no equality of $\phi$ contains $x$} \mycontinue \EndIf \label{gauss:triviality-check}
       \State \myguess $a x + \tau = 0 \ \text{(with $a \ne 0$)} \gets \text{equality in $\phi$ that contains $x$}$ \label{gauss:guess-equation}
       \State $\prevlead \gets \lead$; \quad $\lead \gets a$
       \Comment{previous and current lead coefficients}
       \label{gauss:new-factor}
       \If{$\tau$ contains a slack variable $y$ not assigned by $s$}\label{gauss:branch}
	  \State \myguess $v \gets \text{integer in $[0, |a| \cdot \fmod(\phi) - 1]$}$ \label{gauss:guess-slack}
	  \State append $\sub{v}{y}$ to $s$ \label{gauss:append-seq}
       \EndIf
       \State $\phi \gets \phi \vigsub{\frac{-\tau}{a}}{x}$ \label{gauss:vig}\label{gauss:vigorous}
       \State divide each constraint in $\phi$ by \prevlead \label{gauss:divide}
       \Comment{in divisibility constraints, both sides are affected}
       \State $\phi \gets \phi \land (a \divides \tau)$ \label{gauss:restore}
    \EndFor \label{gauss:endmainloop}
    \ForEach{equality $\eta = 0$ of $\phi$ that contains some slack variable $y$ not assigned by $s$} \label{gauss:unslackloop}
       \State replace $\eta = 0$ with $\eta\sub{0}{y} \le 0$ \myif the coefficient at~$y$ is positive \myelse
                                 with $\eta\sub{0}{y} \ge 0$ \label{gauss:drop-slack}
    \EndFor
    \State apply substitutions of $s$ to $\phi$ \label{gauss:apply-seq}
    \ForEach{$x$ in $\vec x$ that occurs in $\phi$} \label{gauss:loop-rem}
       \State \myguess $r \gets \text{integer in $[0, \fmod(\phi)-1]$}$ \label{gauss:guess-rem}
       \State $\phi \gets \phi \sub{r}{x}$ \label{gauss:subst-rem}
    \EndFor
    \State \textbf{return} $\phi$
  \end{algorithmic}
\end{algorithm}%
}%

\begin{restatable}{theorem}{ThmGaussianQENP}
  \label{thm:gaussianQE-in-np}
  \Cref{algo:gaussianqe} (\GaussianQE) runs in non-deterministic polynomial time
  and,
  given a linear system $\phi(\vec x, \vec z)$ and variables $\vec x$,
  produces in each non-deterministic branch~$\beta$ a~linear system $\psi_\beta(\vec z)$
  such that
  $\bigvee_{\beta} \psi_\beta$ is equivalent to $\exists \vec x \,\phi$.
\end{restatable}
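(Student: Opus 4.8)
The plan is to verify two things separately: (i) that each non-deterministic branch of \Cref{algo:gaussianqe} runs in polynomial time and makes only polynomially many guesses of polynomially bounded size, and (ii) that the disjunction over all branches of the returned systems is logically equivalent to $\exists \vec x\,\phi$. For (i), the key observation is that the main loop (lines~\ref{gauss:mainloop}--\ref{gauss:endmainloop}) runs once per variable in $\vec x$, and in each iteration the only operation that can blow up coefficients is the vigorous substitution $\phi \gets \phi\vigsub{\frac{-\tau}{a}}{x}$ on line~\ref{gauss:vig}. Here the Bareiss-style trick enters: by recording the previous lead coefficient $\prevlead$ and dividing every constraint by it on line~\ref{gauss:divide}, each coefficient in $\phi$ stays a $2\times 2$ (or bounded-size) minor of the original coefficient matrix, so by the Hadamard bound its bit size remains polynomial throughout. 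One must check that the division on line~\ref{gauss:divide} is always exact --- this is precisely Bareiss' lemma, adapted to the fact that we also carry slack variables and divisibility constraints --- and that $\fmod(\phi)$ only grows polynomially (it is multiplied by $\abs{a}$ at most once per variable when a divisibility constraint $a \divides \tau$ is appended on line~\ref{gauss:restore}). Given these bounds, the guesses on lines~\ref{gauss:guess-slack} and~\ref{gauss:guess-rem} range over intervals of polynomial bit size, and the total number of guesses is $O(\abs{\vec x})$, so the branch is polynomial.

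For (ii), correctness, I would argue that each line preserves the invariant ``$\bigvee_\beta (\exists(\text{unassigned slack variables})\,\exists(\text{remaining }\vec x)\,\phi_\beta)$ is equivalent to $\exists\vec x\,\phi_{\text{orig}}$,'' where $\phi_\beta$ is the system in branch $\beta$. Line~\ref{gauss:introduce-slack} replaces $\tau \le 0$ by $\exists y\,(y\ge 0 \wedge \tau + y = 0)$, which is the standard slack encoding, except we defer the constraint $y \ge 0$ and instead remember that $y$ is a slack variable; this deferral is resolved correctly on lines~\ref{gauss:unslackloop}--\ref{gauss:drop-slack}, where setting $y = 0$ in an equality whose coefficient at $y$ is positive (resp.\ negative) and then reading it as $\le 0$ (resp.\ $\ge 0$) exactly captures $\exists y \ge 0$ applied to that equality (once $\phi$ is in a form where $y$ occurs in only that one equality). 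The heart of the loop is: guessing an equality $ax + \tau = 0$ containing $x$ is an equivalence (if no such equality exists, $x$ is unconstrained by equalities and survives to line~\ref{gauss:loop-rem}); over $\Z$, $\exists x\,(ax+\tau = 0 \wedge \psi)$ is equivalent to $a \divides \tau \wedge \psi\vigsub{\frac{-\tau}{a}}{x}$, which is exactly what lines~\ref{gauss:vig} and~\ref{gauss:restore} produce (the intervening division by $\prevlead$ being an equivalence because it is exact). Finally, lines~\ref{gauss:loop-rem}--\ref{gauss:subst-rem} handle any variable $x \in \vec x$ not eliminated by an equality: such $x$ appears only in divisibility constraints and inequalities, and one shows --- this is the key residual lemma, a one-variable version of the small-solution / periodicity argument --- that $\exists x\,\psi$ over $\Z$ is equivalent to $\bigvee_{r \in [0,\,\fmod(\psi)-1]} \psi\sub{r}{x}$, because an unconstrained-by-equality variable can, after fixing its residue modulo $\fmod(\psi)$, always be pushed to $\pm\infty$ to satisfy the inequalities (there being no equality to pin it down), so the value modulo $\fmod(\psi)$ is all that matters.

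The main obstacle I expect is the exactness of the Bareiss division on line~\ref{gauss:divide} in the presence of slack variables, divisibility constraints, and the interaction with the branching on line~\ref{gauss:branch}: one must set up the right invariant describing the entries of the current system as minors (or sub-determinants) of an appropriate extended matrix built from the original equalities, and show that this invariant survives the vigorous substitution and the guessed assignment $\sub{v}{y}$. A secondary subtlety is verifying that, at the point where the loops on lines~\ref{gauss:unslackloop} and~\ref{gauss:loop-rem} run, each remaining slack variable really does occur in at most one equality and each remaining $\vec x$-variable really is absent from all equalities --- i.e., that the elimination order and the vigorous substitutions have the claimed triangularising effect --- since the correctness of the deferred $y \ge 0$ handling and of the residue-guessing step both depend on this structural fact.
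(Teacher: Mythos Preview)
Your overall plan is on the right track --- matrix-of-minors invariants for the Bareiss division and an equivalence-preserving invariant for correctness --- but there is a genuine gap in your correctness argument for the main loop. You treat the guessed equality $ax+\tau=0$ purely as a tool for substituting out $x$, and you correctly note that $\exists x\,(ax+\tau=0\wedge\psi)\iff (a\divides\tau)\wedge\psi\vigsub{-\tau/a}{x}$. But when the chosen equality originates from an \emph{inequality} (so $\tau$ contains an unassigned slack variable $y$), the algorithm does something more: it \emph{guesses a value} $v\in[0,|a|\cdot\fmod(\phi)-1]$ for $y$ (line~\ref{gauss:guess-slack}) and records it in $s$. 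Your invariant must justify why this bounded range for $y$ suffices --- i.e.\ why, if there is any satisfying assignment, there is one with this particular slack small. This is not a consequence of the substitution equivalence; it needs a separate shifting argument. One considers a satisfying assignment in which the \emph{relative} slack $\nu(y_e)/|a_e'|$ is minimised across all equalities containing $x$, and shows that if the minimum were $\geq|a|\cdot\fmod(\phi)$ then shifting $x$ by $\pm m\cdot\fmod(\phi)$ and all the relevant slacks correspondingly would decrease the minimum while keeping every slack non-negative and every divisibility satisfied, a contradiction. This is also where the non-deterministic choice of \emph{which} equality to pick (line~\ref{gauss:guess-equation}) becomes essential for completeness: the branch must guess an equality whose slack is minimal. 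You mention line~\ref{gauss:guess-slack} only in the complexity part, not in the correctness part.

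A second, smaller gap: your treatment of lines~\ref{gauss:unslackloop}--\ref{gauss:drop-slack} is too quick. Naively, $\exists y\geq 0\,(\rho+gy=0)$ with $g>0$ is equivalent to $(\rho\leq 0)\wedge(g\divides\rho)$, not just $\rho\leq 0$; line~\ref{gauss:drop-slack} silently drops the divisibility. One must argue that $g\divides\rho$ is already \emph{implied} by the constraints accumulated on line~\ref{gauss:restore}; this follows by tracing $\rho$ back through the chain of substitutions and divisions and observing that the lead coefficients telescope. Finally, your claim that the remaining $\vec x$-variables may appear in inequalities at the third loop is incorrect: after the main loop they occur only in divisibility constraints (the inequalities produced by line~\ref{gauss:drop-slack} come from equalities, which by then contain no $\vec x$-variable), so the residue guess is justified by periodicity alone and no ``push to $\pm\infty$'' argument is needed --- nor would one be available if inequalities really were present.
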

\addcontentsline{toc}{subsection}{Theorem~\ref{thm:gaussianQE-in-np} and Algorithm~\ref{algo:gaussianqe} (\GaussianQE)}

%
%
\GaussianQE is based on an observation by Weispfenning,
who drew a parallel between a weak form of QE and Gaussian variable elimination~\cite{Weispfenning97}.
Based on this observation and relying on an insight by Bareiss~\cite{Bareiss68} (to be discussed below),
Weispfenning sketched
a non-deterministic procedure for deciding \textit{closed} existential formulae of 
Presburger arithmetic in polynomial time.
Although the idea of weak QE~\cite{Weispfenning97} has since been developed further~\cite{LasarukS07},
the \np observation has apparently remained not well known.


Due to space constraints, we omit the proof of Theorem~\ref{thm:gaussianQE-in-np}, and explain instead only the key ideas.
We first consider the specification of~\GaussianQE,
in particular non-deterministic branching.
We then recall
the main underlying mechanism:
Gaussian variable elimination
(thus retracing and expanding Weispfenning's observation).
After that, we discuss extension of this mechanism to tackle inequalities over \Z.


\subparagraph*{Input, output, and non-determinism.}
\addcontentsline{toc}{subsection}{Input, output, and non-determinism}
The input to \GaussianQE is a system~$\phi$ of linear constraints,
as well as a sequence $\vec x$ of variables to eliminate.
The algorithm makes non-deterministic guesses
in lines~\ref{gauss:guess-equation}, \ref{gauss:guess-slack}, and
\ref{gauss:guess-rem}.
Output of each branch (of the non-deterministic execution) is specified
at the top: it is a system $\psi_\beta$ of linear constraints, in which
all variables $x$ in $\vec x$ have been eliminated.
For any specific non-deterministic branch, call it $\beta$,
the output system $\psi_\beta$ may not necessarily be equivalent to $\exists \vec x\, \phi$,
but the disjunction of all outputs across all branches must be:
$\bigvee_\beta \psi_\beta$ has the same set of satisfying assignments as $\exists \vec x\, \phi$.%
\footnote{%
 Formally, an assignment is a map $\nu$ from (free) variables to $\Z$. It satisfies
 a constraint if replacing each $z$ in the domain of $\nu$ with $\nu(z)$ makes
 the constraint a true numerical assertion.
}

The number of non-deterministic branches (individual paths through the execution tree)
is usually exponential, but each of them runs in polynomial time.
(This is true for all algorithms presented in this paper.)
If all variables of the input system $\phi$ are included in~$\vec x$, then
each branch returns a conjunction of numerical assertions that evaluates 
to true or false.

\negspace
\subparagraph*{Gaussian elimination and Bareiss' method.} 
\addcontentsline{toc}{subsection}{Gaussian elimination and Bareiss' method}
Consider a system $\phi$ of linear equations (i.e., equalities) over fields, e.g., $\R$ or $\Q$, 
and let $\vec x$ be a vector of variables that we wish to eliminate from $\phi$.
We recall the Gauss--Jordan variable elimination algorithm,
proceeding as follows:
{\makeatletter%
\def\ALG@step%
   {%
   \addtocounter{ALG@line}{1}%
   \addtocounter{ALG@rem}{1}%
   \ifthenelse{\equal{\arabic{ALG@rem}}{\ALG@numberfreq}}%
      {\setcounter{ALG@rem}{0}\alglinenumber{0\arabic{ALG@line}}}
      {}%
   }%
\makeatletter
\begin{algorithmic}[1]
  \State $\lead \gets 1$ \customlabel{01}{ptime-gauss:coefficient}
  \ForEach{$x$ in $\vec x$} \customlabel{02}{ptime-gauss:mainloop}
     \If{no equality of $\phi$ contains $x$} \mycontinue \EndIf
     \State \textbf{let} $a x + \tau = 0 \ \text{(with $a \ne 0$)} \gets \text{an arbitrary equality in $\phi$ that contains $x$}$ \customlabel{04}{ptime-gauss:guess-equation}
     \State $\prevlead \gets \lead$; \quad $\lead \gets a$
     \customlabel{05}{ptime-gauss:update-coeffs}
     \State $\phi \gets \phi \vigsub{\frac{-\tau}{a}}{x}$ \customlabel{06}{ptime-gauss:substitution}
     \State divide each constraint in $\phi$ by \prevlead \customlabel{07}{ptime-gauss:divide}
  \EndFor \label{ptime-gauss:endmainloop}
  \State \textbf{return} $\phi$
\end{algorithmic}%
}%
By removing from this code all lines involving $\prevlead$ and $\lead$ (lines~\ref{ptime-gauss:coefficient},
\ref{ptime-gauss:update-coeffs} and~\ref{ptime-gauss:divide}), we obtain a naive version
of the procedure: an equation is picked in line~\ref{ptime-gauss:guess-equation} and used to remove one of its occurring variables in line~\ref{ptime-gauss:substitution}.
Indeed,
applying a vigorous substitution $\vigsub{\frac{-\tau}{a}}{x}$ to an equality
$b x + \sigma = 0$ is equivalent to first multiplying this equality by the lead
coefficient~$a$
and then subtracting $b \cdot (a x + \tau) = 0$.
The result is $- b \tau + a \sigma = 0$, and $x$ is eliminated. 

An insightful observation due to Bareiss~\cite{Bareiss68} is that, after multiple iterations,
coefficients accumulate non-trivial common factors.
Lines~\ref{ptime-gauss:coefficient}, \ref{ptime-gauss:update-coeffs}, and \ref{ptime-gauss:divide} take advantage of this.
Indeed, line~\ref{ptime-gauss:divide}
divides every equation by such a common factor.
Importantly, if all numbers in the input system~$\phi$ are integers, then
the division is without remainder.
To show this, Bareiss uses a linear-algebraic argument
based on an application of 
the Desnanot--Jacobi identity (or, more generally, Sylvester's identity) for determinants%
~\cite{Bareiss68,Dodgson1867,KarapiperiRR}.
Over \Q, this makes it possible to perform
Gaussian elimination
(its `fraction-free one-step' version)
in~\ptime.
(This is not the only
 polynomial-time method; cf.~\cite[Section~3.3]{Schrijver99}.)

Gaussian elimination for systems of equations can be extended to solving over~\Z,
by introducing divisibility constraints: line~\ref{ptime-gauss:substitution} becomes
$\phi \gets \phi \vigsub{\frac{-\tau}{a}}{x} \land (a \divides \tau)$.
However, while the running time of the procedure remains polynomial,
its effect becomes more modest:
the procedure reduces a system of linear equations over \Z to an equivalent
system of equations featuring variables not in~$\vec x$ and
multivariate linear congruences that may still contain variables 
from~$\vec x$. To completely eliminate~$\vec x$, further computation is
required. For our purposes, non-deterministic guessing is a good enough solution
to this problem; see the final \textbf{foreach} loop in
lines~\ref{gauss:loop-rem}--\ref{gauss:subst-rem} of~\GaussianQE.

\negspace
\subparagraph*{From equalities to inequalities.}
\addcontentsline{toc}{subsection}{From equalities to inequalities}
\GaussianQE extends Bareiss' method
to systems of inequalities over~$\Z$.
As above, the method allows us to control the (otherwise exponential) growth
of the bit size of numbers.
Gaussian elimination is, of course, still at the heart of the algorithm
(see lines~\ref{gauss:bareissfactors}--\ref{gauss:new-factor}, \ref{gauss:vigorous}, and \ref{gauss:divide}),
and we now discuss two modifications:
\begin{itemize}
\item\label{we-slack}
Line~\ref{gauss:introduce-slack} introduces \emph{slack variables} ranging over~$\N$.
These are
internal to the procedure and are removed at the end (lines~\ref{gauss:unslackloop}--\ref{gauss:apply-seq}).
\item\label{we-guess}
In line~\ref{gauss:guess-equation}
the equality $a x + \tau = 0$ is selected non-deterministically.
\end{itemize}
The latter modification is required 
for the correctness (more precisely: completeness) of~\GaussianQE. Geometrically,
for a satisfiable system of inequalities over $\Z$
consider the convex polyhedron of all solutions over~\R first.
At least one of solutions over~$\Z$ must lie in or near a facet of this polyhedron.
Line~\ref{gauss:guess-equation} of~\Cref{algo:gaussianqe} attempts to guess this facet.
The amount of slack guessed in line~\ref{gauss:guess-slack} corresponds to
the distance from the facet.
Observe that if $a x + \tau = 0$ corresponds to an equality of the original system $\phi$, 
then every solution of $\phi$ needs to satisfy $a x + \tau = 0$ exactly, 
and so there is no slack (lines~\ref{gauss:guess-slack}--\ref{gauss:append-seq} are not taken).

The values chosen for the slack variables in line~\ref{gauss:guess-slack}
have, in fact, a counterpart
in the standard decision procedures for Presburger arithmetic.
When the latter pick a term $\rho$ to substitute,
the substitutions in fact introduce $\rho + k$ for $k$ ranging in some $[0, \ell]$, where $\ell$ depends on $\fmod(\phi)$.
The amount of slack considered in~\GaussianQE corresponds to
these values of $k$.
(Because of this parallel, making the range of guesses in line~\ref{gauss:guess-slack} symmetric, i.e.,
$|v| \le |a| \cdot \fmod(\phi) - 1$,
extends our procedure to the entire existential Presburger arithmetic.)

\section{Solving linear-exponential systems over $\N$: an overview}
\label{sec:procedure}
\addtocontents{toc}{This section introduces our main decision procedure.}

In this section we give an overview of our 
non-deterministic procedure
to solve linear-exponential systems over~$\N$.
The procedure is split into
\Cref{algo:master,algo:elimmaxvar,algo:linearize}. 
A more technical analysis of these algorithms 
is given later in~\Cref{sec:procedure-details}.

Whenever non-deterministic
\Cref{algo:gaussianqe,algo:master,algo:elimmaxvar,algo:linearize}
call one another,
the return value is always just the output of a single branch,
rather than (say) the disjunction over all branches.

\negspace
\subparagraph*{\Cref{algo:master} (\Master).}
\addcontentsline{toc}{subsection}{Introduction to Algorithm \ref{algo:master} (\Master)}
  This
  is the main procedure. It takes as input a
  linear-exponential system~$\phi$ without divisibility
  constraints and decides whether~$\phi$ has a solution
  over~$\N$. The procedure relies on first
  (non-deterministically) fixing a linear ordering~$\theta$ on the exponential terms $2^x$ occurring in~$\phi$ (line~\ref{algo:master:guess-order}). For technical convenience, this ordering contains a term $2^{x_0}$, with $x_0$ fresh variable, and sets~$2^{x_0} = 1$. Variables are iteratively eliminated starting from
  the one corresponding to the leading exponential term in~$\theta$ (i.e., the biggest one), until reaching~$x_0$
  (lines~\ref{algo:master:outer-loop}--\ref{algo:master:end-loop}).
  The elimination of each variable is performed by first
  rewriting the system
  (in lines~\ref{algo:master:inner-loop}--\ref{algo:master:end-inner-loop})
  into a form admissible for \Cref{algo:elimmaxvar}
  discussed below.
  This rewriting introduces new variables,
  which will never occur in exponentials throughout
  the entire procedure
  and are later eliminated when the procedure reaches~$x_0$. 
  Overall, the termination of the procedure
  is ensured by the decreasing number of
  exponentiated variables.
  After \Master rewrites~$\phi$,
  it calls \Cref{algo:elimmaxvar} to eliminate the currently biggest variable.

\negspace
\subparagraph*{\Cref{algo:elimmaxvar} (\ElimMaxVar).}
\addcontentsline{toc}{subsection}{Introduction to Algorithm \ref{algo:elimmaxvar} (\ElimMaxVar)}
  This
  procedure takes as input an ordering~$\theta$, a
  \emph{quotient system induced by~$\theta$}
  and a~\emph{delayed
  substitution}. Let us introduce these notions. 
  \negspace
  \begin{ddescription}
  \item[Quotient systems.]
  \addcontentsline{toc}{subsubsection}{\emph{Definition:} Quotient systems and quotient terms}
  Let $\theta(\vec x)$ be the ordering ${2^{x_n} \geq
  2^{x_{n-1}} \geq \dots \geq 2^{x_0} = 1}$, where $n \geq
  1$. A~\emph{quotient system induced by}~$\theta$ 
  is a system~$\phi(\vec x, \vec x', \vec z')$
  of equalities, inequalities, and divisibility constraints $\tau \sim
  0$, where ${\sim} \in \{{<},{\leq},=,\equiv_d : d \geq 1\}$ and
  $\tau$ is an \emph{quotient term (induced by $\theta$)}, that is,
  a term of the form
  \begin{equation*}
    a \cdot 2^{x_n} + f(\vec x') \cdot 2^{x_{n-1}} + b \cdot x_{n-1} + \tau'(x_{0},\dots,x_{n-2}, \vec z')\,,
  \end{equation*}
  where $a,b \in \Z$, $f(\vec x')$ is a linear term, and
  $\tau'$ is a linear-exponential term in which the variables
  from $\vec z'$ do not occur exponentiated. 
  Furthermore, for every variable $z'$ in $\vec z'$, the quotient system $\phi$ features the inequalities $0 \leq z' < 2^{x_{n-1}}$.
  The variables
  in~$\vec x$, $\vec x'$ and $\vec z'$ form three disjoint
  sets, which we call the \emph{exponentiated
  variables}, the \emph{quotient variables} and
  the~\emph{remainder variables} of the system~$\phi$, respectively. We
  also refer to the term $b \cdot x_{n-1} +
  \tau'(x_{0},\dots,x_{n-2}, \vec z')$ as the \emph{least
  significant part} of the quotient term $\tau$. 

  Importantly, quotient terms are \textbf{not} linear-exponential terms.
  \end{ddescription}
  Here is an example of a quotient system induced by $2^{x_{3}} \geq 2^{x_{2}} \geq 2^{x_1} \geq 2^{x_0} = 1$, 
  and having quotient variables $\vec x' = (x_1',x_2')$ and remainder variables $\vec z' = (z_1',z_2')$ 
  \begin{align*} 
    -2^{x_3} +(2 \cdot x_1' - x_2' -1) \cdot {}&2^{x_2} + \big\{-2 \cdot x_2 + 2^{x_1}-(z_1' \bmod 2^{x_1})\big\} \leq 0,
    &&0 \leq z_1' < 2^{x_2},\\ 
    x_1' \cdot {}&2^{x_2} + \big\{x_1+z_2'-5\big\} = 0,
    &&0 \leq z_2' < 2^{x_2}.
  \end{align*}
  The curly brackets highlight the least significant parts of two terms of the system, the other parts being $\pm z_1'$ and $\pm z_2'$ stemming from the inequalities 
  on the right.
  \negspace
  \begin{ddescription}
  \item[Delayed substitution.]
\addcontentsline{toc}{subsubsection}{\emph{Definition:} Delayed substitution}
    This is a substitution of
    the form $\sub{x' \cdot 2^{x_{n-1}} + z'}{x_n}$, where~$2^{x_n}$ is the leading exponential term of $\theta$. Our
    procedure delays the application of this substitution
    until $x_n$ occurs linearly in the system~$\phi$. One can
    think of this substitution as an equality $(x_n = x' \cdot
    2^{x_{n-1}} + z')$ in~$\phi$ that must not be manipulated for the time being.
  \end{ddescription}%
  Back to~\ElimMaxVar, given a
  quotient system~$\phi(\vec x,\vec x', \vec z')$
  induced by~$\theta$ and the delayed substitution
  $\sub{x' \cdot 2^{x_{n-1}} + z'}{x_n}$, the goals of this
  procedure are to \labeltext{(i)}{elim:i1} eliminate the
  quotient variables~$\vec x' \setminus x'$;
  \labeltext{(ii)}{elim:i2} eliminate all occurrences of the
  leading exponential term~$2^{x_n}$ of $\theta$ and apply 
  the delayed substitution to eliminate the variable $x_n$;
  \labeltext{(iii)}{elim:i3} finally, remove $x'$. 
  Upon exit, \ElimMaxVar gives back
  to~\Master a (non-quotient)
  linear-exponential system where $x_n$ has been
  eliminated; i.e., a system with one fewer exponentiated
  variable.

  For steps~\ref{elim:i1} and~\ref{elim:i3}, the procedure relies on the
  \Cref{algo:gaussianqe} (\GaussianQE) for eliminating variables in systems of inequalities,
  from \Cref{sec:gaussian-elimination}. This is where flexibility of QE is important: 
  in line~\ref{line:elimmaxvar-gauss-1} some variables are eliminated and some are not.
  Step~\ref{elim:i2} is instead implemented
  by~\Cref{algo:linearize}. 

\negspace
\subparagraph*{\Cref{algo:linearize} (\SolvePrimitive).}
\addcontentsline{toc}{subsection}{Introduction to Algorithm \ref{algo:linearize} (\SolvePrimitive)}
  The goal of this procedure is to rewrite a system of constraints
  where $x_n$ occurs exponentiated
  with another system where all constraints are linear.
  The specification of the procedure restricts the output
  further.
  At its core, \SolvePrimitive
  tailors
  Semenov's proof of
  the decidability of the first-order theory of the
  structure~$(\N,0,1,+,2^{(\cdot)},\leq)$~\cite{Semenov84} to
  a small syntactic fragment, which we now define.
  \negspace
  \begin{ddescription}
    \item[Primitive linear-exponential systems.] 
    \addcontentsline{toc}{subsubsection}{\emph{Definition:} Primitive linear-exponential systems}
    Let $u,v$ be two variables. A linear-exponential system
    is said to be \emph{$(u,v)$-primitive} whenever all its
    (in)equalities and divisibility constraints are of the
    form $a \cdot 2^{u} + b \cdot v + c \sim 0$, with~$a,b,c
    \in \Z$ and~${\sim} \in \{{<},{\leq},=,\equiv_d : d \geq 1\}$. 
  \end{ddescription}
  The input to \SolvePrimitive is
  a $(u,v)$-primitive linear-exponential system.
  This procedure
  removes all occurrences of $2^u$ in favour of linear
  constraints, working under the assumption that~$u \geq
  v$. This condition is ensured when \ElimMaxVar
  invokes \SolvePrimitive. The variable
  $u$ of the primitive system in the input corresponds to
  the term~$x_n - x_{n-1}$, and the variable $v$ stands for the
  variable $x'$ in the delayed
  substitution~$\sub{x' \cdot 2^{x_{n-1}} + z'}{x_n}$.
  \ElimMaxVar ensures that $x_{n} - x_{n-1} \geq x'$. 

\section{Algorithms~\ref{algo:master}--\ref{algo:linearize}: a walkthrough}
\label{sec:procedure-details}
\addtocontents{toc}{This section explains the intuition behind the procedure (why it is is correct).}

Having outlined the interplay
between~\Cref{algo:master,algo:elimmaxvar,algo:linearize},
we move to their technical description, 
and present the key 
ideas required to establish the correctness of 
our procedure for solving linear-exponential systems over $\N$.

{
\begin{algorithm}[t]
  \caption{\Master: 
  A procedure to decide linear-exponential systems over~$\N$.}
  \label{algo:master}
    \begin{algorithmic}[1]
      \Require
        $\varphi(x_1,\ldots,x_n):{}$ linear-exponential system (without divisibility constraints).
      \Ensure
        True ($\top$) if $\varphi$ has a solution over $\N$, and otherwise false ($\bot$).
      \medskip
      \State \textbf{let} $x_0$ be a fresh variable 
      \label{algo:master:addxzero}
      \Comment{placeholder for $0$}
      \State\label{algo:master:guess-order}%
      \myguess $\theta \gets{}$ordering of the form $t_1 \geq t_2 \geq \dots \geq t_{n} \geq 2^{x_0} = 1$, where $(t_1,\dots,t_{n})$ is a 
      \Statex\vspace{-2pt} \hphantom{\myguess $\theta \gets{}$}permutation of the terms $2^{x_1},\dots,2^{x_n}$
      \While{$\theta$ is not the ordering $(2^{x_0}=1)$}\label{algo:master:outer-loop}
        \State\label{algo:master:extrX}%
          $2^x\gets$ leading exponential term of $\theta$
        \Comment{in the $i$-th iteration, $2^x$ is $t_i$}
        \State\label{algo:master:extrY}%
          $2^y \gets$ successor of $2^x$ in $\theta$
          \Comment{and $2^y$ is $t_{i+1}$}
        \State\label{algo:master:large-mod-sub}%
          $\varphi\gets\varphi\sub{w}{(w \bmod 2^x):
          \text{$w$ is a variable}}$
        \State $\vec z\gets$ all variables $z$ in $\varphi$ such that $z$ is $x$ or $z$ does not appear in $\theta$
          \label{algo:master:def-z}
          \ForEach{$z$ in $\vec z$}
            \Comment{form a quotient system induced by~$\theta$}
            \label{algo:master:inner-loop}
            \State \textbf{let} $x'$ and $z'$ be two fresh variables
            \State\label{algo:master:imposezlessy}%
            $\varphi\gets\varphi\land(0\leq z'< 2^y)$
            \State\label{algo:master:elimmod}%
              $\varphi\gets\varphi\sub{z'}{(z \bmod 2^y)}$
            \State\label{algo:master:simpmod}%
              $\varphi\gets\varphi\sub{(z'\bmod 2^w)}{(z \bmod 2^w):
              \text{$w$ is such that $\theta$ implies $2^w \leq 2^y$}}$
            \State $\varphi\gets\varphi\sub{(x'\cdot 2^y + z')}{z}$
            \Comment{replaces only the linear occurrences of $z$}
            \label{algo:master:linear-substitution}
            \If{$z$ is $x$} 
              $(x_0',z_0') \gets (x',z')$
             \Comment{for delayed substitution, see next line} 
            \EndIf\label{algo:master:end-inner-loop}
          \EndFor
        \State $\varphi\gets$\Call{ElimMaxVar}{$\theta,\varphi,\sub{x_0'\cdot 2^y + z_0'}{x}$}\label{algo:master:call-elimmaxvar}
        \State\label{algo:master:end-loop}%
          remove $2^x$ from $\theta$
      \EndWhile
      \State\label{algo:master:return}%
        \textbf{return }$\varphi(\vec 0)$
        \Comment{evaluates to $\top$ or $\bot$}
    \end{algorithmic}
  \end{algorithm}%
}%

\subsection{Algorithm~\ref{algo:master}: the main loop} 
Let
$\phi(x_1,\dots,x_n)$ be an input linear-exponential system
(with no divisibility constraints). As explained in the
summary above,~\Master starts by guessing an
ordering $\theta(x_0,\dots,x_n)$ of the form~${t_1 \geq t_2 \geq \dots
\geq t_{n} \geq 2^{x_0} = 1}$, where $(t_1,\dots,t_{n})$ is
a permutation of the terms $2^{x_1},\dots,2^{x_n}$, and
$x_0$ is a fresh variable used as a placeholder for $0$.
Note that if $\phi$ is satisfiable (over $\N$), then
$\theta$ can be guessed so that $\phi \land \theta$ is
satisfiable; and conversely no such $\theta$ exists if
$\phi$ is unsatisfiable.
For the sake of convenience, we assume in this section
that the ordering $\theta(x_0,\dots,x_n)$ guessed by the procedure
is
${2^{x_n} \geq 2^{x_{n-1}} \geq
\dots \geq 2^{x_1} \geq 2^{x_0} = 1}$. 

The \textbf{while} loop starting in
line~\ref{algo:master:outer-loop} manipulates $\phi$ and
$\theta$, non-deterministically obtaining at the end of the $i$th iteration a
system $\phi_i(\vec x, \vec z)$ and an ordering
$\theta_i(\vec x)$, where $\vec x = (x_0,\dots,x_{n-i})$ and
$\vec z$ is a vector of $i$~fresh variables. 
The non-deterministic guesses performed by~\Master are such that the following properties~\eqref{inv:it1}--\eqref{inv:it3} are loop invariants across all branches, whereas~\eqref{inv:it4} 
is an invariant for at least one branch (below, $i \in [0,n]$ and $(\phi_0,\theta_0) \coloneqq (\phi,\theta)$):%
\vspace{4pt}
\begin{description}
  \item[\labeltext{I1}{inv:it1}.] All variables that occur
  exponentiated in $\phi_i$ are among $x_0,\dots,x_{n-i}$.
  \item[\labeltext{I2}{inv:it2}.] $\theta_i$ is the ordering $
  {2^{x_{n-i}} \geq 2^{x_{n-i-1}} \geq \dots \geq 2^{x_1}
  \geq 2^{x_0} = 1}$.
  \item[\labeltext{I3}{inv:it3}.] All variables $z$ in $\vec z$ 
  are such that $z < 2^{x_{n-i}}$ is an inequality in $\phi_i$.
  \item[\labeltext{I4}{inv:it4}.] $\phi_i \land \theta_i$ is
  equisatisfiable with $\phi \land \theta$ over $\N$.
\end{description}
More precisely, writing $\bigvee_{\beta} \psi_\beta$ for the disjunction of all the formulae $\phi_i \land \theta_i$ obtained across all non-deterministic branches, we have that $\bigvee_{\beta} \psi_\beta$ and $\phi \land \theta$ are equisatisfiable. Therefore, whenever $\phi \land \theta$
is satisfiable,~\eqref{inv:it4} holds for at least one branch. If $\phi \land \theta$ is instead unsatisfiable, then~\eqref{inv:it4} holds instead for all branches.

The invariant above is clearly true for $\phi_0$ and $\theta_0$,
with $\vec z$ being the empty set of variables.
Item~\eqref{inv:it2} implies that, after $n$~iterations, $\theta_n$ is $2^{x_0} = 1$,
which causes the \textbf{while} loop
to exit. Given $\theta_n$,
properties~\eqref{inv:it1} and~\eqref{inv:it3} force
the values of $x_0$ and of all variables in $\vec z$ to be
zero, thus making $\phi \land \theta$
equisatisfiable with $\phi_n(\vec 0)$ in at least one branch of the algorithm, by~\eqref{inv:it4}. 
In summary, this will enable us to conclude that the
procedure is correct.

Let us now look at the body of the \textbf{while} loop. 
Its objective is simple: manipulate the current system, say $\phi_i$, so that it becomes a quotient system induced by~$\theta_i$,
and then call~\Cref{algo:elimmaxvar} (\ElimMaxVar).
For these systems, note that~$2^x$ and $2^y$ in
lines~\ref{algo:master:extrX}--\ref{algo:master:extrY} 
correspond to $2^{x_{n-i}}$ and~$2^{x_{n-i-1}}$, respectively. 
Behind the notion of quotient system there are two goals.
One of them is
to make sure that $2^x$ and $2^y$ are not involved in modulo operations.
(We will discuss the second goal in~\Cref{subsec:elimmaxvar}.)
The \textbf{while} loop achieves this goal as follows:
\begin{itemize}
\item
Since $2^x$ is greater than every variable in $\phi_i$, every $(w \bmod 2^x)$ can be replaced with $w$. 
\item
For $2^y$ instead, we ``divide'' every variable $z$ that might be larger than it.
Observe that~$z$ is either $x$ or from the vector~$\vec z$  in~\eqref{inv:it3} of the invariant.
The procedure replaces every linear occurrence of~$z$ with $x' \cdot 2^y + z'$, where $x'$ and $z'$ are fresh variables and $z'$ is a residue modulo $2^y$, that is, $0 \leq z' < 2^y$.
\end{itemize}
The above-mentioned replacement simplifies all modulo operators where $z$ appears: $(z \bmod 2^y)$ becomes $z'$, and every $(z \bmod 2^w)$ such that $\theta_i$ 
entails $2^w \leq 2^y$ becomes $(z' \bmod 2^w)$.
We obtain in this way a quotient system induced by $\theta_i$,
and pass it to \ElimMaxVar.

Whilst the goal we just discussed is successfully achieved,
we have not in fact eliminated the variable~$x$ completely.
Recall that, according to our definition of substitution,
occurrences of~$2^x$ in the system~$\phi$ are unaffected by
the application of $\sub{x' \cdot 2^y + z'}{x}$
in line~\ref{algo:master:linear-substitution} 
of \Master.
Because of this, the procedure keeps this substitution as a
\emph{delayed substitution} for future use, to be applied
(by~\ElimMaxVar) when $x$ will finally occur only
linearly.


\begin{algorithm}
  \tikzmark{left-margin}
  \caption{\ElimMaxVar: Variable elimination for quotient systems.}
  \label{algo:elimmaxvar}
  \begin{algorithmic}[1]
    \Require
      {\setlength{\tabcolsep}{0pt}
      \begin{tabular}[t]{rl}
        $\theta(\vec x):{}$ &\ ordering of exponentiated variables;\\
        $\varphi(\vec x,\vec x', \vec z'):{}$ &\ quotient system induced by~$\theta$, with $\vec x$ exponentiated,\\[-2pt] 
        &\ $\vec x'$ quotient, and $\vec z'$ remainder variables;\\
        $[x' \cdot 2^y + z'/x]:{}$ &\ delayed substitution for $\phi$.  
      \end{tabular}}
    \NDBranchOutput $\psi_\beta(\vec x \setminus x,\vec z'):{}$ linear-exponential system such that 
      for every $z$ in $\vec z'$, 
      $z$ does not occur in exponentials and $0 \leq z < 2^y$ occurs in $\psi_\beta$. 
    \GlobalSpec
      $(\exists x\,\theta)\land\bigvee_{\beta}\psi_\beta$ is equivalent to 
      $\exists x\exists\vec x' (\theta \land\varphi\land x=x'\cdot 2^y+z')$ 
      over~$\N$.%
    \medskip 
    \State\textbf{let} $u$ be a fresh variable
    \label{line:elimmaxvar-u}
    \Comment{$u$ is an alias for $2^{x-y}$}
    \State $\gamma\gets\top;\ \psi\gets\top$
    \label{line:elimmaxvar-initialize}
    \tikzmark{quotient-1-elim-begins}
    \State $\Delta\gets\varnothing$
    \label{line:elimmaxvar-delta}
    \Comment{map from linear-exponential terms to $\Z$}\;
    \ForEach{ $(\tau \sim 0)$ in $\phi$, where ${\sim} \in \bigl\{=,<,\leq,\equiv_d : d \geq 1\bigr\}$}
    \label{line:elimmaxvar-foreach}
      \State \textbf{let} $\tau$ be $(a \cdot 2^x + f(\vec x') \cdot 2^y + \rho)$, where $\rho$ is the least significant part of $\tau$
      \label{line:elimmaxvar-tau}
      \If{$a=0$ and $f(\vec x')$ is an integer}
        $\psi\gets\psi\land(\tau\sim0)$ 
        \label{line:elimmaxvar-psi-trivial}
      \ElsIf{the symbol $\sim$ belongs to $\{{=},{<},{\leq}\}$}
        \label{line:elimmaxvar-second-case}
        \If{$\Delta(\rho)$ is undefined}
        \label{line:elimmaxvar-delta-undefined}
          \State \myguess $r\gets$ integer in $[-\onenorm{\rho},\onenorm{\rho}]$
          \label{line:elimmaxvar-guess-rho}
          \State $\psi\gets\psi\land((r-1)\cdot2^y<\rho)\land(\rho\leq r\cdot2^y)$
          \label{line:elimmaxvar-psi-inequality}
          \State update $\Delta$ : add the key--value pair $(\rho, r)$
          \label{line:elimmaxvar-delta-update}
        \EndIf
        \State $r \gets \Delta(\rho)$
        \label{line:elimmaxvar-delta-from}
        \If{the symbol $\sim$ is $<$}
        \label{line:elimmaxvar-if-less}
           \State \myguess ${\sim^{\prime}} \gets$ sign in $\{=,<\}$;
           \label{line:elimmaxvar-guess-sign} 
           \quad $\psi\gets \psi\land (\rho \sim^{\prime} r\cdot 2^y)$; 
           \label{line:elimmaxvar-strict-add-to-psi} 
           \quad ${\sim} \gets\,\leq$ 
           \label{line:elimmaxvar-strict-ineq}
           \If{the symbol $\sim^{\prime}$ is $=$} $r\gets r+1$
           \EndIf
           \label{line:elimmaxvar-update-r}
        \EndIf
        \State $\gamma\gets\gamma\land(a \cdot u + f(\vec x')+r\sim0)$
        \label{line:elimmaxvar-gamma-inequality}
        \If{the symbol $\sim$ is $=$}
          $\psi\gets\psi\land(r\cdot2^y=\rho)$
          \label{line:elimmaxvar-psi-equality}
        \EndIf
      \Else
        \label{line:elimmaxvar-third-case}
        \Comment{$\sim$ is $\equiv_d$ for some $d \in \N$}
        \State \myguess $r\gets$ integer in $[1,\emph{mod}(\varphi)]$
        \label{line:elimmaxvar-guess-mod}
        \State $\gamma\gets\gamma\land(a \cdot u + f(\vec x') -r\sim0)$
        \label{line:elimmaxvar-gamma-divisibility}
        \State $\psi\gets\psi\land(r\cdot2^y+\rho\sim0)$
        \label{line:elimmaxvar-psi-divisibility}
      \EndIf	
    \EndFor
    \State $\gamma\gets$ \GaussianQE$(\vec x' \setminus x',\gamma \land \vec x' \geq \vec 0)$
    \tikzmark{quotient-1-elim-ends}
    \label{line:elimmaxvar-gauss-1}
    \State $\gamma \gets \gamma[2^u\,/\,u]$
    \label{line:elimmaxvar-2-u}
    \Comment{$u$ now is an alias for $x-y$}
    \State $(\chi,\gamma)\gets$ \textsc{SolvePrimitive}$(u,x',\gamma)$
    \label{line:elimmaxvar-solve-primitive}
    \State $\chi\gets\chi[x-y\,/\,u][x'\cdot2^y+z'\,/\,x]$
    \label{line:elimmaxvar-delayed-subs}
    \Comment{apply delayed substitution: $x$ is eliminated}
    \tikzmark{x1-elim-begins}
    \If{$\chi$ is $(-x'\cdot2^y-z'+y+c = 0)$ for some $c\in\N$}
    \label{line:elimmaxvar-x1-start}
      \State \myguess $b\gets{}$integer in $[0,c]$
      \label{line:elimmaxvar-guess-c}
      \State $\gamma\gets\gamma\land(x' = b)$
      \label{line:elimmaxvar-gamma-equality-x1}
      \State $\psi\gets\psi\land(b\cdot2^y=-z'+y+c)$
      \label{line:elimmaxvar-psi-equality-x1}
    \Else
      \State \textbf{let} $\chi$ be $(-x'\cdot2^y-z'+y+c \leq 0) \land (d \divides x' \cdot 2^y + z' - y - r)$, with $d,r\in\N$, $c \geq 3$
      \label{line:elimmaxvar-x1-let}
      \State \myguess $(b,g) \gets{}$pair of integers in $[0,c] \times [1,d]$
      \label{line:elimmaxvar-guess-times}
      \State $\gamma\gets\gamma\land(x'\geq b)\land(d\mid x'-g)$
      \label{line:elimmaxvar-gamma-inequality-x1}
      \State $\psi\gets\psi\land((b-1)\cdot2^y<-z'+y+c)\land(-z'+y+c\leq b\cdot2^y)\land(d\mid g\cdot2^y+z'-y-r)$
      \label{line:elimmaxvar-psi-inequality-x1}
    \EndIf
    \Assert{\Call{\GaussianQE}{$x',\gamma$} is equivalent to $\top$}
    \Comment{upon failure, \Cref{algo:master} returns~$\bot$}
    \tikzmark{x1-elim-ends}
    \label{line:elimmaxvar-gauss-2}
    \State \textbf{return} $\psi$%
    \vspace{-11pt}
  \end{algorithmic}
  \AddNote{quotient-1-elim-begins}{quotient-1-elim-ends}{left-margin}{Elimination of $\vec x'\setminus x'$}%
  \AddNote{x1-elim-begins}{x1-elim-ends}{left-margin}{Elimination of $x'$}%
\end{algorithm}

\vspace{-3pt}
\subsection{Algorithm~\ref{algo:elimmaxvar}: elimination of leading variable and quotient variables}
\label{subsec:elimmaxvar}

Let $\phi(\vec x, \vec x', \vec z')$ be a quotient system induced by an ordering~$\theta(\vec x)$, with $\vec x$ exponentiated, $\vec x'$ quotient and $\vec z'$ remainder variables, and consider a delayed substitution $\sub{x' \cdot 2^y + z'}{x}$. 
\ElimMaxVar removes $\vec x'$ and~$x$, 
obtaining a linear-exponential system~$\psi$
that adheres to the loop invariant of~\Master. 
This is done by following the three steps described in 
the summary of the procedure, which we now expand.

\negspace
\subparagraph*{Step~\labeltext{(i)}{elim:i1-copy}: lines~\ref{line:elimmaxvar-delta}--\ref{line:elimmaxvar-gauss-1}.}
This step aims at calling~\Cref{algo:gaussianqe} (\GaussianQE) to eliminate all variables in $\vec x' \setminus x'$. 
There is, however, an obstacle: 
these variables are multiplied by~$2^y$. 
Here is where the second goal behind the notion of quotient system comes into play: 
making sure that least significant parts of quotient terms can be bounded in terms of $2^y$.
To see what we mean by this and why it is helpful, consider below an inequality~$\tau \leq 0$ 
from~$\phi$, where $\tau = a \cdot 2^{x} + f(\vec x') \cdot 2^{y} + \rho(\vec x \setminus x, \vec z')$ and $\rho$ is the least significant part of~$\tau$.

Since $\phi$ is a quotient system induced by $\theta$,
all variables and exponential terms $2^w$ appearing in $\rho$ are bounded by $2^y$, and thus every solution of $\phi \land \theta$ must also satisfy~$\abs{\rho} \leq \onenorm{\rho} \cdot 2^y$. 
More precisely, the value of~$\rho$ must lie in the interval~$[(r-1)\cdot2^y+1,r \cdot 2^y]$ for some~$r \in [-\onenorm{\rho},\onenorm{\rho}]$. 
The procedure guesses one such value~$r$ 
(line~\ref{line:elimmaxvar-guess-rho}). 
The inequality $\tau \leq 0$ can be rewritten as
\begin{equation}
  \label{equation:split-inequality}
  \big(a \cdot 2^x + f(\vec x') \cdot 2^y + r \cdot 2^y \leq 0\big)
   \ \land\ \big((r-1) \cdot 2^y < \rho \leq r \cdot 2^y\big).
\end{equation}
Fundamentally, $\tau \leq 0$ has been split into a ``left part'' and a ``right part'', shown with big brackets around. The ``right part'' $(r-1) \cdot 2^y < \rho \leq r \cdot 2^y$ is made of two linear-exponential inequalities 
featuring none of the variables we want to eliminate ($\vec x'$ and $x$). 
Following the same principle, the procedure produces similar splits for all strict inequalities, equalities, and divisibility constraints of $\phi$.
In the pseudocode, the ``left parts'' of the system are stored in the formula~$\gamma$, and the ``right parts''
are stored in the formula $\psi$.

Let us focus on a ``left part'' $a \cdot 2^x + f(\vec x') \cdot 2^y + r \cdot 2^y \leq 0$ in $\gamma$.
Since $\theta$ implies $2^x \geq 2^y$, we 
can factor out $2^y$ from this constraint, obtaining the inequality $a \cdot 2^{x-y} + f(\vec x') + r \leq 0$. 
There we have it: the variables $\vec x' \setminus x'$ occur now linearly in~$\gamma$ and can be eliminated thanks to~\GaussianQE.
For performing this elimination, the presence of $2^{x-y}$ is unproblematic. In fact, the procedure uses a placeholder variable~$u$ for $2^{x-y}$ (line~\ref{line:elimmaxvar-u}), so that
$\gamma$ is in fact a linear system with, e.g., inequalities $a \cdot u + f(\vec x') + r \leq 0$. 
Observe that inequalities $\vec x' \geq \vec 0$ are added to $\gamma$ in line~\ref{line:elimmaxvar-gauss-1}, since~\GaussianQE works over $\Z$ instead of $\N$.
This concludes Step~\ref{elim:i1-copy}.

Before moving on to Step~\ref{elim:i2-copy}, we justify the use of the map~$\Delta$ from line~\ref{line:elimmaxvar-delta}.
If the procedure were to apply~\Cref{equation:split-inequality} and 
replace every 
inequality $\tau \leq 0$ with three inequalities, 
then multiple calls to~\ElimMaxVar would produce a system with exponentially many constraints.
A solution to this problem is to guess $r \in [-\onenorm{\rho},\onenorm{\rho}]$ only once, 
and use it in all the ``left parts'' stemming from inequalities in~$\phi$ having $\rho$ as their least significant part. 
The ``right part'' $(r-1) \cdot 2^y < \rho \leq r \cdot 2^y$ is added to $\psi$ only once.
The map~$\Delta$ implements this memoisation, avoiding the aforementioned 
exponential blow-up. 
Indeed, the number of least significant parts 
grows very slowly throughout~\Master, 
as we will see in~\Cref{sec:complexity}.

\subparagraph*{Step~\labeltext{(ii)}{elim:i2-copy}: lines~\ref{line:elimmaxvar-2-u}--\ref{line:elimmaxvar-delayed-subs}.}
The goal of this step is to eliminate all occurrences of the term $2^{x-y}$. For convenience, the procedure first reassigns $u$ to now be a placeholder for $x-y$ (line~\ref{line:elimmaxvar-2-u}). 
Because of this reassignment, the system $\gamma$ returned by~\GaussianQE 
at the end of Step~\ref{elim:i1-copy} is a $(u,x')$-primitive linear-exponential system.

The procedure calls~\Cref{algo:linearize} (\SolvePrimitive), 
which constructs from $\gamma$ a pair of systems $(\chi_{\beta}(u),\gamma_\beta(x'))$, which is assigned to $(\chi,\gamma)$. 
Both are linear systems, and thus 
all occurrences of $2^{x-y}$ (rather, $2^u$) have been removed.
At last, all promised substitutions can be realised (line~\ref{line:elimmaxvar-delayed-subs}):
$u$ is replaced with $x-y$, and the delayed substitution replaces $x$ with $x' \cdot 2^y + z'$.
This eliminates~$x$. The only variable that is yet to be removed is~$x'$ (Step~\ref{elim:i3-copy}).

It is useful to recall at this stage that~\SolvePrimitive 
is only correct under the assumption that $u \geq x' \geq 0$.
This assumption is guaranteed by the definition of $\theta$,
the delayed substitution, and the fact that $u$ is a placeholder for $x-y$ (and we are working over~$\N$).
Indeed, if $x' = 0$, then the inequality $2^x \geq 2^y$ in $\theta$ ensures $u = x-y \geq 0 = x'$. 
If $x' \geq 1$, 
\begin{align*}  
  u=x-y&=x'\cdot2^y+z'-y
  &\text{delayed substitution}
  \\
  &\geq x'\cdot(y+1)+z'-y
  &2^y\geq y+1 \text{, for every }y\in\N 
  \\
  &= y\cdot(x'-1)+x'+z'\geq x'.
  &\text{since } x' \geq 1.
  \\[-20pt]
\end{align*}

\subparagraph*{Step~\labeltext{(iii)}{elim:i3-copy}: lines~\ref{line:elimmaxvar-x1-start}--\ref{line:elimmaxvar-gauss-2}.}
This step deals with eliminating the variable~$x'$ 
from the formula $\gamma(x') \land \chi(x',z',y) \land \psi(\vec x \setminus x,\vec z') $, 
where $\psi$ contains the ``right parts'' of $\phi$ computed during Step~\ref{elim:i1-copy}. 
The strategy to eliminate~$x'$ follows closely what was done to eliminate the other quotient variables from $\vec x'$ during Step~\ref{elim:i1-copy}: 
the algorithm first splits the formula $\chi(x',z',y)$ into a ``left part'', which is added to $\gamma$ and features the variable $x'$, 
and a ``right part'', which is added to $\psi$ and features the variables $z'$ and $y$. 
It then eliminates $x'$ by calling~\GaussianQE on $\gamma$. 
To perform the split into ``left part'' and ``right part'', 
observe that $\chi$ is a system 
of the form either $-x' \cdot 2^y-z'+y+c = 0$ or $(-x'\cdot2^y-z'+y+c \leq 0) \land (d \divides x' \cdot 2^y + z' - y - r)$ (see the spec of~\SolvePrimitive). 
By arguments similar to the ones used for~$\rho$ in Step~\ref{elim:i1-copy}, $-z'+y+c$ can be bounded in terms of $2^y$.
(Notice, e.g., the similarities between the inequalities in line~\ref{line:elimmaxvar-psi-inequality-x1} 
and the ones in line~\ref{line:elimmaxvar-psi-inequality}.)
After the elimination of $x'$, if~\GaussianQE does not yield an unsatisfiable formula, 
\ElimMaxVar returns the system $\psi$ to~\Master.

Before moving on to the description of~\SolvePrimitive, let us clarify the semantics of the~\textbf{assert} statement occurring in~line~\ref{line:elimmaxvar-gauss-2}. 
It is a standard semantics from programming languages.
If an assertion~$b$ evaluates to true at runtime, \textbf{assert}($b$) does nothing. If $b$ evaluates to false instead, the execution aborts and the main procedure (\Master) returns $\bot$. This semantics allows for assertions to query \np problems, as done in~line~\ref{line:elimmaxvar-gauss-2} (and in~line~\ref{line:linearize-assert} of~\SolvePrimitive), without undermining the membership in \np of~\Master.

\vspace{2em}

{
\begin{algorithm}[t]
  \caption{\SolvePrimitive: A procedure to decompose and linearise primitive systems.}\label{algo:linearize}
  \begin{algorithmic}[1]
    \Require
      $u,v:{}$ two varaibles; \ 
      $\varphi:{}$ $(u,v)$-primitive linear-exponential system.
    \NDBranchOutput a pair of linear systems $(\chi_\beta(u),\gamma_\beta(v))$ such that
      $\chi_\beta(u)$ is either of the form $(u=a)$ or of the form $(u \geq b) \land (d \mid u-r)$, where $a,d,r\in\N$ and $b \geq 3$.
    \GlobalSpec
      $(u \geq v \geq 0)$ entails that
      $\bigvee_{\beta}(\chi_\beta \land \gamma_\beta)$ is equivalent to~$\phi$.
    \medskip
    \State \textbf{let} $\varphi$ be $(\chi\land\psi)$, where $\chi$ is the conjunction of all (in)equalities from $\phi$ containing $2^{u}$
    \label{line:linearize-decompose}
    \State $(d,n)\gets$ pair of non-negative integers such that $\fmod(\phi)=d\cdot2^n$ and $d$ is odd
    \label{line:linearize-mod-factor}
    \State $C \gets \max\big\{n, 3 + 2 \cdot\bigl\lceil\log(\frac{|b|+|c|+1}{|a|})\bigr\rceil : \text{$(a\cdot 2^{u}+b\cdot v+c\sim0)$ in $\chi$, where ${\sim} \in \{=,<,\leq\}$} \big\}$
    \label{line:linearize-max-constant}
    \State \myguess $c\gets{}$element of $[0,C-1] \cup \{\star\}$
    \label{line:linearize-guess-c}
    \Comment{$\star$ signals $u \geq C$}
    \label{line:linearize-guess-const}
    \If{$c$ is not $\star$}
      \State $\chi\gets(u=c)$
      \label{line:linearize-gamma-equality}
      \State $\gamma\gets\varphi[2^c\,/\,2^{u}]$
      \label{line:linearize-chi-equality}
    \Else
      \Comment{assuming~$u \geq C$, (in)equalities in $\chi$ simplify to $\top$ or $\bot$}
      \label{line:linearize-else}
      \Assert{$\chi$ has no equality, and in all its inequalities $2^u$ has a negative coefficient}
      \label{line:assert-not-bottom}
      \State \myguess $r\gets{}$integer in $[0,d-1]$
      \label{line:linearize-guess-div}
      \Comment{remainder of $2^{u-n}$ modulo $d$ when $u\geq C \geq n$}
      \Assert{$d \divides 2^u - 2^n \cdot r$ is satisfiable}
      \label{line:linearize-assert}
      \State $r'\gets$ discrete logarithm of $2^n \cdot r$ base $2$, modulo $d$
      \label{line:linearize-discrete-log}
      \State $d'\gets$ multiplicative order of $2$ modulo $d$
      \label{line:linearize-mult-ord}
      \State $\chi\gets(u \geq C)\land(d'\mid u-r')$
      \label{line:linearize-chi-case-2}
      \State $\gamma \gets \psi[2^n \cdot r /2^u]$
      \Comment{$2^n \cdot r$ is a remainder of $2^u$ modulo $\fmod(\psi)=d\cdot2^n$}
      \label{line:linearize-chi-div}
    \EndIf
    \State \textbf{return }$(\chi,\gamma)$
  \end{algorithmic}
\end{algorithm}%
}%
\vspace{-3pt}
\subsection{Algorithm~\ref{algo:linearize}: from primitive systems to linear systems}
Consider an input $(u,v)$-primitive linear-exponential system~$\phi$, 
and further assume we are searching for solutions over~$\N$ where $u \geq v$. The goal of~\SolvePrimitive is to decompose~$\phi$ (in the sense of monadic decomposition~\cite{HagueLRW20,Libkin03}) into two \emph{linear} systems: a system $\chi$ only featuring the variable $u$, and a system $\gamma$ only featuring~$v$. 

To decompose~$\phi$, 
the key parameter to understand is the threshold~$C$ for the variable $u$ (line~\ref{line:linearize-max-constant}). 
This positive integer depends on two quantities, one for ``linearising'' the divisibility constraints, and one for ``linearising'' the equalities and inequalities of $\phi$. 
Below we first discuss the latter quantity. 
Throughout the discussion, we assume $u \geq C$, as otherwise the 
procedure simply replaces~$u$ with 
a value in $[0,C-1]$ (see lines~\ref{line:linearize-gamma-equality} and~\ref{line:linearize-chi-equality}).

Consider an inequality $a \cdot 2^u + b \cdot v + c \leq 0$. 
Regardless of the values of $u$ and $v$, as long as $\abs{a \cdot 2^u} > \abs{b \cdot v + c}$ holds, the truth of this inequality will solely depend on the sign of the coefficient~$a$.
Since we are assuming $u \geq v$ and $u \geq C \geq 1$, $\abs{a \cdot 2^u} > \abs{b \cdot v + c}$ is implied by~$\abs{a} \cdot 2^u > (\abs{b} + \abs{c}) \cdot u$. In turn, this inequality is implied by $u \geq C$,
because both sides of the inequalities are monotone functions, $\abs{a} \cdot 2^u$ grows faster than $(\abs{b} + \abs{c}) \cdot u$, and,
given $C' \coloneqq 3+2 \cdot \ceil{\log(\frac{\abs{b}+\abs{c}+1}{\abs{a}})}$ (which is at most $C$),
we have
\begin{align*}
  |a|\cdot 2^{C'} 
  &\geq |a|\cdot 2^3\cdot\left(\frac{|b|+|c|+1}{|a|}\right)^2 >\,\big(|b|+|c|\big)\cdot2^{\bigl\lceil\log(\frac{|b|+|c|+1}{|a|})\bigr\rceil+2}
  >\,\big(|b|+|c|\big)\cdot C'\,,
\end{align*}
where to prove the last inequalities 
one uses the fact that $2^{x+1} > 2 \cdot x + 1$ for every $x \geq 0$.
Hence, when~$u \geq C$, every inequality in $\phi$
simplifies to either $\top$ or $\bot$, and this is also true
for strict inequalities. The Boolean value~$\top$ arises
when $a$ is negative. The Boolean~$\bot$ arises when $a$ is
positive, or when instead of an inequality we consider an
equality.  

It remains
to handle the divisibility constraints, again under the assumption $u \geq C$.
This is where the second part of the definition of $C$
plays a role.
Because $u \geq C \geq n$ (see the definition of~$(d,n)$ in~line~\ref{line:linearize-mod-factor}), we can guess $r \in [0,d-1]$ such that 
$\fmod(\phi) \divides 2^u - 2^n \cdot r$ (line~\ref{line:linearize-guess-div}). 
This constraint is equivalent to $d \divides 2^{u-n} - r$ and, since $2^n$ and $d$ are coprime, it is also equivalent to $d \divides 2^u - 2^n \cdot r$. It might be an unsatisfiable constraint: the procedure checks for this eventuality in line~\ref{line:linearize-assert}, 
by solving a discrete logarithm problem (which can be done in~\np, see~\cite{Joux14}). 
Suppose a solution is found, say~$r'$ (as in line~\ref{line:linearize-discrete-log}).
We can then represent the set of solutions of $d \divides 2^u - 2^n \cdot r$ as an arithmetic progression: it suffices to compute the multiplicative order of $2$ modulo $d$, i.e., the smallest positive integer~$d'$ such that $d \divides 2^{d'} - 1$. This is again a discrete logarithm problem, but differently from the previous case $d'$ always exists since $d$ and $2$ are coprime.
The set of solutions of~$d \divides 2^u - 2^n \cdot r$ 
is given by~$\{r' + \lambda \cdot d' : \lambda \in \Z\}$,
that is,~$\fmod(\phi) \divides 2^u - 2^n \cdot r$ 
is equivalent to $d' \divides u - r'$. 
The procedure then returns~$\chi(u) \coloneqq (u \geq C \land d' \divides u - r')$ and $\gamma(v) \coloneqq \psi\sub{2^n \cdot r}{2^u}$ (see lines~\ref{line:linearize-chi-case-2} and~\ref{line:linearize-chi-div}), where $\psi$ (defined in line~\ref{line:linearize-decompose}) is the system obtained from $\phi$ by removing all equalities and inequalities featuring $2^u$.

\vspace{3pt}

Elaborating the arguments sketched in this section,
we can prove that \Cref{algo:master,algo:elimmaxvar,algo:linearize} comply with their specifications. 

\begin{restatable}{proposition}{PropMasterCorrect}
  \label{prop:master-correct}
  \Cref{algo:master} (\Master) is a correct procedure for deciding the satisfiability of linear-exponential systems over~$\N$.
\end{restatable}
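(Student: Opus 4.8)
The plan is to prove Proposition~\ref{prop:master-correct} by induction on the iterations of the \textbf{while} loop of \Master (lines~\ref{algo:master:outer-loop}--\ref{algo:master:end-loop}), treating the global specifications of \ElimMaxVar and \SolvePrimitive (as displayed in the pseudocode, and proved separately via the arguments sketched in \Cref{sec:procedure-details}) as black boxes, and invoking Theorem~\ref{thm:gaussianQE-in-np} for the calls to \GaussianQE made inside them. First I would fix an input system $\phi(x_1,\dots,x_n)$ without divisibility constraints and record the initial reduction: $\phi$ is satisfiable over $\N$ if and only if $\phi \land \theta$ is satisfiable over $\N$ for some ordering $\theta$ of the form guessed in line~\ref{algo:master:guess-order}. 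One direction is trivial; for the other, any $\N$-solution of $\phi$ totally preorders the values $2^{x_1},\dots,2^{x_n},1$, and this preorder refines to such a $\theta$.

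The core of the proof is to establish that, at the end of each iteration~$i$, properties \eqref{inv:it1}--\eqref{inv:it3} hold in every non-deterministic branch, and that the disjunction $\bigvee_\beta(\phi_i^\beta \land \theta_i^\beta)$ over all branches is equisatisfiable over $\N$ with $\phi \land \theta$ (the precise content of \eqref{inv:it4}). The base case $i=0$ is immediate with $\vec z$ empty. For the inductive step I would analyse a single iteration in two stages. Stage one is the inner \textbf{foreach} loop (lines~\ref{algo:master:large-mod-sub}--\ref{algo:master:end-inner-loop}): writing $2^x = 2^{x_{n-i}}$ and $2^y = 2^{x_{n-i-1}}$, I would argue that (a)~line~\ref{algo:master:large-mod-sub} is sound because $\theta_i$ forces $w < 2^x$ for every variable $w$, hence $(w \bmod 2^x) = w$; (b)~for each $z$ that is $x$ or does not occur in $\theta_i$, introducing fresh $x',z'$ with $0 \le z' < 2^y$ and replacing the linear occurrences of $z$ by $x' \cdot 2^y + z'$ is exactly the Euclidean division of $z$ by $2^y$, so it preserves $\N$-satisfiability, and moreover every $(z \bmod 2^w)$ with $2^w \le 2^y$ rewrites to $(z' \bmod 2^w)$ because $2^w \mid 2^y$; and (c)~since these substitutions touch only linear occurrences of $z$, the subterm $2^x$ survives, which is exactly why the equation $x = x'_0 \cdot 2^y + z'_0$ must be carried along as the delayed substitution. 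The upshot of stage one is that the current system has become a quotient system induced by $\theta_i$ (with the right quotient and remainder variables, and all the inequalities $0 \le z' < 2^y$ present) and is equisatisfiable, under $\theta_i$, with $\phi_i$ conjoined with the delayed equation.

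Stage two is the call in line~\ref{algo:master:call-elimmaxvar} to \ElimMaxVar on $\theta$, the quotient system, and $\sub{x'_0 \cdot 2^y + z'_0}{x}$, followed by removing $2^x$ from $\theta$. By the global specification of \ElimMaxVar, the disjunction over its branches of the returned systems $\psi_\beta$ satisfies $(\exists x\,\theta) \land \bigvee_\beta \psi_\beta$ equivalent to $\exists x \exists \vec x'(\theta \land \varphi \land x = x'_0 \cdot 2^y + z'_0)$ over $\N$, and each $\psi_\beta$ is a linear-exponential system in which every remainder variable is non-exponentiated and satisfies $0 \le z' < 2^y$. Chaining this with the equisatisfiability from stage one and the induction hypothesis yields \eqref{inv:it4} for $i+1$; \eqref{inv:it2} holds because exactly $2^{x_{n-i}}$ is removed from $\theta_i$; \eqref{inv:it1} holds because the only new variables are never exponentiated and $2^x$ no longer occurs; and \eqref{inv:it3} holds because the new $\vec z$ is the tuple of remainder variables, each carrying $0 \le z' < 2^{x_{n-(i+1)}}$. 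Termination after exactly $n$ iterations is forced by \eqref{inv:it2}, and each iteration runs in polynomial time, so \Master runs in non-deterministic polynomial time. For the returned value: on exit, \eqref{inv:it1}--\eqref{inv:it3} force $x_0$ and every variable of $\vec z$ to be $0$ in any $\N$-solution, so $\phi_n \land \theta_n$ is $\N$-satisfiable iff $\phi_n(\vec 0)$ evaluates to $\top$; combining with \eqref{inv:it4} and the initial reduction gives that some branch of \Master returns $\top$ iff $\phi$ is satisfiable over $\N$, which is correctness.

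The hard part will be stage one of the inductive step: one must verify that the four deliberately partial substitutions in the inner loop — acting respectively on plain linear occurrences, on $(\cdot \bmod 2^w)$ subterms, and as the identity on $2^{(\cdot)}$-terms — together transform $\phi_i$ into a genuine quotient system logically equivalent to $\phi_i$ modulo the delayed equation, and here the interaction between the formal definition of substitution and the arithmetic of $(z \bmod 2^w)$, together with the bookkeeping that keeps $2^x$ alive, must be handled with care. A secondary but heavier burden is that the whole argument rests on the global specifications of \ElimMaxVar and \SolvePrimitive, whose proofs carry the genuine number-theoretic content (bounding least significant parts by multiples of $2^y$, the monotonicity estimate $|a| \cdot 2^u > (|b|+|c|) \cdot u$ for $u \ge C$, and reducing surviving divisibility constraints to arithmetic progressions via discrete logarithms computable in \np); establishing those is where most of the technical casework lives, and the proof of Proposition~\ref{prop:master-correct} is essentially the orchestration of these modules by the loop-invariant induction described above.
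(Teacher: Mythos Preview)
Your proposal is correct and follows essentially the same approach as the paper's proof: an induction on iterations of the \textbf{while} loop establishing the invariant \eqref{inv:it1}--\eqref{inv:it4}, with the inductive step split into the analysis of the inner \textbf{foreach} loop (producing a quotient system equisatisfiable modulo the delayed equation) followed by an appeal to the global specification of \ElimMaxVar, and the conclusion drawn from the invariant at $i=n$. The only minor imprecision is in stage~one~(a): it is not $\theta_i$ alone that forces $w < 2^x$ for every variable~$w$, since remainder variables are not mentioned in~$\theta_i$; for those you need the inequalities $z < 2^{x_{n-i}}$ supplied by invariant~\eqref{inv:it3}, and for the exponentiated variables $x_j$ you combine $2^{x_j} \leq 2^x$ from~$\theta_i$ with $x_j < 2^{x_j}$ over~$\N$.
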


\section{Complexity analysis}
\label{sec:complexity}
We analyse the procedure introduced in~\Cref{sec:procedure,sec:procedure-details} and show that it runs in non-deterministic polynomial time. This establishes~\Cref{thm:buchi-semenov-in-np-z} restricted to~$\N$.

\begin{restatable}{proposition}{PropMasterInNp}
  \label{prop:master-in-np}
  \Cref{algo:master} (\Master) runs in non-deterministic polynomial time.
\end{restatable}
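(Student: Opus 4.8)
The plan is to bound the running time of each non-deterministic branch of \Master by a polynomial in the bit size $\bitlength{\phi}$ of the input, and to argue that all numbers manipulated throughout also have polynomial bit size. The central quantity to control is what we may call the \emph{growth} of a linear-exponential system as it passes through the main \textbf{while} loop of \Master: the number of constraints, the number of variables, the number of distinct least significant parts of quotient terms, and the magnitude (1-norm) of coefficients. First I would fix one iteration of the \textbf{while} loop and track how the system $\phi_i$ becomes $\phi_{i+1}$. The rewriting in lines~\ref{algo:master:large-mod-sub}--\ref{algo:master:end-inner-loop} introduces at most a constant number of fresh variables per variable in $\vec z$ (hence $O(n)$ new variables per iteration, so $O(n^2)$ overall), adds $O(n)$ new inequalities of the form $0 \le z' < 2^y$, and performs substitutions that do not increase coefficient sizes. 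The delicate point is that $\sub{(x' \cdot 2^y + z')}{z}$ replaces a single variable $z$ by a term, but since this is applied only to linear occurrences and $z$ appears linearly only once (as $a_i \cdot x_i$) per constraint, coefficients are not blown up.

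Next I would analyse the call to \ElimMaxVar. The key insight, already flagged in \Cref{subsec:elimmaxvar}, is the role of the memoisation map~$\Delta$: without it, each inequality $\tau \le 0$ in $\phi$ would be split into three constraints per call, giving exponential blow-up over $n$ iterations; with it, we only add one ``right part'' per \emph{distinct least significant part} $\rho$, and the number of distinct least significant parts grows slowly. I would prove a lemma asserting that the number of distinct least significant parts after iteration $i$ is polynomial in $\bitlength{\phi}$ — concretely, that it increases by at most an additive polynomial amount per iteration. The ``left parts'' stored in $\gamma$ are a linear system over the quotient variables $\vec x'$ plus the placeholder $u$; by \Cref{thm:gaussianQE-in-np}, the call to \GaussianQE in line~\ref{line:elimmaxvar-gauss-1} runs in non-deterministic polynomial time and outputs a system of polynomial size. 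For the guesses in lines~\ref{line:elimmaxvar-guess-rho}, \ref{line:elimmaxvar-guess-mod}, \ref{line:elimmaxvar-guess-c}, \ref{line:elimmaxvar-guess-times}, the ranges are bounded by $\onenorm{\rho}$, $\fmod(\phi)$, $c$, and $d$ respectively, all of which I must show stay polynomially bounded in bit size; in particular $\fmod(\phi)$ is an lcm, so its bit size is the sum of bit sizes of individual divisors, and I would track that these divisors stay polynomially many and polynomially small.

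Then I would turn to \SolvePrimitive. Here the only subtle steps are the \np subroutines: the threshold $C$ in line~\ref{line:linearize-max-constant} is $O(\log(\onenorm{\phi}))$ by inspection of its formula, hence of polynomial bit size; the assertions in lines~\ref{line:assert-not-bottom} and~\ref{line:linearize-assert} and the computations in lines~\ref{line:linearize-discrete-log}--\ref{line:linearize-mult-ord} are discrete logarithm / multiplicative order computations modulo the odd part $d$ of $\fmod(\phi)$, which by~\cite{Joux14} can be certified in \np, and whose outputs $r'$ and $d'$ lie in $[0,d-1]$, hence have polynomial bit size. The substitutions $\sub{2^c}{2^u}$ and $\sub{2^n \cdot r}{2^u}$ replace an exponential term by a concrete integer of value at most $2^{C} \cdot \onenorm{\phi}$ — polynomial bit size — so the resulting linear systems do not blow up.

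Finally I would assemble these pieces into a global invariant: there is a polynomial $p$ such that, across every non-deterministic branch, after $i$ iterations of the main loop the system $\phi_i$ has at most $p(\bitlength{\phi})$ constraints, at most $p(\bitlength{\phi})$ variables, at most $p(\bitlength{\phi})$ distinct least significant parts, and all coefficients, constants, and divisors of bit size at most $p(\bitlength{\phi})$; moreover $\fmod(\phi_i)$ has bit size at most $p(\bitlength{\phi})$. This invariant is maintained by one pass through lines~\ref{algo:master:large-mod-sub}--\ref{algo:master:end-loop} together with one call each to \ElimMaxVar, \GaussianQE, and \SolvePrimitive, as argued above; and it is established at $i=0$ trivially. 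Since the loop runs exactly $n \le \bitlength{\phi}$ times, and each iteration does polynomially much work (the number of guesses per iteration being polynomial and each guessed value of polynomial bit size), the total running time of each branch is polynomial. The final evaluation $\phi(\vec 0)$ in line~\ref{algo:master:return} is then a polynomial-time arithmetic check. The main obstacle I expect is the bookkeeping for the number of distinct least significant parts and for $\fmod(\phi_i)$: one must verify that neither the splitting in \ElimMaxVar nor the congruence manipulations in \SolvePrimitive cause these to accumulate more than additively-polynomially per iteration — in particular that the new divisibility constraints produced (e.g.\ in lines~\ref{line:elimmaxvar-psi-divisibility}, \ref{line:elimmaxvar-psi-inequality-x1}, \ref{line:linearize-chi-case-2}) introduce divisors that are products of old divisors with small powers of $2$ and small constants, keeping $\fmod$ under control.
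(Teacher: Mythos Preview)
Your overall architecture is right, and you correctly identify the memoisation map~$\Delta$, the role of \GaussianQE, and the \np discrete-log subroutines in \SolvePrimitive. But there is a genuine gap in your treatment of $\fmod(\phi_i)$.

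You write that the new divisibility constraints ``introduce divisors that are products of old divisors with small powers of~$2$ and small constants.'' This is false for line~\ref{line:linearize-chi-case-2}: the divisor $d'$ introduced there is the \emph{multiplicative order} of~$2$ modulo the odd part~$d$ of $\fmod(\phi)$. This $d'$ divides $\totient(d)$, not~$d$; for instance if $d=7$ then $d'=3$, which is coprime to~$7$. When this constraint is propagated into~$\psi$ via line~\ref{line:elimmaxvar-psi-inequality-x1}, the new $\fmod$ becomes (a divisor of) $\lcm(d,\totient(\alpha d))$ for some~$\alpha$ coming from \GaussianQE. Iterating this over $n$ rounds gives a recurrence $b_{i+1}=\lcm(b_i,\totient(\alpha_i b_i))$, and bounding~$b_n$ polynomially is not obvious: each application of~$\totient$ can introduce entirely new prime factors (e.g.\ $\totient(p)=p-1$), so your ``sum of bit sizes of polynomially many polynomially small divisors'' accounting does not close.

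The paper handles this with a dedicated number-theoretic lemma (\Cref{lemma:totient-bound}) showing that the recurrence $b_0=1$, $b_{i+1}=\lcm(b_i,\totient(\alpha b_i))$ satisfies $b_i\le\alpha^{2i^2}$. The proof goes via an auxiliary ``shifted radical'' function $\srad(\prod p_j^{k_j})\coloneqq\prod(p_j-1)$ and its iterates, exploiting that $\totient(a)=\frac{a}{\rad(a)}\srad(a)$. Without this (or an equivalent argument), your invariant for $\fmod(\phi_i)$ is unsupported, and since the guess ranges in lines~\ref{line:elimmaxvar-guess-mod} and~\ref{line:elimmaxvar-guess-times} as well as the input size to \SolvePrimitive depend on $\fmod$, the polynomial-time bound for each iteration does not follow.
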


To simplify the analysis required to establish~\Cref{prop:master-in-np}, we assume that~\mbox{\Cref{algo:master,algo:elimmaxvar,algo:linearize}} store the divisibility constraints $d \divides \tau$ of a system~$\phi$
in a way such that the coefficients and the constant of~$\tau$ are always reduced modulo~$\fmod(\phi)$. 
For example, if $\fmod(\phi) = 5$,
the divisibility $5 \divides (7 \cdot x + 6 \cdot 2^x - 1)$ is stored as $5 \divides (2 \cdot x + 2^x + 4)$.
Any divisibility 
can be updated in polynomial time to satisfy this requirement, so 
there is no loss of generality.
Observe that~\Cref{algo:gaussianqe} (\GaussianQE) is an exception to this rule, as the divisibility constraints it introduces in line~\ref{gauss:restore} must respect some structural properties throughout its execution. 
Thus, line~\ref{line:elimmaxvar-2-u} of~\Cref{algo:elimmaxvar} (\ElimMaxVar) implicitly reduces the output of \GaussianQE modulo $m = \fmod(\phi)$ as appropriate.
Since \GaussianQE runs in non-deterministic polynomial time, the reduction takes polynomial time too.

As is often the case for arithmetic theories, the complexity analysis of our algorithms requires tracking several parameters of linear-exponential systems. 
Below, we assume an ordering ${\theta(\vec x) = (2^{x_n} \geq \dots \geq 2^{x_0} = 1)}$ 
and let $\phi$ be either a linear-exponential system or a quotient system induced by~$\theta$. Here are the parameters we track:
\vspace{3pt}
\begin{itemize}
  \item The least common multiple of all divisors~$\fmod(\phi)$, defined as in~\Cref{sec:preliminaries}.
  \item The number of equalities, inequalities and divisibility constraints in $\phi$, denoted by~$\card \phi$.
  (Similarly, given a set~$T$, we write $\card T$ for its cardinality.)
  \item The $1$-norm~$\onenorm{\phi} \coloneqq \max\{\onenorm{\tau} : \text{$\tau$ is a term appearing in an (in)equality of $\phi$}\}$. 
  For linear-exponential terms,~$\onenorm{\tau}$ is defined in~\Cref{sec:preliminaries}. For quotient terms~$\tau$ induced by~$\theta$, the
  $1$-norm~$\onenorm{\tau}$ is defined as the sum of the absolute values of all the coefficients and constants appearing in~$\tau$.
  The definition of~$\onenorm{\phi}$ excludes integers appearing in divisibility constraints since, as explained above, those are already bounded by~$\fmod(\phi)$.
  \item The \emph{linear norm} $\linnorm{\phi} \coloneqq \max\{\linnorm{\tau} :  \text{$\tau$ is a term appearing in an (in)equality of $\phi$} \}$. 
  For~a~linear-exponential term~$\tau = {\sum\nolimits_{i=1}^n \big(a_i \cdot x_i + b_i \cdot 2^{x_i} + \sum\nolimits_{j=1}^n c_{i,j} \cdot (x_i \bmod 2^{x_j})\big) + d}$, 
  we define $\linnorm{\tau} \coloneqq \max\{ \abs{a_i}, \abs{c_{i,j}} : i,j \in [1,n] \}$, 
  that is, the maximum of all coefficients of $x_i$ and $(x_i \bmod 2^{x_j})$, in absolute value. 
  For a quotient term induced by~$\theta$,
  of the form
  $\tau = a \cdot 2^{x_n} + (c_1 \cdot x_1' + \dots + c_m \cdot x_m' + d) \cdot 2^{x_{n-1}} + b \cdot x_{n-1} + \rho(x_{0},\dots,x_{n-2}, \vec z')$,
  we define $\linnorm{\tau} \coloneqq \max\big(\abs{b},\linnorm{\rho},\max\{\abs{c_i} : i \in [1,m]\}\big)$, thus also taking into account
  the coefficients of the quotient variables~$x_1',\dots,x_m'$.
  \item The set of the least significant terms $\lst(\phi,\theta)$ defined as $\big\{ \pm \rho : {} \rho$ is the least significant part of a term $\tau$ appearing in an (in)equality $\tau \sim 0$ of $\phi$, 
  with respect to $\theta\,\big\}$.
  We have already defined the notion of the least significant part for a quotient term induced by~$\theta$ in~\Cref{sec:procedure}. For a (non-quotient) linear-exponential system $\phi$, the least significant part of a term $a \cdot 2^{x_n} + b \cdot x_n + \tau'(x_1,\dots,x_{n-1}, \vec z)$ 
  is the term $b \cdot x_n + \tau'$.
\end{itemize}

Two observations are in order. First, the 
bit size of a system $\phi(x_1,\dots,x_n)$ (i.e., the number of bits required to write down~$\phi$) 
is in $O(\card\phi \cdot n^2 \cdot \log(\max(\onenorm{\phi},\fmod(\phi),2)))$. 
Second, together with the number of variables in the input, our parameters are enough to bound all guesses in the procedure.
For instance, the value of~$c \neq \star$ guessed in line~\ref{line:linearize-guess-const} of~\Cref{algo:linearize} ~(\SolvePrimitive) can be bounded as~$O(\log(\max(\fmod(\gamma),\onenorm{\chi})))$.

The analysis of the whole procedure is rather involved.
Perhaps a good overall picture of 
this analysis is given by the evolution of the parameters at each iteration of the main \textbf{while} loop of~\Master, described in~\Cref{lemma:bounds-one-loop} below. 
This loop iterates at most $n$ times, with $n$ being the number of variables in the input system.
Below, $\totient$ stands for Euler's totient function,
arising naturally because of the computation of multiplicative orders in~\SolvePrimitive.

\begin{restatable}{lemma}{LemmaBoundsOneLoop}
  \label{lemma:bounds-one-loop}
  Consider the execution of~{\rm{\Master}}
  on an input~$\phi(x_1,\dots,x_n)$, with $n \geq 1$. For $i \in [0,n]$, let $(\phi_i,\theta_i)$ be the pair of a system and ordering
  obtained after the $i$th iteration of the \textbf{while} loop of~line~\ref{algo:master:outer-loop}, 
  where $\phi_0 = \phi$ and $\theta_0$ is the ordering guessed in line~\ref{algo:master:guess-order}. 
  Then, for every $i \in [0,n-1]$, $\phi_{i+1}$ has at most $n+1$ variables, and for every $\ell,s,a,c,d \geq 1$, 
  \begin{equation*}
    \text{if}\quad
    \begin{cases}
      \card \lst(\phi_i,\theta_i) &\leq \ell\\
      \card \phi_i &\leq s\\
      \linnorm{\phi_i} &\leq a\\ 
      \onenorm{\phi_i} &\leq c\\
      \fmod(\phi_i) &\hspace{3pt}\divides\hspace{2pt} d
    \end{cases}
    \quad\text{then}
    \quad 
    \begin{cases}
      \card \lst(\phi_{i+1},\theta_{i+1}) &\leq 
      \ell + 2(i+2)\\
      \card \phi_{i+1} &\leq s + 6(i+2) + 2\cdot \ell\\
      \linnorm{\phi_{i+1}} &\leq 3 \cdot a\\ 
      \onenorm{\phi_{i+1}} &\leq \tocheck{2^5(i+3)^2(c+2)} + 4 \cdot \log(d)\\
      \fmod(\phi_{i+1}) &\hspace{3pt}\divides\hspace{2pt} \lcm(d,\totient(\alpha_{i} \cdot d))
    \end{cases}
  \end{equation*}
  for some $\alpha_{i} \in [1,\tocheck{(3 \cdot a + 2)^{(i+3)^2}}]$. 
  The $(i+1)$st iteration of the \textbf{while} loop of~line~\ref{algo:master:outer-loop} 
  runs in non-deterministic polynomial time in the bit size of $\phi_i$.
\end{restatable}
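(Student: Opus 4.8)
The plan is to fix the $(i+1)$st iteration of the main \textbf{while} loop of \Cref{algo:master} (\Master) and to follow the five tracked parameters, together with the number of variables, through its three phases: \textbf{(a)} the rewriting of $\phi_i$ into a quotient system induced by $\theta_i$ in lines~\ref{algo:master:large-mod-sub}--\ref{algo:master:end-inner-loop}; \textbf{(b)} the call to \Cref{algo:elimmaxvar} (\ElimMaxVar) in line~\ref{algo:master:call-elimmaxvar}, which internally splits into Steps~(i)--(iii) and invokes \GaussianQE and \SolvePrimitive; and \textbf{(c)} the removal of $2^x$ from $\theta$ in line~\ref{algo:master:end-loop}, which touches no parameter. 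The final bounds follow by composing the per-phase estimates. The variable-count bound is the easiest: the exponentiated variables of $\phi_{i+1}$ are $x_0,\dots,x_{n-i-1}$, i.e.\ $n-i$ of them, while \ElimMaxVar removes the leading variable $x=x_{n-i}$, the placeholder $u$, and every quotient variable (including $x'$), so the only surviving auxiliary variables are the $\le i+1$ remainder variables created in phase~\textbf{(a)} (one per element of the vector $\vec z$ of line~\ref{algo:master:def-z}, whose size is $\le i+1$ by invariant~\eqref{inv:it3}); hence $\phi_{i+1}$ has at most $(n-i)+(i+1)=n+1$ variables.

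For phase~\textbf{(a)} we inspect each substitution separately. The three modulo-elimination substitutions (lines~\ref{algo:master:large-mod-sub}, \ref{algo:master:elimmod}, \ref{algo:master:simpmod}) never create on their own a monomial whose coefficient exceeds one already present, but line~\ref{algo:master:large-mod-sub} can \emph{merge} the coefficient of a variable $w$ with the coefficient of $(w\bmod 2^x)$ (an increase of at most $a$), after which the combination of line~\ref{algo:master:elimmod} with the linear substitution of line~\ref{algo:master:linear-substitution} can merge the resulting coefficient of $z$ with the coefficient of $(z\bmod 2^y)$ (a further increase of at most $a$); this is the origin of the bound $\linnorm{\phi_{i+1}}\le 3a$, which we then argue survives phase~\textbf{(b)} (there \GaussianQE and \SolvePrimitive only return linear systems in the auxiliary variables $u,x'$, whose coefficients are absorbed into least-significant parts already bounded by $3a$). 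Each of the $\le i+1$ elements of $\vec z$ also contributes the two inequalities $0\le z'<2^y$ of line~\ref{algo:master:imposezlessy}, hence $O(i)$ new constraints and a bounded number of new least-significant parts; all other changes are renamings, so the $1$-norm is unchanged by this phase.

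For phase~\textbf{(b)} we analyse \ElimMaxVar step by step. In Step~(i) the \textbf{foreach} loop of line~\ref{line:elimmaxvar-foreach} replaces each (in)equality $\tau\sim0$ by a ``left part'' stored in $\gamma$ and, thanks to the memoisation map $\Delta$, by at most one ``right part'' per least-significant part stored in $\psi$; since the returned system is $\psi$, this is where the $+2\ell$ and the $O(i)$ contributions to the $\card\phi$ bound originate, while the guess $r\in[-\onenorm{\rho},\onenorm{\rho}]$ with $\onenorm{\rho}\le c$ keeps the $1$-norm of every right part in $O(c)$. \GaussianQE is then run on $\gamma$ (line~\ref{line:elimmaxvar-gauss-1}): by \Cref{thm:gaussianQE-in-np} it runs in non-deterministic polynomial time, and the quantitative fact we extract from its Bareiss-style design (using step~1 of the vigorous substitution, which factors each coefficient of an eliminated variable, together with the Desnanot--Jacobi identity) is that every lead coefficient it uses -- hence every divisor it introduces in line~\ref{gauss:restore} -- is a product of at most $O((i+3)^2)$ coefficients of $\gamma$, each at most $\linnorm{\gamma}\le 3a$; thus $\fmod(\gamma)$ is bounded by $d$ times $(3a+2)^{O((i+3)^2)}$, which is exactly the meaning of $\alpha_i$ in the statement. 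In Step~(ii) \SolvePrimitive (\Cref{algo:linearize}) is applied to the $(u,x')$-primitive system $\gamma$; its threshold $C$ of line~\ref{line:linearize-max-constant} is polynomially bounded in the tracked parameters (its ``$n$'' summand is at most $\log\fmod(\gamma)$, contributing the additive $4\log d$ and a term absorbed into $2^5(i+3)^2(c+2)$ once $C$ is propagated through the delayed substitution of line~\ref{line:elimmaxvar-delayed-subs}). Crucially, the large modulus $\fmod(\gamma)\le\alpha_i d$ lives only inside the system $\gamma$, which is entirely eliminated by the final \GaussianQE call of line~\ref{line:elimmaxvar-gauss-2}; what reaches $\phi_{i+1}$ from \SolvePrimitive is a single divisibility $d'\divides u-r'$, where $d'$ is the multiplicative order of $2$ modulo the odd part $m'$ of $\fmod(\gamma)$, and since $d'$ divides $\totient(m')$, while $m'$ divides $\alpha_i d$ so that $\totient(m')$ divides $\totient(\alpha_i d)$, this together with the surviving original divisors $d$ gives $\fmod(\phi_{i+1})\divides\lcm(d,\totient(\alpha_i d))$. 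Step~(iii) (lines~\ref{line:elimmaxvar-x1-start}--\ref{line:elimmaxvar-gauss-2}) mirrors Step~(i) -- one further split of $\chi$ into a left part (added to $\gamma$, then eliminated) and a right part (added to $\psi$, with constants already bounded through $C$), plus one more \GaussianQE call -- and contributes only the remaining lower-order summands.

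The main obstacle is the quantitative interaction with \GaussianQE inside \ElimMaxVar: one must understand the structure of the divisibility constraints \GaussianQE produces precisely enough to certify that its lead coefficients -- and therefore $\alpha_i$ -- are products of only $O((i+3)^2)$ system coefficients rather than an uncontrolled number, and one must check that the memoisation via $\Delta$ really keeps $\card\gamma$, hence the number of Gaussian-elimination steps, linear in $i$; everything else is careful but routine bookkeeping over the substitutions. Once the five bounds are in place, the non-deterministic polynomial running time of the $(i+1)$st iteration follows by composition: each substitution of phase~\textbf{(a)} is polynomial-time computable, \GaussianQE and \SolvePrimitive run in non-deterministic polynomial time (the latter querying a discrete-logarithm problem, which is in \np by~\cite{Joux14}), every guessed value is polynomially bounded in the parameters just tracked (as observed right after their definitions in \Cref{sec:complexity}), and the five parameters of $\phi_{i+1}$ have bit size polynomial in $\bitlength{\phi_i}$ because $\log\alpha_i=O((i+3)^2\log(3a))$ is polynomial.
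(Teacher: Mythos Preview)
Your proposal is correct and follows essentially the same route as the paper: decompose the $(i+1)$st iteration into the rewriting phase (lines~\ref{algo:master:large-mod-sub}--\ref{algo:master:end-inner-loop}) and the call to \ElimMaxVar, then track the five parameters through each. The paper factors the \ElimMaxVar analysis into a separate lemma with its own parameter table (bounding the output $\psi$ in terms of the input quotient system), whereas you inline Steps~(i)--(iii) directly; both arguments hinge on the same three ingredients --- the Bareiss minor bound from \GaussianQE for $\alpha_i$, the fact that the multiplicative order computed in \SolvePrimitive divides $\totient(\alpha_i d)$, and the memoisation via $\Delta$ controlling the number of right parts.

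Two small points are worth tightening. First, the lead coefficients in \GaussianQE are \emph{determinants} (leading principal minors) of a matrix with at most $m+1\le i+2$ columns, not ``products of $O((i+3)^2)$ coefficients''; each minor is a sum of $k!\le k^k$ products of $k$ entries, and the paper bounds it crudely by $(\linnorm{\gamma}+2)^{(m+2)^2}$ via $k!\cdot\alpha^k\le(\alpha+2)^{(k+1)^2}$. Your conclusion $(3a+2)^{(i+3)^2}$ is right, but the intuition you give misrepresents the shape of the bound. Second, the memoisation map $\Delta$ controls $\card\psi$ (the right parts), not $\card\gamma$; in fact $\card\gamma\le\card\phi$ holds regardless of $\Delta$, and what keeps the \GaussianQE exponent at $(i+3)^2$ is the number of \emph{variables} in $\gamma$ (namely $u$ and the $\le i+1$ quotient variables), not its number of constraints. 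Your closing paragraph conflates these, though the overall argument still goes through once the confusion is resolved.
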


We iterate the bounds in~\Cref{lemma:bounds-one-loop}
to show that, for every $i \in [0,n]$, 
the bit size of $\phi_i$ is polynomial in the bit size of the initial system $\phi$.
A challenge is to bound $\fmod(\phi_i)$, which requires studying iterations of the map $x \mapsto \lcm(x,\totient(\alpha \cdot x))$, where $\alpha$ is some positive integer. We show the following lemma:

\begin{restatable}{lemma}{LemmaTotientBound}
  \label{lemma:totient-bound}
  Let $\alpha \geq 1$ be in~$\N$. Consider the integer sequence $b_0,b_1,\dots$ given by
  the recurrence $b_0 \coloneqq 1$ 
  and $b_{i+1} \coloneqq \lcm(b_i, \totient(\alpha \cdot b_i))$.
  For every $i \in \N$, $b_i \leq \alpha^{2 \cdot i^2}$.
\end{restatable}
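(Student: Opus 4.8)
The plan is to estimate the sequence $b_0, b_1, \dots$ by a direct inductive argument, controlling the growth of each $b_{i+1} = \lcm(b_i, \totient(\alpha \cdot b_i))$ in terms of $b_i$ and $\alpha$. The two facts I would rely on are: (a) $\totient(m) < m$ for all $m \geq 2$, so in particular $\totient(\alpha \cdot b_i) \leq \alpha \cdot b_i$; and (b) for any positive integers $p, q$, one has $\lcm(p,q) \leq p \cdot q$, hence $\lcm(p,q) \mid p \cdot q$. Combining these crude bounds already gives $b_{i+1} \leq b_i \cdot \alpha \cdot b_i = \alpha \cdot b_i^2$, which, with $b_0 = 1$, yields a doubly-exponential recurrence $b_{i+1} \leq \alpha \cdot b_i^2$ whose solution is easily checked by induction to satisfy $b_i \leq \alpha^{2^i - 1}$. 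Unfortunately this is far weaker than the claimed bound $b_i \leq \alpha^{2 i^2}$, so the naive estimate does not suffice, and the real content of the lemma is that the $\lcm$ does \emph{not} behave like a product across iterations.

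The key structural observation, which I expect to be the heart of the argument, is that the prime factorization of $b_i$ stabilizes quickly: once a prime $q$ divides some $b_i$, it divides all $b_j$ for $j \geq i$, and the \emph{set of primes} dividing $b_i$ grows only by the primes introduced through $\totient$. Concretely, I would track $b_i$ via its support $\mathrm{supp}(b_i) = \{ q \text{ prime} : q \mid b_i\}$ and the exponent vector. The crucial point is that $\totient(\alpha \cdot b_i) = \prod_{q \mid \alpha b_i} q^{e_q - 1}(q-1)$, so the primes appearing in $\totient(\alpha \cdot b_i)$ are either (i) primes already dividing $\alpha \cdot b_i$, or (ii) primes dividing some $q - 1$ with $q \mid \alpha b_i$ — and every such new prime is strictly smaller than $q$. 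Thus the set of primes that can ever enter the support is bounded by the primes dividing $\alpha$ together with all primes below $\max(\mathrm{supp}(\alpha) \cup \mathrm{supp}(b_1))$, and this set stabilizes after at most $i$ steps (each step can only "descend" to smaller primes, and there are finitely many). The plan is to show that after the support has stabilized, $b_{i+1}$ differs from $b_i$ only in the exponents, and each exponent can increase by at most a bounded amount per step because $\totient$ of a number with support $S$ and bounded exponents again has support $\subseteq S$ with exponents that grow only additively.

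More precisely, I would carry this out as follows. First, fix $\alpha$ and let $P$ be the (finite) set of all primes that appear in $\mathrm{supp}(b_i)$ for any $i$; by the descent argument above, $|P| \leq \log_2(\alpha \cdot b_i)$ crudely, but in fact one shows $\prod_{q \in P} q \leq \alpha \cdot b_i$ for all sufficiently large $i$ since no genuinely new large prime is ever introduced. Second, write $b_i = \prod_{q \in P} q^{e_q^{(i)}}$ and show by induction that $\sum_{q \in P} e_q^{(i)} \log q = \log b_i$ grows by at most an additive term per iteration: from $b_{i+1} \mid b_i \cdot \totient(\alpha b_i)$ and $\totient(\alpha b_i) \mid \alpha b_i / \rad(\alpha b_i) \cdot \prod_{q \mid \alpha b_i}(q-1)$, one extracts that $\log b_{i+1} \leq \log b_i + \log \alpha + \sum_{q \in \mathrm{supp}(\alpha b_i)} \log(q-1) \leq 2 \log b_i + 2\log \alpha$ — wait, this still multiplies. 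The correct refinement, which is the main obstacle, is to separate the "radical part" (which stabilizes to divide $\alpha \cdot b_1$, a fixed quantity) from the exponent growth: writing $b_i = R \cdot m_i$ where $R = \rad(b_i)$ is eventually constant, the recurrence on $m_i$ is genuinely additive in $\log$, giving $\log b_i \leq \log b_1 + (i-1) \cdot \log(\alpha \cdot b_1) \leq i^2 \cdot \log \alpha \cdot (\text{const})$. I would then tune the constants — tracking that $b_1 = \totient(\alpha) \leq \alpha$ and that each step adds at most $\log(\alpha \cdot b_1) + \log b_1 \leq 2\log\alpha + \log b_i$ to the logarithm in a way that telescopes to $2 i^2 \log \alpha$ — to land exactly on $b_i \leq \alpha^{2 i^2}$. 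The delicate bookkeeping of which primes are "old" versus "new" and the exact accounting of the additive increments is where the proof will require care; everything else is routine estimation with $\totient(m) < m$ and $\lcm(p,q) \leq pq$.
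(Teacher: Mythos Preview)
Your intuition is sound --- the support of $b_i$ grows only via the ``descent'' $p \mapsto$ (primes dividing $p-1$), and once the support is fixed the exponent of each prime increases by at most a constant per step --- but the argument as written has a genuine gap at the point you yourself flag. The radical $\rad(b_i)$ is \emph{not} constant from $i=1$; the descent can take many steps (one new layer of primes per iteration), and your displayed inequality $\log b_i \le \log b_1 + (i-1)\log(\alpha b_1)$ silently assumes stabilisation has already happened. Without a bound on the transient phase and on the per-step additive constant (which itself depends on the eventual radical, not just on~$\alpha$), nothing in the sketch forces the exponent to be polynomial in~$i$, let alone $\le 2i^2$.

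The paper avoids the whole ``wait for stabilisation, then go additive'' two-phase analysis by working with the \emph{shifted radical} $\srad(a) \coloneqq \prod_{p\mid a}(p-1)$ and its iterate $\israd(\alpha,k)$ defined by $\israd(\alpha,0)=1$, $\israd(\alpha,k+1)=\srad(\alpha\cdot\israd(\alpha,k))$. The crucial algebraic fact is $\srad(a^k b)=\srad(ab)$: the shifted radical ignores exponents. This lets one introduce a dominating sequence $d_1=\alpha\,\srad(\alpha)$, $d_{i+1}=\alpha\, d_i\,\srad(d_i)$ for which one proves $b_i \mid d_i$ by a one-line induction (using $\totient(m)=\tfrac{m}{\rad(m)}\srad(m)$), and then shows the closed form $d_i=\alpha^i\prod_{k=1}^i \israd(\alpha,k)$. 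Since $\israd(\alpha,k)\le \alpha^k$ and the $\israd(\alpha,k)$ form a divisibility chain, $d_i \le \alpha^i \cdot \israd(\alpha,i)^i \le \alpha^{i+i^2}\le \alpha^{2i^2}$. This packages exactly your ``layer-$k$ new primes'' idea into $\israd(\alpha,k)$, but does so uniformly over all~$i$ rather than splitting into a transient and a stable regime.
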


Given \Cref{lemma:bounds-one-loop},
one can show $\alpha_{j} \leq (\linnorm{\phi} + 2)^{\tocheck{O(j^3)}}$ for every $j \in [0,n-1]$. 
Then, since $\fmod(\phi_0) = 1$, for a given $i \in [0,n-1]$ we apply~\Cref{lemma:totient-bound}
with $\alpha = \lcm(\alpha_0,\dots,\alpha_{i})$ 
to derive~$\fmod(\phi_{i+1}) \leq (\linnorm{\phi} + 2)^{\tocheck{O(i^6)}}$.
Once a polynomial bound for the bit size of every $\phi_i$ is established,~\Cref{prop:master-in-np} follows immediately
from the last statement of~\Cref{lemma:bounds-one-loop}.

\section{Proofs of Theorem~\ref{thm:buchi-semenov-in-np-z} and Theorem~\ref{thm:buchi-semenov-np}}
\label{sec:wrappers}

In this section, we discuss how to
reduce the task of solving linear-exponential systems over~$\Z$ to the non-negative case,
thus establishing~\Cref{thm:buchi-semenov-in-np-z}.
We also prove~\Cref{thm:buchi-semenov-np}.%

\begin{proof}[Solving linear-exponential systems over $\Z$ (proof of~\Cref{thm:buchi-semenov-in-np-z}).]
Let~$\phi(x_1,\dots,x_n)$ be a linear-exponential system~$\phi(x_1,\dots,x_n)$ (without divisibility constraints).
We can non-deterministi\-cally guess which variables will, in an integer solution~$\vec u \in \Z^n$ of $\phi$, 
assume a non-positive value.
Let $I \subseteq [1,n]$ be the set of indices corresponding to these variables. 
Given $i \in I$, all occurrences of $(x \bmod 2^{x_i})$ in $\phi$ can be replaced with~$0$, 
by definition of the modulo operator.
We can then replace each linear and exponentiated occurrence of $x_i$ with $-x_i$. 
Let $\chi(\vec x)$ be the system obtained from $\phi$ after these replacements.

The absolute value of all entries of $\vec u$ is a solution for $\chi$ over~$\N$. However, $\chi$ might feature terms of the form $2^{-x_i}$ for some $i \in I$ 
and thus not be a linear-exponential system.
We show how to remove such terms. Consider an inequality of the form $\tau \leq \sigma$, where the term $\tau$ contains no $2^{-x}$ and $\sigma \coloneqq \sum_{i \in I} a_i \cdot 2^{-x_i}$ with some $a_i$ non-zero.
Since each $x_i$ is a non-negative integer, we have $\abs{\sum_{i \in I} a_i \cdot 2^{-x_i}} \leq \sum_{i \in I} \abs{a_i} \eqqcolon B$. 
Therefore, in order to satisfy $\tau \leq \sigma$, any solution $\vec v$ of $\chi$ must be such that $\tau(\vec v) \leq B$. 
We can then non-deterministically add to $\chi$ either $\tau < -B$ 
or $\tau = g$, for some $g \in [-B,B]$.
\begin{description} 
  \item[Case $\tau < -B$.] The inequality $\tau \leq \sigma$ is entailed by $\tau < -B$ and can thus be eliminated.
  \item[Case {$\tau = g$ for some $g \in [-B,B]$.}] We replace $\tau \leq \sigma$ with $g \leq \sigma$, and multiply both sides of this inequality by $2^{{\Sigma_{i \in I} x_i}}$. The resulting inequality is rewritten as $g \cdot 2^{z} \leq \sum_{i \in I} a_i \cdot 2^{z_i}$, where $z$ and all $z_i$ are fresh variables (over~$\N$) that are subject to the equalities $z = \sum_{i \in I} x_i$ and $z_i = \sum_{j \in I \setminus \{i\}} x_j$. We add these equalities to~$\chi$.
\end{description}
In the above cases we have removed from~$\chi$ the inequality $\tau \leq \sigma$ in favour of inequalities and equalities only featuring linear-exponential terms.
Strict inequalities~$\tau < \sigma$ can be handled analogously; and for equalities $\tau = \sigma$ one can separately consider $\tau \leq \sigma$ and~${-\tau \leq -\sigma}$. 
The fresh variables $z$ and $z_i$ can be introduced once and reused for all inequalities.

Repeating the process above for each equality and inequality yields (in non-deterministic polynomial time) a linear-exponential system~$\psi$ that is satisfiable over~$\N$ if and only if the input system $\phi$ is satisfiable over~$\Z$. 
The satisfiability of $\psi$ is then checked by calling~\Master. 
Hence, correctness and \np membership follow by~\Cref{prop:master-correct,prop:master-in-np}, respectively.
\end{proof}

\begin{proof}[Deciding existential B\"uchi--Semenov arithmetic (proof of~\Cref{thm:buchi-semenov-np}).]
Let $\phi$ be a formula in the existential theory of the structure~${(\N,0,1,+,2^{(\cdot)},\V(\cdot,\cdot),\leq)}$ (i.e., B\"uchi--Semenov arithmetic).
By De Morgan's laws,~we can bring $\phi$ to negation normal form. Negated literals can then be replaced by positive formulae: 
$\lnot \V(\tau,\sigma)$ becomes $\V(\tau,z) \land \lnot(z = \sigma)$ where $z$ is a fresh variable, 
$\lnot(\tau = \sigma)$ becomes $(\tau < \sigma) \lor (\sigma < \tau)$, and $\lnot(\tau \leq \sigma)$ 
becomes $\sigma < \tau$. 
Next, occurrences of $\V(\cdot,\cdot)$ and $2^{(\cdot)}$ featuring arguments other than variables can be 
``flattened'' by introducing extra (non-negative integer) variables: e.g., an occurrence of $2^{\tau}$ can be replaced with $2^{z}$, 
where $z$ is fresh, subject to conjoining to the formula~$\phi$ the constraint $z = \tau$.
Lastly, recall that $\V(x,y)$ can be rephrased in terms of the modulo operator via a linear-exponential system 
$2\cdot y = 2^v \land 2 \cdot (x \bmod 2^v) = 2^v$, where $v$ is a fresh variable.

After the above transformation, we obtain a formula~$\psi$ of size polynomial with respect to the original one. 
This formula is a positive Boolean combination of linear-exponential systems.
A non-deterministic polynomial-time algorithm deciding~$\psi$ first (non-deterministically) rewrites 
each disjunction $\phi_1 \lor \phi_2$ occurring in $\psi$ into either $\phi_1$ or $\phi_2$. 
After this step, each non-deterministic branch contains a linear-exponential system. 
The algorithm then calls~\Master. Correctness and \np membership then follow by~\Cref{prop:master-correct,prop:master-in-np}.
\end{proof}

\section{Future directions}
\label{sec:conclusion}

We have presented a quantifier elimination procedure that decides in non-deterministic polynomial time 
whether a linear-exponential system has a solution over~$\Z$. 
As a by-product, this result shows that satisfiability for existential B\"uchi--Semenov arithmetic belongs to \np.
We now discuss further directions that, in view of our result, may be worth pursuing. 

As mentioned in~\Cref{sec:related}, the $\exists^*\forall^*$-fragment of B\"uchi--Semenov arithmetic is undecidable. 
Between the existential and the $\exists^*\forall^*$-fragments lies, in a certain sense, the optimisation problem: minimising or maximising a variable subject to a formula. 
It would be interesting to study whether the natural optimisation problem for linear-exponential systems lies within an optimisation counterpart of the class~\np. 

With motivation from verification questions,
problems involving integer exponentiation
have recently been approached
with satisfiability modulo theories (SMT) solvers~\cite{Frohn24}.
The algorithms developed in our paper may be useful to further the research in this direction.

Our work considers exponentiation with a single base. 
In a recent paper~\cite{Hieronymi022}, Hieronymi and Schulz prove the first--order theory of $(\N,0,1,+,2^{\N},3^{\N},\leq)$ undecidable, 
where~$k^{\N}$ is the predicate for the powers of $k$. 
Therefore, the first-order theories of the structures $(\N,0,1,+,V_2,V_3,\leq)$ and $(\N,0,1,+,2^{(\cdot)},3^{(\cdot)},\leq)$, 
which capture $2^{\N}$ and $3^{\N}$, are undecidable.
Decidability for the existential fragments of all the theories in this paragraph is open.

Lastly, it is unclear whether there are interesting relaxed versions of linear-exponential systems, i.e., over~$\R$ instead of $\Z$.
Observe that, in the existential theory of the structure~$(\R,0,1,+,2^{(\cdot)},\leq)$, the formula $x = 2^{y'+z'} \land y = 2^{y'} \land z = 2^{z'}$  
defines the graph of the multiplication function~$x = y \cdot z$ for positive reals. 
This ``relaxation'' seems then only to be decidable subject to (a slightly weaker version of) Schanuel's conjecture~\cite{MacWilkie96}.
To have an unconditional result one may consider systems where only one variable occurs exponentiated. 
These are, in a sense, a relaxed version of $(u,v)$-primitive systems. 
Under this restriction, unconditional decidability was previously proved by Weispfenning~\cite{Weispfenning00}.


\bibliography{bibliography}


\appendix
\section{Theorem \ref{thm:buchi-semenov-in-np-z} holds for any positive integer base given in binary}
\label{app:inter-def}

\Cref{algo:master} 
and~\Cref{algo:elimmaxvar} are agnostic with regard to the choice of the base~$k \geq 2$. 
They do not inspect $k$ and see exponential terms $k^x$ as purely syntactic objects. 
Their logic does not need to be updated to accommodate a different base.
To add support for a base $k$ given in input to these two algorithms, 
it suffices to replace in the pseudocode every $2$ with $k$.

\Cref{algo:linearize} is different, as it uses properties of exponentiation. 
In that algorithm, line~\ref{line:linearize-mod-factor} must be updated as follows.
The pair $(d,n)$ is redefined to be such that $d$ is the largest integer coprime with $k$ dividing $\fmod(\gamma)$,
and $k^n$ is the smallest power of $k$ divisible by $\frac{\fmod(\gamma)}{d}$. 
For example, in the case when $k = 6$ and $\fmod(\gamma) = 60$, we obtain $d=5$ 
and $n=2$, because $36$ is the smallest power of $6$ divisible by $\frac{60}{5}=12$. 
It is clear that $n\leq\lceil\log(\fmod(\gamma))\rceil$, and the pair $(d,n)$ can be computed 
in deterministic polynomial time.

Apart from this update, 
it suffices to replace every occurrence of 
$2^n \cdot r$ with $\frac{\fmod(\phi)}{d} \cdot r$, and every remaining occurrence of
$2$ with $k$ 
(except for the constant $2$ appearing in the expression $3 + 2 \cdot\bigl\lceil\log_k(\frac{|b|+|c|+1}{|a|})\bigr\rceil$).
This means that the discrete logarithm problems of lines~\ref{line:linearize-assert}--\ref{line:linearize-mult-ord} 
must be solved with respect to~$k$ instead of $2$ (but this can still be done in non-deterministic polynomial time).
No other change is necessary.


\newpage




\section{Proofs from Section~\ref{sec:gaussian-elimination}: solving systems of linear inequalities over $\Z$}
\label{app:gaussian-elimination}

This appendix provides a proof of~\Cref{thm:gaussianQE-in-np}. 
We first introduce (in~Section~\ref{app:gaussian-elimination:intro}) a matrix representation 
of the input system of constraints and also fix some other notation needed in the subsequent 
sections. The correctness proof is split into three parts. 
Key properties of the matrices in the variable elimination process are gathered in \Cref{l:fundamental} 
(fundamental lemma) in Section~\ref{app:gaussian-elimination:fundamental}. 
Based on these properties, we next prove that the divisions 
in line~\ref{gauss:divide} of \Cref{algo:gaussianqe} (\GaussianQE) are without remainder, in Section~\ref{app:gaussian-elimination:divisibilities}.
In Section~\ref{app:gaussian-elimination:correctness}
we prove that steps of the algorithm keep the system of constraints equivalent
to the input system.
We then provide the complexity analysis of the algorithm in Section~\ref{app:gaussian-elimination:complexity}, and
\Cref{thm:gaussianQE-in-np} will be a direct consequence of the main statements proved in this appendix.

\vspace{-0.5em}
\subsection{Introduction to Algorithm \ref{algo:gaussianqe} (\GaussianQE) and its analysis}
\label{app:gaussian-elimination:intro}

Throughout this section, we refer to \Cref{algo:gaussianqe} (\GaussianQE) on page~\pageref{algo:gaussianqe}.
The input to \GaussianQE is assumed to be a system (conjunction)~$\phi$
of equalities ($\tau = 0$), non-strict linear inequalities ($\tau \le 0$), and divisibility constraints ($d \divides \rho$).
Strict inequalities can be handled by the addition of $+1$ to the left-hand sides.
Variables of the system~$\phi$ are partitioned
into $\vec x$, to be eliminated, and $\vec z$, to remain in the output.

\subparagraph*{Slack variables (line~\ref{gauss:introduce-slack}).}

In addition to variables $\vec x$ and $\vec z$, 
the algorithm also uses \emph{slack} variables, $\vec y$, which are auxiliary.
The intended domain of slack variables is $\N$. These are internal to the procedure and are eliminated
at the end.

Slack variables are not picked in the header (line~\ref{gauss:mainloop}) of the first \textbf{foreach} loop (below, we refer to this loop as the \emph{main \textbf{foreach} loop}).
Instead, a slack variable $y$ gets eliminated when the substitution $\sub{v}{y}$ is applied to $\phi$.
This substitution is set up in line~\ref{gauss:append-seq},
right after the constraint from which it arises is used to eliminate some variable from~$\vec x$.
Thus, each iteration of the loop eliminates one variable from~$\vec x$ and,
if the chosen constraint stems from an inequality of the input formula, also
one slack variable.

\emph{Lazy addition.}
\GaussianQE performs the substitution $\sub{v}{y}$ lazily:
the choice of the value~$v$ for the variable~$y$ is recorded, but no replacement is carried out in the constraints
until the second \textbf{foreach} loop,
so that variable~$y$ continues to be used.
Thus, the addition of original constant terms in the constraints and the integers arising from the replacement of~$y$ by~$v$
is delayed. 
This laziness turns out convenient when proving
the correctness of the algorithm.
It is, however, easy to see that the \emph{non-lazy} (eager) version of the algorithm
in which each substitution $\sub{v}{y}$ is applied straight away (at line~\ref{gauss:append-seq})
can also be proved correct as long as the lazy version is correct.

\subparagraph*{Matrix representation of systems of constraints.}

A system of constraints (equalities, inequalities, and divisibilities) can be written
in a matrix.
Formally, the \emph{matrix associated to a system $\gamma$} has
rows that correspond to constraints and columns that correspond to variables and constant terms
in~$\gamma$.
An entry in this matrix is the coefficient of the variable (or the constant term)
in the constraint.
We assume that there are no inequalities in~$\gamma$, as they have been replaced
with equalities following the introduction of slack variables in line~\ref{gauss:introduce-slack}
as described above.
Each divisibility constraint $d \divides \rho$ is represented as
the equality $d u + \rho = 0$, where $u$ is a fresh dummy variable (different for each constraint).
%
We denote $\vec u$ the vector of all such dummy variables.

Representation of divisibility constraints in the matrix does not influence in any
way the execution of~\GaussianQE, which manipulates such
constraints as described in the pseudocode.
In other words, the associated matrix is a concept used in the analysis only.

To sum up,
after the introduction of slack variables at line~\ref{gauss:introduce-slack}
and at each iteration of the main \textbf{foreach} loop
of \GaussianQE,
the system~$\phi$
will have the form
\begin{equation}
 \left[ \begin{array}{c|c|c|c} A & \vec c & I' & D \end{array} \right] \cdot
 \left[ \begin{array}{c|c|c|c|c} \vec x^\top & \vec z^\top & -1 & \vec y^\top & \vec u^\top \end{array} \right]^{\!\top} = \vec 0,
\end{equation}
where:
\begin{itemize}
\item $\vec x$ is the vector of input variables to be eliminated;
\item $\vec z$ is the vector of all other input variables;
\item $\vec y$ is the vector of slack variables (internal to the procedure);
\item $\vec u$ is the vector of dummy variables for the analysis of divisibility constraints; and
\item
the factor
$B \defeq \left[ \begin{array}{c|c|c|c} A & \vec c & I' & D \end{array} \right]$
is the \emph{matrix associated to the system~$\phi$:}
\begin{itemize}
\item $A$ (the main part) has as many columns as there are variables in $\vec x$ and $\vec z$ combined,
\item $\vec c$ (the constant-term block) consists of one column only,
\item $I'$ (the slack part) is the identity matrix interspersed with some zero rows, and
\item $D$ (the dummy part) is the diagonal matrix interspersed with some zero rows.
\end{itemize}
\end{itemize}

%


\subparagraph*{Main \textbf{foreach} loop of the algorithm (lines~\ref{gauss:mainloop}--\ref{gauss:restore}).}

We call an iteration of the main \textbf{foreach} loop \emph{nontrivial}
if, at line~\ref{gauss:triviality-check}, some equality of the system~$\phi$
contains the variable~$x$.
For $k \ge 0$, we denote by $\phi_k$ the system of constraints~$\phi$ immediately after $k$~nontrivial iterations of the main \textbf{foreach} loop.
For example, $\phi_0$ is the system obtained
after the introduction of slack variables at line~\ref{gauss:introduce-slack}.


In the non-deterministic execution of \GaussianQE,
each branch computes
the sequence of systems $\phi_0, \phi_1, \ldots, \phi_k, \ldots$\,,
and we denote by $B_0, B_1, \ldots, B_k, \ldots$ the matrices associated with them.
Thus, each non-deterministic branch can be depicted using a commutative diagram:
\begin{equation}
\label{eq:systems-and-matrices}
\begin{CD}
\phi @>>> \phi_0 @>>> \phi_1 @>>> \cdots @>>> \phi_k @>>> \cdots \\
@.        @VVV        @VVV        @.          @VVV        @.     \\
\mbox{}@.    B_0 @>>>    B_1 @>>> \cdots @>>>    B_k @>>> \cdots
\end{CD}
\end{equation}
Each horizontal arrow in the diagram involves a non-deterministic choice
of an equation (line~\ref{gauss:guess-equation}), as well as possibly
a non-deterministic choice of the amount of slack (line~\ref{gauss:guess-slack}).
Naturally, systems $\phi_k$ and matrices $B_k$ across different branches
will, in general, be different (depending on these guesses).

\begin{remark}
  Practically, constraints that originated as equalities (and thus, variables that appear in such constraints) 
  should probably be handled first.
  Theoretically, the order in which variables are chosen does not matter.
  However, if a chosen variable appears in a constraint that originated as equality,
  then in line~\ref{gauss:guess-equation} we may restrict the guessing to such equalities only.
  This restriction of choice eliminates
  all non-deterministic branching (guessing) in this iteration of the main \textbf{foreach} loop.
\end{remark}

\subparagraph*{Assumptions for the proof.}


In our analysis it will be convenient
to make the following two assumptions:
\begin{quote}
\begin{enumerate}
\renewcommand{\theenumi}{(A\arabic{enumi})}
\renewcommand{\labelenumi}{\theenumi}
\item\label{a-rows}
the main \textbf{foreach} loop picks (in line~\ref{gauss:guess-equation}) rows of the matrix in the natural order, i.e., from top to bottom; and
\item\label{a-cols}
the variables are eliminated
(handled in lines~\ref{gauss:guess-equation}--\ref{gauss:restore})
in the natural order of the columns of the matrix.
\end{enumerate}
\end{quote}
Both assumptions are made
only for the sake of convenience of notation,
with no loss of generality.

\begin{figure}[t]
  \centering
  \begin{tikzpicture}

    \node (before) at (0,0) {
      \begin{tikzpicture}
        \draw[fill=blue!20,draw=blue!40] (-5.35,-0.02) rectangle (-0.8,1.3);
        \draw[fill=orange!40,draw=orange] (-2.65,0.07) rectangle (-0.91,1.23);

        \draw[fill=gray!20,draw=black!40] (-5.35,-1.33) rectangle (1.1,-0.08);

        \draw[fill=orange!40,draw=orange!40] (1.73,-1.12) rectangle (3,-0.2);

        \node at (0,0) {
          \begin{minipage}{0.8\linewidth}
            $
            {\renewcommand{\arraystretch}{1.1}
              \left[
                  \begin{array}{cc|c|cc|c}%
                    &&&&&\\[-11pt]
                     a & \raisebox{-2.5pt}{\textbf{*}} & * & 0|1 & & \\[-2pt]
                    \raisebox{-6pt}{\textbf{*}} & \raisebox{-6pt}{\scalebox{1.8}{$A_{\scalebox{0.5}{$0$}}$}} &  \raisebox{-6pt}{\scalebox{1.8}{$\vec c_{\scalebox{0.5}{$0$}}$}} & 
                    \phantom{*} & 
                    \raisebox{-6pt}{\scalebox{1.8}{$\phantom{a}I_{\scalebox{0.5}{$0$}}'$}} & \\
                    &&&&&\\[-10pt]
                    \hline
                    &&&&&\\[-8pt]
                    \textbf{*} & 
                    \scalebox{1.8}{$\,T_{\scalebox{0.5}{$0$}}\,$} & 
                    \raisebox{2pt}{\scalebox{1.8}{$\vec g_{\scalebox{0.5}{$0$}}$}} &
                    \phantom{*} &&
                      {\arraycolsep=0.1pt\def\arraystretch{0.1}
                        \begin{array}{ccc}
                          d_1\phantom{a} \\[-5pt] 
                          & \ddots \\ 
                          && d_n\phantom{a}
                      \end{array}}%
                    \vspace{4pt}%
                  \end{array} \right]
              } \cdot
            \left[ \begin{array}{c} x \\ \phantom{'}\vec x' \\[-2pt] \vec z \\\hline -1\phantom{-} \\\hline y\\ \vec y' \\\hline \vec u \end{array} \right]= \vec 0
            $
          \end{minipage}
        };
      \end{tikzpicture}
    };

    \node (after) at (0,-3.8) {
      \begin{tikzpicture}

        \draw[fill=gray!20,draw=black!40] (-5.35,0.89) rectangle (-2,1.4);

        \draw[fill=blue!20,draw=blue!40] (-4.78,-0.15) rectangle (-0.79,0.83);
        \draw[fill=orange!40,draw=orange] (-1.9,-0.07) rectangle (-0.85,0.73);

        \draw[fill=gray!20,draw=black!40] (-4.78,-1.42) rectangle (1.1,-0.22);

        \draw[fill=magenta!30,draw=magenta!50] (-2.7,-1.3) rectangle (-2.05,1.33);

        \draw[fill=orange!40,draw=orange!40] (1.73,-1.12) rectangle (3,-0.65);

        \draw[fill=magenta!30,draw=magenta!30] (1.73,-0.65) rectangle (3,-0.25);

        \node at (0,0) {
          \begin{minipage}{0.8\linewidth}
            $
            {\renewcommand{\arraystretch}{1.1}
              \left[
                  \begin{array}{c|c|cc|c|c}%
                    &&&&&\\[-10pt]
                     a & \raisebox{-2.5pt}{\textbf{*}} & * & 0|1 & & \\
                    \hline
                    &&&&&\\[-10pt]
                    \phantom{\raisebox{-6pt}{\textbf{*}}} & \raisebox{-6pt}{\scalebox{1.8}{$A_{\scalebox{0.5}{$1$}}$}} &  \raisebox{-6pt}{\scalebox{1.8}{$\vec c_{\scalebox{0.5}{$1$}}$}} & 
                    \raisebox{-6pt}{\textbf{*}} 
                    & 
                    \raisebox{-6pt}{\scalebox{1.8}{$aI_{\scalebox{0.5}{$0$}}'$}} & \\
                    &&&&&\\[-10pt]
                    \hline
                    &&&&&\\[-8pt]
                    & 
                    \scalebox{1.8}{$\,T_{\scalebox{0.5}{$1$}}\,$} & 
                    \raisebox{2pt}{\scalebox{1.8}{$\vec g_{\scalebox{0.5}{$1$}}$}}  &
                    \textbf{*} &&
                    {\arraycolsep=0.1pt\def\arraystretch{0.1}
                      \begin{array}{ccc}
                        d_1a \\[-5pt] 
                        & \ddots \\ 
                        && d_na
                    \end{array}}%
                    \vspace{4pt}%
                  \end{array} \right]
              } \cdot
            \left[ \begin{array}{c} x \\\hline \phantom{'}\vec x' \\[-2pt] \vec z \\\hline -1\phantom{-} \\ y\\\hline \vec y' \\\hline \vec u \end{array} \right]= \vec 0
            $
          \end{minipage}
        };
      \end{tikzpicture}
    };

    \draw[line width=1mm] (before.east) edge[->,bend left=90] node[left,align=center] {first\\[-2pt] elimination\\[-2pt] step} (after.east);
  \end{tikzpicture}

  \caption{Matrix representation of the linear system in input of~\GaussianQE (above), and matrix obtained after the first iteration of the (main) \textbf{foreach} loop of line~\ref{gauss:mainloop} (below). Submatrices that are shown as empty only contain zeros. \textbf{\color{gray}Gray rectangles} represent divisibility constraints $d \divides \tau$, given in the matrix representation as rows $d u + \rho = 0$, where $u$ is a fresh variable ranging over~$\Z$. \textbf{\color{blue!40}Blue rectangles} represent the equalities in the systems; line~\ref{gauss:guess-equation} guesses equalities only from these lines. Inequalities are translated into equalities by introducing \textbf{\color{orange}slack variables} ranging over $\N$ (line~\ref{gauss:introduce-slack}).
  The matrix highlighted with the \textbf{\color{orange}orange rectangle} is the identity matrix interspersed with zero rows; its non-zero rows correspond to inequalities in the original system.
  When a variable is eliminated, the procedure may ``lazily'' assign a value to a slack variable (see variable~$y$ highlighted in \textbf{\color{magenta!60}magenta}; the corresponding \textbf{\color{magenta!60}magenta column} should be understood as a ``constant column'' once the procedure assigns a value to $y$). The variables in this figure have the following roles: $\vec u$ are the auxiliary variables encoding divisibilities, $x$ and $\vec x'$ are the variables to be eliminated, $\vec z$ are free variables in the input system that will not be eliminated, $y$ and $\vec y'$ are the slack variables. 
  The $-1$ in the column vector corresponds to the column of constants.} 
  \label{figure:gaussian-process}
\end{figure}

\subparagraph*{Operations on nontrivial iterations (lines~\ref{gauss:guess-equation}--\ref{gauss:divide}).}

The process followed by a single non-deterministic branch
is
the \emph{Bareiss-style fraction-free one-step elimination}~\cite{Bareiss68}.
We compare it against the standard Gauss--Jordan process from linear algebra 
for variable elimination (see, e.g.,~\cite{Strang}).
An example with $k=1$, that is, the first nontrivial iteration of the process, 
is depicted in \Cref{figure:gaussian-process}.

To begin with, note that
applying a substitution $\vigsub{\frac{-\tau}{a}}{x}$ to an equality
$b x + \sigma = 0$ in line~\ref{gauss:vigorous} is equivalent to first multiplying the equality by the lead
coefficient~$a$
and then subtracting the equality $a x + \tau = 0$ multiplied by $b$.
The result is $- b \tau + a \sigma = 0$.
(For example, for $a = b$ and $\tau = \sigma$, we have
 $(a x + \tau = 0) \vigsub{\frac{-\tau}{a}}{x} = (- a \tau + a \tau = 0)$,
 which simplifies to true.)
Thus, the formula $\phi \vigsub{\frac{-\tau}{a}}{x}$ is the result
of applying this operation to all constraints in $\phi$.
We use vigorous substitutions instead of standard substitutions
in the main elimination process.
This is in order to simplify
the handling of (and reasoning about) ``Bareiss factors'',
explained in more detail in Section~\ref{app:gaussian-elimination:fundamental}.

Let us give a matrix representation to these operations.
Note that, by Assumption~\ref{a-rows},
nontrivial iteration~$k$ uses the $k$th row of the matrix~$B_k$ as the ``lead row''.
Thus, in effect,
on iteration $k$,
\GaussianQE applies the following row operations to~$B_k$.
Line~\ref{gauss:vig}
multiplies all rows by the lead coefficient, then
subtracts from each row indexed~$i$
the lead row (indexed~$k$) with the multiplier equal to the original coefficient of~$x$ in row~$i$.
The lead row gets temporarily ``zeroed out''; we discuss this in more detail below.
Line~\ref{gauss:divide} divides each row by $p$;
Lemma~\ref{l:no-remainder} below will prove that
this division is without remainder.
The substitution in line~\ref{gauss:append-seq} corresponds to subtracting the $y$-column of the matrix
multiplied by~$v$ from the constant-term column; but this operation is only carried out later
in line~\ref{gauss:apply-seq}
(effectively, this operation is $\vec c \gets \vec c - v \vec s$, where $\vec s$
 is the vector of coefficients of the slack variable~$y$).

Notice that the Gauss--Jordan elimination process
would, in comparison,
leave the lead row (indexed~$k$) unaffected by these operations.
(In particular,
it would subtract the lead row from all other rows (indexed~$i \ne k$) only, not from itself.)

We remark that for the rows of $B_k$ that correspond to divisibility constraints
the same reasoning applies.
For example, recall from Section~\ref{sec:preliminaries} that multiplication and division
in \GaussianQE
are applied to
both sides of divisibility constraints: for $\lambda$ a nonzero integer,
$d \divides \rho$ may evolve (at line~\ref{gauss:vigorous}) into $\lambda d \divides \lambda \rho$
or, if $d$ and all numbers occurring in $\rho$ are divisible by~$d$ (at line~\ref{gauss:divide}),
into $\lambda^{-1} d \divides \lambda^{-1} \rho$.
When represented as an equation:
$d u + \rho = 0$ may evolve into $(\lambda d) u + \lambda \rho = 0$
or $(\lambda^{-1} d) u + \lambda^{-1} \rho = 0$, respectively.

\subparagraph*{Recasting the lead row as a divisibility constraint (line~\ref{gauss:restore}).}

We already saw that, in line~\ref{gauss:vig}, $(a x + \tau = 0) \vigsub{\frac{-\tau}{a}}{x} = (- a \tau + a \tau = 0)$,
which simplifies to true.
Thus, this line removes the equality $a x + \tau = 0$ from~$\phi_k$.
Almost immediately, line~\ref{gauss:restore} reinstates it,
although recasting it as a divisibility constraint.
This is indeed possible: 
in the Gauss--Jordan elimination process the equality $a x + \tau = 0$ would still be present,
but at the same time
the variable~$x$ would now only be occurring in this equality.
Therefore, a suitable value from \Z can be assigned to this variable
if and only if the lead coefficient~$a$ divides the value of~$\tau$.

The fact that no other constraint apart from $a x + \tau = 0$ contains the variable~$x$
depends on the fact
that all divisibility constraints within $\phi_k$ are affected by
the vigorous substitution. The alternative --- leaving them unaffected ---
would correspond to the Gauss-style elimination process, which brings the matrix
to non-reduced row echelon form, rather than reduced row echelon form;
see, e.g.,~\cite{Strang}.
However, care need to be taken in such a modified process to keep the same values
of subdeterminants to establish an analogue of Lemma~\ref{l:no-remainder}.

\subsection{Fundamental lemma of the Gauss--Jordan--Bareiss process}
\label{app:gaussian-elimination:fundamental}

This subsection is devoted to a key property of the Gauss--Jordan variable elimination process
with modifications \`a la Bareiss~\cite{Bareiss68}.
We first discuss a similar property of the \emph{standard} Gaussian elimination process
and then carry the idea over to our algorithm.

Consider the standard process
for a system of equations (equalities) over $\mathbb R$ or $\mathbb Q$
in which rows are never permuted or multiplied
by constants (but only added to one another, possibly with some real multipliers).
Then, subject to assumptions related to \ref{a-rows} and \ref{a-cols} above,
the following two properties hold;
we refer the reader to, e.g.,~\cite[Section~3.3]{Schrijver99}:
\begin{itemize}
\item
The leading principal minors%
\footnote{%
    A \emph{minor} of a matrix is the determinant of a square submatrix.
    A synonym is \emph{subdeterminant}.
} $\overline\mu_1, \overline\mu_2, \ldots$ of the matrix
(i.e., those formed by the first $k$~rows and first $k$~columns, for some $k$)
remain unchanged throughout.
For convenience, we denote $\overline\mu_0 = 1$.
\item
After $k \ge 0$ steps of the process, the entry in position $(i, j)$ for $i, j > k$
is the ratio $a_{i j}^{(k)} / \overline\mu_k$,
where $a_{i j}^{(k)}$ is the $k$th leading principal minor $\overline\mu_k$ bordered by the $i$th row
and $j$th column, that is,
\begin{equation*}
a_{i j}^{(k)} \eqdef
\begin{vmatrix}
a_{1 1} & \ldots & a_{1 k} & a_{1 j} \\
\vdots  & \ddots & \vdots  & \vdots  \\
a_{k 1} & \ldots & a_{k k} & a_{k j} \\
a_{i 1} & \ldots & a_{i k} & a_{i j}
\end{vmatrix}\enspace.
\end{equation*}
This is because, after $k$ steps, the entries in positions $(i, 1)$, \ldots, $(i, k)$ are all $0$.
\end{itemize}

The statement of the following (fundamental) lemma refers to
the bordered minor $b_{i j}^{(k)}$, which is determined, as above (and as in Bareiss' paper~\cite{Bareiss68}),
by the entries of the original matrix~$B_0 = (b_{i j})$ of the system, from Equation~\eqref{eq:systems-and-matrices}:
\begin{equation*}
b_{i j}^{(k)} \eqdef
\begin{vmatrix}
b_{1 1} & \ldots & b_{1 k} & b_{1 j} \\
\vdots  & \ddots & \vdots  & \vdots  \\
b_{k 1} & \ldots & b_{k k} & b_{k j} \\
b_{i 1} & \ldots & b_{i k} & b_{i j}
\end{vmatrix}\enspace.
\end{equation*}
The index~$j$ in this notation
may refer to a column within any of the four blocks $A$, $-\vec c$, $I'$, $D$.
A particular case of bordered minors is the $k$th leading principal minor:
\begin{equation*}
\mu_k \defeq b_{k,k}^{(k-1)},
\end{equation*}
that is, the determinant formed by the first $k$ rows and the first $k$ columns.

\begin{lemma}[fundamental lemma]
\label{l:fundamental}
Consider a branch of non-deterministic execution of \GaussianQE. The following statements hold:
\begin{quote}
\begin{enumerate}
\renewcommand{\theenumi}{\textup{(\alph{enumi})}}
\renewcommand{\labelenumi}{\theenumi}
\item
\label{det-property}
For all $k \ge 0$, all $i> k$ and $j > k$, the entry in position $(i,j)$ of the matrix~$B_k$
is~$b_{i j}^{(k)}$.
\item
\label{lead}
For all $k \ge 1$,
the lead coefficient in the $k$th nontrivial iteration is $\mu_k$.
\item
\label{no-remainder}
The division in line~\ref{gauss:divide} is without remainder,
except possibly in constraints introduced in some earlier iteration(s) by line~\ref{gauss:restore}.
\end{enumerate}
\end{quote}
\end{lemma}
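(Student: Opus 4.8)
The plan is to prove the three statements~\ref{det-property}, \ref{lead}, and~\ref{no-remainder} simultaneously, by induction on the number $k \ge 0$ of completed nontrivial iterations of the main \textbf{foreach} loop --- that is, along the horizontal arrows of the commutative diagram~\eqref{eq:systems-and-matrices}. The base case $k = 0$ is immediate: the entries of $B_0$ are by definition the entries $b_{ij} = b_{ij}^{(0)}$ of the associated matrix, and no division has taken place yet. Throughout, I would use assumptions~\ref{a-rows} and~\ref{a-cols}, so that the $k$th nontrivial iteration uses row~$k$ of the current matrix as its lead row and eliminates the variable sitting in column~$k$; this alignment is exactly what makes the bordered minors $b_{ij}^{(k)}$ and the leading principal minors $\mu_k = b_{k,k}^{(k-1)}$ (with the convention $\mu_0 = 1$) the right bookkeeping device.

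For the inductive step, assume~\ref{det-property} holds for $B_{k-1}$. Statement~\ref{lead} for iteration~$k$ is then immediate: the coefficient $a$ of the variable in column~$k$ selected in line~\ref{gauss:guess-equation} is the $(k,k)$-entry of $B_{k-1}$, which by the induction hypothesis equals $b_{k,k}^{(k-1)} = \mu_k$ (in particular this entry is nonzero, which is what makes the row eligible as a lead row). Next I would describe one nontrivial iteration at the level of the associated matrix (lines~\ref{gauss:vig}--\ref{gauss:restore}): line~\ref{gauss:vig}, the vigorous substitution, multiplies every row by the lead coefficient $\mu_k$ and then subtracts from each row~$i$ the lead row~$k$ scaled by the coefficient of the eliminated variable in row~$i$ of $B_{k-1}$; in particular the lead row itself becomes zero. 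Hence, for all $i, j > k$, the $(i,j)$-entry produced by line~\ref{gauss:vig} equals $\mu_k\, b_{ij}^{(k-1)} - b_{ik}^{(k-1)}\, b_{kj}^{(k-1)}$ by the induction hypothesis. Line~\ref{gauss:divide} then divides each row by $\prevlead = \mu_{k-1}$. The crux of the argument is the Desnanot--Jacobi / Sylvester identity for bordered minors, which states precisely that
\begin{equation*}
  \mu_{k-1}\cdot b_{ij}^{(k)} \;=\; \mu_k\, b_{ij}^{(k-1)} - b_{ik}^{(k-1)}\, b_{kj}^{(k-1)} \qquad\text{for all } i, j > k.
\end{equation*}
Consequently the division in line~\ref{gauss:divide} is without remainder on all these rows and leaves the $(i,j)$-entry equal to $b_{ij}^{(k)}$, which establishes~\ref{det-property} for $B_k$ and, for every row of $B_0$ not yet used as a lead row, also~\ref{no-remainder}. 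Finally, line~\ref{gauss:restore} reinstates the (now zero) lead row as the divisibility constraint $\mu_k \divides \tau$: this row keeps the index~$k$, replacing the original row~$k$ of $B_0$, so it falls outside the range $i > k$ addressed by~\ref{det-property}, and it is precisely the kind of constraint exempted in~\ref{no-remainder}.

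I expect the main obstacle to be the bookkeeping for the divisibility rows, which is also the reason for the caveat in~\ref{no-remainder}. The constraints introduced by line~\ref{gauss:restore} are not rows of the original matrix $B_0$, so the Sylvester identity says nothing about them; moreover, a later iteration whose eliminated variable occurs in such a row will multiply that row by a further lead coefficient and then attempt to divide it by $\prevlead$, and this division need not be exact. I would therefore track these rows separately: argue that each stays of the form $\lambda\, u + \rho = 0$ with $u$ a dummy variable and $\rho$ a linear term, show that they are the only rows on which line~\ref{gauss:divide} can fail to be exact (original divisibility rows of $B_0$, being genuine rows of $B_0$, are covered by~\ref{det-property} just like the equality rows), and check that none of this disturbs~\ref{det-property} or~\ref{lead}, which concern only rows and columns of index exceeding the current~$k$. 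The remaining points are routine and can be settled by inspection of the pseudocode: that recasting the zeroed lead row as $\mu_k \divides \tau$ is legitimate because after Gauss--Jordan elimination the eliminated variable occurs in no other constraint; and that the slack columns and the lazily deferred substitutions $\sub{v}{y}$ (applied only after the loop, in line~\ref{gauss:apply-seq}) play no role during the loop and hence do not affect any of the three statements.
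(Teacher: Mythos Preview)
Your proposal is correct and follows essentially the same route as the paper's proof: induction on $k$, with part~\ref{lead} read off from the inductive hypothesis for part~\ref{det-property} at position $(k,k)$, the vigorous substitution unfolded at the matrix level to produce the $2\times 2$ determinant $\mu_k\, b_{ij}^{(k-1)} - b_{ik}^{(k-1)}\, b_{kj}^{(k-1)}$, and the Desnanot--Jacobi/Sylvester identity used to identify this as $\mu_{k-1}\cdot b_{ij}^{(k)}$, yielding~\ref{det-property} and~\ref{no-remainder} simultaneously. Your discussion of the divisibility rows introduced by line~\ref{gauss:restore} is also on target; the paper likewise exempts them here and treats their exact divisibility separately (via a Cramer-style argument on the adjugate of the leading $k\times k$ submatrix).
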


\begin{proof}
Induction on $k$.
Part~\ref{no-remainder} for iteration~$k$ is
proved inductively along with parts~\ref{det-property} and~\ref{lead}.

For $k = 0$:
\begin{quote}
\begin{enumerate}
\renewcommand{\theenumi}{\textup{(}\alph{enumi}\textup{)}}
\renewcommand{\labelenumi}{\theenumi}
\item For all $i \ge 1$, $j \ge 1$, the entry in position $(i,j)$ is simply $b_{i j}$.
\item There is nothing to prove.
\item Trivial.
\end{enumerate}
\end{quote}

Let $k \ge 1$.
Notice that part~\ref{lead} is a consequence of the inductive hypothesis (for $k-1$),
namely of part~\ref{det-property} with $i = j = k$.
We show how, in an inductive step, to arrive at the statement of part~\ref{det-property}.
Observe that row and column operations applied to the initial matrix $B_0$ can be represented
by rational square matrices~$L_k$ and~$U_k$ such that $B_k = L_k \cdot B_0 \cdot U_k$.
We track the evolution of the matrix $F_k \defeq L_k \cdot B_0$, without considering
column operations (i.e., substitutions of integers for slack variables).
This corresponds to the \emph{laziness} of our algorithm. 

By the inductive hypothesis for $k-1$, part~\ref{lead}, and thanks to
Assumptions~\ref{a-rows} and~\ref{a-cols}, the lead coefficient
in iteration~$k \ge 1$ is $b_{k,k}^{(k-1)} = \mu_{k}$.
According to the definition of vigorous substitution,
each row~$i > k$ in the system is first multiplied by $a = \mu_k$, and then
row~$k$ is subtracted from it $b$~times where $b$ is the coefficient at~$x$
in the equality that corresponds to row~$i$.
Let $F$ be the matrix associated to the system
$\phi \vigsub{\frac{-\tau}{a}}{x}$.
(Note that $F$ does not reflect any substitution of values for slack variables
 during the execution of the algorithm.)
Observe that, for $i > k$ and $j > k$,
\begin{equation*}
f_{i j} =
\begin{vmatrix}
b_{k k}^{(k-1)} & b_{k j}^{(k-1)} \\[1ex]
b_{i k}^{(k-1)} & b_{i j}^{(k-1)}
\end{vmatrix}
=
\mu_{k-1} \cdot b_{i j}^{(k)}.
\end{equation*}
Here, the entries of the $2 \times 2$ determinant are the entries
of the matrix obtained on the previous step of the algorithm (whilst skipping all column operations);
this follows from the inductive hypothesis for $k-1$, part~\ref{det-property}.
The fact that $f_{i j}$ is equal to this determinant can be verified directly
(and is crucial, e.g., for Bareiss' paper~\cite{Bareiss68}).
Finally, the final equality is the Desnanot--Jacobi identity
(or the Sylvester determinantal identity); see, e.g.,~\cite{Bareiss68,Dodgson1867,KarapiperiRR}.

Note that the argument above applies regardless of whether the index~$j$
is referring to a column of~$A$, to the slack part of the matrix,
or to the constant-term column.
This completes the proof of part~\ref{det-property}.

Finally,
we now see that every entry of each row of the matrix~$F$ indexed $i > k$ is $\mu_{k-1}$
multiplied by an integer. By the inductive hypothesis, part~\ref{lead}, the lead coefficient
in iteration $k-1$ is $\mu_{k-1}$, and therefore at iteration~$k$
the algorithm has $p = \mu_{k-1}$.
Notice that
line~\ref{gauss:guess-equation} at iteration~$k-1$ ensures that the coefficient at~$x$
is non-zero,
so $\mu_{k-1} \ne 0$.
Therefore, the division
in line~\ref{gauss:divide} is without remainder, proving part~\ref{no-remainder}.
\end{proof}

In the following 
Section~\ref{app:gaussian-elimination:divisibilities}
we strengthen Lemma~\ref{l:fundamental}\ref{no-remainder}, 
showing that in fact there cannot be any exceptions at all:
division in line~\ref{gauss:divide} is always without remainder.

It would be fair to dub
the divisor~$p$ in line~\ref{gauss:divide} of \GaussianQE,
or equivalently
the factor~$\mu_{k-1}$ identified in the final paragraph of the proof above,
`the Bareiss factor'.
(We will not actually need this term in the paper.)

\begin{remark}
\label{remark:nonzero}
As seen from the proof of Lemma~\ref{l:fundamental},
Assumptions~\ref{a-rows} and~\ref{a-cols}
imply that $\mu_k \ne 0$ as long as the main \textbf{foreach} loop runs for at least~$k$ iterations.
This is because equations in which the coefficient at~$x$ equals zero
are not considered in line~\ref{gauss:guess-equation} of \GaussianQE.
\end{remark}

The following two basic facts about~\GaussianQE,
which can be established independently, are in fact direct consequences of
Lemma~\ref{l:fundamental}.

\begin{lemma}
\label{l:slack-separate}
At the beginning of each iteration of the main \textbf{foreach} loop,
each slack variable that is not yet assigned a value by~$s$
occurs in a single constraint of~$\phi$ (it is an equality).
\end{lemma}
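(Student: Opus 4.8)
The plan is to prove \Cref{l:slack-separate} by strengthening it into a joint invariant of the main \textbf{foreach} loop (line~\ref{gauss:mainloop}) of \GaussianQE. Writing $s$ for the current sequence of recorded substitutions and $\phi$ for the current system, and calling a slack variable \emph{live} if it has not yet been assigned a value by~$s$, I would carry the conjunction of \textbf{(P1)} every live slack variable occurs in exactly one constraint of~$\phi$, and that constraint is an equality; and \textbf{(P2)} every constraint of~$\phi$ contains at most one live slack variable. \Cref{l:slack-separate} is exactly \textbf{(P1)}, so it suffices to establish \textbf{(P1)}--\textbf{(P2)} at the start of every iteration. The base case (start of the first iteration, i.e.\ just after line~\ref{gauss:introduce-slack}, when $s$ is empty) is immediate: the original equalities and divisibility constraints contain no slack variables at all, while each inequality $\tau\le0$ has become a fresh equality $\tau+y=0$ in which $y$ is the unique slack variable.

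For the inductive step I would walk through one pass of the loop body. If the triviality check in line~\ref{gauss:triviality-check} triggers \textbf{continue}, then $\phi$ and $s$ are unchanged and there is nothing to do. Otherwise line~\ref{gauss:guess-equation} picks an equality $ax+\tau=0$ with $a\ne0$ and $x$ (an element of $\vec x$, hence not a slack) occurring in it, so the slack variables of this equality are exactly those of~$\tau$; by \textbf{(P2)} at most one of them is live. If there is a live one, lines~\ref{gauss:branch}--\ref{gauss:append-seq} append $\sub{v}{y}$ to~$s$ for that (unique) variable~$y$, and we reach the key fact $(\dagger)$: after line~\ref{gauss:append-seq}, $\tau$ contains \emph{no} live slack variable. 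The main obstacle is precisely this point, and it is the reason the statement must be strengthened by \textbf{(P2)}: line~\ref{gauss:branch} strips at most one live slack from~$\tau$, so without \textbf{(P2)} a second live slack could remain in~$\tau$ and then get scattered across many constraints by the subsequent substitution.

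It then remains to check that the three operations following line~\ref{gauss:append-seq} preserve \textbf{(P1)}--\textbf{(P2)}. The crucial one is the vigorous substitution $\phi\gets\phi\vigsub{\frac{-\tau}{a}}{x}$ in line~\ref{gauss:vig}: each equality $bx+\sigma=0$ becomes $-b\tau+a\sigma=0$ (as recalled in the excerpt), each divisibility constraint is either left alone or has $\tau$ mixed into it, and the picked equality collapses to $0=0$ and is removed. Using $(\dagger)$ together with \textbf{(P1)} (which places every live slack inside an equality, so none occur in divisibility constraints), in every image constraint the live-slack content comes solely from the ``$a\sigma$''-part of its pre-image; since $a\ne0$, the set of live slacks occurring in the image of each equality is unchanged (and such an image is not deleted, since the coefficient of a live slack remains nonzero), the images of divisibility constraints and the collapsed picked equality contain no live slack, and the deleted picked equality contained none by $(\dagger)$. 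Hence each live slack still occurs in exactly one constraint, still an equality, and no constraint acquires a second live slack. The division by~$\prevlead$ in line~\ref{gauss:divide} only rescales each constraint by a nonzero integer ($\prevlead$ is $1$ or a previously guessed lead coefficient $a\ne0$; the division is without remainder on equalities by \Cref{l:fundamental}\ref{no-remainder} and its later strengthening), so it changes neither which constraints a live slack occurs in nor the equality/divisibility status. Finally line~\ref{gauss:restore} conjoins $a\divides\tau$, which by $(\dagger)$ contains no live slack, so it affects neither invariant. This closes the induction, and \Cref{l:slack-separate} is just \textbf{(P1)}. Alternatively, the same conclusion can be extracted from \Cref{l:fundamental}\ref{det-property} by observing that the slack block~$I'$ of the initial matrix~$B_0$ has exactly one nonzero entry in each column and invoking \Cref{remark:nonzero}, but the invariant argument above seems cleaner and more self-contained.
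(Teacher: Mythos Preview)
Your proof is correct and takes a genuinely different route from the paper's. The paper derives \Cref{l:slack-separate} as a corollary of the fundamental lemma (\Cref{l:fundamental}\ref{det-property}): it observes that the coefficient of a slack variable~$y$ in row~$i$ after $k$ iterations is the bordered minor~$b_{ij}^{(k)}$, and that the column corresponding to~$y$ in this minor is entirely zero whenever $y$'s original row is neither~$i$ nor among the first~$k$ picked rows; an implicit induction on~$k$ handles the case where $y$'s row was already picked (forcing $y$ to have been assigned). Your argument instead carries the explicit joint invariant \textbf{(P1)}--\textbf{(P2)} through one loop pass, using only the elementary observation that the picked term~$\tau$ is stripped of its unique live slack before it is broadcast by the vigorous substitution.

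What each approach buys: the paper's route is concise once the determinantal machinery of \Cref{l:fundamental} is in place, and it exposes the algebraic reason (zero columns in bordered minors) why slacks do not spread. Your route is self-contained and does not rely on the Desnanot--Jacobi identity or on the correctness of the Bareiss division; it also makes explicit the auxiliary invariant \textbf{(P2)}, which the paper uses tacitly when it asserts that ``line~\ref{gauss:guess-slack} must have assigned a value to~$y$'' (this step presupposes that the equality picked at that iteration had at most one live slack, so that the assigned variable was indeed~$y$). Your closing remark already points out that the determinantal derivation is available as an alternative, which matches the paper's own argument.
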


\begin{proof}
Let column index~$j$ correspond to the slack variable in question, which we denote by~$y$.
If $y$ appears in a divisibility constraint that was originally introduced by line~\ref{gauss:restore},
then it must have been introduced to that constraint by a substitution of
line~\ref{gauss:vigorous}, but then line~\ref{gauss:guess-slack} must have assigned a value to~$y$.
Therefore, if the slack variable is not yet assigned a value, then it cannot appear in the divisibility
constraints introduced by line~\ref{gauss:restore}.

We now consider equalities present in the system; this restricts us to rows of the current matrix
with indices $i>k$, where $k$ denotes the number of completed nontrivial iterations so far.
Lemma~\ref{l:fundamental}\ref{det-property} applies to these rows.
Take the constraint corresponding to row~$i$.
If variable~$y$ was not present in it originally, then the $j$th column of the
bordered minor~$b_{i j}^{(k)}$ is zero, and thus $b_{i j}^{(k)} = 0$. In other words, $y$
is still absent from this constraint after~$k$ iterations.
Since originally each slack variable only features in one constraint, the lemma follows.
\end{proof}

Lemma~\ref{l:slack-separate} shows that the operation in line~\ref{gauss:append-seq} is
unambiguous, as there cannot be more than one suitable variable~$y$ in the preceding line~\ref{gauss:branch}.
This is because there is originally at most one slack variable in each equality, and
the lemma holds throughout the entire run of the algorithm.
For the same reason, line~\ref{gauss:drop-slack} is unambiguous as well.

\begin{lemma}
\label{l:vigorous-slack-same}
At every iteration of the main \textbf{foreach} loop,
all slack variables that are not yet assigned a value by~$s$
occur in the constraints of~$\phi$ with identical coefficients,
namely~$\mu_k$ after $k$~nontrivial iterations.
\end{lemma}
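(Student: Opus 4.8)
The plan is to obtain Lemma~\ref{l:vigorous-slack-same} directly from the fundamental lemma (Lemma~\ref{l:fundamental}) and Lemma~\ref{l:slack-separate}, by a single cofactor expansion, with no separate induction. Fix a branch of the non-deterministic execution of \GaussianQE, suppose $k\ge 0$ nontrivial iterations of the main \textbf{foreach} loop have been completed, let $B_k$ be the associated matrix, and set $\mu_0\defeq 1$ (mirroring the convention $\overline\mu_0=1$ and the identity $b^{(0)}_{ij}=b_{ij}$). It suffices to show that every slack variable not yet assigned a value by $s$ occurs with coefficient exactly $\mu_k$ in the unique constraint of $\phi$ in which it appears, since $\mu_k$ is independent of the slack variable chosen.

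First I would locate an unassigned slack variable $y$ inside $B_k$. By line~\ref{gauss:introduce-slack}, $y$ arises from a single inequality; write $i_0$ for that row, so that in the original matrix $B_0$ the column $j$ of $y$ is the standard basis vector whose only nonzero entry ($=1$) sits in row $i_0$ --- this is precisely the statement that the slack block $I'$ of $B_0$ is an identity matrix interspersed with zero rows. A slack variable is assigned a value (lines~\ref{gauss:guess-slack}--\ref{gauss:append-seq}) exactly when it occurs in the term $\tau$ of the equation chosen in line~\ref{gauss:guess-equation}, and a vigorous substitution $\vigsub{\frac{-\tau}{a}}{x}$ only propagates the symbols of $\tau$ into the remaining constraints; hence, as long as $y$ is unassigned it still occurs only in row $i_0$, and as an equality (this is the argument already carried out in the proof of Lemma~\ref{l:slack-separate}). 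Moreover, had row $i_0$ been chosen as a lead row during the first $k$ nontrivial iterations, its $\tau$ would have contained $y$, forcing $y$ to be assigned; so $i_0>k$.

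Next I would compute the coefficient of $y$. Since $i_0>k$, and the slack column $j$ satisfies $j>k$ as well (by Assumption~\ref{a-cols} at most $k$ columns have been ``used'', and all slack columns follow the columns of $\vec x$, of $\vec z$, and the constant-term column), Lemma~\ref{l:fundamental}\ref{det-property} gives that this coefficient is the bordered minor $b^{(k)}_{i_0,j}$. Expanding this $(k+1)\times(k+1)$ determinant along its last column, namely the column of $y$: in $B_0$ the entries $b_{1,j},\dots,b_{k,j}$ all vanish (because $i_0\notin\{1,\dots,k\}$) while $b_{i_0,j}=1$, so only the bottom-right cofactor survives, and
\[
  b^{(k)}_{i_0,j}=(-1)^{(k+1)+(k+1)}\cdot 1\cdot\bigl(\text{determinant of the first }k\text{ rows and first }k\text{ columns of }B_0\bigr)=\mu_k .
\]
The same expansion yields $b^{(k)}_{i,j}=0$ for every other equality row $i>k$, matching the fact that $y$ occurs only in row $i_0$. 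This proves the claim for every $k\ge 1$; the case $k=0$ is immediate, as then every slack variable still has coefficient $1=\mu_0$.

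I expect the only delicate point to be the bookkeeping of the second paragraph: verifying that an unassigned slack variable's originating row has not yet been chosen as a lead row, so that $i_0>k$ and the hypotheses of Lemma~\ref{l:fundamental}\ref{det-property} are met. This, however, is essentially contained in (and can be quoted from) the proof of Lemma~\ref{l:slack-separate}; the rest is a one-line determinant computation.
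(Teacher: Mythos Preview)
Your proof is correct and takes a genuinely different route from the paper. The paper argues operationally: it tracks how the coefficient of an unassigned slack variable evolves, noting that each nontrivial iteration multiplies all coefficients by the lead $\mu_i$ (line~\ref{gauss:vigorous}) and then divides by the previous lead $\mu_{i-1}$ (line~\ref{gauss:divide}), so after $k$ iterations the coefficient is the telescoping product $1\cdot\frac{\mu_1\cdots\mu_k}{1\cdot\mu_1\cdots\mu_{k-1}}=\mu_k$. You instead go straight to the determinantal description of $B_k$ given by Lemma~\ref{l:fundamental}\ref{det-property}, observe that the relevant entry is the bordered minor $b^{(k)}_{i_0,j}$, and collapse it to $\mu_k$ by a cofactor expansion along the slack column (which in $B_0$ is a unit vector supported at $i_0$). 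Your route is arguably cleaner in that it avoids re-tracing the row operations and reuses the fundamental lemma wholesale; the paper's telescoping argument, on the other hand, is self-contained and does not require verifying the side conditions $i_0>k$ and $j>k$ that you (correctly) check. Both arguments are short, and your bookkeeping for $i_0>k$ is sound: under Assumption~\ref{a-rows} the lead rows are exactly $1,\dots,k$, and if $i_0$ had been among them then its unique unassigned slack variable (unique by Lemma~\ref{l:slack-separate}) would have triggered line~\ref{gauss:append-seq}.
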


\begin{proof}
By Lemma~\ref{l:fundamental}\ref{lead},
the lead coefficient in nontrivial iteration~$i$ is $\mu_i$; $1 \le i \le k$.
This is also the divisor in line~\ref{gauss:divide} in nontrivial iteration~$i+1 \le k$.
Therefore, after $k$~iterations the slack variable~$y$ occurs with coefficient
\begin{equation*}
1 \cdot
\dfrac{\mu_1 \cdot \mu_2 \ldots \mu_{k}}{1 \cdot \mu_1 \ldots \mu_{k-1}} = \mu_k.\qedhere
\end{equation*}
\end{proof}

\subsection{Integers that appear in the run of Algorithm \ref{algo:gaussianqe} (\GaussianQE)}
\label{app:gaussian-elimination:divisibilities}

An important milestone is to prove that
division in line~\ref{gauss:divide} of~\GaussianQE is without remainder.
Lemma~\ref{l:fundamental}, part~\ref{no-remainder}, already proves this for
all constraints except those introduced by line~\ref{gauss:restore}.
In this section, we analyse these constraints, as well as other divisibility constraints.

We use the same notation, namely $B_k$ and $\mu_k$, as in the previous Section~\ref{app:gaussian-elimination:fundamental}.

\begin{lemma}
\label{l:divisors}
Assume that all divisions in line~\ref{gauss:divide}
in the first~$k$ nontrivial iterations of the main \textbf{foreach} loop are without remainder.
Let
$d \divides \rho$ be a divisibility constraint in $\phi_k$. Then:
\begin{itemize}
\item
$d = d^\circ \cdot \mu_k$
if this constraint has evolved from a constraint $d^\circ \divides \rho^\circ$ present in~$\phi_0$, and
\item
$d = \mu_k$
if this constraint was introduced to the system at line~\ref{gauss:restore}.
\end{itemize}
\end{lemma}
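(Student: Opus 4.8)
The plan is to argue by induction on the number $k$ of completed nontrivial iterations of the main \textbf{foreach} loop, tracking how divisibility constraints evolve under the operations performed in lines~\ref{gauss:vigorous}, \ref{gauss:divide}, and~\ref{gauss:restore}. The base case $k=0$ is immediate: no constraint has yet been introduced by line~\ref{gauss:restore}, and a constraint $d^\circ \divides \rho^\circ$ of $\phi_0$ has $d = d^\circ = d^\circ \cdot \mu_0$ since $\mu_0 = 1$ by convention. For the inductive step, I would fix an iteration $k \ge 1$ and examine the three relevant lines in order. By Lemma~\ref{l:fundamental}\ref{lead}, the lead coefficient $a$ chosen in line~\ref{gauss:guess-equation} at iteration~$k$ equals $\mu_k$, and by the inductive hypothesis together with Lemma~\ref{l:fundamental}\ref{lead} again, the divisor $p$ used in line~\ref{gauss:divide} at iteration~$k$ equals $\mu_{k-1}$ (this is the ``Bareiss factor'').

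Now I would trace a divisibility constraint of $\phi_{k-1}$ through iteration~$k$. Recall from Section~\ref{sec:preliminaries} that the vigorous substitution in line~\ref{gauss:vigorous} multiplies both sides of every divisibility constraint by $a = \mu_k$, so a constraint with divisor $d'$ in $\phi_{k-1}$ becomes one with divisor $\mu_k \cdot d'$; then line~\ref{gauss:divide} divides both sides by $p = \mu_{k-1}$, yielding divisor $\frac{\mu_k \cdot d'}{\mu_{k-1}}$. For a constraint that has evolved from $d^\circ \divides \rho^\circ$ in $\phi_0$, the inductive hypothesis gives $d' = d^\circ \cdot \mu_{k-1}$, hence the new divisor is $\frac{\mu_k \cdot d^\circ \cdot \mu_{k-1}}{\mu_{k-1}} = d^\circ \cdot \mu_k$, as required. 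For a constraint that was introduced at line~\ref{gauss:restore} in some earlier iteration $j < k$, the same computation with $d' = \mu_{k-1}$ (the inductive claim for such constraints) gives new divisor $\frac{\mu_k \cdot \mu_{k-1}}{\mu_{k-1}} = \mu_k$. Finally, the \emph{new} constraint introduced by line~\ref{gauss:restore} at iteration~$k$ itself is $a \divides \tau$ with $a = \mu_k$, matching the second bullet directly. The only subtlety is that the two divisions described here must both be without remainder for the divisor arithmetic to be literally correct as integer identities: the division by $p = \mu_{k-1}$ in line~\ref{gauss:divide} is without remainder by the hypothesis of the lemma (which assumes all such divisions in the first $k$ iterations are without remainder), and the preliminary multiplication by $\mu_k$ followed by this division is exactly what the algorithm performs, so no further divisibility needs to be checked here.

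The main obstacle I anticipate is bookkeeping rather than a deep difficulty: I must be careful that Lemma~\ref{l:fundamental}\ref{lead} is applicable, which requires Assumptions~\ref{a-rows} and~\ref{a-cols} and the fact (Remark~\ref{remark:nonzero}) that $\mu_{k-1}, \mu_k \ne 0$ so that dividing by $\mu_{k-1}$ is legitimate. I also need to confirm that a divisibility constraint present in $\phi_{k-1}$ is \emph{always} touched by the vigorous substitution regardless of whether the eliminated variable occurs in it — this is guaranteed by the convention (stated in Section~\ref{sec:preliminaries}, step~2 of the vigorous substitution) that both sides of divisibility constraints in which $x$ occurs are multiplied by $a$, combined with the observation that, after line~\ref{gauss:restore} at the previous iteration, the eliminated variable does occur in the freshly reinstated constraint, and for all other divisibility constraints the coefficient of the current variable has been arranged to be a multiple of the lead coefficient (cf.\ the discussion following Lemma~\ref{l:fundamental}). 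Handling this uniformity carefully — distinguishing constraints inherited from $\phi_0$, constraints created by line~\ref{gauss:restore} at iteration $j<k$, and the constraint created at iteration $k$ — is the bulk of the work, but each case reduces to the same one-line divisor computation above.
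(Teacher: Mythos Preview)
Your inductive plan is sound and is really just a step-by-step repackaging of the paper's own argument, which writes out directly the telescoping product of factors $\mu_1,\,1^{-1},\,\mu_2,\,\mu_1^{-1},\,\ldots,\,\mu_k,\,\mu_{k-1}^{-1}$ acting on an original divisor~$d^\circ$ (and the analogous tail $\mu_{i+1},\,\mu_i^{-1},\,\ldots,\,\mu_k,\,\mu_{k-1}^{-1}$ for a constraint introduced at iteration~$i$). Both arguments compute the same thing using Lemma~\ref{l:fundamental}\ref{lead} and Remark~\ref{remark:nonzero}.

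Where you get tangled is the worry about whether a divisibility constraint is ``touched'' by the vigorous substitution when the current variable does not occur in it. Your resolution is not right: the constraint $a_i \divides \tau_i$ created at line~\ref{gauss:restore} in iteration~$i$ need not contain the variable $x_{i+1}$ eliminated next, and ``coefficient arranged to be a multiple of the lead coefficient'' does not preclude that coefficient from being zero. The paper's own proof does not discuss this either; it simply asserts the sequence of factors. The actual resolution lies in the intended operational semantics: despite the clause ``in which $x$ occurs'' in the Section~\ref{sec:preliminaries} definition, the analysis in Appendix~\ref{app:gaussian-elimination} treats each divisibility constraint as a matrix row $d u + \rho = 0$ that undergoes the \emph{same} row operations as every equality (multiply by $a$, subtract the lead row, divide by $p$), and the discussion following Lemma~\ref{l:fundamental} states explicitly that ``all divisibility constraints within $\phi_k$ are affected by the vigorous substitution''. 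Under that reading, your one-line divisor computation $d' \mapsto \mu_k d'/\mu_{k-1}$ applies uniformly to every divisibility constraint, and the lemma follows immediately --- so your worry is legitimate as a reading of the preliminaries, but you should resolve it by invoking the operational description in the appendix rather than by the (incorrect) occurrence argument you sketch.
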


\begin{proof}
Let $d^\circ \divides \rho^\circ$ be a constraint present in~$\phi_0$.
By our definition of vigorous substitutions,
during the run of \GaussianQE the divisor (modulus)~$d^\circ$ is multiplied by
\begin{equation*}
1, \mu_1, 1^{-1}, \mu_2, \mu_1^{-1}, \ldots, \mu_k, \mu_{k-1}^{-1}.
\end{equation*}
Here we used Lemma~\ref{l:fundamental}, part~\ref{lead}, which shows that the lead coefficients
are principal minors of the matrix~$B_0$; and Remark~\ref{remark:nonzero}.
The product of the factors listed above is $\mu_k$; thus, the divisor (modulus)
evolves from~$d^\circ$ into~$d^\circ \cdot \mu_k$.

Now consider a divisibility constraint introduced to the system by line~\ref{gauss:restore},
say in the $i$th nontrivial iteration.
By Lemma~\ref{l:fundamental}, part~\ref{lead}, the divisor in this constraint is $\mu_i$.
In (nontrivial) iterations~$i+1$ through~$k$, this divisor is multiplied
by $\mu_{i+1}$, $\mu_i^{-1}$, $\mu_{i+2}$, $\mu_{i+1}^{-1}$, \ldots, $\mu_k$, $\mu_{k-1}^{-1}$.
The product of these factors is $\mu_k / \mu_i$, and so the result
is $\mu_k$.

We remark that, in both scenarios, the assumption of the lemma is required so that
$d$ and $\rho$ are well-defined.
\end{proof}


Our next result is an analogue of Lemma~\ref{l:fundamental}, part~\ref{det-property}, for
integers that appear in divisibility constraints.
Part~\ref{l:cramer:mid-iteration} of Lemma~\ref{l:cramer} will be key for the inductive proof that
all divisions in line~\ref{gauss:divide} are without remainder.

\begin{lemma}
\label{l:cramer}
Assume that all divisions in line~\ref{gauss:divide}
in the first~$k$ nontrivial iterations of the main \textbf{foreach} loop are without remainder.
\begin{quote}
\begin{enumerate}
\renewcommand{\theenumi}{\textup{(\alph{enumi})}}
\renewcommand{\labelenumi}{\theenumi}
\item\label{l:cramer:in-between}
After these iterations,
for all $i \le k$ and all $j$, the entry in position $(i,j)$ of the obtained matrix
is equal to the minor of~$B_0$ formed by the first $k$~rows and columns $1, \ldots, i-1,j,i+1, \ldots, k$.
\item\label{l:cramer:mid-iteration}
In the $(k+1)$th nontrivial iteration (if it exists),
just before line~\ref{gauss:divide},
for all $i \le k+1$ and all $j$, the entry in position $(i,j)$ of the obtained matrix
is equal to $\mu_k$ times the minor of~$B_0$ formed by the first $k+1$~rows and columns $1, \ldots, i-1,j,i+1, \ldots, k+1$.
\end{enumerate}
\end{quote}
\end{lemma}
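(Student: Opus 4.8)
I would prove parts~(a) and~(b) simultaneously by induction on~$k$, treating the fundamental lemma (\Cref{l:fundamental}) as a black box and isolating one determinantal identity as the engine of the argument. First some notation to restate the claim: for $0\le m$ and $i\in[1,m]$, let $M^{(m)}_{i,j}$ be the minor of $B_0$ formed by rows $1,\dots,m$ and columns $1,\dots,i-1,j,i+1,\dots,m$, so that $M^{(m)}_{i,i}=\mu_m$ and $M^{(m)}_{i,j}=0$ for $j\in[1,m]\setminus\{i\}$. Then part~(a) asserts that, after $k$ nontrivial iterations, the entry in position $(i,j)$ equals $M^{(k)}_{i,j}$ for all $i\le k$, and part~(b) asserts that, in the $(k{+}1)$st nontrivial iteration just before line~\ref{gauss:divide}, the entry in position $(i,j)$ equals $\mu_k\cdot M^{(k+1)}_{i,j}$ for all $i\le k+1$. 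Throughout I use that, under Assumptions~\ref{a-rows}--\ref{a-cols}, iteration $k+1$ uses row $k+1$ as lead row with lead coefficient $\mu_{k+1}$ (\Cref{l:fundamental}\ref{lead}), and that the rows of index $>k$ of the matrix after $k$ iterations vanish in columns $1,\dots,k$ and carry the bordered minors $b_{i,j}^{(k)}$ elsewhere (\Cref{l:fundamental}\ref{det-property}).

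The engine of the induction is the identity
\[
  \mu_{k+1}\, M^{(k)}_{i,j}\;-\;M^{(k)}_{i,k+1}\, b_{k+1,j}^{(k)}\;=\;\mu_k\, M^{(k+1)}_{i,j}\qquad(i\le k),
\]
valid for every column index~$j$. For $j\in[1,k+1]$ both sides reduce directly (each is $0$, or each equals $\mu_k\mu_{k+1}$ when $j=i$), so the content is the case $j\notin[1,k+1]$, where this is an instance of the Desnanot--Jacobi (equivalently Sylvester's) determinantal identity applied to the $(k{+}2)\times(k{+}2)$ matrix obtained by bordering the submatrix of $B_0$ on rows $1,\dots,k+1$ and columns $(1,\dots,i-1,i+1,\dots,k+1,j,i)$ with a suitable unit row and column --- exactly the tool already used to prove \Cref{l:fundamental}. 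I checked it numerically on a small example and it comes out correctly; the only subtlety is the sign produced by the column reordering, which must be tracked but cancels.

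The induction runs $m=0,1,\dots,k$, establishing part~(a) after $m$ iterations and part~(b) for iteration $m+1$, in the pattern $(\mathrm a)_m\Rightarrow(\mathrm b)_m\Rightarrow(\mathrm a)_{m+1}$. The base $(\mathrm a)_0$ is vacuous, and $(\mathrm b)_0$ is read off the first iteration directly. For $(\mathrm a)_m\Rightarrow(\mathrm b)_m$: by $(\mathrm a)_m$ and \Cref{l:fundamental}\ref{det-property} the matrix entering iteration $m+1$ has entry $M^{(m)}_{i,j}$ in rows $i\le m$ and $b_{i,j}^{(m)}$ in rows $i>m$; line~\ref{gauss:vig} multiplies every row by $\mu_{m+1}$ and subtracts from each row its column-$(m{+}1)$ coefficient times the lead row $(b_{m+1,j}^{(m)})_j$, turning entry $(i,j)$ for $i\le m$ into $\mu_{m+1}M^{(m)}_{i,j}-M^{(m)}_{i,m+1}\,b_{m+1,j}^{(m)}=\mu_m M^{(m+1)}_{i,j}$ by the identity; the row $i=m+1$ is handled from the way line~\ref{gauss:restore} recasts the lead row, using \Cref{l:fundamental}\ref{lead},\ref{det-property}. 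For $(\mathrm b)_m\Rightarrow(\mathrm a)_{m+1}$ (when $m+1\le k$): line~\ref{gauss:divide} divides these entries by $\prevlead=\mu_m$, without remainder by the lemma's hypothesis, giving $M^{(m+1)}_{i,j}$ for $i\le m$; then line~\ref{gauss:restore} reinstates row $m+1$ as a divisibility constraint whose matrix row (the column of the variable eliminated at iteration $m+1$ serving as its dummy column) is precisely $b_{m+1,j}^{(m)}=M^{(m+1)}_{m+1,j}$, so part~(a) holds for all $i\le m+1$.

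The routine parts are the matrix-operation bookkeeping and the base cases. The main obstacle is, as is typical for such statements, pinning down the determinantal identity exactly --- the choice of bordering matrix and the sign coming from reordering columns --- which is best stated and proved as a small self-contained lemma; getting the dummy-column conventions of lines~\ref{gauss:vig}, \ref{gauss:divide} and~\ref{gauss:restore} to line up with the minor formulas also needs some care. A further point is the interplay with the hypothesis: part~(b) of this lemma is exactly the input that the subsequent argument uses to certify that the $(m{+}1)$st division in line~\ref{gauss:divide} leaves no remainder, so parts~(a) and~(b) must be proved in lockstep, and one must never invoke a division-without-remainder claim beyond those granted by the stated hypothesis.
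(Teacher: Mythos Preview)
Your approach is correct in outline but takes a genuinely different route from the paper. You run an induction on~$m$, pushing through each iteration via the identity
\[
  \mu_{m+1}\,M^{(m)}_{i,j} - M^{(m)}_{i,m+1}\,b^{(m)}_{m+1,j} \;=\; \mu_m\,M^{(m+1)}_{i,j}
\]
(a Desnanot--Jacobi/Sylvester instance). The paper instead argues \emph{globally}, without stepping through iterations: under the convention that the dummy column of the divisibility introduced at iteration~$e$ occupies column~$e$, the leading $k\times k$ block after $k$~iterations is $\mu_k I$ (this is read off Lemma~\ref{l:divisors}); hence the cumulative row-operation matrix $L$ on the first~$k$ rows satisfies $L\,M_k = \mu_k I$, so $L = \adjj{M_k}$; each entry $(i,j)$ is then the $i$th component of $\adjj{M_k}\cdot\vec b_j$, which Cramer's rule identifies with the stated minor. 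Part~(b) follows by the same argument with $L' = \mu_k\,\adjj{M_{k+1}}$.

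What each buys: the paper's adjugate/Cramer argument is short and sidesteps entirely the sign bookkeeping you flag as the main obstacle; it also makes the link to the three restatements preceding the proof transparent. Your inductive argument is more self-contained in that it relies only on Lemma~\ref{l:fundamental} and not on Lemma~\ref{l:divisors}, and it reuses the same determinantal machinery as the proof of Lemma~\ref{l:fundamental} itself. One caution on your handling of $i=m{+}1$ in part~(b): at the point ``just before line~\ref{gauss:divide}'' the lead row has been zeroed out by the vigorous substitution on itself, and line~\ref{gauss:restore} has not yet run; the claim for this row really rests on the dummy-column convention (identify the eliminated-variable column with the new dummy column), and you should make that convention explicit rather than defer to line~\ref{gauss:restore}. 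The paper's proof uses the same convention, but builds it in from the start via the $\mu_k I$ block.
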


Before proving this lemma, we give two restatements of its first part for the reader's convenience.
For the proof as well as for these restatements,
we will need the standard notion of the \emph{adjugate} of a $k \times k$ rational matrix~$M$.
It is the transpose of the cofactor matrix of $M$:
$\adjj{M} = (m_{i j})$, where
$m_{i j}$ is the determinant of the matrix obtained from $M$ by removing
the $j$th row and the $i$th column.
If $M$ is invertible, then $\adjj{M} = M^{-1} \cdot \det M$.

Now,
let $M_k$ be the submatrix of~$B_0$
formed by
the first $k$~rows and first $k$~columns.
Suppose line~\ref{gauss:restore} of \GaussianQE introduces,
on the $i$th nontrivial iteration
of the main \textbf{foreach} loop,
a~divisibility constraint which becomes
$d \divides \rho$ after the $k$th nontrivial iteration ($1 \le i \le k$).
Then,
assuming as above that all the performed divisions
are without remainder:
\begin{itemize}
\item
The term $\rho$ is the linear combination of the first~$k$ constraints of the original system
with coefficients from the $i$th row of $\adjj{M_k}$, and with the first~$k$ handled variables
removed. In other words,
row $\ell$ is taken with coefficient equal to the $(\ell,i)$ cofactor of $M_k$.
\item
For every variable~$w$ occurring in $\rho$, its coefficient
is equal to the determinant of the matrix obtained from $M_k$ by replacing
the $i$th column with the vector of coefficients of~$w$ in the first~$k$
constraints of the original system.
\item
The constant term of $\rho$
is equal to the determinant of the matrix obtained from $M_k$ by replacing
the $i$th column with the vector of constant terms of the first~$k$
constraints of the original system.
\end{itemize}

\begin{proof}
We focus on part~\ref{l:cramer:in-between} first.
If all the divisions mentioned in the statement of the lemma are without remainder,
then, by Lemma~\ref{l:divisors} (second part), after the $k$~nontrivial iterations,
the square submatrix formed by the first $k$~rows and first $k$~columns
is equal to $\mu_k I$, where $I$ is the $k \times k$ identity matrix.
This is because
our algorithm is a variant of the Gauss--Jordan variable elimination:
in particular, after each nontrivial iteration (say~$e$) the entire $e$th column
of the matrix becomes zero, with the exception of the entry in position~$(e,e)$.

Let $B^{(1..k)}$ denote the submatrix of $B_0$
formed by the first~$k$ rows.
Consider the effect of the (nontrivial) $k$~iterations
of the main \textbf{foreach} loop on $B^{(1..k)}$.
These operations amount to manipulating the rows of $B^{(1..k)}$, namely to multiplication of $B^{(1..k)}$ from the left
by a square matrix, which we denote by~$L$.
Denote, as above, by $M_k$ the submatrix of $B^{(1..k)}$ formed by
the first~$k$ columns.
We know that $L \cdot M_k = \mu_k I$.
Since $\mu_k = \det M_k \ne 0$ by Lemma~\ref{l:fundamental}\ref{lead} and Remark~\ref{remark:nonzero}, it follows that
$L = \adjj{M_k}$.

We now consider three cases, depending on the position of the entry $(i,j)$ in the matrix.
\begin{description}
\item[Case $i = j \le k$.]
This is a diagonal entry of the obtained matrix, within the first $k$~rows.
We already saw above that,
by Lemma~\ref{l:divisors} (second part), this entry is equal to $\mu_k$.
And indeed, the minor of~$B_0$ formed by the first $k$~rows and columns $1, \ldots, i-1,j,i+1, \ldots, k$
is in this case simply $\det M_k$.
\item[Case $i \ne j \le k$.]
This is an off-diagonal entry within the first $k$ columns.
Again, we have already seen that, by the definition of the Gauss--Jordan process,
this entry must be $0$.
Indeed, the minor of~$B_0$ formed by the first $k$~rows and columns $1, \ldots, i-1,j,i+1, \ldots, k$
in this case contains a repeated column, namely column~$j$.
\item[Case $j > k$.]
This is the main case, when the entry in question lies to the right of the $\mu_k I$ submatrix.
Let $\vec b$ denote the vector formed by entries of~$B_0$ in column~$j$ and rows $1$ through~$k$.
The entry in question is the $i$th component of $\adjj{M_k} \cdot \vec b = M_k^{-1} \vec b \cdot \det M_k$,
or in other words of the solution to the system of equations $M_k \cdot \vec w = \vec b \cdot \det M_k$, where $\vec w$ is a vector of fresh variables.
By Cramer's rule, this component is equal to the determinant of the matrix obtained from $M_k$
by replacing the $i$th column by $\vec b \cdot \det M_k$, divided by $\det M_k$. This is exactly
the minor from the statement of the lemma.
\end{description}
This completes the proof of part~\ref{l:cramer:in-between}.
To justify part~\ref{l:cramer:mid-iteration}, we observe
that all of our arguments remain valid for the $(k+1)$th iteration,
except that the submatrix formed by the first $k+1$~rows and first~$k+1$ columns
is now $\mu_{k+1}\mu_k I$. This is because the $(k+1)$st iteration multiplies $\mu_k I$
by the new lead coefficient, which is $\mu_{k+1}$ by Lemma~\ref{l:fundamental}, part~\ref{lead}.
The remaining reasoning goes through almost unchanged:
we now have $L' \cdot M_{k+1} = \mu_{k+1} \mu_k I$, where $I$ is $(k+1) \times (k+1)$,
and so $L' = \mu_k \cdot \adjj{M_{k+1}}$ instead of $L = \adjj{M_k}$.
This introduces the extra factor of $\mu_k$, matching the statement of the present
Lemma~\ref{l:cramer}, part~\ref{l:cramer:mid-iteration}.
\end{proof}

\begin{lemma}
\label{l:no-remainder}
The division in line~\ref{gauss:divide} is without remainder.
\end{lemma}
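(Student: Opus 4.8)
The plan is to upgrade Lemma~\ref{l:fundamental}\ref{no-remainder} from ``without remainder except possibly in the constraints introduced by line~\ref{gauss:restore}'' to ``without remainder, unconditionally'', by a strong induction on the index~$k \ge 1$ of the nontrivial iteration. The key observation is that the rows that Lemma~\ref{l:fundamental}\ref{no-remainder} leaves uncontrolled are exactly the divisibility constraints recast by line~\ref{gauss:restore}, which occupy the first rows of the matrix, and these are governed by Lemma~\ref{l:cramer}\ref{l:cramer:mid-iteration}; since that lemma, as well as Lemma~\ref{l:divisors}, is stated conditionally on all earlier divisions being exact, an induction is precisely what turns those conditional statements into an unconditional one.

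First I would dispose of the base case: in the first nontrivial iteration $\prevlead$ still has the value~$1$ assigned in line~\ref{gauss:bareissfactors}, so line~\ref{gauss:divide} divides by~$1$ and leaves no remainder. For the inductive step, fix $k \ge 1$, assume the divisions in line~\ref{gauss:divide} during the nontrivial iterations $1,\dots,k$ were all without remainder, and consider the $(k+1)$st nontrivial iteration, if it exists. By lines~\ref{gauss:new-factor} and~\ref{gauss:divide}, the divisor used there is $\prevlead$, which by Lemma~\ref{l:fundamental}\ref{lead} equals the previous lead coefficient $\mu_k$ (with the convention $\mu_0 := 1$), and $\mu_k \ne 0$ by Remark~\ref{remark:nonzero}. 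It therefore suffices to show that, just before line~\ref{gauss:divide} in this iteration, every entry of the matrix associated to~$\phi$ is a multiple of~$\mu_k$.

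I would split the matrix rows into three parts. (i) The rows of index at most $k+1$: these are the lead row, which line~\ref{gauss:vig} has just turned into the all-zero row, together with the divisibility constraints introduced by line~\ref{gauss:restore} in the first $k$ iterations. The induction hypothesis is exactly the premise of Lemma~\ref{l:cramer}, so Lemma~\ref{l:cramer}\ref{l:cramer:mid-iteration} applies and shows that each such entry equals $\mu_k$ times a minor of the integer matrix~$B_0$, hence is divisible by~$\mu_k$. (ii) The entries in columns $j \le k+1$ of the rows of index $> k+1$: columns $1,\dots,k$ have already been cleared to~$0$ in these rows by the previous iterations, and the vigorous substitution of line~\ref{gauss:vig} clears column $k+1$ to~$0$ there as well, and $0$ is a multiple of~$\mu_k$. (iii) The entries in columns $j > k+1$ of the rows of index $> k+1$: for columns in the blocks $A$, the slack block, and the constant-term column, the Desnanot--Jacobi computation already carried out in the proof of Lemma~\ref{l:fundamental}\ref{det-property} shows that after the vigorous substitution the entry equals $\mu_k \cdot b_{i j}^{(k+1)}$; for the dummy columns recording the original divisibility constraints, Lemma~\ref{l:divisors} (applicable thanks to the induction hypothesis) shows the corresponding divisor is $d^\circ \cdot \mu_k$ before line~\ref{gauss:vig} and hence $d^\circ \cdot \mu_k \cdot \mu_{k+1}$ after it. In every case the entry is divisible by~$\mu_k$, so line~\ref{gauss:divide} performs an exact division, and the induction closes.

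The main obstacle is bookkeeping rather than mathematics: one has to check that the ``upper'' rows handled by Lemma~\ref{l:cramer}\ref{l:cramer:mid-iteration} (which are exactly the source of the exception in Lemma~\ref{l:fundamental}\ref{no-remainder}) together with the ``lower'' rows handled by the Desnanot--Jacobi identity and by Lemma~\ref{l:divisors} exhaust all rows and all four column blocks of the matrix, and that the conditional hypotheses of Lemmas~\ref{l:cramer} and~\ref{l:divisors} are genuinely supplied by the induction hypothesis rather than tacitly assumed.
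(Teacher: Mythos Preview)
Your proposal is correct and follows essentially the same approach as the paper: induction on the iteration index, with Lemma~\ref{l:cramer}\ref{l:cramer:mid-iteration} supplying divisibility by~$\mu_k$ for the rows coming from line~\ref{gauss:restore} (the only rows left uncovered by Lemma~\ref{l:fundamental}\ref{no-remainder}). The paper's write-up is shorter because it cites Lemma~\ref{l:fundamental}\ref{no-remainder} once to dispose of all the ``lower'' rows, whereas your parts~(ii) and~(iii) re-derive that content via the Desnanot--Jacobi identity and Lemma~\ref{l:divisors}; this is redundant but not wrong.
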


\begin{proof}
By Lemma~\ref{l:fundamental}, part~\ref{no-remainder}, we can focus
on constraints introduced by line~\ref{gauss:restore} only.
The proof is by induction on~$k$, the index of a nontrivial iteration
of the main \textbf{foreach} loop.

The base case is $k = 1$. No constraints have been introduced priori to the $1$st iteration,
and thus there is nothing to prove.
In the inductive step, we assume that the statement holds for the first $k$~nontrivial iterations.
Thus, Lemma~\ref{l:cramer}, part~\ref{l:cramer:mid-iteration}, applies.
But the factor $\mu_k$ from its statement is, by Lemma~\ref{l:fundamental}, part~\ref{lead}, the lead coefficient of the
$k$th nontrivial iteration: $p = \mu_k$. Therefore, the division by~$p$ in
line~\ref{gauss:restore} is indeed without remainder.
\end{proof}

\subsection{Correctness of Algorithm \ref{algo:gaussianqe} (\GaussianQE)}
\label{app:gaussian-elimination:correctness}

In this subsection we show that~\GaussianQE
correctly implement its specification.

We first make a basic observation underpinning the proof of correctness of \GaussianQE.
Fix some assignment to target variables~$\vec x$ and free variables $\vec z$.
It is clear that
the input conjunction~$\phi$ of equalities, inequalities, and divisibility constraints
is true if and only if there are nonnegative amounts of slack
(that is, an assignment of values from \N to slack variables~$\vec y$)
that make the equations with slack
produced in line~\ref{gauss:introduce-slack} of \GaussianQE as well as all the divisibility constraints true.
This equivalence justifies line~\ref{gauss:introduce-slack} of \GaussianQE.

We now prove a sequence of three lemmas, which will later be combined
into a proof of correctness of~\GaussianQE.
For all lemmas in this section,
it is convenient to think of $\phi$ as a logical \emph{formula}.

\begin{lemma}
\label{l:no-shrink}
An assignment satisfies formula~$\phi$ after lines~\ref{gauss:vigorous}--\ref{gauss:restore}
if and only if
it has an extension to~$x$ that satisfies~$\phi$ just before these lines.
In other words:
\[ 
  \exists x \,\phi_{k} \iff \phi_{k+1}\,,
  \qquad\quad k \geq 0.
\]
\end{lemma}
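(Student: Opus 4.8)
The plan is to split the single nontrivial iteration that takes $\phi_k$ to $\phi_{k+1}$ into two independent pieces: the algebraic elimination of $x$, carried out by the vigorous substitution of line~\ref{gauss:vig} together with the divisibility constraint reinstated in line~\ref{gauss:restore}, and the Bareiss division of line~\ref{gauss:divide}. Fix the $(k+1)$st nontrivial iteration; write $\phi\coloneqq\phi_k$ for the system at its start, and let $ax+\tau=0$ with $a\neq0$ (and $\tau$ free of $x$) be the equality guessed in line~\ref{gauss:guess-equation}. Preceding trivial iterations, as well as lines~\ref{gauss:guess-equation}--\ref{gauss:append-seq}, modify only the scalar bookkeeping and the substitution sequence $s$, not $\phi$, so $\phi$ is still $\phi_k$ just before line~\ref{gauss:vig}; and everywhere inside the main \textbf{foreach} loop the system consists only of equalities and divisibility constraints, all inequalities having been turned into equalities in line~\ref{gauss:introduce-slack} and not reintroduced until after the loop. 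Since the prose statement is just the definition of $\exists x$, it suffices to prove
\[
 \exists x\,\phi \iff \bigl(\phi\vigsub{\frac{-\tau}{a}}{x}\bigr)\land(a\divides\tau)
\]
and then to note that the division of line~\ref{gauss:divide} preserves logical equivalence.

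I would prove the displayed equivalence assignment by assignment. Fix an assignment $\nu$ defined on all variables other than $x$. For the forward direction, suppose $\nu\cup\{x\mapsto c\}\models\phi$ for some $c\in\Z$; the guessed equality forces $ac+\tau(\nu)=0$, whence $c=-\tau(\nu)/a$ and in particular $a\divides\tau(\nu)$. Checking the constraints of $\phi$ one by one, using the definition of vigorous substitution: the guessed equality becomes the trivial $0=0$; an equality $bx+\sigma=0$ becomes $-b\tau+a\sigma=0$, which holds because multiplying $bc+\sigma(\nu)=0$ by $a$ and substituting $ac=-\tau(\nu)$ gives $-b\tau(\nu)+a\sigma(\nu)=0$; a divisibility $d\divides(ex+\rho)$ in which $x$ occurs becomes $ad\divides(-e\tau+a\rho)$, which holds because multiplying the witness identity $ec+\rho(\nu)=md$ by $a$ gives $-e\tau(\nu)+a\rho(\nu)=mad$; and divisibilities not mentioning $x$ are unchanged and still satisfied. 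Conversely, if $\nu$ satisfies the right-hand side, then $a\divides\tau(\nu)$ lets us set $c\coloneqq-\tau(\nu)/a\in\Z$; the guessed equality holds at $x\mapsto c$ by construction, and the same identities, read in reverse and cancelling the nonzero $a$, show $\nu\cup\{x\mapsto c\}\models\phi$. The absence of inequalities at this stage is what keeps this clean, as it removes any need to track the sign flips of step~1 of a vigorous substitution.

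For the remaining step, line~\ref{gauss:divide} divides every coefficient and every modulus of the current system by the common factor $p$. By~\Cref{l:no-remainder} this division is without remainder, hence well-defined, and it clearly preserves the set of satisfying assignments: for $p\neq0$ an equality $\sum_i c_i x_i=0$ has the same solutions as $\sum_i (c_i/p) x_i=0$, and a divisibility $d\divides\rho$ in which $p$ divides $d$ and all coefficients and the constant term of $\rho$ is equivalent to $(d/p)\divides(\rho/p)$. Because line~\ref{gauss:restore} appends $a\divides\tau$ \emph{after} the division, this new constraint is unaffected. Chaining this with the displayed equivalence yields $\exists x\,\phi_k\iff\phi_{k+1}$, as claimed.

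I do not anticipate a real obstacle, only bookkeeping. The one point to be careful about is applying the vigorous substitution to exactly the right constraints — in particular reaching every divisibility in which $x$ occurs, including those created by earlier invocations of line~\ref{gauss:restore}, while leaving the rest (and their moduli) intact — and making sure the appeal to~\Cref{l:no-remainder} is not circular. The latter is fine, since~\Cref{l:no-remainder} is proved upstream, via~\Cref{l:fundamental} and~\Cref{l:cramer}, without using correctness of \GaussianQE.
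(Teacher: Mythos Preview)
Your proof is correct and follows essentially the same approach as the paper's: both directions are established by fixing an assignment, setting $x=-\tau(\nu)/a$ via the divisibility $a\divides\tau$, and verifying each transformed constraint using the identity $-b\tau+a\sigma=a(bx+\sigma)$. Your organization differs slightly in that you factor out the division step (line~\ref{gauss:divide}) and justify it explicitly via \Cref{l:no-remainder}, and you treat divisibility constraints more explicitly than the paper does; both are harmless refinements of the same argument.
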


\begin{proof}
Let $\nu$ be a satisfying assignment for the formula $\phi$
just before line~\ref{gauss:vigorous}.
In this context, $\nu$~assigns values to all variables in $\vec x$, $\vec y$, $\vec z$.
All constraints in the formula $\phi \vigsub{\frac{-\tau}{a}}{x}$ are obtained
by multiplying constraints of $\phi$ by integers as well as
adding such constraints together.
Therefore, all these constraints are also satisfied by $\nu$.
Dividing both sides of a constraint by a non-zero integer does not change
the set of satisfying assignments either.
Finally, for line~\ref{gauss:restore} we note that if $\nu(a x + \tau) = 0$, then
certainly $a$ divides $\nu(\tau)$.

In the other direction,
let $\nu'$ be a satisfying assignment for the formula obtained after line~\ref{gauss:restore}.
Here $\nu$ assigns values to all variables to $\vec x$, $\vec y$, $\vec z$ except~$x$.
Observe that reversing the application of line~\ref{gauss:divide} preserves the satisfaction,
because it amounts to multiplying both sides of each constraint by a non-zero integer.
We now show that the value for~$x$ can be chosen such that the formula $\phi$
at hand \emph{just before} line~\ref{gauss:vigorous} is satisfied.
Indeed, thanks to line~\ref{gauss:restore} the integer $\nu'(\tau)$ is a multiple of~$a$.
We will show that assigning $-\nu'(\tau) / a$ to $x$ satisfies $\phi$;
note that this choice of value ensures that $\nu'(a x + \tau) = 0$.

Consider any nontrivial (different from Boolean true) constraint in~$\phi\vigsub{\frac{-\tau}{a}}{x}$.
If it stems from a constraint $b x + \sigma = 0$, for some (possibly zero) integer $b$, then
this (satisfied) constraint is actually $- b\tau + a\sigma = 0$, simply by the definition
of substitution. Therefore, we have $-b\nu'(\tau) + a\nu'(\sigma) = 0$ and $a \nu'(x) + \nu'(\tau) = 0$,
from which we conclude $a b \nu'(x) + a \nu'(\sigma) = 0$. Since $a \ne 0$,
the constraint $b x + \sigma = 0$ is indeed satisfied by our choice for the value of~$x$.
\end{proof}

The reader should not be lulled into a false sense of security
by the seemingly very powerful equivalence in the displayed equation
in Lemma~\ref{l:no-shrink}.
While lines~\ref{gauss:vig}--\ref{gauss:restore} of \GaussianQE indeed eliminate~$x$,
the primary objective after the introduction of slack (line~\ref{gauss:introduce-slack})
is to bound the range of slack variables.
This is not achieved by Lemma~\ref{l:no-shrink}, but is the subject
of the following Lemma~\ref{l:amount-of-slack}.

For a sequence of substitutions~$s$,
we denote by $\varsofsub(s)$ the set of variables that $s$~assigns values to. 
Also, by $\vec x \setminus \{y_1,\dots,y_k\}$ we denote the vector obtained from $\vec x$ by removing $y_1,\dots,y_k$.
Lemma~\ref{l:amount-of-slack} shows that
the range of the guessed slack in line~\ref{gauss:guess-slack} of \GaussianQE suffices
for completeness.

\begin{lemma}
\label{l:amount-of-slack}
Consider the $(k+1)$th nontrivial iteration of the main \textbf{foreach} loop, for $k \ge 0$.
Fix an arbitrary assignment~$\xi$ to $\vec z$, $\vec x \setminus x$,
and~$\varsofsub(s)$.
Then, under~$\xi$, the formula~$\phi_k$ is satisfiable
if and only if,
for some choice of guesses in lines~\ref{gauss:guess-equation}--\ref{gauss:append-seq},
the formula~$\phi_{k+1}\sub{v}{y}$
(or~$\phi_{k+1}$ if there is no slack at line~\ref{gauss:branch})
is satisfiable. In other words: 
\[ 
  \exists x \exists \vec y' \,(\vec y' \geq \vec 0 \land \phi_k)
  \iff 
  \bigvee_{i \in I} \exists \vec y_i' (\vec y_i' \geq \vec 0 \land  \phi_{k+1}^{(i)}\sub{v_i}{y_i}) \lor \bigvee_{j \in J}\exists \vec y'(\vec y' \geq \vec 0 \land \phi_{k+1}^{(j)}),
\]
where $I$ and $J$ are sets of indices corresponding to the guesses at lines~\ref{gauss:guess-equation}--\ref{gauss:append-seq}, with $I$ corresponding to those featuring equalities with slack variables, and $\vec y'$ and $\vec y_i'$ are those slack variables that are not assigned by $s$ before and after the $(k+1)$th iteration takes place.  
\end{lemma}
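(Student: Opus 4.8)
The plan is to derive the displayed equivalence from the one‑step equivalence $\exists x\,\phi_k \iff \phi_{k+1}$ of Lemma~\ref{l:no-shrink} (which holds for \emph{every} admissible choice of pivot equality in line~\ref{gauss:guess-equation}), together with a description of the admissible values of the pivot variable~$x$ under the fixed assignment~$\xi$. The $(\Leftarrow)$ direction is short: given a model of $\phi_{k+1}^{(i)}\sub{v_i}{y_i}$ with $\vec y_i'\ge\vec 0$, assigning the lazily‑substituted slack~$y_i$ the guessed value $v_i\ge 0$ yields a model of $\phi_{k+1}^{(i)}$ in which all not‑yet‑assigned slacks are nonnegative; by Lemma~\ref{l:no-shrink} it extends to~$x$ into a model of~$\phi_k$ with the same slack values, so the left side holds. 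The disjuncts indexed by~$J$ (no slack guessed) are handled identically without the extra assignment.

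For $(\Rightarrow)$, I would fix a model~$\nu$ of~$\phi_k$ extending~$\xi$ with all unassigned slacks~$\ge 0$. By Lemma~\ref{l:slack-separate} every unassigned slack occurs in a single constraint of~$\phi_k$, an equality, and by Lemma~\ref{l:vigorous-slack-same} with coefficient~$\mu_k$ there (and, as observed after that lemma, no equality contains two unassigned slacks). Hence, with~$\xi$ fixed, each unassigned slack is a linear function of~$x$ through its host equality, and the admissible values of~$x$ form the intersection of: an interval or half‑line, cut out by the constraints ``host slack~$\ge 0$'' of the equalities containing~$x$; and an arithmetic progression, cut out by integrality of those slacks (congruences modulo~$\mu_k$) together with the divisibility constraints of~$\phi_k$ --- all of whose moduli divide~$\fmod(\phi_k)$ by Lemma~\ref{l:divisors}, so the progression has period dividing~$\fmod(\phi_k)$. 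If some equality containing~$x$ has no unassigned slack, I would guess it as pivot (an index in~$J$) and conclude directly from Lemma~\ref{l:no-shrink} applied to~$\nu$, with no change of slack values. Otherwise every equality containing~$x$ yields a one‑sided bound on~$x$; I would take as pivot an equality~$E_1$ realising the binding bound (writing it under~$\xi$ as $a_1x+\mu_k y_1+T_1=0$; say it gives the smallest upper bound $u=\floor{-T_1/a_1}$, the lower‑bound case being symmetric), set $x^{*}$ to the largest admissible value of~$x$, and $v_1:=-(a_1x^{*}+T_1)/\mu_k$ to the value of~$y_1$ it forces. Since $x^{*}+\fmod(\phi_k)$ satisfies the congruences but not ``$\le u$'', one gets $x^{*}\ge u-\fmod(\phi_k)+1$, and then a short computation splitting on the sign of~$a_1$ (using $\mu_k a_1>0$ in the upper‑bound case) yields $a_1x^{*}+T_1\in(-|a_1|\fmod(\phi_k),0]$ when $a_1>0$ and $\in[0,|a_1|\fmod(\phi_k))$ when $a_1<0$, hence $v_1\in[0,|a_1|\fmod(\phi_k)-1]$ --- exactly the range of line~\ref{gauss:guess-slack}. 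The assignment $x\mapsto x^{*}$ together with the forced slack values then models $\phi_k\sub{v_1}{y_1}$; because substituting a constant for~$y_1$ commutes with the vigorous substitution, division and divisibility addition of one nontrivial iteration (lines~\ref{gauss:vig}--\ref{gauss:restore}), as $y_1\neq x$, and the division stays remainder‑free by the argument of Lemma~\ref{l:no-remainder}, this equals the iteration applied to $\phi_k\sub{v_1}{y_1}$, so by Lemma~\ref{l:no-shrink} it gives a model of $\phi_{k+1}\sub{v_1}{y_1}$ with all remaining slacks~$\ge 0$, i.e.\ the corresponding right‑hand disjunct.

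The main obstacle is the quantitative step: showing that the extremal admissible value of the pivot variable forces a slack value landing precisely in $[0,|a_1|\fmod(\phi_k)-1]$. This relies on identifying (via Lemma~\ref{l:divisors}) $\fmod(\phi_k)$ as a common period of the integrality and divisibility congruences, on choosing the pivot to be an equality realising the binding one‑sided bound on~$x$, and on careful sign bookkeeping for~$a_1$ and for~$\mu_k$ (which, being a subdeterminant of the input matrix, may be negative) so that the two half‑line cases and the two sign cases all collapse to the same bound. A secondary point, to be dispatched with the minor machinery of Section~\ref{app:gaussian-elimination:fundamental}, is checking that performing $\sub{v_1}{y_1}$ before the pivot step preserves the remainder‑freeness of line~\ref{gauss:divide}.
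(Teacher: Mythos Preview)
Your approach is correct and follows the same overall strategy as the paper's proof, though you frame the key quantitative step differently. Where you compute the extremal admissible value~$x^*$ and take as pivot the equality realising the binding bound, the paper instead picks an assignment~$\nu$ minimising the smallest \emph{relative slack} $\nu(y_e)/|a'_e|$, takes the corresponding equality as pivot, and argues by contradiction: if $\nu(y_1) \ge |a_1|\cdot\fmod(\phi_k)$, shift $\nu(x)$ by $\pm\mu_k\cdot\fmod(\phi_k)$ and each slack by $\mp a'_e\cdot\fmod(\phi_k)$ to obtain a satisfying assignment with strictly smaller minimal relative slack. The paper's ``informal aside'' explicitly describes your extremal-endpoint framing as the standard Presburger-style argument, so the two are dual presentations of the same idea; your direct computation is arguably more transparent, at the price of exactly the sign bookkeeping you flag.

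One point to tighten: your final step --- arguing that $\sub{v_1}{y_1}$ commutes with one iteration, so that $\phi_{k+1}\sub{v_1}{y_1}$ equals the iteration applied to $\phi_k\sub{v_1}{y_1}$, and then invoking Lemma~\ref{l:no-shrink} on the latter --- is an unnecessary detour whose remainder-freeness claim is not actually covered by Lemma~\ref{l:no-remainder} (after pre-substitution the constant term of $-b\tau_1+a_1\sigma$ picks up $-b\mu_k v_1$, and divisibility of this by $p=\mu_{k-1}$ would need a separate argument). The paper bypasses this entirely: since your assignment at~$x^*$ models~$\phi_k$ with $y_1=v_1$ and all slacks nonnegative, apply Lemma~\ref{l:no-shrink} to~$\phi_k$ itself; the resulting model of~$\phi_{k+1}$ still has $y_1=v_1$, hence is a model of~$\phi_{k+1}\sub{v_1}{y_1}$ directly, with the remaining slacks unchanged.
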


\begin{proof}
Consider the set of equalities in $\phi_k$ in which $x$ appears.
If in some equality of $\phi_k$ involving $x$ all slack variables have already been assigned
values by substitutions of~$s$,
then we may restrict the choice at line~\ref{gauss:guess-equation} to such equalities only.
Indeed, in this case
there is no slack at line~\ref{gauss:branch}
and line~\ref{gauss:guess-slack} is not executed at all.
Thus, in this case the statement follows directly from Lemma~\ref{l:no-shrink}.

We can therefore assume without loss of generality that,
in every equality of $\phi_k$ that contains $x$,
there is some slack variable that has not been assigned a value by~$s$ yet.
With no loss of generality, we assume these slack variables are exactly $\vec y' = (y_1, \ldots, y_r)$.
(In general, the vector $\vec y'$ from the statement of the lemma may contain
 more slack variables, but these will play no further role as their values will remain unchanged.)
Each $y_i$ belongs to the vector of all slack variables, $\vec y$, but $\vec y$ may well contain
further slack variables too.
By Lemma~\ref{l:slack-separate}, each $y_e$ appears in $\phi$ exactly once.

The nontrivial direction of the proof is left to right (``only if'').
In line with the statement of the lemma, we fix an assignment~$\xi$ that
assigns values to
free variables $\vec z$,
each (remaining) variable from~$\vec x \setminus x$, and
each slack variable from~$\varsofsub(s)$.
These values will remain fixed throughout the proof.

We assume that $\phi_k$ is satisfiable at the beginning of the iteration,
and accordingly we can additionally assign a value to the variable~$x$ as well as
to all slack variables that have not been yet assigned an integer value,
so that $\phi_k$ is satisfied.
We let $\nu$ be the assignment that extends $\xi$ accordingly.

All components of the vector $\nu(\vec y') = (\nu(y_1), \ldots, \nu(y_r))$ belong to $\mathbb N$.
Consider the auxiliary rational vector
\begin{equation*}
\vec q^{\nu} = (q_1, \ldots, q_r) \defeq \left(\frac{\nu(y_1)}{|a'_1|}, \ldots, \frac{\nu(y_r)}{|a'_r|}\right),
\end{equation*}
where $a'_e$ is the coefficient at $x$ in the equality in which the variable $y_e$ appears.
Suppose $\nu$ is one of the assignments which is an extension of~$\xi$,
which satisfies~$\phi_k$, and
for which the \emph{smallest}
component of $\vec q^{\nu}$ is minimal.
Assume without loss of generality that $q_1$ is this component.

Consider the non-deterministic branch of the algorithm that guesses in line~\ref{gauss:guess-equation}
the equality in which the variable $y_1$ is present.
We will now show that 
$q_1 < \fmod(\phi_k)$ or, in other words,
$\nu(y_1)< |a'_1| \cdot \fmod(\phi_k)$.
The number $\fmod(\phi_k)$ is the one appearing in the expression for
the right endpoint of the interval in line~\ref{gauss:guess-slack} of \GaussianQE,
before the formula manipulation at line~\ref{gauss:vig}.

\medskip

\emph{An informal aside.}
Recall that
the standard argument dating back to Presburger picks an interval to which the value assigned to $x$ belongs,
and moving this value so that it is close to an endpoint of this interval.
In the scenario where the interval is bounded (and not an infinite ray), 
the two endpoints of the interval correspond to two constraints that have the smallest slack.
(More precisely, the set of constraints can be partitioned into two --- think left and right --- so that,
among all the constraints on each side, the ``endpoint constraint'' has the smallest slack.)
This justifies our choice, above, of the assignment $\nu$ that minimises the slack in constraints
that involve~$x$.
In particular, if these constraints include an equation without slack, then this equation is chosen.
We remark that slack for different equations is measured \emph{relative} to the absolute value
of the coefficient at $x$: indeed, the inequalities that define the above-mentioned intervals for $x$
are obtained by first dividing each equation (and thus, intuitively, the corresponding slack variable)
by the coefficient of $x$ in it.

\medskip

More formally,
denote $a = a'_1$ and
assume for the sake of contradiction that $\nu(y_1) \ge |a| \cdot \fmod(\phi_k)$.
Let $m$ be the coefficient of the slack variable $y_1$ in the equality $a x + \tau = 0$.
Denote by $\nu'$ the assignment that agrees with $\nu$ on all variables except $x$ and $\vec y'$
as well as
on all slack variables eliminated previously,
and such that
\begin{align*}
\nu'(x) &= \nu(x) \pm m \cdot \fmod(\phi_k), &&\\
\nu'(y_1) &= \nu(y_1) \mp a \cdot \fmod(\phi_k) < \nu(y_1),\\
\text{and}\quad
\nu'(y_e) &=
\nu(y_e) \mp a'_e \cdot \fmod(\phi_k),
&&e>1,
\end{align*}
where
the signs are chosen depending on $a > 0$ or $a < 0$, so that the inequality constraining $\nu'(y_1)$ is satisfied.
(As in the definition of $\vec q^{\nu}$ above, we use $a'_e$ to denote the coefficient at $x$ in the equality
in which the variable $y_e$ is present.)
We are now going to show that $\nu'$ is a satisfying assignment for the formula $\phi_k$.

Observe that if $\nu$ satisfies all divisibility constraints of $\phi_k$, then so does $\nu'$.
Let us verify that
$\nu'$ respects the range of each slack variable.
Indeed, since $\nu(y_1) \ge |a| \cdot \fmod(\phi_k)$, we have $\nu'(y_1) \ge 0$.
We now consider the amount of slack in other constraints. Note that
\begin{align*}
\nu'(y_e) &=
\nu(y_e) \mp a'_e \cdot \fmod(\phi_k) &\text{(by the choice of $\nu'$)}\\& =
|a'_e| \cdot q_e
          \mp a'_e \cdot \fmod(\phi_k) &\text{(by the definition of $q_e$)}\\&\ge
|a'_e| \cdot (q_e 
          - \fmod(\phi_k))              &\text{(by cases)}\\
&\ge |a'_e| \cdot (q_1 - \fmod(\phi_k)) &\text{(since $q_1$ is the smallest component of $\vec q^{\nu}$)}\\
&\ge 0.                                      &\text{(by assumption)}
\end{align*}
Therefore, $\nu'(y_e) \ge 0$ for all $e = 1, \ldots, r$.

It remains to verify that the assignment $\nu'$ satisfies all equalities from $\phi_k$.
By Lemma~\ref{l:slack-separate}, there is only one equality that contains $y_1$.
For this equality, we have
\begin{equation*}
\nu'(a x + \tau) = \nu(a x + \tau) \pm a \cdot m \cdot \fmod(\phi_k) \mp m \cdot a \cdot \fmod(\phi_k) = 0,
\end{equation*}
so under the new assignment this equality remains satisfied.
Take any other equality involving $x$ from $\phi_k$, say $a'_2 x + \sigma = 0$
in which the unassigned slack variable is $y_2$.
By Lemma~\ref{l:vigorous-slack-same},
the coefficient at $y_2$ in this equality is equal to~$m$,
the coefficient at $y_1$ in $a x + \tau = 0$.
We have
\begin{equation*}
\nu'(a'_2 x + \sigma) = \nu(a'_2 x + \sigma) \pm a'_2 \cdot m \cdot \fmod(\phi_k) \mp m \cdot a'_2 \cdot \fmod(\phi_k) = 0.
\end{equation*}
Thus, $\nu'$ is also a satisfying assignment for the formula $\phi_k$.
However, $\nu'(y_1) < \nu(y_1)$ by the choice of $\nu'$, and therefore
$\nu'(y_1) / |a| < \nu(y_1) / |a| = q_1$. This inequality contradicts
our choice of $\nu$, because we assumed that the smallest component of the vector $\vec q^{\nu}$ is minimal.
Thus, we conclude that the inequality $q_1 < \fmod(\phi_k)$ must hold,
that is, 
$\nu(y_1)< |a'_1| \cdot \fmod(\phi_k)$.
This means that the range specified in line~\ref{gauss:guess-slack} suffices to keep the formula
satisfiable after at least one of the possible substitutions.

It remains to prove the other direction (right to left, ``if'').
A satisfying assignment to the formula $\phi_{k+1}\sub{v}{y}$ can,
by Lemma~\ref{l:no-shrink}, always be extended to~$x$ so that
it also satisfies~$\phi_k$. This completes the proof.
\end{proof}

We turn our attention to lines~\ref{gauss:unslackloop}--\ref{gauss:drop-slack}
of \GaussianQE.
We need to prove that, when the slack variables are removed
in line~\ref{gauss:drop-slack}, there is no need to keep
the divisibility constraints on the slack.
More precisely, suppose that, when the algorithm reaches line~\ref{gauss:unslackloop}, $\phi$ contains an equality with
a slack variable~$y$ which is not assigned any value by substitutions from~$s$.
Let us assume that the coefficient at $y$ is positive; the negative case is analogous.
Thus, the equality has the form $\rho + g y = 0$, $g > 0$.
The range of $y$ is $\N$, and by Lemma~\ref{l:slack-separate} this variable occurs
in no other constraint of $\phi$; therefore, this constraint can be replaced
with a conjunction $(\rho \le 0) \land (g \divides \rho)$.
Line~\ref{gauss:drop-slack} only introduces the inequality $\rho \le 0$,
omitting the divisibility constraint $g \divides \rho$.

The following lemma shows that this divisibility
is \emph{implied} by other constraints and can thus, indeed, be removed safely.
(In practice, it might be beneficial to keep and use the constraint.)

\begin{lemma}
\label{l:dropping-slack}
Denote by $\phi'$ the formula obtained at the end of the first \textbf{foreach} loop
(in lines~\ref{gauss:mainloop}--\ref{gauss:endmainloop}),
   and by $\psi'$ the one                     after the second \textbf{foreach} loop
(in lines~\ref{gauss:unslackloop}--\ref{gauss:drop-slack}), which
removes variables~$\vec y'$.
Then every assignment $\nu$ that satisfies $\psi'$ can be extended to $\vec y'$
so that the resulting assignment $\nu'$ has $\nu'(y) \in \N$ for all $y$ in $\vec y'$ and
moreover $\nu'$ satisfies $\phi'$.
In other words: 
\[ 
  \exists \vec y' \,(\vec y' \geq \vec 0 \land \phi') \iff \psi'.
\]
\end{lemma}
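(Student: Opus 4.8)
The plan is to pin down, for each slack variable $y$ in $\vec y'$, the unique constraint of $\phi'$ containing it. By Lemma~\ref{l:slack-separate} this is an equality, say $\rho_y + g_y\, y = 0$; by Lemma~\ref{l:vigorous-slack-same} the coefficient is the same for all of them, $g_y = \mu_K$, where $K$ denotes the number of nontrivial iterations completed; and $\mu_K \neq 0$ by Remark~\ref{remark:nonzero}. Consequently $\psi'$ is obtained from $\phi'$ by deleting the variables $\vec y'$ and replacing each equality $\rho_y + \mu_K\, y = 0$ by the inequality produced in line~\ref{gauss:drop-slack} (namely $\rho_y \le 0$ if $\mu_K > 0$ and $\rho_y \ge 0$ if $\mu_K < 0$); no other constraint of $\psi'$ mentions a variable of $\vec y'$, again by Lemma~\ref{l:slack-separate}. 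The easy direction of the displayed equivalence, $\exists \vec y'\,(\vec y' \geq \vec 0 \land \phi') \Rightarrow \psi'$, is then immediate: given an assignment satisfying $\phi'$ whose values on $\vec y'$ are non-negative, its restriction satisfies every constraint of $\psi'$, since each new inequality follows from $\rho_y = -\mu_K\, y$ and $y \geq 0$.

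For the converse (the statement spelled out in the lemma), let $\nu$ satisfy $\psi'$ and define $\nu'$ by $\nu'(y) \coloneqq -\nu(\rho_y)/\mu_K$ for $y$ in $\vec y'$, and $\nu' \coloneqq \nu$ on all other variables. Provided this is well defined, $\nu'$ is as required: the sign of the inequality introduced in line~\ref{gauss:drop-slack} forces $\nu'(y) \geq 0$; each equality $\rho_y + \mu_K\, y = 0$ holds by construction; and every remaining constraint of $\phi'$ is verbatim a constraint of $\psi'$ that does not mention $\vec y'$, hence is satisfied by $\nu'$ because $\nu'$ agrees with $\nu$ there. So the whole argument reduces to the divisibility claim
\[
  \mu_K \divides \nu(\rho_y) \quad\text{for every } y \in \vec y' \text{ and every } \nu \text{ satisfying } \psi'.
\]

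Establishing this claim is the heart of the proof, and here I would invoke the matrix analysis of Sections~\ref{app:gaussian-elimination:fundamental}--\ref{app:gaussian-elimination:divisibilities}. The equality $\rho_y + \mu_K\, y = 0$ is a row of the matrix $B_K$; since $y$ occurs only in this constraint and in none of the first $K$ lead rows, the row sits at an index $i_0 > K$, so by Lemma~\ref{l:fundamental}\ref{det-property} its entry in each column $j$ is the bordered minor $b^{(K)}_{i_0,j}$ of the input matrix $B_0$. Expanding this $(K+1)\times(K+1)$ determinant along its bottom row and reducing modulo $\mu_K$ gives $b^{(K)}_{i_0,j} \equiv -\sum_{t=1}^{K} (B_0)_{i_0,t}\cdot m_{t,j} \pmod{\mu_K}$, where $(B_0)_{i_0,t}$ is the (integer) coefficient of the $t$-th eliminated variable in the original slack equality and $m_{t,j}$ is the minor of $B_0$ on the first $K$ rows and on columns $1,\dots,t-1,j,t+1,\dots,K$. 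By Lemma~\ref{l:divisors} and Lemma~\ref{l:cramer}\ref{l:cramer:in-between} (equivalently, the adjugate identity $\adjj{M_K}\,M_K = \mu_K I$ for the leading $K\times K$ block $M_K$), the numbers $m_{t,j}$ are precisely the coefficients of the term $\sigma_t$ in the divisibility constraint $\mu_K \divides \sigma_t$ contributed by line~\ref{gauss:restore} on the $t$-th nontrivial iteration, a constraint still present in $\psi'$. Evaluating the congruence at $\nu$, summing over the columns $j$, and using $\mu_K \divides \nu(\sigma_t)$ for each $t$ then yields $\mu_K \divides \nu(\rho_y)$. I expect the main obstacle to be bookkeeping rather than conceptual: aligning the row and column orderings (via Assumptions~\ref{a-rows}--\ref{a-cols}) and handling the constant-term column so that the cofactor congruence holds uniformly across all $j$, including the $-1$ column; once that is settled, the divisibility claim, and hence the lemma, follows.
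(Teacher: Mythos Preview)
Your approach is correct and takes a genuinely different route from the paper's. The paper proves $\mu_K \divides \nu(\rho_y)$ by a \emph{telescoping} argument: it traces the equality through the iterations as $\eta_0 \to \eta_1 \to \cdots \to \eta_K$, shows for each step that $\nu(\eta'_{i-1}) = (p_i/a_i)\,\nu(\eta'_i)$ (after extending $\nu$ to the eliminated~$x_i$ via the stipulation $\nu(a_i x_i + \tau_i)=0$), and telescopes to obtain $\nu(\eta'_0) = \nu(\eta'_K)/\mu_K$; since $\eta'_0$ is the original linear term, $\nu(\eta'_0)\in\Z$, whence $\mu_K \divides \nu(\eta'_K)=\nu(\rho_y)$. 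Your argument is instead \emph{static}: you use the cofactor expansion of the bordered minor $b^{(K)}_{i_0,j}$ to exhibit, column by column, that row $i_0$ of $B_K$ is congruent modulo~$\mu_K$ to an integer combination of the first $K$ rows (the divisibility rows $\mu_K \divides \sigma_t$ described by Lemma~\ref{l:cramer}\ref{l:cramer:in-between}); evaluating at~$\nu$ and using $\mu_K \divides \nu(\sigma_t)$ gives the claim at once. The paper's route is closer to the algorithm's dynamics and avoids any determinant bookkeeping beyond Lemma~\ref{l:fundamental}; yours leverages the structural Lemma~\ref{l:cramer} more fully and makes explicit the algebraic reason---namely that $\rho_y$ lies, modulo $\mu_K$, in the $\Z$-span of the $\sigma_t$---which is arguably cleaner once the matrix lemmas are in place. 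The sign and column-permutation bookkeeping you flagged indeed works out (the cofactor at column $t$ contributes $(-1)^{K+1+t}(-1)^{K-t}=-1$ times $m_{t,j}$), and the constant and slack columns cause no trouble since $m_{t,\text{col}(y')}=0$ for every unassigned slack~$y'$.
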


\begin{proof}
Take an assignment $\nu$ that satisfies the assumptions of the lemma
(in particular, $\nu$ satisfies $\psi'$).
	Take an inequality $\eta\sub{0}{y} \le 0$ that is introduced by line~\ref{gauss:drop-slack}
of the algorithm.
(The case $\eta\sub{0}{y} \ge 0$ is analogous and we skip it.)
Suppose this inequality originates from a constraint $\eta = 0$ picked by
line~\ref{gauss:unslackloop}, where $y$ is the one slack variable in $\eta$
that is not assigned a value by substitutions of the sequence~$s$.

Let $k$ be the number of iterations of the first \textbf{foreach} loop.
We need the following auxiliary notation.
Let $x_1, \ldots, x_k$ be the variables picked by the header of the loop in line~\ref{gauss:mainloop}.
Assume without loss of generality that, for each $i$, at the beginning of iteration~$i$ the
variable $x_i$ is present in at least one of the equalities in~$\phi$ (i.e., we consider $k$ nontrivial iterations).
Suppose the constraint in question, $\eta$, arose from the sequence of transformations
depicted below, where each $\eta_i$ is obtained after $i$~iterations,
and in particular $\eta = \eta_k$:
\begin{equation*}
\begin{array}{lccccccccc}
\text{Equality:}&\eta_0=0&\rightarrow&\eta_1=0&\rightarrow&\cdots&\rightarrow&\eta_{k-1}=0&\rightarrow&\eta_k=0 \\
\text{Term without $y$:}&\eta'_0&&\eta'_1&&\cdots&&\eta'_{k-1}&&\eta'_k \\
\text{Slack:}&1 \cdot y && a_1 \cdot y && \cdots && a_{k-1} \cdot y && a_k \cdot y
\end{array}
\end{equation*}
Each term $\eta'_i$ is obtained from $\eta_i$ by dropping the slack variable:
$\eta'_i = \eta_i \sub{0}{y}$.
The slack row shows the coefficients of the variable~$y$ in $\eta_0, \eta_1, \ldots, \eta_k$;
see Lemma~\ref{l:vigorous-slack-same}.
Here we assume that, on iteration~$i$, the equality picked in line~\ref{gauss:guess-equation}
is $a_i x_i + \tau_i = 0$.

Our goal is to prove that $\nu(\eta'_k)$ is divisible by $a_k$.
(This enables us to extend $\nu$ to $y$ by assigning
 $\nu(y) \eqdef - \nu(\eta'_k) / a_k$.)
Denote by $p_i$ the divisor in line~\ref{gauss:divide} in iteration~$i$;
then $p_i = a_{i-1}$ for all $i \ge 2$ and $p_1 = 1$.
For each $i = k, \ldots, 1$ we will show that
\begin{equation*}
\nu(\eta'_{i-1}) = \frac{p_i}{a_i} \cdot \nu(\eta'_i).
\end{equation*}
It will then follow that $\nu(\eta'_0) = \nu(\eta'_k) \cdot \prod_{i=1}^{k} p_i / \prod_{i=1}^{k} a_i = \nu(\eta'_k) / a_k$.
Since $\nu(\eta'_0) \in \Z$, this will conclude the proof.

To justify the equation
$\nu(\eta'_{i-1}) = p_i / a_i \cdot \nu(\eta'_i)$,
notice that, from the pseudocode of the algorithm, we obtain
$\nu(a_i x_i + \tau_i) = 0$.
We now develop this intuition into a formal argument.
Consider the $i$th iteration of the first \textbf{foreach} loop, and in particular
lines~\ref{gauss:vig}--\ref{gauss:restore}.
Suppose that, just before line~\ref{gauss:vig}, the constraint $\eta_{i-1} = 0$
has the form $b_i x_i + \hat\eta_{i-1} + m_{i-1} y = 0$ for some $m_{i-1} \in \Z$,
so that $\eta'_{i-1} = b_i x_i + \hat\eta_{i-1}$ for some term $\hat\eta_{i-1}$.
The result of applying substitution~$\vigsub{\frac{-\tau_i}{a_i}}{x_i}$ to $\eta'_{i-1}$
in line~\ref{gauss:vig} is $ - b_i \tau_i + a_i \hat\eta_{i-1} $.
Since $\eta'_{i-1} = \eta_{i-1} \sub{0}{y}$, and
the variable $y$ does not occur in the term~$\tau_i$ thanks to Lemma~\ref{l:slack-separate},
we can apply Lemma~\ref{l:fundamental}, part~\ref{no-remainder}, concluding that
all coefficients in the term $ - b_i \tau_i + a_i \hat\eta_{i-1} $ are divisible by $p_i = a_{i-1}$.
(Notice that Lemma~\ref{l:fundamental} does not involve any assignments:
 it applies directly to the syntactic objects that~\GaussianQE works with.)

Thus, we can, in fact, write $\eta'_i = ( - b_i \tau_i + a_i \hat\eta_{i-1} ) / p_i$,
and moreover $\nu(\eta'_i) \in \Z$.
Notice that we can assume $\nu(a_i x_i + \tau_i) = 0$, because even though
the formula $\psi'$ (satisfied by $\nu$) does not contain the equality $a_i x_i + \tau_i = 0$,
it contains the divisibility constraint $a_i \divides \tau$, and no occurrences of $x_i$,
so we may as well stipulate $\nu(a_i x_i + \tau_i) = 0$.
Therefore,
\begin{equation*}
\nu(\eta'_i) = \frac{a_i b_i \nu(x_i) + a_i \nu(\hat\eta_{i-1})}{p_i} \in \Z,
\end{equation*}
and hence
$
a_i \nu(\hat\eta_{i-1}) = p_i \nu(\eta'_i) - a_i b_i \nu(x_i)
$.
Since $a_i \ne 0$, the number $p_i \nu(\eta'_i)$ must be divisible by~$a_i$,
and moreover
\begin{equation*}
\nu(\hat\eta_{i-1}) = \frac{p_i}{a_i} \cdot \nu(\eta'_i) - b_i \nu(x_i).
\end{equation*}
Recalling that $\eta'_{i-1} = b_i x_i + \hat\eta_{i-1}$, we conclude that
\begin{equation*}
\nu(\eta'_{i-1}) = \nu(b_i x_i) + \nu(\hat\eta_{i-1}) = b_i \nu(x_i) + \frac{p_i}{a_i} \cdot \nu(\eta'_i) - b_i \nu(x_i) =
                                                                       \frac{p_i}{a_i} \cdot \nu(\eta'_i).
\end{equation*}
This completes the proof.
\end{proof}

\begin{lemma}\label{lemma:corr-gaussianqe}
    \Cref{algo:gaussianqe} (\GaussianQE) complies with its specification.
\end{lemma}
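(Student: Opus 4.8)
The plan is to assemble Lemma~\ref{lemma:corr-gaussianqe} from the pieces already proved, tracking the formula $\phi$ through the three phases of \GaussianQE: the introduction of slack (line~\ref{gauss:introduce-slack}), the main \textbf{foreach} loop (lines~\ref{gauss:mainloop}--\ref{gauss:endmainloop}), and the two post-processing loops (lines~\ref{gauss:unslackloop}--\ref{gauss:subst-rem}). Throughout, one reads $\phi$ as a logical formula and argues about equisatisfiability of the guarded existential closures. First I would record the base equivalence justifying line~\ref{gauss:introduce-slack}: with $\vec y$ the slack variables, $\phi$ is satisfied by an assignment to $\vec x,\vec z$ if and only if that assignment extends, by some $\vec y \in \N^{|\vec y|}$, to a satisfying assignment of the slacked system $\phi_0$ together with all divisibility constraints; this is the observation stated immediately before Lemma~\ref{l:no-shrink}.

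Next I would handle the main loop by induction on the number $k$ of nontrivial iterations, using Lemma~\ref{l:amount-of-slack} as the inductive step. Concretely, let $\phi_k$ be the system after $k$ nontrivial iterations and let $\vec y^{(k)}$ be the slack variables not yet assigned by $s$. Lemma~\ref{l:amount-of-slack} gives, for each assignment $\xi$ to the surviving variables and to $\varsofsub(s)$,
\[
  \exists x\, \exists \vec y' \,(\vec y' \geq \vec 0 \land \phi_k)
  \iff
  \bigvee_{\beta} \exists \vec y'' \,(\vec y'' \geq \vec 0 \land \phi_{k+1}^{(\beta)}[\text{slack assigned}]),
\]
the disjunction being over the branches created in lines~\ref{gauss:guess-equation}--\ref{gauss:append-seq}. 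Chaining these over all nontrivial iterations, and noting that trivial iterations (line~\ref{gauss:triviality-check}) leave $\phi$ unchanged and eliminate nothing, yields that $\exists \vec x\,\exists \vec y\,(\vec y \geq \vec 0 \land \phi_0)$ is equivalent, across the whole branching tree, to $\bigvee_\beta \exists \vec y'_\beta\,(\vec y'_\beta \geq \vec 0 \land \phi'_\beta)$, where $\phi'_\beta$ is the per-branch system reaching line~\ref{gauss:unslackloop} and $\vec y'_\beta$ the still-unassigned slack variables. Here I would also invoke Lemma~\ref{l:no-remainder} to confirm that every division in line~\ref{gauss:divide} is exact, so all the $\phi_k$ are well-defined integer systems and the divisibility constraints introduced by line~\ref{gauss:restore} faithfully encode the solvability of the eliminated variable over $\Z$ (as explained in the paragraph on line~\ref{gauss:restore}). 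Combining with the base equivalence for line~\ref{gauss:introduce-slack}, $\exists \vec x\,\phi$ becomes equivalent to $\bigvee_\beta \exists \vec y'_\beta\,(\vec y'_\beta \geq \vec 0 \land \phi'_\beta)$.

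For the post-processing I would first apply Lemma~\ref{l:dropping-slack} branch by branch: it shows that lines~\ref{gauss:unslackloop}--\ref{gauss:drop-slack}, which drop the remaining slack variables $\vec y'$ and keep only the inequality $\eta\sub{0}{y}\le 0$ (resp.\ $\ge 0$) without the companion divisibility, preserve the guarded existential, i.e.\ $\exists \vec y'\,(\vec y' \ge \vec 0 \land \phi') \iff \psi'$. Then line~\ref{gauss:apply-seq} realises the recorded substitutions $s$; since $s$ only substitutes explicit integers for slack variables, applying it is a sound and complete rewriting (and, as noted in the lazy-addition remark, commutes with the preceding manipulations). Finally the loop in lines~\ref{gauss:loop-rem}--\ref{gauss:subst-rem} replaces each still-occurring $x$ in $\vec x$ by a guessed residue $r \in [0, \fmod(\phi)-1]$; correctness of this step is the standard fact that a system of (multivariate) congruences in a variable $x$ with modulus dividing $\fmod(\phi)$ is satisfiable iff it is satisfied by some $x$ in a single residue class mod $\fmod(\phi)$ — here I would note that after the main loop every surviving $x\in\vec x$ occurs only inside divisibility (congruence) constraints, a consequence of the reduced-row-echelon nature of the Bareiss process (Lemma~\ref{l:fundamental}\ref{det-property}), so guessing its residue and substituting eliminates it with no loss of solutions across branches. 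Putting the chain together gives $\bigvee_\beta \psi_\beta \equiv \exists \vec x\,\phi$, where $\psi_\beta$ is the branch's return value, which is exactly the \textbf{Ensuring} clause of the specification; the running-time claim follows from the parameter-tracking already sketched (each branch performs $O(|\vec x| + |\vec y|)$ eliminations, each a polynomial-size matrix operation, with the Bareiss factors keeping entries polynomially bounded).

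The main obstacle I anticipate is not any single lemma but the bookkeeping of \emph{which} existential quantifiers and \emph{which} slack variables are live at each stage, and making the per-branch disjunctions compose cleanly into a global equivalence: Lemma~\ref{l:amount-of-slack} is stated for one iteration and one fixed outer assignment $\xi$, so I must be careful that the quantifier alternation $\exists x\,\exists\vec y'$ telescopes correctly over iterations and that the sets $I, J$ of branch indices aggregate into the full non-deterministic tree without double counting or dropping branches. A secondary subtlety is line~\ref{gauss:apply-seq}: the substitutions in $s$ must be applied only after the second \textbf{foreach} loop (this is the ``laziness''), and I need to confirm that Lemma~\ref{l:dropping-slack}, stated for $\phi'$ and $\psi'$ \emph{before} $s$ is applied, still transfers after the substitution — which it does because substituting fixed integers for already-isolated slack variables is an equivalence-preserving operation that commutes with the inequality/divisibility split of line~\ref{gauss:drop-slack}.
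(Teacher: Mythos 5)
Your proposal is correct and follows essentially the same route as the paper's proof: the base equivalence for slack introduction, iterated application of Lemma~\ref{l:amount-of-slack} across the main \textbf{foreach} loop, Lemma~\ref{l:dropping-slack} for the removal of remaining slack, and the observation that surviving variables of $\vec x$ occur only in divisibility constraints so that guessing residues modulo $\fmod(\phi)$ suffices. The paper packages exactly this as a four-step chain of equivalences over the set of branches $\mathcal B$, and the bookkeeping subtleties you flag (lazy application of $s$, telescoping of the per-iteration equivalences) are resolved there just as you anticipate.
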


\begin{proof}
Consider an input system $\phi(\vec x, \vec z)$, with $\vec x$ being the set of variables to be eliminated. 

We combine the auxiliary results obtained earlier in this section.
The main argument shows that, for a given assignment to free variables~$\vec z$,
if the input system~$\phi$ is satisfiable,
then at least one non-deterministic branch~$\beta$
produces a satisfiable output $\psi_\beta$.
\begin{enumerate}
\item
The guessing in line~\ref{gauss:guess-slack} restricts the choice to a finite set.
This `amount of slack' is shown sufficient in Lemma~\ref{l:amount-of-slack}.
Thus, the guesses performed in the main (first) \textbf{foreach} loop of the algorithm are sufficient.
\item
Lemma~\ref{l:dropping-slack} handles the removal of remove the remaining slack variables,
showing that the second \textbf{foreach} loop correctly recasts the equalities that still contain slack variables into inequalities. 
\item
The algorithm also contains a third \textbf{foreach} loop (lines~\ref{gauss:loop-rem}--\ref{gauss:subst-rem}). When this \textbf{foreach} loop is reached, the variables from $\vec x$ that still appear in the formula $\phi$ do so in divisibility constraints only. Thus, assigning to these variables values in the interval $[0,\fmod(\phi)-1]$, as done in line~\ref{gauss:guess-rem}, suffices.
\end{enumerate}
We now formalise the sketch given above.
As above, we write $\phi_0$ for the system obtained from $\phi$ after executing line~\ref{gauss:introduce-slack}.
We denote by $\mathcal B$ a set of indices for the possible non-deterministic branches of the algorithm. In particular, to each $\beta \in \mathcal B$ corresponds a sequence of guesses done in lines~\ref{gauss:guess-equation}--\ref{gauss:append-seq}.
We write $\phi_\beta(\vec x_\beta,\vec z)$ and $s_\beta$ for the system and for the sequence of substitutions obtained in the non-deterministic branch of $\beta$, when the algorithm completes the main \textbf{foreach} loop.
Here, $\vec x_\beta$ are the variables from $\vec x$ that are removed in the third \textbf{foreach} loop; this means that, in $\phi_\beta$, these variables only occur in divisibility constraints. 
Similarly, we write $\psi_\beta'$ for the system obtained when the non-deterministic computation completes the second \textbf{foreach} loop. 
As in the specification, we denote by $\psi_\beta$ the output of the branch.
Lastly, $\vec y_{\beta} \coloneqq \vec y \setminus \varsofsub(s_\beta)$, 
where $\vec y$ is the set of slack variables introduced in 
line~\ref{gauss:introduce-slack}.
We have 
\begin{align*}
  \exists \vec x\, \phi 
  &\iff \exists \vec x\, \exists \vec y\, ( \vec y \geq \vec 0 \land \phi_0)
  &\text{(introduction of slack variables)}\\
  &\iff \bigvee_{\beta \in \mathcal B} \exists \vec x_\beta \, \exists \vec y_\beta\, (\vec y_\beta \geq \vec 0 \land \phi_\beta)s_\beta
  &\text{(by Lemma~\ref{l:amount-of-slack})}\\
  &\iff \bigvee_{\beta \in \mathcal B} \exists \vec x_\beta \, (\psi_\beta's_\beta)
  &\text{(by Lemma~\ref{l:dropping-slack})}\\
  &\iff \bigvee_{\beta \in \mathcal B} \psi_\beta
  &\text{(lines~\ref{gauss:apply-seq}--\ref{gauss:subst-rem}).}
  & \qedhere
\end{align*}
\end{proof}

\subsection{Analysis of complexity of Algorithm \ref{algo:gaussianqe} (\GaussianQE)}
\label{app:gaussian-elimination:complexity}

The following Lemma~%
\ref{lemma:complexity-gaussianqe}
is not required for the proof of
Theorem~\ref{thm:gaussianQE-in-np},
but rather
for a more fine-grained analysis in the subsequent proof that
\Cref{algo:master} (\Master) runs in non-deterministic polynomial time
(\Cref{prop:master-in-np}).

\begin{lemma}\label{lemma:complexity-gaussianqe}
  Consider a linear system $\phi(\vec x, \vec z)$ having $n \geq 1$ variables.
  Let $\alpha \in \N$ be the largest absolute value of coefficients of variables from $\vec x$
  in equalities and inequalities of $\phi$. 
  Let $\psi$ be the output of~\Cref{algo:gaussianqe} on input $(\vec x, \phi)$. 
  Then, $\onenorm{\psi} \leq \tocheck{(\onenorm{\phi}+2)^{4 \cdot (n+1)^2} \cdot \fmod(\phi)}$, $\card\psi \leq \card\phi$, 
  and $\fmod(\psi)$ divides $c \cdot \fmod(\phi)$ for some positive $c \leq \tocheck{(\alpha+2)^{(n+1)^2}}$.
\end{lemma}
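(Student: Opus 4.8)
The plan is to bound each of the three quantities by following \GaussianQE through its phases --- the introduction of slack (line~\ref{gauss:introduce-slack}), the main elimination loop (lines~\ref{gauss:mainloop}--\ref{gauss:restore}), the slack-removal loop (lines~\ref{gauss:unslackloop}--\ref{gauss:drop-slack}), the application of $s$ (line~\ref{gauss:apply-seq}), and the residue loop (lines~\ref{gauss:loop-rem}--\ref{gauss:subst-rem}) --- while reusing the matrix machinery established above. Concretely, I would invoke \Cref{l:fundamental}: after $k$ nontrivial iterations the entries of the current matrix in positions $(i,j)$ with $i,j>k$ are the bordered minors $b_{ij}^{(k)}$ of the input matrix $B_0$ (part~\ref{det-property}), the lead coefficient of the $k$th nontrivial iteration is the principal minor $\mu_k$ (part~\ref{lead}), and every divisor occurring in $\phi_k$ is described by \Cref{l:divisors}. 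Here $k$ is the total number of nontrivial iterations, so $k\le|\vec x|\le n$; I may assume $k\ge 1$, since for $k=0$ the system $\psi$ differs from $\phi$ only by the introduction and re-removal of slack and by the residue guesses, for which $c=1$ and the norm bound are immediate. Throughout I work under assumptions~\ref{a-rows} and~\ref{a-cols}, which are taken WLOG in the analysis above.

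For $\card\psi\le\card\phi$ I would just inspect the phases: line~\ref{gauss:introduce-slack} replaces inequalities by equalities one for one; each nontrivial iteration of the main loop turns the guessed equality into a trivial one and reinstates it as a single divisibility in line~\ref{gauss:restore}, transforming the remaining constraints one for one; the slack-removal loop replaces each equality still carrying unassigned slack by a single inequality (unambiguously, by \Cref{l:slack-separate}); and applying $s$ and the residue loop only rewrite constraints in place. Since constraints can only collapse to trivial truths along the way, the count never increases. For the divisor bound, \Cref{l:divisors} together with the identity $\lcm(d_1 m,\dots,d_r m)=m\cdot\lcm(d_1,\dots,d_r)$ gives $\fmod(\phi_k)=|\mu_k|\cdot\fmod(\phi)$, and since neither the slack-removal loop nor the application of $s$ nor the residue loop introduces new divisors, $\fmod(\psi)$ divides $c\cdot\fmod(\phi)$ with $c=|\mu_k|$. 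It remains to bound $|\mu_k|$: it is the determinant of the $k\times k$ submatrix of $B_0$ on the first $k$ rows (the rows guessed in line~\ref{gauss:guess-equation}, which are equalities or slacked inequalities) and the first $k$ columns (the eliminated variables of $\vec x$), so every entry is at most $\alpha$ in absolute value (when $k\ge 1$ we have $\alpha\ge 1$, because a variable of $\vec x$ occurring in no equality or inequality of $\phi$ is skipped by the triviality check of line~\ref{gauss:triviality-check}). Hadamard's inequality gives $|\mu_k|\le k^{k/2}\alpha^{k}\le(n+1)^{(n+1)/2}\alpha^{n+1}$, and using $\log(n+1)\le n+1$ together with $\alpha+2\ge 2$ and $n\ge1$ this is at most $(\alpha+2)^{(n+1)^2/2+(n+1)}\le(\alpha+2)^{(n+1)^2}$.

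The $1$-norm bound is where the real work lies. Only the (in)equalities of $\psi$ matter. After the main loop, the coefficient of a surviving $\vec z$-variable, or the constant term, of such a constraint equals a bordered minor $b_{ij}^{(k)}$ of $B_0$ of order at most $n+1$, whose columns lie among the $\vec x$-columns (entries $\le\alpha$) together with one $\vec z$-column or the constant column (entries $\le\onenorm{\phi}$); expanding along the last column and applying Hadamard to the $k\times k$ cofactor bounds it by $(k+1)\onenorm{\phi}\,k^{k/2}\alpha^{k}$, and for a slack column (entries $0/1$, a single surviving term) by $k^{k/2}\alpha^{k}$. The slack-removal loop substitutes $0$ for leftover slack, which cannot increase a norm. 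Applying $s$ adds, to constant terms, at most $k\le n$ contributions $v\cdot(\text{slack coefficient})$, where $v$ was guessed in line~\ref{gauss:guess-slack} of the iteration $k'+1$ at which that slack was picked, so $|v|<|\mu_{k'+1}|\cdot\fmod(\phi_{k'})=|\mu_{k'+1}|\cdot|\mu_{k'}|\cdot\fmod(\phi)\le k^{k}\alpha^{2k}\fmod(\phi)$. Summing the at most $n$ such contributions and the at most $n$ surviving $\vec z$-coefficients, using $\alpha\le\onenorm{\phi}$ and $(n+1)^{t}\le 2^{t(n+1)}\le(\onenorm{\phi}+2)^{t(n+1)}$, a direct computation bounds $\onenorm{\psi}$ by $(\onenorm{\phi}+2)^{4(n+1)^2}\cdot\fmod(\phi)$ (the dominant term has exponent $\tfrac32 n^2+O(n)$, comfortably below $4(n+1)^2$). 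One also uses the easily checked fact that after the main and slack-removal loops the remaining variables of $\vec x$ occur only in divisibility constraints, so the residue loop in lines~\ref{gauss:loop-rem}--\ref{gauss:subst-rem} leaves $\onenorm{\psi}$ untouched.

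The main obstacle is precisely this last estimate: one must keep exact track of how the \emph{deferred} slack substitutions feed into the constant terms only at line~\ref{gauss:apply-seq}, and of the fact that $\fmod$ itself inflates by a $|\mu_{k'}|$ factor during the run (\Cref{l:divisors}), which is what makes the guessing range in line~\ref{gauss:guess-slack}, and hence the final constants, as large as the claimed bound; the interplay of these two effects is the delicate point. By contrast, the linear-algebraic heavy lifting --- identifying every coefficient that appears with a minor of the fixed input matrix $B_0$, and knowing that the divisions in line~\ref{gauss:divide} leave the integers integral (\Cref{l:fundamental}, part~\ref{no-remainder}) --- is already done, so what remains is disciplined bookkeeping plus Hadamard's inequality and the elementary estimate $\log(n+1)\le n+1$.
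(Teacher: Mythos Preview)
Your proposal is correct and follows essentially the same approach as the paper's proof: both arguments track the three phases of \GaussianQE, invoke \Cref{l:fundamental} and \Cref{l:divisors} to identify every surviving coefficient and divisor with a minor of the fixed input matrix $B_0$, and then bound these minors and the accumulated slack contributions. The only cosmetic difference is that you use Hadamard's inequality where the paper uses the cruder Leibniz bound $|\mu_k|\le k!\,\alpha^k$, which gives you a slightly tighter exponent ($\tfrac{3}{2}n^2+O(n)$ versus roughly $3n^2$ in the paper's intermediate step) before both are absorbed into $4(n+1)^2$.
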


\begin{proof}
The inequality $\card\psi \leq \card\phi$ follows directly from the description of the algorithm:
new constraints are only introduced at line~\ref{gauss:restore}, which means that they replace
equalities that are eliminated previously by substitutions at line~\ref{gauss:vig}.
As previously, we use the observation that
$(a x + \tau = 0) \vigsub{\frac{-\tau}{a}}{x} = (- a \tau + a \tau = 0)$,
which simplifies to true.

To prove that $\fmod(\psi)$ divides $c \cdot \fmod(\phi)$ for some positive $c \leq \tocheck{(\alpha+2)^{(n+1)^2}}$, we
apply Lemma~\ref{l:divisors}:
the least common multiple of all moduli in~$\phi$
gets multiplied (during the course of the procedure)
by $\mu_k$, or possibly by a divisor of this integer.
Here $k$ is the total number of nontrivial iterations.

As the next step, we show that $|\mu_k| \le (\alpha + 2)^{(n+1)^2}$.
Indeed, the square submatrix whose determinant is $\mu_k$ is formed by
the first $k$ rows and first $k$ columns of the matrix $B_k$ from Equation~\eqref{eq:systems-and-matrices}.
The columns, in particular, correspond to the variables from~$\vec x$
eliminated by the procedure; so all entries of the submatrix are coefficients
of some $k$~variables from $\vec x$ in equalities and inequalities of~$\phi$.
The absolute value of each entry is at most~$\alpha$.
Therefore,
\begin{equation}
\label{eq:std-det}
|\mu_k| \le k! \cdot \alpha^k \le n! \cdot \alpha^n \le
n^n \alpha^n \le 2^{n^2} \alpha^n \le (\alpha + 2)^{(n+1)^2},
\end{equation}
and so indeed
$\fmod(\psi)$ divides $c \cdot \fmod(\phi)$ for some $c \le (\alpha + 2)^{(n+1)^2}$.

It remains to prove the upper bound on the 1-norm of the output, namely
  $\onenorm{\psi} \leq \tocheck{(\onenorm{\phi}+2)^{4 \cdot (n+1)^2} \cdot \fmod(\phi)}$.
Let $C \eqdef (\onenorm{\phi} + 2)^{(n+1)^2}$.
Take an arbitrary equality $\eta = 0$ present in the output system~$\psi$.
The following quantities contribute to the 1-norm of $\eta$:
\begin{itemize}
\item
The coefficients of all variables in~$\eta$.
Observe that, since the equality~$\eta = 0$ is present in~$\psi$, it was never
picked at line~\ref{gauss:guess-equation} of the procedure.
Thus, for each \emph{non-zero} coefficient of variables in $\eta$
we can apply Lemma~\ref{l:fundamental}, part~\ref{det-property}:
this coefficient is equal to a $(k+1) \times (k+1)$ minor of the matrix~$B_0$.
The rows for the corresponding square submatrix are associated with equalities
and inequalities of the original input system~$\phi$,
and the columns with the~$k$ eliminated variables and the one variable, say~$w$,
that we are considering.
This variable $w$ must belong to~$\vec x$ or~$\vec z$, because all slack variables
are removed by lines~\ref{gauss:unslackloop}--\ref{gauss:drop-slack}.
Thus, $k + 1 \le n$ and all entries of this submatrix have absolute
value at most $\alpha$.
Therefore, by a calculation similar to Equation~\eqref{eq:std-det},
the coefficient at~$w$ is at most
$(\alpha + 2)^{(n+1)^2} \le C$.
\item
The constant term of the equality $\eta' = 0$
that is present in the system just before line~\ref{gauss:apply-seq}
and later rewritten into $\eta = 0$.
To estimate this term, we can again use Lemma~\ref{l:fundamental}\ref{det-property}.
The columns of the square submatrix are now associated with
the~$k$ eliminated variables and with the constant terms.
It follows that all entries of the submatrix have absolute
value at most $\onenorm{\phi}$, so, in analogy with the chained
inequality~\eqref{eq:std-det}, we obtain the upper bound of
$C$ for this term.
\item
Values of slack variables and their coefficients.
Line~\ref{gauss:apply-seq} updates the constant terms in equalities;
let us estimate the effect.
Firstly, the coefficients of slack variables in these equalities
are, again by Lemma~\ref{l:fundamental}\ref{det-property}, minors of the matrix~$B_0$.
The final column of the $(k+1) \times (k+1)$ submatrix
is now the column of coefficients of the slack variable in question.
Clearly, this column must be a $0$--$1$ vector with at most one `$1$',
so the upper bound of
$(\alpha + 2)^{(n+1)^2}$ from Equation~\eqref{eq:std-det} remains valid.
To estimate the values assigned to slack variables, consider
line~\ref{gauss:guess-slack} of the procedure.
By Lemma~\ref{l:divisors}, for each $i \ge 0$, the system~$\phi_i$
obtained after $i$~nontrivial steps of the procedure
satisfies $\fmod(\phi_i) \divides c \cdot \fmod(\phi)$, with $c \le C$.
By Lemma~\ref{l:fundamental}\ref{lead}, the lead coefficients in the iterations
are $\mu_1$, \ldots, $\mu_k$, and they are also at most~$c$.
Therefore, no value assigned to a slack variable can exceed
$c^2 \cdot \fmod(\phi)$, and the contribution of one slack variable with its coefficient
is at most $c^3 \cdot \fmod(\phi) \le C^3 \cdot \fmod(\phi)$.
\end{itemize}
Let us now put these contributions together. Observe that there are
at most $n$ variables in $\vec x$ in $\vec z$ combined; and in fact only~$k \le n$ slack variables
may have been assigned values by~$s$: all other slack variables are removed
by lines~\ref{gauss:unslackloop}--\ref{gauss:drop-slack}.
So
\begin{equation*}
\onenorm{\phi} \le
n \cdot C +
C +
n \cdot C^3 \cdot \fmod(\phi)
\le
(n + 1) \cdot C^3 \cdot \fmod(\phi)
\le
C^4 \cdot \fmod(\phi),
\end{equation*}
because $n + 1 \le 2^{(n + 1)^2} \le (\onenorm{\phi} + 2)^{(n+1)^2} = C$ for all $n \ge 1$.
This completes the proof.
\end{proof}

\subsection{Proof of Theorem~\ref{thm:gaussianQE-in-np}}

The correctness of~\GaussianQE
is provided by Lemma~\ref{lemma:corr-gaussianqe}.
All the integers appearing in the run have polynomial bit size
by
Lemma~\ref{l:fundamental}
and by Lemma~\ref{l:cramer} (parts~\ref{l:cramer:in-between} and~\ref{l:cramer:mid-iteration})
combined with Lemma~\ref{l:no-remainder}.
Therefore, the running time is also polynomial in the bit size of the input.

\section{Proofs from Section~\ref{sec:procedure-details}: correctness of Algorithm~\ref{algo:master} (\Master)}
\label{app:procedure-correctness}

This appendix provides a proof of~\Cref{prop:master-correct}. 
This is done by first showing the correctness of~\Cref{algo:linearize} (\SolvePrimitive), 
then the correctness of~\Cref{algo:elimmaxvar} (\ElimMaxVar), and lastly the correctness of~\Cref{algo:master} (\Master).
The interaction between \Cref{algo:master,algo:elimmaxvar,algo:linearize} has been described 
in~\Cref{sec:procedure,sec:procedure-details}. 
The flowchart in~\Cref{fig: flowchart} is provided to remind the reader of the key steps of these algorithms.


\tikzset{
    invisible/.style={opacity=0},
    emph/.style={color=magenta},
    alert/.style={color=red},
    anotherhl/.style={color=newblue},
    bold/.style={very thick},
    emph on/.style={alt={#1{emph}{}}},
    alert on/.style={alt={#1{alert}{}}},
    anotherhl on/.style={alt={#1{anotherhl}{}}},
    arrow on/.style={alt={#1{->}{}}},
    bold on/.style={alt={#1{bold}{}}},
    visible on/.style={alt={#1{}{invisible}}},
    }

\pgfdeclaredecoration{dashsoliddouble}{initial}{
  \state{initial}[width=\pgfdecoratedinputsegmentlength]{
    \pgfmathsetlengthmacro\lw{.5pt+.5\pgflinewidth}
    \begin{pgfscope}
      \pgfpathmoveto{\pgfpoint{0pt}{\lw}}%
      \pgfpathlineto{\pgfpoint{\pgfdecoratedinputsegmentlength}{\lw}}%
      \pgfmathtruncatemacro\dashnum{%
        round((\pgfdecoratedinputsegmentlength-3pt)/6pt)
      }
      \pgfmathsetmacro\dashscale{%
        \pgfdecoratedinputsegmentlength/(\dashnum*6pt + 3pt)
      }
      \pgfmathsetlengthmacro\dashunit{3pt*\dashscale}
      \pgfsetdash{{\dashunit}{\dashunit}}{0pt}
      \pgfusepath{stroke}
      \pgfsetdash{}{0pt}
      \pgfpathmoveto{\pgfpoint{0pt}{-\lw}}%
      \pgfpathlineto{\pgfpoint{\pgfdecoratedinputsegmentlength}{-\lw}}%
      \pgfusepath{stroke}
    \end{pgfscope}
  }
}

\begin{figure}
    \caption{Flowchart of~\Cref{algo:master,algo:elimmaxvar}.}\label{fig: flowchart}
    \begin{tikzpicture}[
      >=stealth,
      a/.style={align=left, node font=\itshape},
      s/.style={align=left},
      es/.style={align=right, anchor=north east},
      f/.style={draw=gray,fill=white},
      new/.style={draw=newblue,text=newblue},
      fornew/.style={draw=newblue},
      node distance=1.05cm and 0.6cm,
      aster/.style={circle,draw=black,fill=white,inner sep=1pt}
    ]

      {[on background layer]
        \fill[lipicsLightGray] (-5.85,-6.53) rectangle ++(15,-2.41);
        \fill[black!28] (-5.85,-9) rectangle ++(15,-3.71);
        \fill[lipicsLightGray!25] (-5.85,-12.77) rectangle ++(15,-5.53);
      }

      \node [f] (input) {$\Phi(\vec x)$};
      \node [a,right=of input,  xshift=-1.5em] {$\Phi$ linear-exponential system}; 
      \node [f, below= 1.5 of input] (input-after-ordering) {$\theta(\vec x) \land \Phi(\vec x, \vec u)$};
      \node [a, right=of input-after-ordering, xshift=-1.5em, align=left] (input-annot) 
        { $x$ largest and $y$ second largest variables in $\theta$;\\[-2pt]
          variables in $\vec u$ only occur linearly
        };
      \node[draw=black, align=center, diamond, aspect=2.5,fill=white] (les-return) 
        [below = 0.5cm of input-after-ordering] {$\vec x$ empty?};
      
      \draw [->] (input) -- node[right,align=left]{
        $\vec u$ empty vector of auxiliary variables\\[-2pt]
        guess ordering $\theta$ on $\vec x$} (input-after-ordering);
      
      \draw [->] (input-after-ordering) to (les-return);
      \draw (les-return.east) edge[-implies,double equal sign distance] node[a,above]{yes} ++ (2.4cm,0cm);
      \node[a] (les-yes) [right = 2.5cm of les-return] {\textbf{return}  $\,\Phi(\vec 0)$};
  
      \node [a] (lexp-comment) [below= 1 of les-return] {\small{quotient system}};
      \node [f] (lexp) [below=of lexp-comment, yshift=7ex] {$\varphi(\vec x, \xquot, \quot, \xrema, \rema) \land \xrema < 2^y \land \rema < 2^y$};
      \node [f] (delayed) [right=of lexp] {$x = \xquot \cdot 2^y + \xrema$};
      \node [a] (delayed-comment) [above=of delayed, yshift=-7ex] {\small{delayed substitution}};
      \node [f] (emv-order) [right=of delayed, fornew] {$\theta(\vec x)$};
      \node at ($(emv-order.west)!0.5!(delayed.east)$) {$\land$};
      \node at ($(delayed.west)!0.5!(lexp.east)$) {$\land$};
      {[on background layer]
        \node [f] (emv-in) [fit=(emv-order) (lexp) (lexp-comment) (delayed)] {};
      }
  
      \draw (les-return) edge[-implies,double equal sign distance] node [right, s] {$\auxi = \quot \cdot 2^y + \rema$} (lexp|-emv-in.north);
  
      \node (emv-split) [below= 1.2cm of lexp] {$\land$};
      \node [f] (msb) [left=of emv-split, xshift=0.55cm] {$\gamma_1(2^{x-y}, \xquot, \quot)$};
      \node [f] (lsb) [right=of emv-split, xshift=-0.55cm, fornew] {$\psi_1(\vec x \setminus x, \xrema, \rema)$};
      {[on background layer]
        \node [f] (emv-in-two) [fit=(msb) (lsb)] {};
      }
  
      \draw [->] (lexp) -- node[a,right] {split \textup{(see~\large$\ast$)}} (emv-in-two);
  
      \node [f] (gauss1-output) [below= 1.15cm of msb] {$\gamma_2(2^{x-y}, \xquot)$};
      \draw [->] (msb) edge node [right, s] (gauss-qe) {\textsc{GaussQE}: eliminate $\quot$} (gauss1-output);
      \node (monadic-split) [below=of gauss1-output] {$\land$};
      \node [f] (monadic-target) [right=of monadic-split, xshift=-0.55cm] {$\chi(x-y)$};
      \node [f] (monadic-other) [left=of monadic-split, xshift=0.55cm] {$\gamma_3(\xquot)$};
      \draw [->] (gauss1-output) edge node [right, s] (solve-primitive) {\textsc{SolvePrimitive}} (monadic-split);
  
      \node [a,right = 0pt of gauss1-output] {$x-y \geq \xquot$}; 
  
      \node [f] (apply-delayed) [below right = 1cm of monadic-target] {$\chi(\xquot \cdot 2^y + \xrema - y)$};
      \node (apply-delayed-split) [below=of apply-delayed] {$\land$};
      \node [f] (second-split-msb) [left=of apply-delayed-split, xshift=0.55cm] {$\gamma_4(\xquot)$};
      \node [f] (second-split-lsb) [right=of apply-delayed-split, xshift=-0.55cm, fornew] {$\psi_2(y, \xrema)$};
      {[on background layer]
        \node [f] (emv-in-three) [fit=(second-split-msb) (second-split-lsb)] {};
      }
  
      \draw [->, rounded corners] (monadic-target.east) -- ++ (5em,0em) coordinate (turn) -- (turn|-apply-delayed.north);
      \draw [->, rounded corners] (delayed.south) -- ($(delayed.south|-turn)  + (0em,-1.4em)$)
        -- ($(turn)  + (1pt,-1.4em)$);
      \draw [->] (apply-delayed) -- node[a,right] {split \textup{(see~\large$\ast$)}}  ($(apply-delayed-split|-emv-in-three.north)$);
      \node [a, align=center] (subs-text) [below right = -0.3cm and 2pt of turn] {\small{apply}\\[-2pt]\small{substitution}};
  
      \node [draw=black,circle,fill=white] (pregauss2) [below left= 0.5cm and 0.8cm of second-split-msb] {$\land$};
      \node[draw=black, align=center, diamond, aspect=2.5,fill=white] (gauss2) 
        [below = 1cm of pregauss2] {$\gamma_3 \land \gamma_4$ satisfiable?};
  
      \draw [->, rounded corners] (second-split-msb.south) -- (second-split-msb.south|-pregauss2.east) -- (pregauss2.east);
      \draw [->, rounded corners] (monadic-other.south) -- (monadic-other.south|-pregauss2.west) -- (pregauss2.west);
      \draw [->] (pregauss2) edge node [right, s] (gauss-qe2) {\textsc{GaussQE}: eliminate $\xquot$} (gauss2);
  
      \draw (gauss2.east) edge[-implies,double equal sign distance] node[a,above]{yes} ++ (2.4cm,0cm);
  
      \node[a] (gauss-no) [below = 1cm of gauss2] {\textbf{return} $\bot$};
      \draw (gauss2.south) edge[-implies,double equal sign distance] node[a,right]{no} (gauss-no);
  
  
      \node [draw=newblue, circle, fill=white] (blueand) [right = 3.23cm of gauss2] {\color{newblue}$\land$};
  
      \node [new, align=left, fill=white] (next) [below = 0.7cm of blueand]
            {
        New linear-exponential system:\\
        exponential variables $\vec x \setminus x$\,;\\
        auxiliary variables $\rema, \xrema$\,;\\
        ordering equivalent to $\exists x . \theta(\vec x)$
            };
  
      \draw [->, rounded corners, new] (lsb.east) -- (blueand.north|-lsb.east) -- (blueand.north);
      \draw [->, rounded corners, new] (emv-order.south) -- (emv-order.south|-blueand.east) -- (blueand.east);
      \draw [->, rounded corners, new] (
        second-split-lsb.east) -- ++ (0.4cm,0em) coordinate (turn2) -- (turn2|-blueand.west) -- (blueand.west);
      \draw [->, rounded corners, new] (blueand.south) -- node[right, align=center] {exclude $x$ from $\theta$} (next);  
  
      \draw [->, rounded corners, new] (next.east) -- ++ (4em,0cm) coordinate (turn3) -- (turn3|-input-after-ordering) coordinate (turn4) -- ($(turn4) + (-2.5em,0cm)$);
  
      \node [es] (linexpsat-text) [left = 2.5cm of input] {\textsc{LinExpSat}};
      \node [es] (elimmaxvar-text) [below = 5.1cm of linexpsat-text] {\textsc{ElimMaxVar}};
  
      \node [es] (stepone) [below = 0.9cm of elimmaxvar-text] {Step (i)};
      \node [es] (steptwo) [below = 1.9cm of stepone] {Step (ii)};
      \node [es] (stepthree) [below = 3.8cm of steptwo] {Step (iii)};
  
      \node (stepone-start) [above right = 0.2cm and 0.2cm of stepone] {};
      \draw[|-|] (stepone-start.south) -- ++ (0cm,-2.4cm) coordinate (steptwo-start);
      \draw[|-|] ($(steptwo-start.south)+(0pt,-2pt)$) -- ++ (0cm,-3.7cm) coordinate (stepthree-start);
      \draw[|-|] ($(stepthree-start.south)+(0pt,-2pt)$) -- ++ (0cm,-5.5cm);

      \draw [decoration={dashsoliddouble}, decorate]
                            (emv-in.west)++(0ex,3.5ex) -- ($(emv-in.west) + (-8.5em,3.5ex)$);
      \draw [decoration={dashsoliddouble}, decorate]
                            ($(emv-in.east) + (4.7em,3.5ex)$) -- 
                            ($(emv-in.east) + (0ex,3.5ex)$);
      
  
    \end{tikzpicture}
    
    \vspace{1em}
    \footnotesize
    \begin{itemize}
    \item[\raisebox{-1pt}{\large$\ast$\,}:]
        In a nutshell, assuming $\xrema < 2^y$, $\vec r < 2^y$ and $z < 2^y$ for all $z \in \vec x \setminus x$,
        we can rewrite an inequality 
        \mbox{${a \cdot 2^x + f(\xquot,\vec q) \cdot 2^y + \rho(\vec x \setminus x, \xrema, \vec r) \leq 0}$ 
        into  ${\bigvee_{r = -\norm{\rho}_1}^{\norm{\rho}_1} (\underbrace{a \cdot 2^{x-y} + f(\xquot,\vec q) + r \leq 0}_{\textit{part of $\gamma_1 / \gamma_4$}} 
        \land \color{newblue}\underbrace{\color{black}{(r-1) \cdot 2^y < \rho \leq r \cdot 2^y)}}_{\textit{part of $\psi_1/\psi_2$}}}$.}
        For equalities, strict inequalities and divisibility constraints, 
        see~\Cref{lemma:split:congruences,lemma:split:inequalities}.
    \end{itemize}
\end{figure}


\subparagraph*{Non-deterministic branches.} In this appendix, for the convenience of exposition it is sometimes useful to fix a representation for the non-deterministic branches of our procedures. As done throughout the body of the paper, we write $\beta$ for a non-deterministic branch. 
It is represented as its \emph{execution trace}, that is a list of entries (tuples) following the control flow of the program. Each entry contains the name of the algorithm being executed, the line number that is currently being executed, and, for lines featuring the non-deterministic \textbf{guess} statement, the performed guess. 
As an example, to line~\ref{algo:master:guess-order} of~\Master 
correspond entries of the form $(\Master,\text{``line 2''},\theta)$, 
where $\theta$ is the ordering that has been guessed.

We write $\beta = \beta_1\beta_2$ whenever $\beta$ can be decomposed on a prefix $\beta_1$ and suffix $\beta_2$.

\subparagraph*{Global outputs.}
Our proofs often consider the \textbf{global outputs}
of~\Cref{algo:elimmaxvar,algo:linearize}. This is defined as the set containing
all outputs of all branches~$\beta$ of a procedure. For~\Cref{algo:elimmaxvar},
the \textbf{global output} is a set of linear-exponential systems.
For~\Cref{algo:linearize}, this is a set of pairs of linear-exponential systems. 
Observe that the \textbf{ensuring} part of the specification of these algorithms provides properties of 
their \textbf{global outputs}. For instance, in the case of~\Cref{algo:elimmaxvar}, 
the \textbf{ensuring} part specifies some properties of the disjunction $\bigvee_{\beta} \psi_\beta$, 
which ranges over all branches of the procedure, 
and the \textbf{global output} is just the set of all disjuncts~$\psi_\beta$ appearing in that formula.

In~\Cref{app:procedure-correctness-elimmaxvar} we also define \textbf{global outputs} for each of the three steps of~\Cref{algo:elimmaxvar}.
For Steps~(\ref{item:elimmaxvar-i}) and~(\ref{item:elimmaxvar-ii}), 
they will be sets of pairs of specific linear-exponential systems (similarly to~\Cref{algo:linearize}). 
Step~(\ref{item:elimmaxvar-iii}) will have a \textbf{global output} of the same type as~\Cref{algo:elimmaxvar}.  

\subsection{Correctness of Algorithm~\ref{algo:linearize} (\SolvePrimitive)}  

\begin{lemma}\label{lemma:corr-linearize}
    \Cref{algo:linearize} (\SolvePrimitive) complies with its specification. 
\end{lemma}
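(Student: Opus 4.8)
The plan is to verify that \SolvePrimitive meets its two-part specification: (a)~the structural guarantee on the shape of $\chi_\beta(u)$, and (b)~the correctness guarantee that, under the hypothesis $u \geq v \geq 0$, the disjunction $\bigvee_\beta (\chi_\beta \land \gamma_\beta)$ is equivalent to~$\phi$. Part~(a) is read off directly from the pseudocode: in the branch where $c \neq \star$, line~\ref{line:linearize-gamma-equality} sets $\chi \coloneqq (u = c)$, matching the form $(u = a)$; in the branch $c = \star$, line~\ref{line:linearize-chi-case-2} sets $\chi \coloneqq (u \geq C) \land (d' \mid u - r')$, and since $C \geq 3$ by line~\ref{line:linearize-max-constant} (the maximum is taken over a set containing the literal~$3$, or over~$n$ and terms $\geq 3$), this matches $(u \geq b) \land (d \mid u - r)$ with $b \geq 3$. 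The non-negativity of $a,d,r,r',C$ follows because the guesses in lines~\ref{line:linearize-guess-c}, \ref{line:linearize-guess-div} range over non-negative intervals and $r'$, $d'$ are a discrete logarithm and a multiplicative order.

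For part~(b), I would split on the value of $c$ guessed in line~\ref{line:linearize-guess-const}. First, a case disjointness/coverage observation: the guesses $c \in [0, C-1] \cup \{\star\}$ partition all possible values of $u \in \N$ into the cases $u = 0, 1, \dots, C-1$ and $u \geq C$, so it suffices to show that in each branch $\chi_\beta$ pins down exactly the intended set of $u$-values and $\gamma_\beta$ correctly captures $\phi$ on that set. In the branch $c \neq \star$: here $\chi = (u = c)$ and $\gamma = \phi[2^c/2^u]$. Under $u = c$ the substitution $2^u \mapsto 2^c$ is sound (every constraint $a \cdot 2^u + b \cdot v + e \sim 0$ becomes $a \cdot 2^c + b \cdot v + e \sim 0$), so $\chi \land \gamma$ is equivalent to $u = c \land \phi$; the disjunction over $c \in [0,C-1]$ of these is equivalent to $\phi \land u < C$. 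In the branch $c = \star$, i.e. $u \geq C$, one uses the argument already sketched in~\Cref{sec:procedure-details} (the inequality $\abs{a}\cdot 2^u > (\abs{b}+\abs{c})\cdot u$ for $u \geq C$, together with $u \geq v$): every equality and inequality of $\chi$ simplifies to $\top$ (negative lead coefficient) or $\bot$ (positive lead coefficient, or an equality). The \textbf{assert} in line~\ref{line:assert-not-bottom} filters out the $\bot$ case, so in the surviving branch all (in)equalities of $\chi$ are vacuously true under $u \geq C$. For the divisibility constraints, one shows that $\fmod(\phi) \divides 2^u - 2^n \cdot r$ is equivalent to $d \divides 2^{u-n} - r$ (valid since $u \geq C \geq n$), which by coprimality of $2^n$ and $d$ is equivalent to $d \divides 2^u - 2^n r$; the set of such $u$ is an arithmetic progression $\{r' + \lambda d' : \lambda \in \Z\}$ with $r'$ a discrete log (line~\ref{line:linearize-discrete-log}, existence checked by the \textbf{assert} in line~\ref{line:linearize-assert}) and $d'$ the multiplicative order of $2$ modulo $d$ (line~\ref{line:linearize-mult-ord}), which exists unconditionally since $\gcd(2,d)=1$. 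Intersecting over all possible remainders $r$, the divisibility constraints of $\phi$ that mention $2^u$ are together equivalent to $\bigvee_r \big( d' \divides u - r'_r \land \psi[2^n r / 2^u]\big)$; combining with $u \geq C$ and the already-simplified (in)equalities gives $\chi \land \gamma$ as in lines~\ref{line:linearize-chi-case-2}--\ref{line:linearize-chi-div}, where $\gamma = \psi[2^n r / 2^u]$ and $\psi$ is $\phi$ with all $2^u$-bearing (in)equalities deleted (line~\ref{line:linearize-decompose}). Here one must check that the substitution $2^u \mapsto 2^n r$ is sound for the divisibility constraints of $\psi$: indeed $2^n r$ is a valid representative of $2^u$ modulo $\fmod(\psi) = d \cdot 2^n$ precisely because $d \divides 2^u - 2^n r$ and trivially $2^n \divides 2^u - 2^n r$ (as $u \geq n$).

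The main obstacle I anticipate is the careful bookkeeping of moduli and coprimality in the $c = \star$ branch: one needs $\fmod(\phi) = d \cdot 2^n$ with $d$ odd (line~\ref{line:linearize-mod-factor}), the equivalence chain relating $\fmod(\phi) \divides 2^u - 2^n r$, $d \divides 2^{u-n} - r$, and $d \divides 2^u - 2^n r$, and the reduction of an arbitrary congruence $d \divides 2^u - 2^n r$ to a linear congruence $d' \divides u - r'$ on the exponent via the cyclic structure of $\langle 2 \rangle$ in $(\Z/d\Z)^\times$ — and all of this must be threaded correctly through the memoised split over the guessed residue $r$ so that no solutions are lost or spuriously added. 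A secondary subtlety is making precise the claim "(in)equalities simplify to $\top$ or $\bot$ when $u \geq C$", which reuses the monotonicity estimate from~\Cref{sec:procedure-details} but must now be applied uniformly to every constraint of $\chi$ simultaneously and under the standing hypothesis $u \geq v$; once that is in hand, the \textbf{assert} statements are exactly what is needed to discard the unsatisfiable branches, and the remaining equivalences are routine.
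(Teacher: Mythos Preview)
Your proposal is correct and follows essentially the same route as the paper's own proof: a case split on the guess $c \in [0,C-1] \cup \{\star\}$, direct substitution in the small-$c$ branches, and in the $c=\star$ branch the monotonicity estimate to collapse all (in)equalities of $\chi$ to $\top$/$\bot$ followed by the reduction of $d \divides 2^u - 2^n r$ to a linear congruence on $u$ via the multiplicative order of $2$ modulo $d$ (the paper isolates this last step as a separate claim, which you describe inline as the arithmetic-progression structure). One small imprecision: you write $\fmod(\psi) = d \cdot 2^n$, but in fact $\fmod(\psi)$ only \emph{divides} $d \cdot 2^n = \fmod(\phi)$; this does not affect soundness of the substitution $2^u \mapsto 2^n r$, since $2^n r \equiv 2^u \pmod{\fmod(\phi)}$ implies the same modulo any divisor.
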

\begin{proof}
    Recall that $(u,v)$-primitive systems are composed of equalities, (non)strict inequalities, and divisibilities of the form $(a\cdot 2^u+b\cdot v+ c\sim0)$,
    where $u,v$ are non-negative integer variables, the predicate symbol $\sim$ is from the set $\bigl\{=,<,\leq,\equiv_d : d \geq 1\bigr\}$, 
    and $a,b,c$ are integers.

    Consider an input $(u,v)$-primitive system $\varphi$.
    \Cref{algo:linearize} first splits~$\phi$ into a conjunction of two subsystems $\chi \land \gamma$ (line~\ref{line:linearize-decompose}) such that
    \begin{itemize}
      \item the system $\chi$ comprises 
      all (in)equalities $(a\cdot 2^u+b\cdot v+ c\sim0)$ with $a\neq0$;
      \item the system $\gamma$ is composed of all other constraints of $\varphi$.
    \end{itemize}
    In the sequel, most of our attention is given to the formula~$\chi$. 
    
    As explained in the body of~the paper, the integer constant~$C$ defined in line~\ref{line:linearize-max-constant} plays 
    a key role in the correctness proof. 
    The following equivalence is immediate: 
    \begin{equation}
      \label{eq:proof:linerise:one} 
      \phi \iff \Big(\bigvee_{c = 0}^{C-1} (u = c) \land \phi\sub{2^c}{2^u}\Big) \lor \Big(u \geq C \land \phi\Big).
    \end{equation}
    In the subformula $\bigvee_{c = 0}^{C-1} (u = c) \land \phi\sub{2^c}{2^u}$ 
    the variable $u$ has already been linearised. 
    This is reflected in the procedure in lines~\ref{line:linearize-guess-const}--\ref{line:linearize-chi-equality}: if the procedure guesses $c \in [0,C-1]$ in line~\ref{line:linearize-guess-const}, then it 
    returns a pair~$\chi(u) \coloneqq (u = c)$ 
    and~$\gamma(v) \coloneqq \phi\sub{2^c}{2^u}$, 
    which corresponds to one of the disjuncts of~$\bigvee_{c = 0}^{C-1} (u = c) \land \phi\sub{2^c}{2^u}$. 
    The case when $u \geq C$ (equivalently, when the algorithm  guesses~$c =
    \star$ in line~\ref{line:linearize-guess-const}) is more involved and dealt
    with in the remaining part of the proof. Most of the work is concerned with
    manipulations of the formula~$(u \geq C \land \phi)$. 

    Recall that we are working under the assumption that $u \geq v \geq 0$ 
    (as stated in the signature of the procedure), which is essential for handling this case. 
    Let us start by considering the (in)equalities in $\chi$.
    Take one such (in)equality $a \cdot 2^u + b \cdot v + c \sim 0$, with ${\sim} \in \{<,\leq,=\}$ and $a \neq 0$. 
    Define $C'\coloneqq 2 \cdot\bigl\lceil\log(\frac{\abs{b}+\abs{c}+1}{\abs{a}})\bigr\rceil+3$. 
    For this constant, we can prove that~$\abs{a} \cdot 2^{C'}>\big(\abs{b}+\abs{c}\big)\cdot C'$.  
    \begin{align*}
        |a|\cdot 2^{C'} = |a|\cdot 2^3\cdot\left(\frac{|b|+|c|+1}{|a|}\right)^2&>\,\big(|b|+|c|\big)\cdot2^{\log(\frac{|b|+|c|+1}{|a|})+3}
        \\
        &>\,\big(|b|+|c|\big)\cdot2^{\bigl\lceil\log(\frac{|b|+|c|+1}{|a|})\bigr\rceil+2}
        \\
        &>\,\big(|b|+|c|\big)\cdot\Big(2\cdot\Big(\Bigl\lceil\log\big(\frac{|b|+|c|+1}{|a|}\big)\Bigr\rceil+1\Big)+1\Big)
        \\
        &=\,\big(|b|+|c|\big)\cdot\Big(2\cdot\Bigl\lceil\log\big(\frac{|b|+|c|+1}{|a|}\big)\Bigr\rceil+3\Big)\\
        &=\,\big(|b|+|c|\big)\cdot C',
    \end{align*}
    where for the second to last derivation we have used the fact that $2^{x+1}>2x+1$ for every $x\in\N$. 
    Note that both $\abs{a} \cdot 2^x$ and $(\abs{b} + \abs{c}) \cdot x$ are monotonous functions. 
    Then, since $2^x$ grows much faster than $x$ and (by definition) $C\geq C'$, we conclude that 
    $|a|\cdot 2^{x} > \big(|b|+|c|\big)\cdot x$ for every $x \geq C$.
    From this fact, we observe that 
    given two integers $u,v$ such that $u\geq v\geq 0$ and $u\geq C \geq 1$,
    the absolute value of $a\cdot2^u$ is greater than the absolute value of $b\cdot v+c$:
    \begin{equation*}
        |a|\cdot2^u> (|b|+|c|)\cdot u\geq|b|\cdot v+|c| \geq |b\cdot v+c|.
    \end{equation*} 
    This means that the inequality $a \cdot 2^u + b \cdot v + c \sim 0$ from $\chi$ simplifies:
    \begin{romanenumerate}
        \item if $a$ is negative and $\sim$ is either $<$ or $\leq$, then the inequality simplifies to $\top$;
        \item if $a$ is positive or $\sim$ is $=$, then the inequality simplifies to $\bot$.
        \label{item:linearise-bottom}
    \end{romanenumerate}
    Let $\chi'$ be the formula obtained from $\chi$ by applying the two rules above to each inequality. Note that this formula simplifies to either $\top$ or $\bot$. We have: 
    \begin{equation}
      \label{eq:proof:linearise:two}
      (u \geq v \geq 0) \implies \Big((u \geq C \land \phi) \iff (u \geq C \land \chi' \land \gamma)\Big).
    \end{equation}
    The restriction $u \geq v \geq 0$ implies that if $\chi'$ is false, then so is $u \geq C \land \phi$. 
    Thus, in terms of the procedure, the guess~$c = \star$ in line~\ref{line:linearize-guess-const} should lead to 
    an unsuccessful execution. This case is covered by line~\ref{line:assert-not-bottom}, where the procedure checks that 
    the condition~\eqref{item:linearise-bottom} above is never satisfied.

    In the formula $u \geq C \land \chi' \land \gamma$, the exponential term $2^u$ does not occur in (in)equalities, but it might still occur in the divisibility constraints of $\gamma$. The next step is thus to ``linearise'' these constraints. 
    Back to the definition of $C$ in
    line~\ref{line:linearize-max-constant},
    note that $C \geq n$ where $n \in \N$ is the maximum natural number such that $2^n$ divides $\fmod(\gamma)$ (note that $\fmod(\gamma) = \fmod(\phi)$). 
    Let $d$ be the positive odd number $\frac{\fmod(\gamma)}{2^n}$, as defined in line~\ref{line:linearize-mod-factor}. 
    Assuming $u \geq C$, the fact that $C \geq n$ implies that the remainder of $2^u$ modulo $\fmod(\phi)$ is of the form $2^n \cdot r$ for some $r \in [0,d-1]$ 
    (in particular, note that $r = 0$ when $d = 1$). 
    The following tautology holds:
    \begin{equation} \label{eq:linearize-divisibility}
      u \geq C \implies \Big(\gamma \iff \bigvee_{r=0}^{d-1}\big(( \fmod(\phi) \divides 2^{u}- 2^n \cdot r)\land\gamma\sub{2^n \cdot r}{2^u}\big)\Big).
    \end{equation}
    We have thus eliminated all occurrences of $u$ from $\gamma$, and we are left with the divisibility constraints $\fmod(\phi) \divides 2^{u}- 2^n \cdot r$. Note that these constraints can be reformulated as follows: 
    \begin{align*}  
      \fmod(\phi) \divides 2^{u}- 2^n \cdot r 
      &\iff{} d \divides 2^{u-n} - r\\
      &\iff{} d \divides 2^u - 2^n \cdot r 
      &\text{since $d$ and $2^n$ are coprime}
    \end{align*}
    We now want to ``linearise'' $d \divides 2^u - 2^n \cdot r$. There are two cases. 

    The first case corresponds to $(d\mid 2^{u}-2^n \cdot r)$
    not having a solution. That is, the discrete logarithm
    of $2^n \cdot r$ modulo $d$ does not exist. This
    happens, for instance, if $r = 0$ and $d \geq 3$. In
    this case, the corresponding disjunct
    in~\Cref{eq:linearize-divisibility} simplifies
    to~$\bot$, and accordingly the execution of the procedure aborts, 
    following the \textbf{assert} command in~line~\ref{line:linearize-assert}. 

    In the other case, 
    $(d\mid 2^{u}-2^n \cdot r)$ has a solution. 
    We remark (even though this will also be discussed later in the complexity analysis of the procedure) 
    that computing a solution for this discrete logarithm problem can be done in non-deterministic polynomial 
    time in the bit size of~$d$ and $2^n \cdot r$ (see~\cite{Joux14}). 
    In this case, we define $\ell_r$ to be the discrete logarithm of $2^n \cdot r$ modulo~$d$ ($\ell_r$ corresponds to $r'$ 
    in~line~\ref{line:linearize-discrete-log}).
    Let $R \subseteq [0,d-1]$ be the set of all~$r \in [0,d-1]$ such that $\ell_r$ is defined. 
    Furthermore, let $d'$ be the multiplicative order of $2$ modulo $d$ (which exists because $d$ is odd), that is, 
    $d'$ is the smallest integer from the interval $[1,d-1]$ such that $d \divides 2^{d'} - 1$ (as defined in line~\ref{line:linearize-mult-ord}). 
    Let us for the moment assume the following result:
    \begin{claim}
      \label{claim:arithmetic-progression}
      For every $r \in R$ and~$u \in \Z$,
      $(d\mid 2^{u}-2^n \cdot r)$ if and only if $(d' \mid u - \ell_r)$.
    \end{claim}
    Thanks to this claim, form~\Cref{eq:linearize-divisibility}, we obtain 
    \begin{equation}
      \label{eq:proof:linearise:three}
      u \geq C \implies \Big(\gamma \iff \bigvee_{r \in R} \big(( d' \divides u - \ell_r)\land\gamma\sub{2^n \cdot r}{2^u}\big)\Big).
    \end{equation}
    For the non-deterministic procedure, in the case when $\ell_r$ defined, the procedure returns 
    $\chi(u) \coloneqq u \geq C \land (d' \divides u - \ell_r)$ 
    and $\gamma\sub{2^n \cdot r}{2^u}$, as shown in lines~\ref{line:linearize-chi-case-2} and~\ref{line:linearize-chi-div}. 
    Across all non-deterministic branches, the \textbf{ensuring} part of the specification of~\Cref{algo:linearize} follows 
    by~\Cref{eq:proof:linerise:one,eq:proof:linearise:two,eq:proof:linearise:three}:
    \begin{align}
      &(u \geq v \geq 0) 
      \implies
      \\
      &\Big( 
        \phi \iff
        \bigvee_{c = 0}^{C-1} (u = c) \land \phi\sub{2^c}{2^u} \lor 
        \bigvee_{r \in R}
        (\chi' \land (u \geq C \land d' \divides u - \ell_r) \land \gamma\sub{2^n \cdot r}{2^u})
      \Big).\label{eq:proof:linearise:global-output}
    \end{align}
    Notice that the right-hand side of the double implication~\eqref{eq:proof:linearise:global-output} corresponds to the set of pairs
    \[ 
      \{(u=c, \phi\sub{2^c}{2^u}) : c \in [0,C-1]\} 
      \cup 
      \{(u \geq C \land d' \divides u - \ell_r, \gamma\sub{2^n \cdot r}{2^u})
      \},
    \]
    which
    is exactly the \textbf{global output} of~\Cref{algo:linearize}. 
    This completes the proof~\Cref{lemma:corr-linearize},
    subject to the proof of~\Cref{claim:arithmetic-progression} 
    which is given below.
\end{proof}

\begin{proof}[Proof of~\Cref{claim:arithmetic-progression}]
  Let $r \in R$ and $u \in \Z$.
  First, recall that by definition of $\ell_r$ and $d'$ we have $d \mid 2^{\ell_r} - 2^n \cdot r$ and $d \mid 2^{d'} - 1$. 
  This directly implies that, for every $\lambda \in \Z$, the divisibility $d \mid 2^{\lambda \cdot d' + \ell_r} - 2^n \cdot r$ holds. 
  Then, the right-to-left direction is straightforward, as $d' \divides u - \ell_r$ is indeed equivalent to 
  the statement ``there is $\lambda \in \Z$ such that $u = \lambda \cdot d' + \ell_r$''. 

  For the left-to-right direction, we need to prove that if $d \mid 2^{u} - 2^n \cdot r$ 
  then $u$ is of the form $u = \lambda \cdot d' + \ell_r$, for some $\lambda \in \Z$. 
  Consider $t_1,t_2 \in [0,d'-1]$ and $\lambda_1,\lambda_2 \in \Z$ 
  such that 
  $u = \lambda_1 \cdot d' + t_1$ and $\ell_r = \lambda_2 \cdot d' + t_2$. 
  We show that $t_1 - t_2$ is a multiple of $d'$, which in turn implies $u \equiv_{d'} \ell_r$ and therefore that $u$ is of the required form $u = \lambda \cdot d' + \ell_r$, for some $\lambda \in \Z$. 

  Suppose $t_1 \geq t_2$ (the case of $t_1 < t_2$ is analogous).
  From $d \divides 2^u - 2^n \cdot r$ and 
  $d \divides 2^{\ell_r} - 2^n \cdot r$ we derive $d \divides 2^u - 2^{\ell_r}$. Then, 
  \begin{align*}  
    d \divides 2^u - 2^{\ell_r}
    &\iff d \divides 2^{\lambda_1 \cdot d' + t_1} - 2^{\lambda_2 \cdot d' + t_2}\\
    &\iff
    d \divides 2^{t_1} - 2^{t_2} 
    &\text{by definition~of~$d'$}\\
    &\iff 
    d \divides 2^{t_2}(2^{t_1-t_2} - 1) 
    &\text{since $t_1 \geq t_2$}\\
    &\iff d \divides 2^{t_1-t_2} - 1 
    &\text{since $2$ and $d$ are coprime.}
  \end{align*}
  Hence, since $d'$ is the multiplicative order of $2$ modulo $d$, $t_1-t_2$ is a multiple of $d'$.  
\end{proof}

\subsection{Correctness of Algorithm~\ref{algo:elimmaxvar} (\ElimMaxVar)}\label{app:procedure-correctness-elimmaxvar}
In this section of~\Cref{app:procedure-correctness}, we prove that~\Cref{algo:elimmaxvar} (\ElimMaxVar)
complies with its specification (see~\Cref{lemma:corr-elimmaxvar} at the end of the section; page~\pageref{lemma:corr-elimmaxvar}). 

  Before moving to the correctness proof, we would like to remind the reader that
  \Cref{algo:elimmaxvar} (\ElimMaxVar) is a \emph{non-deterministic
  quantifier-elimination procedure}. 
  Non-determinism is used to guess integers (see lines~\ref{line:elimmaxvar-guess-rho}, 
  \ref{line:elimmaxvar-guess-mod}, \ref{line:elimmaxvar-guess-c}, and \ref{line:elimmaxvar-guess-times}) 
  as well as in the calls to~\Cref{algo:gaussianqe,algo:linearize}. 
  The result of computations in each non-deterministic branch $\beta$ is a linear-exponential
  system $\psi_\beta$. For the set $B$ of all such non-deterministic
  branches, the \textbf{global output} of
  \Cref{algo:elimmaxvar} is the set $\{\psi_\beta\}_{\beta\in B}$. 
  According to the specification given to \Cref{algo:elimmaxvar},
  for the disjunction $\bigvee_{\beta\in B}\psi_\beta$ (we will prove that) the
  following equivalence holds: 
      \begin{align}
          \exists x\exists\vec x'\big(\theta(\vec x)\land\varphi(\vec x, \vec x', \vec z')\land(x=\xquot\cdot2^y+\xrema)\big)&\label{eq:elimmaxvar-input}
          \\
          \iff
          \big(\exists x\,\theta(\vec x)\big)&\land\bigvee_{\beta\in B}\psi_\beta(\vec x\setminus x,\vec z').\label{eq:elimmaxvar-output}        
      \end{align}
      This equivalence demonstrates that the variables $x,\vec x'$ have been
      eliminated from the input formula $\varphi$. Since the ordering $\theta$ is
      a simple formula that plays only a structural role in the elimination
      process, we do not modify it explicitly (this is done in~\Cref{algo:master}
      directly after the call to~\Cref{algo:elimmaxvar}). 
      Because $2^x$ is the leading exponential term of the ordering $\theta$, it
      is sufficient to exclude the inequality $(2^y\leq 2^x)$ from this formula. 
      This leads to the full elimination of the variables $x$ and $\vec x'$. 
      Of course, while~\Cref{algo:elimmaxvar} can be regarded as a standard
      quantifier-elimination procedure, it only works for formulae of a very
      specific language (namely, the existential formulae of the
      form~(\ref{eq:elimmaxvar-input})).
 

    \subsubsection*{Three steps: their inputs and outputs}

    Following the discussion in the body of the paper, \Cref{algo:elimmaxvar} can be split into three steps, each achieving a specific goal:
    \begin{romanenumerate}
        \item lines~\ref{line:elimmaxvar-u}--\ref{line:elimmaxvar-gauss-1}: elimination of the quotient variables $\quot\coloneqq\vec x'\setminus \xquot $.
        \label{item:elimmaxvar-i} 
        \item lines~\ref{line:elimmaxvar-2-u}--\ref{line:elimmaxvar-delayed-subs}: linearisation of the variable $x$ and then its elimination by application of the delayed substitution $[\xquot \cdot 2^y + z'/x]$.
        \label{item:elimmaxvar-ii}
        \item lines~\ref{line:elimmaxvar-x1-start}--\ref{line:elimmaxvar-gauss-2}: elimination of the quotient variable $\xquot$.
        \label{item:elimmaxvar-iii}
    \end{romanenumerate} 
    Steps~(\ref{item:elimmaxvar-i}) and~(\ref{item:elimmaxvar-iii}) are quite similar. 
    The intermediate Step~(\ref{item:elimmaxvar-ii}) is essentially made of a single call to~\Cref{algo:linearize}. 

    Each step can be considered as an independent non-deterministic subroutine with its input and branch/global outputs. 
    Specification~\ref{spec:step1}, Specification~\ref{spec:step2}
    and Specification~\ref{spec:step3} 
    formally define these \textbf{inputs} and \textbf{global outputs}, for Step~(\ref{item:elimmaxvar-i}), 
    Step~(\ref{item:elimmaxvar-ii}) and Step~(\ref{item:elimmaxvar-iii}),
    respectively. With these specifications at hand, we organise the proof of correctness of~\Cref{algo:elimmaxvar} as follows. 
    We start by showing that \Cref{algo:elimmaxvar} is correct as soon as one assumes that the three steps comply with their specification.
    Afterwards, we prove (in independent subsections) that each step does indeed follow its specification. 
    
    \begin{figure}
    \begin{specification}[H]
        \caption{Step~(\ref{item:elimmaxvar-i}): Lines~\ref{line:elimmaxvar-u}--\ref{line:elimmaxvar-gauss-1} of~\Cref{algo:elimmaxvar}}
        \label{spec:step1}
        \begin{algorithmic}[1]
          \Require
            {\setlength{\tabcolsep}{0pt}
            \begin{tabular}[t]{rl}
                $\theta(\vec x):{}$ &\ ordering of exponentiated variables;\\
                $\varphi(\vec x,\vec x', \vec z'):{}$ &\ quotient system induced by~$\theta$, with $\vec x$ exponentiated,\\[-2pt] 
                &\ $\vec x'$ quotient, and $\vec z'$ remainder variables;\\
                $[\xquot \cdot 2^y + \xrema/x]:{}$ &\ delayed substitution for $\varphi$. (Recall: $\vec q \coloneqq \vec x' \setminus x'$.)  
            \end{tabular}}
          \GlobalOutput
          $\{(\gamma_{i}(u,\xquot),\psi_i(\vec x\setminus x,\vec z'))\}_{i\in I}:{}$ set of pairs such that for every $i \in I$, 
          \begin{itemize}
            \item $\gamma_i$ is a linear system containing the inequality $x' \geq 0$.
            \item $\psi_i$ is a linear-exponential system containing inequalities $0\leq z <2^y$ for every $z$ in $\vec z'$.
          \end{itemize}
          Over $\N$, the formula $\exists\vec q\big(\theta(\vec x)\land\varphi(\vec x, x', \vec q, \vec z')\land(x=\xquot\cdot2^y+\xrema)\big)$
          is equivalent~to%
          \begin{equation}
              \textstyle\bigvee_{i\in I}\theta(\vec x)\land(x=\xquot\cdot2^y+\xrema)\land \exists u
              \big(\gamma_i(u, \xquot) \land u=2^{x-y}\big)\land\psi_i(\vec x\setminus x,\vec z').\label{eq:elimmaxvar-after-s1}
          \end{equation}  
        \end{algorithmic}
    \end{specification}%

    \begin{specification}[H]
        \caption{Step~(\ref{item:elimmaxvar-ii}): Lines~\ref{line:elimmaxvar-2-u}--\ref{line:elimmaxvar-delayed-subs} of~\Cref{algo:elimmaxvar}}
        \label{spec:step2}
        \begin{algorithmic}[1]
          \Require
            {\setlength{\tabcolsep}{0pt}
            \begin{tabular}[t]{rl}
                $\theta(\vec x):{}$ &\ ordering of exponentiated variables (same as in~Step~\eqref{item:elimmaxvar-i});\\
                $(\gamma(u,\xquot),\psi(\vec x\setminus x,\vec z')):{}$ &\ branch output of Step~(\ref{item:elimmaxvar-i});\\
                $[\xquot \cdot 2^y + \xrema/x]:{}$ &\ delayed substitution (same as in~Step~\eqref{item:elimmaxvar-i}).  
            \end{tabular}}
          \GlobalOutput 
            $\{(\chi_{j}(y,\xquot,\xrema),\gamma_{j}(\xquot),\psi(\vec x\setminus x,\vec z'))\}_{j\in J}:{}$ set of triples such that for $j \in J$
            \begin{enumerate}
                \item $\chi_{j}(y,\xquot,\xrema)$ is either 
                $(\xquot\cdot2^y+\xrema-y=a)$ 
                or $(\xquot\cdot2^y+\xrema-y\geq b) \land (d \mid \xquot\cdot2^y+\xrema-y-r)$, 
                for some  
                $a,b,d,r\in\N$ and $b > 2$ (that depend on $j$); 
                \label{item:elimmaxvar-chi-1} 
                \item $\gamma_{j}(\xquot)$ is a linear system containing the inequality $x' \geq 0$.
                \label{item:elimmaxvar-chi-2}
                \item $\psi(\vec x\setminus x,\vec z')$ is the system in the input of the algorithm (it is not modified).
                \label{item:elimmaxvar-chi-3}
            \end{enumerate}
            Over $\N$, given $\Psi(\vec x,x',z') \coloneqq \theta(\vec x)\land(x=\xquot\cdot2^y+\xrema)\land \exists u \big(\gamma(u, \xquot) \land u=2^{x-y}\big)$, we have
            \begin{equation}\label{eq:elimmaxvar-psi}
                (\exists x \Psi(\vec x, x',z')) \iff \bigvee_{j\in J} (\exists x \theta(\vec x))\land\chi_{j}(y,\xquot,\xrema)\land\gamma_{j}(\xquot).
            \end{equation}
        \end{algorithmic}
    \end{specification}%

    \begin{specification}[H]
        \caption{Step~(\ref{item:elimmaxvar-iii}): Lines~\ref{line:elimmaxvar-x1-start}--\ref{line:elimmaxvar-gauss-2} of~\Cref{algo:elimmaxvar}}
        \label{spec:step3}
        \begin{algorithmic}[1]
          \Require
            {\setlength{\tabcolsep}{0pt}
            \begin{tabular}[t]{rl}
                $\theta(\vec x):{}$ &\ ordering of exponentiated variables (as in~Steps~\eqref{item:elimmaxvar-i});\\
                \hspace{-0.5cm}$(\chi(y,\xquot,\xrema),\gamma(\xquot),\psi(\vec x\setminus x,\vec z')):{}$ &\ branch output of Step~(\ref{item:elimmaxvar-ii}).
            \end{tabular}}
          \GlobalOutput 
          $\{\psi_{k}(y,\xrema)\land\psi(\vec x\setminus x,\vec z')\}_{k\in K}:{}$ set of linear-exponential systems
          (the system $\psi(\vec x\setminus x,\vec z')$ is as in the input of the algorithm). 
          Over $\N$, the following formula holds:
          \begin{equation}\label{eq:elimmaxvar-psi-x'}
            0 \leq z' < 2^y \implies \Big(\exists \xquot\big(\chi(y,\xquot,\xrema)\land\gamma(\xquot)\big) 
            \iff \bigvee_{k\in K} \psi_{k}(y,\xrema)\Big).
          \end{equation}
        \end{algorithmic}
    \end{specification}%
  \end{figure}

    Let us first briefly discuss the three specifications. From
    Specification~\ref{spec:step1}, we see that the input of
    Step~(\ref{item:elimmaxvar-i}) is the same as of~\Cref{algo:elimmaxvar}. Its
    \textbf{global output} is a set of pairs of systems
    $\{(\gamma_i,\psi_i)\}_{i\in I}$, where the $\psi_i$ are linear-exponential
    systems featuring only variables from the vector $(\vec x\setminus x,\vec z')$, 
    and the $\gamma_i$ are linear systems only featuring the variables $x'$ and $u$ 
    (which is a placeholder for $2^{x-y}$). By \textbf{branch output} of this step we mean a particular
    pair~$(\gamma,\psi)$ of systems from the \textbf{global output}. The systems
    $\gamma$ and $\psi$ correspond to the content of the homonymous ``program
    variables'' in the pseudocode, after line~\ref{line:elimmaxvar-gauss-1} has
    been executed.

    To visualise the specification of Step~(\ref{item:elimmaxvar-ii}), consider
    the diagram given in~\Cref{fig: flowchart}. In this diagram,
    Step~(\ref{item:elimmaxvar-ii}) is enclosed within the darker background
    rectangle. Four arrows enter this rectangle. Two of them correspond to a
    branch output of Step~(\ref{item:elimmaxvar-i}), and the other two come from
    the substitution for the variable $x$ and the ordering $\theta(\vec x)$.
    Observe that computations at Step~(\ref{item:elimmaxvar-ii}) do not affect
    the second parameter of the branch output of Step~(\ref{item:elimmaxvar-i}).
    The linear-exponential system $\psi(\vec x\setminus x,\vec z')$ is just
    propagated to the next step. The \textbf{branch output} of
    Step~(\ref{item:elimmaxvar-ii}) is a triple of systems
    $(\chi,\gamma,\psi)$, corresponding to the content of the homonymous
    program variables, after line~\ref{line:elimmaxvar-delayed-subs} has been
    executed.

    The specification of Step~(\ref{item:elimmaxvar-iii}) also follows the
    diagram from~\Cref{fig: flowchart}. This step takes as input a branch output
    of the previous step (three arrows that correspond to the systems
    $\gamma_3$, $\chi$,~$\psi_1$) and the ordering $\theta(\vec x)$. Its
    \textbf{branch output} is a single linear-exponential system that
    corresponds to the output of~\Cref{algo:elimmaxvar}.

    We prove the aforementioned conditional statement about the correctness of~\Cref{algo:elimmaxvar}.

\begin{lemma}\label{lemma:corr-elimmaxvar-cond}
    If Steps~(\ref{item:elimmaxvar-i}), (\ref{item:elimmaxvar-ii}), 
     and~(\ref{item:elimmaxvar-iii}) comply with, 
     respectively, Specifications~\ref{spec:step1}, \ref{spec:step2}, and \ref{spec:step3} 
     then \Cref{algo:elimmaxvar} (\ElimMaxVar) complies with its specification.
\end{lemma}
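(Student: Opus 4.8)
The plan is to prove \Cref{lemma:corr-elimmaxvar-cond} by a single chaining argument: compose the three equivalences supplied by Specification~\ref{spec:step1}, Specification~\ref{spec:step2} and Specification~\ref{spec:step3}, interleaving them with the routine quantifier manipulations licensed by the free-variable structure of the intermediate systems, until the left-hand side of the \ElimMaxVar specification, namely $\exists x\exists\vec x'\,(\theta\land\varphi\land x=\xquot\cdot2^y+\xrema)$ in~\eqref{eq:elimmaxvar-input}, is rewritten into the right-hand side $(\exists x\,\theta)\land\bigvee_\beta\psi_\beta$ of~\eqref{eq:elimmaxvar-output}. Throughout I write $\vec q\coloneqq\vec x'\setminus\xquot$, so that $\exists\vec x'=\exists\xquot\,\exists\vec q$, and I keep track of which program variables may occur in which system: by Specification~\ref{spec:step1} each $\psi_i$ is over $(\vec x\setminus x,\vec z')$ and hence free of $x$ and of $\xquot$; by Specification~\ref{spec:step2} each $\chi_j$ is over $(y,\xquot,\xrema)$ and each $\gamma_j$ over $\xquot$, both free of $x$; and the ordering $\theta(\vec x)$, which only ranges over the exponentiated variables $\vec x$, never mentions the quotient variables $\vec x'$.

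First I would peel off $\exists\vec q$ and apply Specification~\ref{spec:step1}, obtaining over $\N$
\[
  \exists x\exists\xquot\exists\vec q\,(\theta\land\varphi\land x=\xquot\cdot2^y+\xrema)
  \ \iff\ \exists x\exists\xquot\bigvee_{i\in I}\bigl(\Psi_i\land\psi_i\bigr),
\]
where $\Psi_i\coloneqq\theta(\vec x)\land(x=\xquot\cdot2^y+\xrema)\land\exists u\,(\gamma_i(u,\xquot)\land u=2^{x-y})$ is exactly the formula $\Psi$ of Specification~\ref{spec:step2} instantiated with the branch output $(\gamma_i,\psi_i)$ of Step~(\ref{item:elimmaxvar-i}). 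Since $\exists$ distributes over $\bigvee$ and $\psi_i$ is free of $x$ and $\xquot$, the right-hand side equals $\bigvee_i\bigl(\psi_i\land\exists\xquot(\exists x\,\Psi_i)\bigr)$. Now Specification~\ref{spec:step2} rewrites $\exists x\,\Psi_i$ as $\bigvee_{j\in J_i}(\exists x\,\theta)\land\chi_j(y,\xquot,\xrema)\land\gamma_j(\xquot)$; as $\exists x\,\theta$ is free of $\xquot$, I may move it and $\psi_i$ out of $\exists\xquot$, reaching $(\exists x\,\theta)\land\bigvee_{i}\bigvee_{j}\psi_i\land\exists\xquot(\chi_j\land\gamma_j)$.

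The last step uses Specification~\ref{spec:step3}, whose equivalence is conditioned on $0\le\xrema<2^y$. This side condition is always available: $\xrema$ is a remainder variable of the quotient system $\varphi$, so by the global-output clause of Specification~\ref{spec:step1} the inequality $0\le\xrema<2^y$ is a conjunct of $\psi_i$. Hence $\psi_i\land\exists\xquot(\chi_j\land\gamma_j)\iff\psi_i\land\bigvee_{k\in K_j}\psi_k(y,\xrema)$, and the conjunctions $\psi_i\land\psi_k$ produced this way are precisely the members of the global output of Step~(\ref{item:elimmaxvar-iii}), i.e.\ the branch outputs $\psi_\beta$ of \ElimMaxVar; branches in which the \textbf{assert} at line~\ref{line:elimmaxvar-gauss-2} fails contribute an empty $K_j$ and simply drop out, consistently with \Master returning $\bot$ there. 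Re-indexing the three nested disjunctions by a single $\beta=(i,j,k)$ gives $(\exists x\,\theta)\land\bigvee_\beta\psi_\beta$, which is~\eqref{eq:elimmaxvar-output}. To finish, I would verify the shape requirements from the \ElimMaxVar specification: each $\psi_\beta=\psi_i\land\psi_k$ is over $(\vec x\setminus x,\vec z')$, contains $0\le z<2^y$ for every $z$ in $\vec z'$ (inherited from $\psi_i$), and has no $\vec z'$-variable exponentiated (inherited from $\psi_i$, whose remainder variables are never exponentiated, and from $\psi_k$, which mentions only $y$ and $\xrema$, linearly).

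I do not expect a deep obstacle, since all the mathematical content is pushed into the three assumed specifications; the only delicate point is the quantifier bookkeeping — justifying every instance of pulling $\psi_i$, $\gamma_j$ and $\exists x\,\theta$ in or out of $\exists x$ and $\exists\xquot$ by pinning down exactly which variables each subsystem may contain — together with the one observation that the hypothesis $0\le\xrema<2^y$ of Specification~\ref{spec:step3} is always supplied inside $\psi_i$. Once these free-variable invariants are spelled out (essentially a careful re-reading of the three specifications), the chain of equivalences closes with no further work.
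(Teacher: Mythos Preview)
Your proposal is correct and follows essentially the same approach as the paper's own proof: both arguments chain the three specifications together via routine quantifier manipulations, using the free-variable structure of the intermediate systems to justify pulling $\psi_i$ and $\exists x\,\theta$ past $\exists x$ and $\exists\xquot$, and both invoke the fact that $0\le\xrema<2^y$ is already a conjunct of $\psi_i$ (guaranteed by Specification~\ref{spec:step1}) to discharge the side condition of Specification~\ref{spec:step3}. Your explicit remark about failed \textbf{assert} branches contributing an empty $K_j$ is a small addition not spelled out in the paper, but it is correct and does not change the argument.
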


\begin{proof}
    In a nutshell, the proof of the equivalence between the formulas~\eqref{eq:elimmaxvar-input} and~\eqref{eq:elimmaxvar-output} 
    follows by chaining the equivalences appearing in Specifications~\ref{spec:step1}, \ref{spec:step2}, and \ref{spec:step3}.

    Consider an input to~\Cref{algo:elimmaxvar}:
    an ordering $\theta(\vec x)$, a quotient system $\varphi(\vec x,\vec x',\vec z')$ induced by~$\theta$, and a delayed substitution $[\xquot \cdot 2^y + \xrema/x]$ for $\phi$. 
    According to the specification of~\Cref{algo:elimmaxvar}, 
    it suffices to show the aforementioned equivalence 
    and that each \textbf{branch output}~$\psi_\beta$ of the algorithm 
    contains the inequalities $0 \leq z < 2^y$, for every $z$ in $\vec z'$. Below we focus on proving the equivalence, and derive the additional property on $\psi_\beta$ as a by-product.


    Following~Specification~\ref{spec:step1}, with this input,
    Step~(\ref{item:elimmaxvar-i}) constructs a set of pairs of systems
    $\{(\gamma_{i}(u,\xquot),\psi_i(\vec x\setminus x,\vec z'))\}_{i\in I}$. 
    Given $i \in I$, we define the formula 
    \[
        \Psi_i(\vec x, x',\xrema)\coloneqq \theta(\vec x)\land(x=\xquot\cdot2^y+\xrema)\land \exists u \big(\gamma_i(u, \xquot) \land u=2^{x-y}\big).
    \]
    The disjunction~(\ref{eq:elimmaxvar-after-s1}) appearing in~Specification~\ref{spec:step1} can be rewritten in a compact way
    by using this formula. From the \textbf{global output} of Specification~\ref{spec:step1}, we obtain the equivalence: 
    \begin{align}
        \exists x\exists\vec x'\big(\theta(\vec x)\land\varphi(\vec x, \vec x', \vec z')\land(x=\xquot\cdot2^y+\xrema)\big)&&\hspace{-0.1cm}\text{(i.e.~\eqref{eq:elimmaxvar-input})}\nonumber
        \\
        &\hspace{-2cm}\iff
        \bigvee_{i\in I}\big(\exists x \exists x' \Psi_i(\vec x, x', \xrema)\big)\land\psi_i(\vec x\setminus x,\vec z').\label{eq:elimmaxvar-after-s1-psi}
    \end{align} 

    Following Specification~\ref{spec:step2},
    in addition to the ordering $\theta(\vec x)$ and the delayed substitution $[\xquot \cdot 2^y + \xrema/x]$, 
    Step~(\ref{item:elimmaxvar-ii}) takes as input a branch output $(\gamma_i(u,\xquot),\psi_i(\vec x\setminus x ,\vec z'))$ of Step~(\ref{item:elimmaxvar-i}), 
    for some $i\in I$. 
    Note that this pair of linear-exponential systems corresponds to a single disjunct of the formula~\eqref{eq:elimmaxvar-after-s1-psi}.
    The \textbf{global output} of Step~(\ref{item:elimmaxvar-ii}) on this input is a set of triples 
    of linear-exponential systems 
    $\{(\chi_{i,j}(y,\xquot,\xrema),\gamma_{i,j}(\xquot),\psi_i(\vec x\setminus x,\vec z'))\}_{j\in J_i}$
    such that, according to the equivalence~\eqref{eq:elimmaxvar-psi} in Specification~\ref{spec:step2}, for every $i \in I$,
    \begin{equation}\label{eq:elimmaxvar-psi-i}
        (\exists x \exists x' \Psi_i(\vec x,\xquot,\xrema))\iff\bigvee_{j\in J_i} (\exists x\theta(\vec x))\land \exists \xquot  \big(\chi_{i,j}(y,\xquot,\xrema)\land\gamma_{i,j}(\xquot)\big).
    \end{equation}

    Consider the combination of Steps~(\ref{item:elimmaxvar-i}) and~(\ref{item:elimmaxvar-ii}).
    The output of the two steps combined is given by the following set:
    $\bigl\{(\chi_{i,j},\gamma_{i,j},\psi_i)\,:\,i\in I, j\in J_i\bigr\}$.     
    From the equivalences in~\eqref{eq:elimmaxvar-after-s1-psi} and~\eqref{eq:elimmaxvar-psi-i}, we obtain the following chain of equivalences:
    \begin{align}
        \exists x\exists\vec x'\big(\theta(\vec x)\land{}&\varphi(\vec x, \vec x', \vec z')\land(x=\xquot\cdot2^y+\xrema)\big)\nonumber
        \\
        &\iff \bigvee_{i\in I}\big( \exists x \exists x' \Psi_i(\vec x, \xquot, \xrema) \big)\land\psi_i(\vec x\setminus x,\vec z')\nonumber
        \\
        &\iff
        \bigvee_{i\in I}\bigvee_{j\in J_i}\big(\exists x\theta(\vec x)\big)\land\exists \xquot\big(\chi_{i,j}(y,\xquot,\xrema)\land\gamma_{i,j}(\xquot)\big)\land\psi_i(\vec x\setminus x,\vec z').\label{eq:elimmaxvar-after-s2}
    \end{align}  

    Following Specification~\ref{spec:step3},
    the input of Step~(\ref{item:elimmaxvar-iii}) is the ordering $\theta(\vec x)$ together with 
    a branch output $(\chi_{i,j},\gamma_{i,j},\psi_i)$ of Step~(\ref{item:elimmaxvar-ii}), for some $(i,j)\in I\times J_i$.
    The \textbf{global output} of Step~(\ref{item:elimmaxvar-iii}) on this input is a set 
    $\{\psi_{i,j,k}\land\psi_i\}_{k\in K_{i,j}}$ of linear-exponential systems.
    Since, given the specification of Step~(\ref{item:elimmaxvar-i}), 
    each $\psi_i$ contains the inequality $0 \leq z' < 2^y$,
    thanks to the formula~\eqref{eq:elimmaxvar-psi-x'} in Specification~\ref{spec:step3} we have
    \begin{equation}\label{eq:elimmaxvar-psi-x1-i}
        \exists \xquot\big(\chi_{i,j}(y,\xquot,\xrema)\land\gamma_{i,j}(\xquot)\big) \land\psi_i(\vec x\setminus x,\vec z')
        \iff\bigvee_{k\in K_{i,j}} \psi_{i,j,k}(y,\xrema) \land\psi_i(\vec x\setminus x,\vec z').
    \end{equation}

    Let $\{\psi_{\beta}\}_{\beta \in B}$ be the \textbf{global output} 
    of \Cref{algo:elimmaxvar}, where $B$ is a set of non-deterministic branches.
    This corresponds to the \textbf{global output} of Step~(\ref{item:elimmaxvar-iii}), i.e.,
    \[
        \Bigl\{\psi_\beta\,:\,\beta\in B\Bigr\}=
        \Bigl\{\psi_{i,j,k}(y,\xrema)\land\psi_i(\vec x\setminus x ,\vec z')\,:\,i\in I, j\in J_i,\text{ and }k\in K_{i,j}\Bigr\}.
    \]
    (Notice that this means that $\psi_\beta$ features variables from $(\vec x\setminus x,\vec z')$.)
    Combining~(\ref{eq:elimmaxvar-after-s2}) and~(\ref{eq:elimmaxvar-psi-x1-i}), 
    we obtain the desired equivalence between~\eqref{eq:elimmaxvar-input} and~\eqref{eq:elimmaxvar-output}:
    \begin{align*}
        & \exists x\exists\vec x'\big(\theta(\vec x)\land\varphi(\vec x, \vec x', \vec z')\land(x=\xquot\cdot2^y+\xrema)\big)
        & \text{(i.e.~\eqref{eq:elimmaxvar-input})}\\
        \iff{}& 
        \bigvee_{i\in I}\bigvee_{j\in J_i}\big(\exists x\theta(\vec x)\big)\land\exists \xquot\big(\chi_{i,j}(y,\xquot,\xrema)\land\gamma_{i,j}(\xquot)\big)\land\psi_i(\vec x\setminus x,\vec z')
        \\
        \iff{}&
        \big(\exists x\theta(\vec x)\big)\land\bigvee_{i\in I}\;\bigvee_{j\in J_i}\,\bigvee_{k\in K_{i,j}}\big(\psi_{i,j,k}(y,\xrema)\land\psi_i(\vec x\setminus x ,\vec z')\big)
        \\
        \iff{}&
        \big(\exists x\theta(\vec x)\big)\land\bigvee_{\beta\in B}\psi_{\beta}(\vec x\setminus x ,\vec z')
        &\text{(i.e.~\eqref{eq:elimmaxvar-output})}. 
    \end{align*}

    To conclude the proof, observe that each $\psi_\beta$ features a system~$\psi_i$, for some $i \in I$. 
    From the specification of Step~(\ref{item:elimmaxvar-i}), 
    each $\psi_i$ contains the inequalities $0 \leq z < 2^y$ for every $z$ in~$\vec z'$.
\end{proof}

To make~\Cref{lemma:corr-elimmaxvar-cond} unconditional,
it now suffices to prove correctness of Steps~(\ref{item:elimmaxvar-i}), 
(\ref{item:elimmaxvar-ii}), and~(\ref{item:elimmaxvar-iii}). 
This is done in the following subsections.

\subsubsection*{Correctness of Step~(\ref{item:elimmaxvar-i})}

The goal of this subsection is to prove that the non-deterministic algorithm that corresponds 
to lines~\ref{line:elimmaxvar-u}--\ref{line:elimmaxvar-gauss-1}
of~\Cref{algo:elimmaxvar} complies with Specification~\ref{spec:step1}.   

Our main concern is the \textbf{foreach} loop of line~\ref{line:elimmaxvar-foreach}, which considers sequentially 
all constraints $(\tau\sim0)$ of the input quotient system $\varphi(\vec x, \vec x',\vec z')$.  
As discussed in the body of the paper, the goal of this loop is to split 
each constraint into a ``left part'' and a ``right part'' (see, respectively, $\gamma_1$ and $\psi_1$ in the diagram of~\Cref{fig: flowchart}). The left part corresponds to a linear constraint over the quotient variables $\vec x'$ and the auxiliary variable $u$ (which is an alias for $2^{x-y}$). 
The right part is a linear-exponential system over the variables $\vec x \setminus x$ and $\vec z'$. 
In a nutshell, the split into these two parts is possible because of the three equivalences given in the following two lemmas.

\begin{lemma} 
    \label{lemma:split:congruences}
    Let $d,M \in \N$, with $M \geq d \geq 1$. Given $y \in \N$ and $z,w \in \Z$, we have
    \[ 
        z \cdot 2^y + w \equiv_d 0
        \iff \textstyle\bigvee_{r = 1}^{M} \big(z - r \equiv_d 0 \land r \cdot 2^y + w \equiv_d 0\big).
    \]
\end{lemma}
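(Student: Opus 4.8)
\textbf{Proof plan for Lemma~\ref{lemma:split:congruences}.}

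The plan is to prove both directions of the equivalence by exploiting the fact that the exponent $y$ is fixed, so that $z \cdot 2^y + w \equiv_d 0$ only depends on the residue of $z$ modulo $d$. The key observation is that the disjunction on the right-hand side simply ranges over all possible residues $r$ of $z$ modulo $d$, represented within the window $[1,M]$, and since $M \geq d$ every residue class modulo $d$ is hit by at least one $r$ in this range.

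For the left-to-right direction, suppose $z \cdot 2^y + w \equiv_d 0$. Let $r_0 \in [1,d]$ be the unique integer with $r_0 \equiv_d z$ (taking the representative in $[1,d]$ rather than $[0,d-1]$ is harmless). Since $d \geq 1$ and $M \geq d$, we have $r_0 \in [1,M]$. Then $z - r_0 \equiv_d 0$ by construction, and since $z \equiv_d r_0$ we get $r_0 \cdot 2^y + w \equiv_d z \cdot 2^y + w \equiv_d 0$. Hence the disjunct for $r = r_0$ holds. For the right-to-left direction, suppose the disjunct for some $r \in [1,M]$ holds, i.e.\ $z - r \equiv_d 0$ and $r \cdot 2^y + w \equiv_d 0$. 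From $z \equiv_d r$ we obtain $z \cdot 2^y + w \equiv_d r \cdot 2^y + w \equiv_d 0$, as required.

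Both directions are short congruence manipulations; I expect no real obstacle here. The only point that needs a word of care is the choice of the interval $[1,M]$ versus $[0,d-1]$ for the representative: one must check that requiring $r \in [1,M]$ (with $M \geq d$) does not exclude any residue class. This is immediate because $\{1,2,\dots,d\} \subseteq \{1,\dots,M\}$ is already a complete set of residues modulo $d$. The hypothesis $M \geq d$ is exactly what guarantees this; the slack between $d$ and $M$ (which in the application comes from $M = \fmod(\varphi)$ being a multiple of $d$) only produces redundant disjuncts, which does no harm to the equivalence.
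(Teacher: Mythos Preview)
Your proposal is correct and matches the paper's approach exactly; the paper's own proof simply states that ``$z$ can be replaced with a number in $[1,M]$ congruent to $z$ modulo $d$'' and declares the result obvious, so you have merely spelled out the details the paper omits.
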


\begin{proof} 
    Informally, this lemma states that $z$ can be replaced with a number in $[1,M]$ congruent to $z$ modulo $d$. 
    The proof is obvious.
\end{proof}

\begin{lemma} 
    \label{lemma:split:inequalities}
    Let $C,D \in \Z$, with $C \leq D$. For $y \in \N$, $z \in \Z$, and ${w \in [C \cdot 2^y,D \cdot 2^y]}$, the following equivalences hold
    \begin{enumerate} 
        \item\label{step1:fund-equiv-1} $z \cdot 2^y + w = 0
        \iff \bigvee_{r = C}^{D} \big(z + r = 0 \land w = r \cdot 2^y\big)$,
        \item\label{step1:fund-equiv-2} $z \cdot 2^y + w \leq 0 
        \iff \bigvee_{r = C}^{D} \big(z + r \leq 0 \land (r-1) \cdot 2^y < w \leq r \cdot 2^y\big)$,
        \item\label{step1:fund-equiv-3} $z \cdot 2^y + w < 0 
        \iff \bigvee_{r = C}^{D} \big(z + r+1 \leq 0 \land  w = r \cdot 2^y\big) \lor \big(z + r \leq 0 \land (r-1) \cdot 2^y < w < r \cdot 2^y\big)$.
    \end{enumerate}
\end{lemma}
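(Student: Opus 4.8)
The plan is to prove \Cref{lemma:split:inequalities} by the same elementary case-analysis that proves \Cref{lemma:split:congruences}, exploiting the hypothesis $w \in [C \cdot 2^y, D \cdot 2^y]$ to pin down the unique ``band'' in which $w$ lies. For each of the three equivalences, the right-to-left direction is immediate: each disjunct on the right clearly implies the left-hand side (for \eqref{step1:fund-equiv-1}, $z = -r$ and $w = r\cdot2^y$ give $z\cdot2^y+w = 0$; for \eqref{step1:fund-equiv-2}, $z+r\le0$ and $w \le r\cdot 2^y$ give $z\cdot2^y+w \le (r-1+1)\cdot2^y\cdot? $\ — more precisely $z\cdot 2^y \le -r\cdot2^y$ so $z\cdot2^y+w\le w-r\cdot2^y\le 0$; similarly for \eqref{step1:fund-equiv-3}). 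So the real content is the left-to-right direction.

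First I would handle \eqref{step1:fund-equiv-1}. Assume $z\cdot2^y + w = 0$, so $w = -z\cdot2^y$, hence $w$ is an integer multiple of $2^y$, say $w = r\cdot 2^y$ with $r = -z \in \Z$. The hypothesis $w \in [C\cdot2^y, D\cdot2^y]$ and $2^y \ge 1$ force $C \le r \le D$, so the disjunct indexed by this $r$ holds with $z + r = 0$ and $w = r\cdot2^y$. Next, for \eqref{step1:fund-equiv-2}: assume $z\cdot2^y + w \le 0$. Since $C\cdot2^y \le w \le D\cdot2^y$, there is a unique $r \in [C, D]$ with $(r-1)\cdot2^y < w \le r\cdot2^y$ (take $r = \ceil{w/2^y}$, clamping: one checks $r\ge C$ from $w > (C-1)\cdot2^y$ unless $w = C\cdot 2^y$, in which case $r=C$ works; and $r \le D$ from $w \le D\cdot2^y$). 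Now $z\cdot2^y \le -w \le -(r-1)\cdot2^y - 1 < -(r-1)\cdot2^y$, so $z < -(r-1)$, i.e.\ $z \le -r$, giving $z + r \le 0$. Thus the $r$-th disjunct holds. Finally \eqref{step1:fund-equiv-3}: assume $z\cdot2^y + w < 0$. Again pick the unique $r\in[C,D]$ with $(r-1)\cdot2^y < w \le r\cdot2^y$. Split on whether $w = r\cdot2^y$ or $w < r\cdot2^y$. If $w = r\cdot2^y$, then $z\cdot2^y + r\cdot2^y < 0$ forces $z + r < 0$, i.e.\ $z + r + 1 \le 0$, matching the first alternative. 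If $(r-1)\cdot2^y < w < r\cdot2^y$, then from $z\cdot2^y < -w < -(r-1)\cdot2^y$ we get $z < -(r-1)$, so $z+r \le 0$, matching the second alternative; and the band inequalities $(r-1)\cdot2^y < w < r\cdot2^y$ are exactly what is required.

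The one point that needs a little care — and which I expect to be the main (mild) obstacle — is the boundary behaviour of the clamping when choosing $r$ in parts \eqref{step1:fund-equiv-2} and \eqref{step1:fund-equiv-3}: one must verify that the index $r$ we extract genuinely lies in $[C,D]$, which is where the hypothesis $w \in [C\cdot2^y, D\cdot2^y]$ is used in full. In particular the left endpoint is delicate: if $w = C\cdot2^y$ exactly, the ``natural'' choice $r = C$ satisfies $(C-1)\cdot2^y < w \le C\cdot2^y$ since $2^y \ge 1 > 0$, so it is fine; there is no need for $r = C-1$. Everything else is a routine consequence of $2^y$ being a positive integer and the monotonicity of multiplication by $2^y$. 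I would write this up compactly, treating \eqref{step1:fund-equiv-1} in one line, and \eqref{step1:fund-equiv-2}, \eqref{step1:fund-equiv-3} together by first fixing the band index $r$ and then doing the sign bookkeeping on $z$.
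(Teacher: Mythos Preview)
Your proposal is correct and follows essentially the same approach as the paper: both arguments fix the unique band index $r = \ceil{w/2^y} \in [C,D]$ (using the hypothesis $w \in [C\cdot 2^y, D\cdot 2^y]$) and then verify the required sign condition on $z+r$ (respectively $z+r+1$) by elementary integer inequalities, splitting in part~\eqref{step1:fund-equiv-3} on whether $w$ hits the band endpoint $r\cdot 2^y$. The only cosmetic difference is that the paper phrases the case analysis for part~\eqref{step1:fund-equiv-2} via the floor and fractional part of $w/2^y$, whereas you use the integrality of $w$ directly; the content is the same.
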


\begin{proof} 
    Firstly, notice that since $w \in [C \cdot 2^y,D \cdot 2^y]$, 
    there is $r^* \in [C,D]$ such that $\ceil{\frac{w}{2^y}} = r^*$.
    
    \textit{Proof of~\eqref{step1:fund-equiv-1}.} 
    For the left-to-right direction, note that $z \cdot 2^y + w = 0$ forces $w$ to be divisible by $2^y$. Hence $w = r^* \cdot 2^y$, 
    and we have $z \cdot 2^y + r^* \cdot 2^y = 0$, i.e., $z + r^* = 0$. 
    Since $r^*$ belongs to $[C,D]$, the right-hand side is satisfied.
    The right-to-left direction is trivial. 

    \textit{Proof of~\eqref{step1:fund-equiv-2}.} 
    Observe that given $r \in [C,D]$ satisfying 
    $(r-1) \cdot 2^y < w \leq r \cdot 2^y$, 
    we have $r = r^*$. Therefore, it suffices to show the equivalence $z \cdot 2^y + w \leq 0 \iff z + r^* \leq 0$. If $w$ is divisible by $2^y$, then $w = r^*\cdot 2^y$ and the equivalence easily follows: 
    \[ 
        z \cdot 2^y + w \leq 0 \iff z \cdot 2^y + r^* \cdot 2^y \leq 0 \iff z+r^* \leq 0.
    \]
    Otherwise, when $w$ is not divisible by $2^y$, it holds that $\floor{\frac{w}{2^y}} = r^*-1$. 
    Below, given $t \in \R$ we let $\frpart{t} \coloneqq t - \floor{t}$, i.e., $\frpart{t}$ is the fractional part of $t$. 
    Observe that $0 \leq \frpart{t} < 1$.
    We have: 
    \begin{align} 
        z \cdot 2^y + w \leq 0 
        &\iff 
        z \cdot 2^y + \left(\floor{\frac{w}{2^y}} + \frpart{\frac{w}{2^y}}\right) \cdot 2^y \leq 0 \notag\\
        &\iff 
        z + \floor{\frac{w}{2^y}} + \frpart{\frac{w}{2^y}} \leq 0 \notag\\
        &\iff 
        z + \floor{\frac{w}{2^y}} < 0
        &\text{since $w$ is not divisible by $2^y$} \notag\\
        &\iff 
        z + r^*-1 < 0 \notag\\
        &\iff z + r^* \leq 0. \label{step1:fund-equiv:last-step}
    \end{align}

    \textit{Proof of~\eqref{step1:fund-equiv-3}.}
    While their equivalences look different, 
    this and the previous case have very similar proofs.
    This similarity stems from the fact that, when $w$ is not divisible by $2^y$, 
    then $z \cdot 2^y + w$ cannot be $0$, and thus the cases of 
    $z \cdot 2^y + w \leq 0$
    and $z \cdot 2^y + w < 0$ 
    become identical. Below, we formalise the full proof for completeness.

    Since $w \in [C \cdot 2^y,D \cdot 2^y]$, 
    there must be $r \in [C,D]$ such that either 
    $w = r \cdot 2^y$ or
    $(r-1) \cdot 2^y < w < r \cdot 2^y$.
    In both cases, $r = r^*$,
    and thus to conclude the proof it suffices to establish 
    that: 
    \begin{itemize}
        \item $w = r^* \cdot 2^y$ implies $z \cdot 2^y + w < 0 \iff z+r^*+1 \leq 0$, and
        \item $(r^*-1) \cdot 2^y < w < r^* \cdot 2^y$ implies $z \cdot 2^y + w < 0 \iff z+r^* \leq 0$.
    \end{itemize}
    The proof of the first item is straightforward. Assuming $w = r^* \cdot 2^y$, we get: 
    \begin{align*}
        z \cdot 2^y + w < 0 &\iff z \cdot 2^y + r^* \cdot 2^y < 0 \iff z+r^* < 0 \iff z+r^* +1 \leq 0.
    \end{align*}
    For the second item, assume that $(r^*-1) \cdot 2^y < w < r^* \cdot 2^y$. 
    In this case $w$ is not divisible by $2^y$, and $\floor{\frac{w}{2^y}} = r^*-1$. Hence, $z \cdot 2^y + w$ cannot be $0$, which in turn means 
    ${z \cdot 2^y + w < 0 \iff z \cdot 2^y + w \leq 0}$. 
    Therefore, we can apply the same sequence of equivalences from~\eqref{step1:fund-equiv:last-step}
    to show that $z \cdot 2^y + w < 0 \iff z + r^* \leq 0$.
\end{proof}

Looking at the pseudocode of~\Cref{algo:elimmaxvar}, 
one can see that the \textbf{foreach} loop of line~\ref{line:elimmaxvar-foreach} does indeed follow 
the equivalences in~\Cref{lemma:split:congruences,lemma:split:inequalities}. 
The equivalences in~\Cref{lemma:split:inequalities} are applied in lines~\ref{line:elimmaxvar-delta-undefined}--\ref{line:elimmaxvar-psi-equality}, setting $[C,D] = [-\onenorm{\rho},\onenorm{\rho}]$.
The equivalence in~\Cref{lemma:split:congruences} 
is applied in lines~\ref{line:elimmaxvar-guess-mod}--\ref{line:elimmaxvar-psi-divisibility}, setting $M = \fmod(\phi)$. 
Ultimately, the correctness of  Step~(\ref{item:elimmaxvar-i}), which we now formalise,  
follows from these equivalences (and from the correctness of~\Cref{algo:gaussianqe}).

We divide the proof of correctness into the following four steps:
\begin{enumerate}
    \item We show that the map $\Delta$ has no influence in the correctness of the algorithm and can be ignored during the analysis. This is done to simplify the exposition of the next step.
    \item We analyse the body of the \textbf{foreach} loop of line~\ref{line:elimmaxvar-foreach}. Here we use~\Cref{lemma:split:congruences,lemma:split:inequalities}.
    \item We analyse the complete execution of the \textbf{foreach} loop, hence obtaining a specification for lines~\ref{line:elimmaxvar-u}--\ref{line:elimmaxvar-psi-divisibility} of Step~(\ref{item:elimmaxvar-i}). 
    \item We incorporate the call to~\Cref{algo:gaussianqe} (\GaussianQE) performed in line~\ref{line:elimmaxvar-gauss-1} into the analysis, proving that Step~(\ref{item:elimmaxvar-i}) follows Specification~\ref{spec:step1}.
\end{enumerate}

\subparagraph*{The map $\Delta$ is not needed for correctness.}
For the correctness of Step~(\ref{item:elimmaxvar-i}), 
the first simplifying step consists in doing a program transformation that removes the uses of the map $\Delta$.  
This map is introduced exclusively for complexity reasons, and the correctness of the algorithm is preserved if one removes it. 
More precisely, instead of guessing the integer~$r$ in line~\ref{line:elimmaxvar-guess-rho} only once for each least significant part $\rho$, one can perform one such guess every time $\rho$ is found.

\begin{lemma}
    \label{lemma:remove-Delta}
    Consider the code obtained from~Step~(\ref{item:elimmaxvar-i}) by replacing lines~\ref{line:elimmaxvar-delta-undefined}--\ref{line:elimmaxvar-delta-from} with lines~\ref{line:elimmaxvar-guess-rho} and~\ref{line:elimmaxvar-psi-inequality}. 
    If it complies with Specification~\ref{spec:step1}, then so 
    does~Step~(\ref{item:elimmaxvar-i}).
\end{lemma}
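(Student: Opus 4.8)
The plan is to argue by a \emph{non-deterministic simulation} between the two programs, showing that each one can emulate the other branch-by-branch, so that their \textbf{global outputs} coincide. Concretely, let $P$ denote Step~(\ref{item:elimmaxvar-i}) and let $P^\circ$ denote the variant of Lemma~\ref{lemma:remove-Delta}, in which every occurrence of a least significant part $\rho$ triggers a fresh guess of $r \in [-\onenorm{\rho},\onenorm{\rho}]$ (i.e.\ lines~\ref{line:elimmaxvar-delta-undefined}--\ref{line:elimmaxvar-delta-from} are replaced by lines~\ref{line:elimmaxvar-guess-rho} and~\ref{line:elimmaxvar-psi-inequality}, unconditionally). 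The key observation is that the only purpose of the map~$\Delta$ is \emph{memoisation}: it forces all constraints of $\phi$ sharing the same least significant part $\rho$ to reuse a single guessed value $r$ and to contribute the ``right part'' $(r-1)\cdot 2^y < \rho \le r\cdot 2^y$ to $\psi$ exactly once. Removing $\Delta$ does not change which integers $r$ are \emph{available} to be guessed; it only allows (in $P^\circ$) different constraints with the same $\rho$ to pick different values of $r$, and it allows the same ``right part'' to be added to $\psi$ multiple times (which is harmless, as $\psi$ is a conjunction: $\chi \land \chi = \chi$).

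First I would set up the correspondence of branches in the direction that matters for the lemma: assume $P^\circ$ complies with Specification~\ref{spec:step1}, and derive that $P$ does too. Since the \textbf{global output} of $P$ is a \emph{subset} of that of $P^\circ$ (every branch of $P$ is realised by a branch of $P^\circ$ that simply happens to guess the same $r$ for every constraint with a given $\rho$), soundness of $P$ is immediate: every pair $(\gamma_i,\psi_i)$ produced by $P$ is also produced by $P^\circ$, hence the disjunction over $P$'s branches implies $\exists\vec q\,(\theta\land\varphi\land x=x'\cdot2^y+z')$ by the ``$\Leftarrow$'' half of $P^\circ$'s correctness. For completeness, I would show that any branch $\beta^\circ$ of $P^\circ$ whose output $(\gamma,\psi)$ is \emph{satisfiable together with} $\theta \land (x = x'\cdot 2^y + z')$ can be matched by a branch $\beta$ of $P$ with an output $(\gamma',\psi')$ that is at least as permissive, i.e.\ $(\gamma' \land \psi')$ is implied by $(\gamma \land \psi)$ modulo $\theta$. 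The point: given a solution of $\theta\land\varphi\land(x=x'\cdot2^y+z')$, the value of each least significant part $\rho$ is a \emph{fixed integer}, so the correct interval index $r$ it lies in is uniquely determined; hence the memoised branch of $P$ that guesses precisely this $r$ for every constraint with least significant part $\rho$ captures exactly the same solution set. Thus for every satisfying assignment there is a branch of $P$ whose output it satisfies, which combined with soundness gives the required equivalence. The auxiliary bookkeeping (that $\gamma$ contains $x'\ge 0$ and $\psi$ contains $0\le z<2^y$ for $z$ in $\vec z'$) is syntactically identical in $P$ and $P^\circ$, so it transfers verbatim.

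The main obstacle I expect is stating the branch correspondence cleanly when a single $\rho$ occurs in constraints of \emph{different types} ($=$, $<$, $\le$, $\equiv_d$): the pseudocode of $P$ uses $\Delta$ only in the second case (line~\ref{line:elimmaxvar-second-case}), while the divisibility case (line~\ref{line:elimmaxvar-third-case}) guesses $r \in [1,\fmod(\varphi)]$ fresh every time regardless of $\Delta$, via~\Cref{lemma:split:congruences}. So the simulation is really only about the $\{{=},{<},{\le}\}$ constraints, and one must be careful to phrase Lemma~\ref{lemma:remove-Delta} as ``replace lines~\ref{line:elimmaxvar-delta-undefined}--\ref{line:elimmaxvar-delta-from}'' (exactly as stated) and nothing else. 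A secondary subtlety is that in $P$, when the first constraint with a given $\rho$ is processed, the ``right part'' is pushed to $\psi$ and $\Delta(\rho)$ is set; in $P^\circ$ the analogous ``right part'' is pushed for \emph{every} such constraint. These are logically equivalent conjuncts, but one should note explicitly that the extra copies in $\psi$ (or, dually, the single copy in $P$) do not affect the set of satisfying assignments of $\psi$, and that $\onenorm{\psi}$ and $\card\psi$ play no role here since Specification~\ref{spec:step1} imposes no size bounds. Once these points are nailed down, the proof reduces to a short equivalence-of-branches argument invoking the correctness of $P^\circ$ as a black box.
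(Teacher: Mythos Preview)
Your proposal is correct and follows essentially the same approach as the paper's proof: both hinge on the observation that any $P^\circ$-branch guessing distinct values $r_1\neq r_2$ for the same least significant part $\rho$ produces an unsatisfiable $\psi$ (equivalently, your phrasing: a fixed solution determines $r=\lceil\rho/2^y\rceil$ uniquely), so the extra branches of $P^\circ$ contribute nothing and the disjunctions over $P$ and $P^\circ$ coincide up to harmless conjunct duplication. The paper states this a bit more compactly as a single syntactic argument (the ``bad'' disjuncts are false, hence droppable) rather than separating soundness and completeness, but the content is the same.
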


\begin{proof}
    Suppose that the modified Step~(\ref{item:elimmaxvar-i})
    complies with Specification~\ref{spec:step1}, 
    which in particular means that its \textbf{global output}
    is a set  $\{(\gamma_{i}(u,\xquot),\psi_i(\vec x\setminus x,\vec z'))\}_{i\in I}:{}$ set of pairs such that 
    \begin{enumerate}
      \item\label{remove-Delta:item1} $\gamma_i$ is a linear system containing the inequality $x' \geq 0$,
      \item\label{remove-Delta:item2} $\psi_i$ is a linear-exponential system containing inequalities $0\leq z <2^y$ for every $z$ in $\vec z'$,
    \end{enumerate}
    and, over $\N$, the formula $\exists\vec q\big(\theta(\vec x)\land\varphi(\vec x, x', \vec q, \vec z')\land(x=\xquot\cdot2^y+\xrema)\big)$
    is equivalent~to%
    \begin{equation}
        \textstyle\bigvee_{i\in I}\theta(\vec x)\land(x=\xquot\cdot2^y+\xrema)\land \exists u
        \big(\gamma_i(u, \xquot) \land u=2^{x-y}\big)\land\psi_i(\vec x\setminus x,\vec z').\label{eq:elimmaxvar-after-s1-modified}
    \end{equation}  

    Observe that the modification done to the algorithm only influences the number of constraints of the form $(r-1)\cdot2^y<\rho \land \rho\leq r\cdot2^y$
    that are added to the system~$\psi$, 
    and corresponding to the guesses of $r \in [-\norm{\rho}_1,\norm{\rho}_1]$.
    More precisely, when a single $\rho$ is encountered multiple times during the procedure, the modified Step~(\ref{item:elimmaxvar-i}) is allowed to guess multiple values for $r$, whereas the original Step~(\ref{item:elimmaxvar-i}) reuses the same $r$. 
    So, 
    for every pair of systems $(\gamma,\psi)$ in the \textbf{global output} of the original Step~(\ref{item:elimmaxvar-i}),
    there is a system $\psi'$ such that 
    \begin{itemize} 
        \item  $(\gamma,\psi')$ belongs to the \textbf{global output} of 
        the modified Step~(\ref{item:elimmaxvar-i}), 
        \item $\psi'$ can be obtained from~$\psi$ by duplicating a certain number of times formulae of the form $(r-1)\cdot2^y<\rho \land \rho\leq r\cdot2^y$ that already appear in $\psi$.
    \end{itemize}
    Then, clearly, also the \textbf{global output} of the original Step~(\ref{item:elimmaxvar-i})
    satisfies~\Cref{remove-Delta:item1,remove-Delta:item2} above. It also satisfies the equivalence involving the formula~\eqref{eq:elimmaxvar-after-s1-modified}.
    This is because all the systems $\psi_i$ (with $i \in I$) in the \textbf{global output} of the modified Step~(\ref{item:elimmaxvar-i}) that do not correspond, 
    in the sense we have just discussed, to a $\psi$ in the \textbf{global output} of the original Step~(\ref{item:elimmaxvar-i}), are unsatisfiable.
    The reason for their unsatisfiability is that these systems $\psi_i$ feature
    constraints $(r_1-1)\cdot2^y<\rho \land \rho\leq r_1\cdot2^y$ and 
    $(r_2-1)\cdot2^y<\rho \land \rho\leq r_2\cdot2^y$ with $r_1 \neq r_2$. 
    As the same term $\rho$ appears in these constraints,
    their conjunction is unsatisfiable. 
    Thus, the disjunct of formula~\eqref{eq:elimmaxvar-after-s1-modified} corresponding to such a $\psi_i$
    can be dropped without changing the truth of the equivalence, 
    and we conclude that the original Step~(\ref{item:elimmaxvar-i}) complies with Specification~\ref{spec:step1}.
\end{proof}

\begin{remark}
    \label{remark:remove-Delta}
    Following~\Cref{lemma:remove-Delta}, for the remaining part of the proof of correctness of~Step~(\ref{item:elimmaxvar-i}) 
    we assume this step to only feature lines~\ref{line:elimmaxvar-guess-rho} and~\ref{line:elimmaxvar-psi-inequality} in place of lines~\ref{line:elimmaxvar-delta-undefined}--\ref{line:elimmaxvar-delta-from}. 
\end{remark}

\subparagraph*{Analysis of the body of the foreach loop (of line~\ref{line:elimmaxvar-foreach}).}     
    For the rest of~\Cref{app:procedure-correctness-elimmaxvar},
    we simply say \emph{``the \textbf{foreach} loop''} without referring to its line number, as there are no other loops in~\Cref{algo:elimmaxvar}.
    We start the analysis by considering a single iteration of this loop.
    Given an input $(\theta,\phi,[\xquot \cdot 2^y + \xrema/x])$ of~\Cref{algo:elimmaxvar} (which corresponds to the input of Step~(\ref{item:elimmaxvar-i}), see Specification~\ref{spec:step1}), 
    and a constraint $(\tau \sim 0)$ from $\phi$, 
    below we say that \emph{executing the \textbf{foreach} body on the state $(\tau\sim0,\gamma,\psi)$ yields as a \textbf{global output} a set $S$}
    whenever: 
    \begin{itemize}
        \item $\tau \sim 0$ is the constraint selected in line~\ref{line:elimmaxvar-foreach}, and $\gamma$ and $\psi$ are the systems 
        stored in the homonymous variables when $\tau \sim 0$ is selected
        (these systems are initially~$\top$, see line~\ref{line:elimmaxvar-initialize}).
        \item $S$ is the union over all non-deterministic branches of the pairs of systems $(\gamma',\psi')$ stored in the variables $(\gamma,\psi)$ after 
        the \textbf{foreach} loop completes its iteration on $\tau \sim 0$ (i.e., the body of the loop is executed exactly once, and the program reaches line~\ref{line:elimmaxvar-foreach} again).
    \end{itemize}
    The following lemma describes the effects of one iteration of the \textbf{foreach} loop.

    \begin{lemma}\label{claim:elimmaxvar-3}
        Let $(\theta,\varphi,[\xquot \cdot 2^y + \xrema/x])$ be an input of Step~(\ref{item:elimmaxvar-i}) 
        described in Specification~\ref{spec:step1}. 
        Let $u$ be the fresh variable defined in line~\ref{line:elimmaxvar-u},
        and let $(\tau\sim0)$ be a constraint from $\varphi$. 
        Executing the \textbf{foreach} body on the state $(\tau\sim0,\gamma,\psi)$ yields as a \textbf{global output} a set of pairs $\{(\gamma \land \gamma_r, \psi \land \psi_r)\}_{r\in R}$,
        for some finite set of indices $R$, such that
        \begin{enumerate}
            \item $\gamma_r$ is a linear (in)equality or divisibility constraint over the variables $\vec x'$ and $u$.
            \label{item:claim-elimmaxvar-3-1}
            \item $\psi_r$ is a linear-exponential constraint over the variables $\vec x\setminus x$ and $\vec z'$.
            \label{item:claim-elimmaxvar-3-2}
            \item If $\tau$ only features variables from $\vec x \setminus x$ and $\vec z'$, then $R = \{0\}$, and $\gamma_0 = \top$ and $\psi_0 = (\tau \sim 0)$.%
            \label{item:claim-elimmaxvar-3-3}
            \item Over $\N$, $\theta\land\bigwedge_{z \in \vec z'}(0 \leq z < 2^{y})\implies
            \big((\tau \sim 0)\iff\bigvee_{r\in R}(\gamma_r[2^{x-y}/u]\land\psi_r)\big)$.
            \label{item:claim-elimmaxvar-3-4}
        \end{enumerate}
    \end{lemma}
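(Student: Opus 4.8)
The plan is to establish \Cref{claim:elimmaxvar-3} by a straightforward case analysis on the predicate symbol $\sim$ appearing in the selected constraint $(\tau \sim 0)$, following the structure of the pseudocode of the \textbf{foreach} body (lines~\ref{line:elimmaxvar-tau}--\ref{line:elimmaxvar-psi-divisibility}), and invoking \Cref{lemma:split:inequalities,lemma:split:congruences} to justify the splits. First I would write $\tau$ in the canonical form $a \cdot 2^x + f(\vec x') \cdot 2^y + \rho$ guaranteed by the definition of a quotient term induced by~$\theta$, where $\rho$ is the least significant part (line~\ref{line:elimmaxvar-tau}). The three cases to consider are: (a)~$a = 0$ and $f(\vec x')$ is an integer; (b)~${\sim} \in \{{=},{<},{\leq}\}$ and $\tau$ is not of the trivial form in~(a); (c)~${\sim}$ is a congruence $\equiv_d$.

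In case~(a), the constraint $(\tau \sim 0)$ already only involves variables from $\vec x \setminus x$ and $\vec z'$ (since $a = 0$ kills $2^x$, the coefficient of $2^y$ is a constant, and $\rho$ by definition contains no $x$), so the body sets $\psi \gets \psi \land (\tau \sim 0)$ and leaves $\gamma$ untouched (line~\ref{line:elimmaxvar-psi-trivial}); this gives $R = \{0\}$, $\gamma_0 = \top$, $\psi_0 = (\tau \sim 0)$, establishing item~\ref{item:claim-elimmaxvar-3-3} and the easy direction of item~\ref{item:claim-elimmaxvar-3-4}. Note, however, that $\tau$ may be linear-exponential but still contain $2^y$; one must check that ``$2^y$'' counts as an exponential term over a variable in $\vec x \setminus x$, which is fine since $y$ is one of the exponentiated variables other than $x$. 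In case~(b), the body guesses $r \in [-\onenorm{\rho},\onenorm{\rho}]$ and, crucially, since $\phi$ is a quotient system induced by $\theta$, every solution satisfying $\theta \land \bigwedge_{z \in \vec z'}(0 \leq z < 2^y)$ has $|\rho| \le \onenorm{\rho} \cdot 2^y$, i.e.\ $\rho \in [-\onenorm{\rho} \cdot 2^y, \onenorm{\rho} \cdot 2^y]$. This is precisely the hypothesis ``$w \in [C \cdot 2^y, D \cdot 2^y]$'' of \Cref{lemma:split:inequalities} with $z = a \cdot 2^{x-y} + f(\vec x')$, $w = \rho$, $C = -\onenorm{\rho}$, $D = \onenorm{\rho}$; I would apply part~\eqref{step1:fund-equiv-1}, \eqref{step1:fund-equiv-2}, or~\eqref{step1:fund-equiv-3} according to whether $\sim$ is $=$, $\leq$, or $<$, reading off $\gamma_r = (a \cdot u + f(\vec x') + r \sim' 0)$ as the ``left part'' (lines~\ref{line:elimmaxvar-gamma-inequality},~\ref{line:elimmaxvar-guess-sign}--\ref{line:elimmaxvar-update-r}) and $\psi_r$ as the ``right part'' (lines~\ref{line:elimmaxvar-psi-inequality},~\ref{line:elimmaxvar-strict-add-to-psi},~\ref{line:elimmaxvar-psi-equality}). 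Here I would use \Cref{remark:remove-Delta}, so the memoisation via $\Delta$ need not be tracked. The factoring $a \cdot 2^x + (f(\vec x') + r) \cdot 2^y = (a \cdot 2^{x-y} + f(\vec x') + r) \cdot 2^y$ uses $\theta \models 2^x \geq 2^y$ (so $x - y \geq 0$), and the placeholder $u$ for $2^{x-y}$ explains the substitution $[2^{x-y}/u]$ in item~\ref{item:claim-elimmaxvar-3-4}. Case~(c) is analogous but simpler: the body guesses $r \in [1, \fmod(\phi)]$ and applies \Cref{lemma:split:congruences} with $d$, $M = \fmod(\phi)$, $z = a \cdot 2^{x-y} + f(\vec x')$, $w = \rho$, reading off $\gamma_r = (a \cdot u + f(\vec x') - r \sim 0)$ (line~\ref{line:elimmaxvar-gamma-divisibility}) and $\psi_r = (r \cdot 2^y + \rho \sim 0)$ (line~\ref{line:elimmaxvar-psi-divisibility}); no boundedness hypothesis on $\rho$ is needed here, which matches the fact that \Cref{lemma:split:congruences} has none.

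Items~\ref{item:claim-elimmaxvar-3-1} and~\ref{item:claim-elimmaxvar-3-2} are then immediate from inspecting the shapes of $\gamma_r$ and $\psi_r$ produced in each case: $\gamma_r$ mentions only $u$ and the quotient variables $\vec x'$ (appearing through the linear term $f(\vec x')$), and $\psi_r$ mentions only $y$ (through $2^y$), the variables appearing in $\rho$ — which by the definition of least significant part lie in $\{x_0, \ldots, x_{n-2}\} \cup \vec z' \subseteq (\vec x \setminus x) \cup \vec z'$ — and is linear-exponential (no quotient term structure survives, since the $2^x$ and $2^y$ coefficients have been factored out). The main subtlety, and the step I expect to require the most care, is item~\ref{item:claim-elimmaxvar-3-4}: one must verify that the ``only if'' direction of the equivalence genuinely relies on the antecedent $\theta \land \bigwedge_{z\in\vec z'}(0 \le z < 2^y)$ to establish the bound $|\rho| \le \onenorm{\rho}\cdot 2^y$ needed to invoke \Cref{lemma:split:inequalities}. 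This bound follows because in a quotient term every variable and every exponential subterm occurring in $\rho$ is bounded by $2^y$: the variables $x_0, \ldots, x_{n-2}$ satisfy $x_i < 2^{x_i} \le 2^{x_{n-1}} = 2^y$ by $\theta$ (using $x < 2^x$ for $x \ge 1$ and handling $x_i = 0$ separately), the exponentials $2^{x_i}$ with $i \le n-1$ are $\le 2^y$ by $\theta$, the remainder variables $z'$ satisfy $0 \le z' < 2^y$ by hypothesis, and any modulo terms $(w \bmod 2^{x_i})$ inside $\rho$ are bounded by $2^{x_i} \le 2^y$; summing with the coefficients from $\rho$ gives $|\rho| \le \onenorm{\rho}\cdot 2^y$. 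With this bound in hand, \Cref{lemma:split:inequalities} (or \Cref{lemma:split:congruences} in the congruence case, which needs no bound) applies verbatim and yields item~\ref{item:claim-elimmaxvar-3-4}; the ``if'' direction is trivial from the same lemmas. The only other thing to double-check is that the \textbf{foreach} body does not branch in case~(a), so that $R = \{0\}$ there as claimed, and that in cases~(b)--(c) the only non-determinism is the guess of $r$ (and, in the strict case, the sign $\sim'$), so that the index set $R$ is finite — indeed $|R| \le 2\onenorm{\rho} + 1$ or $|R| \le \fmod(\phi)$, possibly doubled for the strict-inequality sign guess, which is harmless for the statement.
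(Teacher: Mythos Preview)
Your proposal is correct and follows essentially the same approach as the paper: the same three-way case split on the branch of the \textbf{if-else} chain, the same invocation of \Cref{lemma:split:inequalities} (parts~\eqref{step1:fund-equiv-1}--\eqref{step1:fund-equiv-3}) and \Cref{lemma:split:congruences}, the same reliance on \Cref{remark:remove-Delta}, and the same key observation that $\theta \land \bigwedge_{z \in \vec z'}(0 \le z < 2^y)$ forces $|\rho| \le \onenorm{\rho}\cdot 2^y$. One tiny imprecision: the least significant part $\rho$ also contains the term $b \cdot y$ (with $y = x_{n-1}$), so the variables of $\rho$ lie in $\{x_0,\ldots,x_{n-1}\} \cup \vec z'$ rather than $\{x_0,\ldots,x_{n-2}\} \cup \vec z'$; this is harmless since $y \in \vec x \setminus x$ and $y \le 2^y$ holds for all $y \in \N$.
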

    \begin{proof}
        In every constraint $(\tau \sim 0)$ of the system $\varphi$,
        the term $\tau$ is a quotient term induced by~$\theta$, and 
        ${\sim}$ is a predicate symbol from the set $\{{<},{\leq},=,\equiv_d : d \geq 1\}$.
        Line~\ref{line:elimmaxvar-tau} ``unpacks'' the term $\tau$, 
        according to the definition of quotient term given in~\Cref{sec:procedure}, as 
        \begin{equation}
            \label{eq:elimmaxvar-extended}
            a \cdot 2^{x} + f(\vec x') \cdot 2^{y} + \rho(\vec x\setminus x , \vec z'),
        \end{equation}
        where $2^x$ is the leading exponential term of 
        the ordering $\theta$ and $2^y$ is its successor in this ordering
        (observe that this agrees with the delayed substitution $[\xquot \cdot 2^y + \xrema/x]$).
        In the expression in~\eqref{eq:elimmaxvar-extended}, 
        $a$ is an integer, $f(\vec x')$ is a linear term over the quotient variables $\vec x'$, and
        $\rho$ is the least significant part of $\tau$. 
        The latter means that $\rho$ is of the form
        \begin{equation}
            \label{eq:elimmaxvar-least-significant}
            b\cdot y+\sum\nolimits_{i=1}^\ell \Big(a_i \cdot x_i + c_i \cdot (x_i \bmod 2^y) + \sum\nolimits_{j=1}^m\big(b_j \cdot 2^{x_j} + c_{i,j} \cdot (x_i \bmod 2^{x_j})\big)\Big) + d,
        \end{equation}
        where the coefficients $b,a_i,c_i,b_j,c_{i,j}$ and the constant $d$ are all integers; $m \leq \ell$,  
        the variables $x_1,\dots,x_m$ are from $\vec x\setminus\{x,y\}$, and the variables 
        $x_{m+1},\dots,x_\ell$ are from $\vec z'$.
        (The notation $\vec x\setminus\{x,y\}$ is short for $(\vec x \setminus x) \setminus y$.) 
        Finally, since $\varphi$ is a quotient system induced by~$\theta$, 
        it features the inequalities $0 \leq z < 2^{y}$ for every $z$ in $\vec z'$.

        We divide the proof into three cases, following which of the three branches of the chain of \textbf{if-else} statements starting in line~\ref{line:elimmaxvar-psi-trivial} triggers. 
        \begin{description}
            \item[The guard of the \textbf{if} statement in 
                line~\ref{line:elimmaxvar-psi-trivial} triggers.] 
                In this case, $(\tau\sim0)$ only features variables from $\vec x \setminus x$ and $\vec z'$, 
                and the iteration of the \textbf{foreach} loop on $\tau \sim 0$ completes yielding as a 
                \textbf{global output} a set with only one pair of systems: $(\gamma, \psi \land (\tau\sim0))$.
                Properties~\ref{item:claim-elimmaxvar-3-1}--\ref{item:claim-elimmaxvar-3-4} in the statement are trivially satisfied, and the lemma is proven.
            \item[The \textbf{else-if} statement in 
                line~\ref{line:elimmaxvar-second-case} triggers.] 
                In this case, $\sim$ is a symbol from $\{=,\leq,<\}$
                and with respect to the expression in~(\ref{eq:elimmaxvar-extended}), either $a\neq0$ or $f(\vec x')$ is not an integer. Notice that then Property~\ref{item:claim-elimmaxvar-3-3} trivially holds, as the antecedent of the implication in this property is false. 
                Below we focus on Properties~\ref{item:claim-elimmaxvar-3-1},~\ref{item:claim-elimmaxvar-3-2}
                and~\ref{item:claim-elimmaxvar-3-4}.

                We remind the reader that, following~\Cref{remark:remove-Delta}, we are considering the version of~Step~\eqref{item:elimmaxvar-i} featuring lines~\ref{line:elimmaxvar-guess-rho} and~\ref{line:elimmaxvar-psi-inequality} in place of lines~\ref{line:elimmaxvar-delta-undefined}--\ref{line:elimmaxvar-delta-from}.
                Therefore, in this case the iteration of the \textbf{foreach} loop executes lines~\ref{line:elimmaxvar-guess-rho},~\ref{line:elimmaxvar-psi-inequality} and~\ref{line:elimmaxvar-if-less}--\ref{line:elimmaxvar-psi-equality}.
                
                Observe that, under the assumption that $\theta(\vec x) \land \bigwedge_{z \in\vec z'}(0 \leq z < 2^{y})$ holds, in the expression in~\eqref{eq:elimmaxvar-least-significant} all variables $x_i$ and terms $2^{x_j}$, with $i \in [1,\ell]$ and $j \in [1,m]$, are bounded by $2^y$, 
                which in turns implies $\rho\in\bigl[-\onenorm{\rho}\cdot2^y,\onenorm{\rho}\cdot 2^y\bigr]$.  
                We thus derive the following implication:
                \begin{equation}\label{eq:elimmaxvar-ceiling-disj}
                    \theta\land\bigwedge_{z \in\vec z'}(0 \leq z < 2^{y})\implies -\onenorm{\rho}\cdot2^y \leq \rho \leq \onenorm{\rho}\cdot2^y.
                \end{equation}
                \Cref{algo:elimmaxvar} takes advantage of this implication to estimate the least significant part~$\rho$. 
                In line~\ref{line:elimmaxvar-guess-rho}, it guesses an integer $r\in\bigl[-\onenorm{\rho},\onenorm{\rho}\bigr]$, and 
                in line~\ref{line:elimmaxvar-psi-inequality}, it adds to $\psi$ the formula
                \begin{equation}\label{eq:elimmaxvar-ceiling}
                    \psi_r' \coloneqq ((r-1)\cdot2^y<\rho)\land(\rho\leq r\cdot2^y).                
                \end{equation}
                Essentially, in adding $\psi_r'$ to $\psi$, the algorithm is guessing that $r = \ceil{\frac{\rho}{2^y}}$.

                We now inspect lines~\ref{line:elimmaxvar-if-less}--\ref{line:elimmaxvar-psi-equality}, carefully analysing the three cases of ${\sim} \in \{=,\leq,<\}$ separately. 

                \vspace{2pt}
                \textbf{\textit{case: $=$.}} Let $R \coloneqq \bigl[-\onenorm{\rho},\onenorm{\rho}\bigr]$. Given $r \in R$, let us define
                \begin{align*}
                    \gamma_r&\coloneqq (a\cdot u+f(\vec x')+r=0),
                    \\
                    \psi_r&\coloneqq \psi_r' \land (r\cdot2^y=\rho).
                \end{align*}
                Following lines~\ref{line:elimmaxvar-psi-inequality},~\ref{line:elimmaxvar-gamma-inequality} and~\ref{line:elimmaxvar-psi-equality},
                we deduce that executing the \textbf{foreach} body on the state $(\tau\sim0,\gamma,\psi)$ yields as a \textbf{global output} the set of pairs $\{(\gamma \land \gamma_r, \psi \land \psi_r)\}_{r\in R}$. 
                Properties~\ref{item:claim-elimmaxvar-3-1},~\ref{item:claim-elimmaxvar-3-2}
                and~\ref{item:claim-elimmaxvar-3-4} 
                are easily seen to be satisfied:
                \begin{itemize}
                    \item Properties~\ref{item:claim-elimmaxvar-3-1} and~\ref{item:claim-elimmaxvar-3-2} trivially follow from the definitions of $\gamma_r$ and $\psi_r$. 
                    \item Observe that the expression in~\eqref{eq:elimmaxvar-extended} can be rewritten as $(a \cdot 2^{x-y} + f(\vec x')) \cdot 2^{y} + \rho(\vec x\setminus x , \vec z')$. From the formula~\eqref{eq:elimmaxvar-ceiling-disj} together with Equivalence~\ref{step1:fund-equiv-1} from~\Cref{lemma:split:inequalities}, we obtain 
                    \[ 
                    \theta\land\bigwedge_{z \in\vec z'}(0 \leq z < 2^{y})\implies
                    \big( \tau = 0 \iff\bigvee_{r= -\onenorm{\rho}}^{\onenorm{\rho}}(
                        a \cdot 2^{x-y} + f(\vec x') + r = 0 \land  \rho = r \cdot 2^y)\big).
                    \]
                    The subformula $a \cdot 2^{x-y} + f(\vec x') + r = 0$ 
                    is equal to $\gamma_r\sub{2^{x-y}}{u}$. 
                    The subformula $\rho = r \cdot 2^y$ is equivalent to $\psi_r$. We thus have
                    \[ 
                        \theta\land\bigwedge_{z \in\vec z'}(0 \leq z < 2^{y})\implies
                        \big( \tau = 0 \iff\bigvee_{r \in R}(\gamma_r[2^{x-y}/u]\land\psi_r)\big),
                    \]
                    that is, 
                    Property~\ref{item:claim-elimmaxvar-3-4} holds.
                \end{itemize}
                \vspace{2pt}
                \textbf{\textit{case: $\leq$.}} Let $R \coloneqq \bigl[-\onenorm{\rho},\onenorm{\rho}\bigr]$. Given $r \in R$, let us define
                \begin{align*}
                    \gamma_r&\coloneqq (a\cdot u+f(\vec x')+r \leq 0),\\
                    \psi_r &\coloneqq \psi_r'.
                \end{align*}
                Following lines~\ref{line:elimmaxvar-psi-inequality} and~\ref{line:elimmaxvar-gamma-inequality},
                we deduce that executing the \textbf{foreach} body on the state ${(\tau\sim0,\gamma,\psi)}$ yields as a \textbf{global output} the set of pairs $\{(\gamma \land \gamma_r, \psi \land \psi_r)\}_{r\in R}$. 
                The proof that Properties~\ref{item:claim-elimmaxvar-3-1},~\ref{item:claim-elimmaxvar-3-2}
                and~\ref{item:claim-elimmaxvar-3-4} 
                are satisfied follows as in the previous case
                (relying on Equivalence~\ref{step1:fund-equiv-2} from~\Cref{lemma:split:inequalities} to prove Property~\ref{item:claim-elimmaxvar-3-4}).

                \vspace{2pt}
                \textbf{\textit{case: $<$.}} 
                Let $R \coloneqq \bigl[-\onenorm{\rho},\onenorm{\rho}\bigr] \times \{=,<\}$.
                Given $r \in R$, we define
                \begin{align*}
                    \gamma_{(r,=)}&\coloneqq (a\cdot u+f(\vec x')+r+1 \leq 0),\\
                    \psi_{(r,=)}&\coloneqq \psi_r' \land (\rho = r \cdot 2^y),\\
                    \gamma_{(r,<)}&\coloneqq (a\cdot u+f(\vec x')+r \leq 0),\\
                    \psi_{(r,<)}&\coloneqq \psi_r' \land (\rho < r \cdot 2^y).
                \end{align*}

                Following lines~\ref{line:elimmaxvar-psi-inequality} and~\ref{line:elimmaxvar-if-less}--\ref{line:elimmaxvar-gamma-inequality},
                we deduce that executing \textbf{foreach} body on the state $(\tau\sim0,\gamma,\psi)$ yields as a \textbf{global output} the set of pairs 
                    $\{(\gamma \land \gamma_{(r,\sim')}, \psi \land \psi_{(r,\sim')})\}_{(r,\sim')\in R}$.
                The proof of
                Properties~\ref{item:claim-elimmaxvar-3-1} and~\ref{item:claim-elimmaxvar-3-2} is trivial.
                For the proof of Property~\ref{item:claim-elimmaxvar-3-4}, 
                from the formula~\eqref{eq:elimmaxvar-ceiling-disj} and 
                Equivalence~\ref{step1:fund-equiv-3} from~\Cref{lemma:split:inequalities} we have 
                \begin{align*}  
                    \theta\land\bigwedge_{z \in\vec z'}(0 \leq z < 2^{y})&\implies{}
                    \Big(\tau < 0 
                    \iff \\
                    &\bigvee_{r = -\norm{\rho}_1}^{\norm{\rho}_1}\!\!\big(a \cdot 2^{x-y} + f(\vec x') + r + 1 \leq 0 \land  \rho = r \cdot 2^y\big)\\ 
                    \lor &\bigvee_{r = -\norm{\rho}_1}^{\norm{\rho}_1}\!\!\big(a \cdot 2^{x-y} + f(\vec x') + r \leq 0 \land (r-1) \cdot 2^y < \rho < r \cdot 2^y\big)\!\Big).
                \end{align*}
                The subformulae $a \cdot 2^{x-y} + f(\vec x') + r + 1 \leq 0$ and $a \cdot 2^{x-y} + f(\vec x') + r \leq 0$
                are equal to $\gamma_{(r,=)}\sub{2^{x-y}}{u}$ 
                and $\gamma_{(r,<)}\sub{2^{x-y}}{u}$, respectively. 
                The subformulae $\rho = r \cdot 2^y$ and $(r-1) \cdot 2^y < \rho < r \cdot 2^y$ are equivalent to $\psi_{(r,=)}$ and $\psi_{(r,<)}$, respectively. We thus have 
                \begin{align*}  
                    \theta\land\bigwedge_{z \in\vec z'}(0 \leq z < 2^{y})\implies{}
                    \Big(&\tau < 0 
                    \iff \bigvee_{(r,\sim') \in R} \gamma_{(r,\sim')}\sub{2^{x-y}}{u} \land \psi_{(r,\sim')}\Big),
                \end{align*}
                that is, Property~\ref{item:claim-elimmaxvar-3-4} holds.

            \item[The \textbf{else} statement of in line~\ref{line:elimmaxvar-third-case} triggers.]
                In this case, $\sim$ is $\equiv_d$ for some $d \geq 1$, and
                the algorithm executes lines \ref{line:elimmaxvar-guess-mod}--\ref{line:elimmaxvar-psi-divisibility}.
                Let $R \coloneqq \bigl[1,\fmod(\phi)\bigr]$.
                In line~\ref{line:elimmaxvar-guess-mod}, it guesses an integer $r\in R$.
                Recall that $\fmod(\phi)$ is the least common multiple of all divisors appearing 
                in divisibility constraints of the system $\varphi$, 
                and therefore $d \leq \fmod(\phi)$.
                For an integer $r \in R$, we define  
                \begin{align*}
                    \gamma_r&\coloneqq (a\cdot u+f(\vec x')-r\equiv_d0), &\text{(see line~\ref{line:elimmaxvar-gamma-divisibility})}
                    \\
                    \psi_r&\coloneqq (r\cdot2^y+\rho\equiv_d0).
                    &\text{(see line~\ref{line:elimmaxvar-psi-divisibility})}
                \end{align*}
                Following lines~\ref{line:elimmaxvar-gamma-divisibility}
                and~\ref{line:elimmaxvar-psi-divisibility}, 
                we deduce that executing \textbf{foreach} body on the state $(\tau\sim0,\gamma,\psi)$ yields as a \textbf{global output} 
                the set of pairs $\{(\gamma \land \gamma_r, \psi \land \psi_r)\}_{r\in R}$. 
                Properties~\ref{item:claim-elimmaxvar-3-1}--\ref{item:claim-elimmaxvar-3-4} are again satisfied: 
                \begin{itemize}
                    \item Properties~\ref{item:claim-elimmaxvar-3-1} and~\ref{item:claim-elimmaxvar-3-2} follow directly by definition of $\gamma_r$ and $\psi_r$. 
                    \item Property~\ref{item:claim-elimmaxvar-3-3} is true, as the antecedent of the implication in this property is false (as in the previous case, we have either $a \neq 0$ or $f(\vec x')$ non-constant). 
                    \item Property~\ref{item:claim-elimmaxvar-3-4} follows by~\Cref{lemma:split:congruences}, since $d \leq \fmod(\phi)$ and $\theta$ implies $(x - y) \in \N$.
                    \qedhere
                \end{itemize}
                %
                \qedhere 
        \end{description} 
    \end{proof}
    
    \subparagraph*{Analysis of the complete execution of the foreach loop.}

        We extend the analysis performed in \Cref{claim:elimmaxvar-3} to multiple iterations of the body of the \textbf{foreach} loop. 
        We define the \textbf{global output} of the \textbf{foreach} loop of line~\ref{line:elimmaxvar-foreach}
        to be the set of all pairs $(\gamma,\psi)$ 
        where $\gamma$ and $\psi$ are the systems stored in
        the homonymous variables when, in a non-deterministic branch of the program,  line~\ref{line:elimmaxvar-gauss-1} is reached (and before this line is executed).
        We prove the following lemma.

        \begin{lemma}\label{claim:elimmaxvar-1}
            Let $(\theta(\vec x),\phi(\vec x, \vec x', \vec z'),[\xquot \cdot 2^y + \xrema/x])$ be an input of Step~(\ref{item:elimmaxvar-i}), as 
            described in Specification~\ref{spec:step1}.
            Let $u$ be the fresh variable defined in line~\ref{line:elimmaxvar-u}.
            Executing  the \textbf{foreach} loop of line~\ref{line:elimmaxvar-foreach} on this input 
            yields as a \textbf{global output} a set of pairs $\{(\gamma_i,\psi_i)\}_{i\in I}$ such that
            \begin{enumerate}[A.]
                \item\label{claim:elimmaxvar-1:propA} $\gamma_i$ is a linear system over the variables $\vec x'$ and $u$.
                \item\label{claim:elimmaxvar-1:propB} $\psi_i$ is a linear-exponential system over the variables $\vec x \setminus x$ and $\vec z'$. Moreover,  $\psi_i$ contains inequalities $0 \leq z < 2^y$, for every $z$ in $\vec z'$.
                \item\label{claim:elimmaxvar-1:propC} Over $\N$, $\theta \land \phi$ is equivalent to $\theta \land \exists u \big( u = 2^{x-y} \land \bigvee_{i \in I}(\gamma_i \land \psi_i)\big)$. 
            \end{enumerate}
        \end{lemma}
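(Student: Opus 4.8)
The proof is an induction on the number of constraints of~$\phi$ that have been processed by the \textbf{foreach} loop, using \Cref{claim:elimmaxvar-3} as the single-step result. Let me write $\phi = \bigwedge_{k=1}^{N}(\tau_k \sim_k 0)$ with the constraints enumerated in the order the loop considers them, and for $0 \le t \le N$ let $(\gamma^{(t)}_i, \psi^{(t)}_i)_{i \in I_t}$ denote the set of pairs of systems reachable (across non-deterministic branches) after the loop body has run on the first~$t$ constraints. The base case $t = 0$ is trivial: by line~\ref{line:elimmaxvar-initialize} the only reachable pair is $(\top,\top)$, and Properties~\ref{claim:elimmaxvar-1:propA}--\ref{claim:elimmaxvar-1:propC} hold vacuously (for Property~\ref{claim:elimmaxvar-1:propC} the equivalence reads $\theta \iff \theta \land \exists u(u = 2^{x-y})$, which holds since $\theta$ forces $x \ge y$, so $2^{x-y}$ is a well-defined natural number).

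For the inductive step, assume the claim holds after $t$ constraints and consider processing $(\tau_{t+1} \sim_{t+1} 0)$. Each reachable state $(\gamma^{(t)}_i,\psi^{(t)}_i)$ is fed into the loop body, and by \Cref{claim:elimmaxvar-3} the body produces from it a set $\{(\gamma^{(t)}_i \land \gamma_{i,r},\, \psi^{(t)}_i \land \psi_{i,r})\}_{r \in R_i}$, where each $\gamma_{i,r}$ is a linear constraint over $\vec x'$ and~$u$ (preserving Property~\ref{claim:elimmaxvar-1:propA}), each $\psi_{i,r}$ is a linear-exponential constraint over $\vec x \setminus x$ and~$\vec z'$ (preserving Property~\ref{claim:elimmaxvar-1:propB}; the inequalities $0 \le z < 2^y$ for $z$ in $\vec z'$ were already in $\psi^{(t)}_i$ and are never removed), and
\[
  \theta \land \bigwedge_{z \in \vec z'}(0 \le z < 2^y) \implies \Big((\tau_{t+1} \sim_{t+1} 0) \iff \bigvee_{r \in R_i}\big(\gamma_{i,r}[2^{x-y}/u] \land \psi_{i,r}\big)\Big).
\]
To chain this with the inductive hypothesis I would argue at the level of the formula $\theta \land \exists u(u = 2^{x-y} \land \bigvee_i (\gamma_i^{(t)} \land \psi_i^{(t)}))$: under the antecedent $\theta$ (which, together with the inequalities $0 \le z < 2^y$ present in every $\psi_i^{(t)}$, supplies exactly the hypothesis needed for \Cref{claim:elimmaxvar-3}), conjoining $(\tau_{t+1} \sim_{t+1} 0)$ distributes over the disjunction and, in each disjunct~$i$, is replaced by $\bigvee_{r \in R_i}(\gamma_{i,r}[2^{x-y}/u] \land \psi_{i,r})$; pushing the existential $\exists u$ (which binds $u$ to the fixed value $2^{x-y}$) back through the enlarged disjunction over pairs $(i,r)$ yields precisely $\theta \land \exists u(u = 2^{x-y} \land \bigvee_{(i,r)}(\gamma_i^{(t)} \land \gamma_{i,r} \land \psi_i^{(t)} \land \psi_{i,r}))$. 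Since the new index set $I_{t+1} = \{(i,r) : i \in I_t, r \in R_i\}$ is exactly the set of reachable states after $t+1$ constraints, this establishes Property~\ref{claim:elimmaxvar-1:propC} for $t+1$. Taking $t = N$ gives the lemma, with $I = I_N$ and $\gamma_i,\psi_i$ the final contents of the program variables when line~\ref{line:elimmaxvar-gauss-1} is reached.

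The main subtlety — not a deep obstacle but the place where care is needed — is the bookkeeping of the substitution $[2^{x-y}/u]$ and the scope of the existential over~$u$. The systems $\gamma_i$ live in the ``world'' where $u$ is a genuine free variable (this is what lets \GaussianQE treat $u$ linearly in the next step), whereas the semantic equivalence only holds once $u$ is pinned to $2^{x-y}$; one must be consistent about when the substitution is applied symbolically versus when the binding $\exists u(u = 2^{x-y} \land \cdots)$ is in force, and verify that $u$ is genuinely fresh (introduced in line~\ref{line:elimmaxvar-u}) so it occurs in none of the $\tau_k$ nor in $\theta$. A second minor point is that one should note the non-strict inequalities $0 \le z < 2^y$ in each $\psi_i$ are needed at every inductive step — they are the standing hypothesis of \Cref{claim:elimmaxvar-3} — which is why it is important that they are added once (in \Master, line~\ref{algo:master:imposezlessy}) and propagated unchanged rather than re-derived. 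Everything else is a routine distribution of conjunction over disjunction.
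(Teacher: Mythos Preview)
Your induction has a genuine gap at the place you yourself flag as a ``minor point''. You take as the loop invariant that Properties~\ref{claim:elimmaxvar-1:propA}--\ref{claim:elimmaxvar-1:propC} hold after $t$ iterations, but Property~\ref{claim:elimmaxvar-1:propB} is \emph{not} an invariant: at $t=0$ we have $\psi^{(0)}=\top$, which does not contain $0 \le z < 2^y$ for any $z$ in~$\vec z'$, and more generally these constraints only appear in $\psi^{(t)}_i$ once the loop has processed them. Consequently, in your inductive step you cannot appeal to ``the inequalities $0 \le z < 2^y$ present in every $\psi^{(t)}_i$'' to supply the hypothesis of \Cref{claim:elimmaxvar-3}: that hypothesis is simply unavailable at early stages, and so your chain of equivalences for Property~\ref{claim:elimmaxvar-1:propC} breaks.

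The paper's proof avoids this by keeping $\bigwedge_{z \in \vec z'}(0 \le z < 2^y)$ as an explicit standing hypothesis throughout the induction: the loop invariant is of the form
\[
  \theta \land \bigwedge_{z \in \vec z'}(0 \le z < 2^y) \;\implies\; \Big(\phi \iff \bigvee_{t}\big(\phi^{(k)} \land \gamma_t^{(k)}[2^{x-y}/u] \land \psi_t^{(k)}\big)\Big),
\]
together with a separate item tracking that whenever a constraint of the form $0 \le z$ or $z < 2^y$ has been processed, it appears in every~$\psi_t^{(k)}$. Only at the end, when all constraints have been processed, does one use that $\phi$ is a quotient system (and hence actually contains these inequalities) to discharge the hypothesis and obtain the unconditional equivalence of Property~\ref{claim:elimmaxvar-1:propC}. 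Your argument can be repaired along these lines, but as written the invariant you chose does not hold.
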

        \begin{proof}
            This lemma follows by first applying~\Cref{claim:elimmaxvar-3}, and then arguing that every formula $\psi_i$ contains the inequalities $0 \leq z < 2^y$, for every $z$ in $\vec z'$. Roughly speaking, this allows to remove the hypothesis $\bigwedge_{z \in\vec z'}(0 \leq z < 2^{y})$
            from Property~\ref{item:claim-elimmaxvar-3-4} in~\Cref{claim:elimmaxvar-3}, resulting in the equivalence 
            required by Property~\ref{claim:elimmaxvar-1:propC}.

            Let us formalise the above sketch.
            For simplicity, let $\phi = \bigwedge_{i=1}^m \tau_i \sim_i 0$, and assume that the guard of the \textbf{foreach} loop 
            selects the constraints in $\phi$ in the order $\tau_1 \sim_1 0, \tau_2 \sim_2 0, \dots$
            We denote by $\gamma^{(k)}$ and $\psi^{(k)}$ two systems that, in a single non-deterministic branch of the program that executes the body of the \textbf{foreach} loop exactly $k$ times, are stored in the variables $\gamma$ and $\psi$, respectively; 
            and denote by $\{(\gamma_t^{(k)},\psi_t^{(k)})\}_{t \in T_k}$ the set of all such pairs of systems, across all non-deterministic branches.
            Note that, from line~\ref{line:elimmaxvar-initialize}, we have $\gamma^{(0)} = \psi^{(0)} = \top$ (and $T_0$ contains a single index).
            Lastly, let $\phi^{(k)} \coloneqq \bigwedge_{i = k+1}^m \tau_i \sim_i 0$ (hence, $\phi^{(m)} = \top$). 
            
            By relying on~\Cref{claim:elimmaxvar-3},
            we conclude that the following is an invariant for the
            \textbf{foreach} loop: 
            for every $k \in [0,m]$ and $t \in T_k$,
            \begin{romanenumerate} 
                \item\label{foreach-loop-inv-1} $\gamma_t^{(k)}$ is a linear (in)equality or divisibility constraints over the variables $\vec x'$ and $u$,
                \item\label{foreach-loop-inv-2} $\psi_t^{(k)}$ is a linear-exponential constraint over the variables $\vec x\setminus x$ and $\vec z'$,
                \item\label{foreach-loop-inv-3} if $\tau_k \sim_k 0$ is $0 \leq z$ (resp.~$z < 2^y$) for some $z$ in $\vec z'$, then $\psi_t^{(k)}$ contains $0 \leq z$ (resp.~$z < 2^y$).
                Here, recall that $0 \leq z$ and $z < 2^y$ are shorthands for $-z \leq 0$ and $z - 2^y < 0$, respectively.%
                \item\label{foreach-loop-inv-4} Over $\N$, $\theta\land\bigwedge_{z \in\vec z'}(0 \leq z < 2^{y})\implies
                \big( \phi \iff\bigvee_{t \in T_k}(\phi^{(k)} \land \gamma_t^{(k)}[2^{x-y}/u]\land\psi_t^{(k)})\big)$.
            \end{romanenumerate}

            Consider the case of $k = m$. 
            Since $\phi$ is a quotient system induced by $\theta$,  
            it contains $\bigwedge_{z \in\vec z'}(0 \leq z < 2^{y})$ as a subsystem. Hence, from Item~\eqref{foreach-loop-inv-3} of the invariant, for every $t \in T_m$, 
            $\psi_t^{(m)}$ contains $\bigwedge_{z \in\vec z'}(0 \leq z < 2^{y})$ as a subsystem. Together with Items~\eqref{foreach-loop-inv-1} and~\eqref{foreach-loop-inv-2}, 
            this shows that Properties~\ref{claim:elimmaxvar-1:propA} and~\ref{claim:elimmaxvar-1:propB} hold. 
            By Item~\eqref{foreach-loop-inv-4}, we also have the following equivalence: 
            \[ 
                (\theta \land \phi) \iff \big(\theta \land \bigvee_{t \in T_k}(\phi^{(m)} \land \gamma_t^{(m)}[2^{x-y}/u]\land\psi_t^{(m)})\big).
            \]
            Above, $\phi^{(m)} = \top$, and $\gamma_t^{(m)}[2^{x-y}/u]$ is equivalent to $\exists u (u = 2^{x-y} \land \gamma_t^m)$. Since $\theta$ implies $x \geq y$, the variable $u$ can be existentially quantified over $\N$.
            Hence, Property~\ref{claim:elimmaxvar-1:propC} holds.
        \end{proof}

    \subparagraph*{Incorporating the call to~\GaussianQE and completing the analysis of Step~(\ref{item:elimmaxvar-i}).}
        By chaining~\Cref{claim:elimmaxvar-1} and~\Cref{lemma:corr-gaussianqe}, 
        we are now able to prove the correctness of Step~(\ref{item:elimmaxvar-i}).    
        
        \begin{lemma}\label{lemma:corr-step1}
            Step~(\ref{item:elimmaxvar-i}) of~\Cref{algo:elimmaxvar} complies with Specification~\ref{spec:step1}.
        \end{lemma}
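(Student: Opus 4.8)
The plan is to bootstrap from \Cref{claim:elimmaxvar-1}, which already analyses the whole \textbf{foreach} loop of line~\ref{line:elimmaxvar-foreach}, and then fold in the call to \GaussianQE in line~\ref{line:elimmaxvar-gauss-1} via \Cref{lemma:corr-gaussianqe}. Fix an input $(\theta(\vec x),\varphi(\vec x,\vec x',\vec z'),\sub{\xquot\cdot 2^y+\xrema}{x})$ as in Specification~\ref{spec:step1}, let $\vec q\defeq\vec x'\setminus\xquot$, and let $u$ be the fresh variable of line~\ref{line:elimmaxvar-u}. By \Cref{claim:elimmaxvar-1}, executing the loop produces a global output $\{(\gamma_i^\circ,\psi_i)\}_{i\in I}$, where $\gamma_i^\circ$ is a linear system over $\vec x'$ and $u$ (with no strict inequalities, since line~\ref{line:elimmaxvar-strict-ineq} has rewritten every $<$ into $\le$), $\psi_i$ is a linear-exponential system over $\vec x\setminus x$ and $\vec z'$ containing $0\le z<2^y$ for every $z$ in $\vec z'$, and, over $\N$,
\[
  \theta\land\varphi \ \equiv\ \theta\land\exists u\Bigl(u=2^{x-y}\land\textstyle\bigvee_{i\in I}(\gamma_i^\circ\land\psi_i)\Bigr).
\]
Here $\gamma_i^\circ$ is the value of the program variable $\gamma$ in the $i$-th branch immediately before line~\ref{line:elimmaxvar-gauss-1}.

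Next I would apply \Cref{lemma:corr-gaussianqe} to the call $\GaussianQE(\vec q,\gamma_i^\circ\land\vec x'\ge\vec 0)$. As \GaussianQE operates over $\Z$ and line~\ref{line:elimmaxvar-gauss-1} conjoins $\vec x'\ge\vec 0$, it yields, for each $i\in I$, a family $\{\gamma_{i,j}\}_{j\in J_i}$ of linear systems over $u$ and $\xquot$ (the only non-eliminated variables) with
\[
  \textstyle\bigvee_{j\in J_i}\gamma_{i,j} \ \equiv_{\Z}\ \exists\vec q\,(\gamma_i^\circ\land\vec x'\ge\vec 0) \ =\ (\xquot\ge 0)\land\exists\vec q\in\N\;\gamma_i^\circ .
\]
The structural requirement of Specification~\ref{spec:step1} that each output $\gamma$-system literally contain $\xquot\ge 0$ is met by conjoining it explicitly, which changes nothing since over $\N$ this inequality always holds. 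I would then declare the global output of Step~(\ref{item:elimmaxvar-i}) to be $\{(\gamma_{i,j},\psi_i)\}_{(i,j)\in K}$ with $K\defeq\{(i,j):i\in I,\ j\in J_i\}$; the first two bullets of the \textbf{Global Output} clause follow directly from \Cref{claim:elimmaxvar-1} and the previous sentence.

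For the equivalence, I would chain the two displays. Conjoining $x=\xquot\cdot 2^y+\xrema$ to both sides of the first one and quantifying $\vec q$ existentially, and then pushing $\exists\vec q$ inward (legitimate because $\vec q$ occurs only inside the $\gamma_i^\circ$: the subformulas $\theta$, $x=\xquot\cdot 2^y+\xrema$, $u=2^{x-y}$ and every $\psi_i$ are $\vec q$-free), I obtain, over $\N$,
\[
  \exists\vec q\,\bigl(\theta\land\varphi\land x=\xquot\cdot 2^y+\xrema\bigr)
  \ \equiv\
  \theta\land(x=\xquot\cdot 2^y+\xrema)\land\exists u\Bigl(u=2^{x-y}\land\textstyle\bigvee_{i\in I}\bigl(\psi_i\land\exists\vec q\in\N\,\gamma_i^\circ\bigr)\Bigr).
\]
Replacing $\exists\vec q\in\N\,\gamma_i^\circ$ by $\bigvee_{j\in J_i}\gamma_{i,j}$ is sound: the ambient semantics is over $\N$, so $\xquot\ge 0$ holds, and although the $\gamma_{i,j}$ were obtained as $\Z$-formulas, $u$ is pinned here to the value $2^{x-y}$, which is nonnegative because $\theta$ forces $x\ge y$. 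Finally, distributing $\exists u$ over the disjunction, pulling the $u$-free factors $\psi_i$ out of $\exists u$, and re-indexing by $K$ turns the right-hand side into precisely the shape of formula~\eqref{eq:elimmaxvar-after-s1}; this is Specification~\ref{spec:step1}.

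The step I expect to be the main obstacle is the quantifier bookkeeping around $\vec q$: one must verify that $\vec q$ is genuinely confined to the $\gamma$-side, so that it can be commuted past $\theta$, the delayed-substitution equality, $u=2^{x-y}$, and all the $\psi_i$, and one must cleanly bridge the $\Z$-semantics in which \GaussianQE is correct with the $\N$-semantics of the statement --- which is exactly the role played by conjoining $\vec x'\ge\vec 0$ before invoking \GaussianQE, together with the fact that $u$ is instantiated to a nonnegative power of two.
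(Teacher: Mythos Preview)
Your proof is correct and follows essentially the same route as the paper: invoke \Cref{claim:elimmaxvar-1} for the \textbf{foreach} loop, then \Cref{lemma:corr-gaussianqe} for line~\ref{line:elimmaxvar-gauss-1}, and chain the two equivalences after pushing $\exists\vec q$ inward. One small imprecision: you handle the structural requirement that each $\gamma_{i,j}$ contain $\xquot\ge 0$ by ``conjoining it explicitly,'' but the algorithm does no such post-hoc conjunction; the paper instead observes that \GaussianQE leaves untouched any constraint not mentioning an eliminated variable, so the inequality $\xquot\ge 0$ (present in the input $\gamma_i^\circ\land\vec x'\ge\vec 0$) survives unchanged in every $\gamma_{i,j}$.
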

        \begin{proof}
        The input of this step corresponds to the input of~\Cref{algo:elimmaxvar}, that is a triple ${(\theta(\vec x),\varphi(\vec x, \vec x', \vec z'),[\xquot \cdot 2^y + \xrema/x])}$ satisfying the properties described in Specification~\ref{spec:step1}.

        By~\Cref{claim:elimmaxvar-1}, the \textbf{global output} of the \textbf{foreach} loop of line~\ref{line:elimmaxvar-foreach} 
        corresponding to this input is a set of pairs of 
        systems $\{(\gamma_j,\psi_j)\}_{j\in J}$
        satisfying Properties~\ref{claim:elimmaxvar-1:propA}--\ref{claim:elimmaxvar-1:propC}.
        Hence, every formula $\gamma_j$ 
        is a linear system over the variables $\vec x'$ and $u$, where $u$ is the fresh variable
        defined in line~\ref{line:elimmaxvar-u}.
        Let us fix some $j \in J$, and consider the non-deterministic branch in which the \textbf{foreach} loop produces the pair $(\gamma_j,\psi_j)$.
        In line~\ref{line:elimmaxvar-gauss-1}, 
        the algorithm calls~\Cref{algo:gaussianqe} (\GaussianQE) to remove the variables $\vec q \coloneqq \vec x' \setminus x'$ from the formula $\gamma_j \land \vec x' \geq 0$.

        By~\Cref{lemma:corr-gaussianqe}, the \textbf{global output} of~\Cref{algo:gaussianqe}  on 
        input $(\quot,\gamma_j\land \vec x'\geq 0)$
        is a set of linear systems $\{\gamma_{j,k}(u,\xquot)\}_{k \in K_j}$ such that
        $\bigvee_{k\in K_j} \gamma_{j,k}(u,\xquot)$ is equivalent to $\exists \vec q (\gamma_j \land \vec x' \geq 0)$ over~$\Z$. 
        Because of the inequalities $\vec x' \geq 0$, the quantification $\exists \vec q$ can be restricted to the non-negative integers, 
        and therefore we conclude that, over $\N$,
        \begin{equation} 
            \label{lemma:step1correct:gaussequiv}
            \exists \vec q \gamma_j
            \iff 
            \bigvee_{k\in K_j} \gamma_{j,k}(u,\xquot).
        \end{equation}
        The \textbf{global output} of Step~(\ref{item:elimmaxvar-i}) is the set:
        \[
            \bigl\{(\gamma_i,\psi_i)\,:\, i\in I\bigr\} \coloneqq \bigl\{(\gamma_{j,k},\psi_j)\,:\, j \in J, k \in K_j \bigr\}.
        \]    
        We show that this set satisfies the requirements of Specification~\ref{spec:step1}.
        \begin{itemize}
            \item Obviously, every $\gamma_i$ a linear system in variables $u$, and $x'$, as it corresponds to some $\gamma_{j,k}$. This system contains the inequality $x' \geq 0$. Indeed, this inequality is present in the system~$\gamma_j\land \vec x'\geq 0$ in input of~\Cref{algo:gaussianqe}. Looking at its pseudocode, observe that~\Cref{algo:gaussianqe} leaves untouched all inequalities that do not feature variables that are to be eliminated. 
            Hence, since $x'$ is not among the eliminated variables~$\vec q$, the output formula $\gamma_{j,k}$ contains $x' \geq 0$.  
            \item By Property~\ref{claim:elimmaxvar-1:propB} of~\Cref{claim:elimmaxvar-1}, every  $\psi_i$ is a linear-exponential system containing the inequalities $0\leq z <2^y$, for every $z$ in $\vec z'$.
            \item By Property~\ref{claim:elimmaxvar-1:propC} of~\Cref{claim:elimmaxvar-1} and the equivalence~\eqref{lemma:step1correct:gaussequiv}, 
            over $\N$ we have,
        \end{itemize}
            \begin{align*} 
                &\exists \vec q \big(\theta \land \phi \land x = x' \cdot 2^y + z'\big)\\
                \iff{}& 
                \exists \vec q \big(\theta \land \exists u \big( u = 2^{x-y} \land \bigvee_{j \in J}(\gamma_j \land \psi_j)\big) \land x = x' \cdot 2^y + z'\big)
                &\text{by Property~\ref{claim:elimmaxvar-1:propC}}\\
                \iff{}& 
                \bigvee_{j \in J}
                \big( 
                \theta \land \exists u \big( u = 2^{x-y} \land (\exists \vec q \gamma_j) \land \psi_j) \land x = x' \cdot 2^y + z'
                \big) &\text{$\vec q$ only occurs in $\gamma_j$}\\
                \iff{}&\bigvee_{i \in I}
                \big( 
                \theta \land \exists u \big( u = 2^{x-y} \land \gamma_i \land \psi_i) \land x = x' \cdot 2^y + z'
                \big) &\text{by def.~of~$I$ and~\eqref{lemma:step1correct:gaussequiv}}\\
                \iff{}&\bigvee_{i \in I}
                \big( 
                \theta \land \exists u \big( u = 2^{x-y} \land \gamma_i) \land \psi_i \land x = x' \cdot 2^y + z'
                \big) &\text{$u$ does not occur in $\psi_i$}.
                &\qedhere
            \end{align*}
    \end{proof}

\subsubsection*{Correctness of Step~(\ref{item:elimmaxvar-ii})}

\begin{lemma}\label{lemma:corr-step2}
    Step~(\ref{item:elimmaxvar-ii}) of~\Cref{algo:elimmaxvar} complies with Specification~\ref{spec:step2}.
\end{lemma}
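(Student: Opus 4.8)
\textbf{Proof plan for Lemma~\ref{lemma:corr-step2}.}

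The plan is to unpack what lines~\ref{line:elimmaxvar-2-u}--\ref{line:elimmaxvar-delayed-subs} of~\Cref{algo:elimmaxvar} do and show that the resulting triple satisfies Specification~\ref{spec:step2}. Recall that the input of this step is a branch output $(\gamma(u,\xquot),\psi(\vec x\setminus x,\vec z'))$ of Step~(\ref{item:elimmaxvar-i}), where (by Specification~\ref{spec:step1}) $\gamma$ is a linear system in $u$ and $\xquot$ with $\gamma \supseteq (\xquot \geq 0)$; moreover the semantic guarantee is that, over $\N$, $\exists u(\gamma \land u = 2^{x-y})$ correctly captures the ``left part'' of the original system modulo the ordering and the delayed substitution. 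First I would observe that line~\ref{line:elimmaxvar-2-u} merely reinterprets the placeholder $u$: after the substitution $\gamma[2^u/u]$, the program variable $\gamma$ becomes a $(u,\xquot)$-primitive linear-exponential system in the sense of the definition in~\Cref{sec:procedure} (every constraint has the shape $a\cdot 2^u + b\cdot \xquot + c \sim 0$), with $u$ now standing for $x - y$.

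Next I would invoke the assumption $u \geq \xquot \geq 0$ — which, as established in the Step~(ii) discussion of~\Cref{subsec:elimmaxvar}, holds because $\theta$ gives $2^x \geq 2^y$ and the delayed substitution $x = \xquot\cdot 2^y + z'$ together with $2^y \geq y+1$ force $x - y \geq \xquot$ over $\N$ — so that~\Cref{lemma:corr-linearize} applies to the call \textsc{SolvePrimitive}$(u,\xquot,\gamma)$ in line~\ref{line:elimmaxvar-solve-primitive}. Its \textbf{global output} is a set of pairs $(\chi_j'(u),\gamma_j(\xquot))$ with $\chi_j'$ either $(u = a)$ or $(u \geq b)\land(d\mid u-r)$ for $a,d,r\in\N$, $b \geq 3$, and $\gamma_j$ a linear system in $\xquot$ still containing $\xquot \geq 0$; and, under $u \geq \xquot \geq 0$, $\bigvee_j(\chi_j' \land \gamma_j)$ is equivalent to $\gamma$. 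Line~\ref{line:elimmaxvar-delayed-subs} then applies $[x-y/u]$ to $\chi_j'$ and immediately the delayed substitution $[\xquot\cdot 2^y + z'/x]$, turning $\chi_j'(u)$ into $\chi_j(y,\xquot,\xrema) \coloneqq \chi_j'[(\xquot\cdot 2^y + \xrema - y)/u]$, i.e.\ into $(\xquot\cdot 2^y + \xrema - y = a)$ or $(\xquot\cdot 2^y + \xrema - y \geq b)\land(d\mid \xquot\cdot 2^y + \xrema - y - r)$; this matches item~\ref{item:elimmaxvar-chi-1} of the spec (the bound $b > 2$ coming from $b \geq 3$), while item~\ref{item:elimmaxvar-chi-2} ($\gamma_j \supseteq \xquot \geq 0$) is inherited from~\Cref{lemma:corr-linearize} and item~\ref{item:elimmaxvar-chi-3} ($\psi$ untouched) is immediate since no line of Step~(ii) mentions $\psi$.

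Finally I would assemble the equivalence~\eqref{eq:elimmaxvar-psi}. Writing $\Psi(\vec x,\xquot,\xrema) = \theta(\vec x)\land(x=\xquot\cdot 2^y+\xrema)\land \exists u(\gamma(u,\xquot)\land u=2^{x-y})$, the key point is that $\theta$ and the delayed substitution together force $x - y \geq \xquot \geq 0$ whenever they hold over $\N$, so the hypothesis of the \textbf{global output} of~\SolvePrimitive is met and we may replace $\gamma$ by $\bigvee_j(\chi_j'\land\gamma_j)$ inside $\Psi$; then substituting $u = 2^{x-y}$ and $x = \xquot\cdot2^y+\xrema$ (valid over $\N$ since $\theta$ gives $x \geq y$) yields $\exists x\,\Psi \iff \bigvee_j(\exists x\,\theta(\vec x))\land \chi_j(y,\xquot,\xrema)\land\gamma_j(\xquot)$, pulling $(\exists x\,\theta)$ out because $\theta$ is the only part still mentioning $x$ once the substitutions are performed. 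I expect the main obstacle to be the bookkeeping around the two chained substitutions in line~\ref{line:elimmaxvar-delayed-subs}: one must be careful that $\chi_j'$ is a system \emph{purely in $u$} (so that the substitutions act cleanly and produce a system in $y,\xquot,\xrema$ with no residual $x$ or $u$), and that the case distinction and the constant bounds produced by~\SolvePrimitive survive the substitution in exactly the syntactic form demanded by item~\ref{item:elimmaxvar-chi-1} — this is where the interplay between the ordering-enforced inequality $x-y \geq \xquot$ and the output guarantee of~\Cref{lemma:corr-linearize} has to be applied with care.
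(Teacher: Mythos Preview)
Your proposal is correct and follows essentially the same approach as the paper: reinterpret $u$ via line~\ref{line:elimmaxvar-2-u} to obtain a $(u,\xquot)$-primitive system, establish $u\geq \xquot\geq 0$ from $\theta$ and the delayed substitution, apply~\Cref{lemma:corr-linearize}, then push through the two substitutions of line~\ref{line:elimmaxvar-delayed-subs} and read off items~\ref{item:elimmaxvar-chi-1}--\ref{item:elimmaxvar-chi-3} and the equivalence~\eqref{eq:elimmaxvar-psi}. One small caution: item~\ref{item:elimmaxvar-chi-2} (that each $\gamma_j$ still contains $\xquot\geq 0$) is \emph{not} part of the specification of~\SolvePrimitive and so is not literally ``inherited from~\Cref{lemma:corr-linearize}''; the paper establishes it by tracing the inequality $\xquot\geq 0$ through the code of~\Cref{algo:linearize} (it lands in $\psi$ at line~\ref{line:linearize-decompose} and survives into the returned $\gamma$ via lines~\ref{line:linearize-chi-equality} or~\ref{line:linearize-chi-div}), so you should do the same.
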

\begin{proof}
    To prove the lemma, we analyse Step~(\ref{item:elimmaxvar-ii}).
    This step takes as input the ordering $\theta(\vec x)$
    and the delayed substitution $[\xquot \cdot 2^y + \xrema/x]$ that are part of the input of~\Cref{algo:elimmaxvar} (where $x$ and $y$ in the delayed substitution are the largest and second-to-largest variables with respect to $\theta$), 
    together with a branch output of Step~\eqref{item:elimmaxvar-i}.
    By~\Cref{lemma:corr-step1} and according to~Specification~\ref{spec:step1}, 
    the latter is a pair $(\gamma,\psi)$ where $\gamma(u,x')$ is a linear system and $\psi(\vec x \setminus x, \vec z')$ 
    is a linear-exponential system.


    Step~(\ref{item:elimmaxvar-ii}) starts with the replacement of the auxiliary variable $u$. 
    However, in line~\ref{line:elimmaxvar-2-u} we do not only perform the replacement of $u$ with $2^{x-y}$, 
    but immediately replace $(x-y)$ with $u$. 
    That is, the formula $\exists u (\gamma(u,x') \land u=2^{x-y})$
    appearing in $\Psi$ from Specification~\ref{spec:step2} is updated into the \emph{equivalent} formula 
    $\exists u (\gamma(2^u,x') \land u = x - y)$.
    The system $\gamma(2^u,x')$ is a $(u,\xquot)$-primitive linear-exponential system. 

    After this ``change of alias'' for the variable $u$, 
    the algorithm proceeds with linearising its 
    occurrences in $\gamma(2^u,x')$.
    This is done by invoking \Cref{algo:linearize} (\textsc{SolvePrimitive})
    on input $(u,\xquot,\gamma(2^u,\xquot))$; see line~\ref{line:elimmaxvar-solve-primitive}. 
    By correctness of \Cref{algo:linearize}
    (\Cref{lemma:corr-linearize}), its \textbf{global output} is a set of pairs $\{(\hat\chi_{j}(u),\gamma_{j}(\xquot))\}_{j\in J}$, 
    where every $\hat\chi_{j}$ and $\gamma_{j}$ is a linear system, such that
    \begin{equation}\label{eq:elimmaxvar-solve-primitive}
        (u\geq \xquot\geq0)\implies\Big(\gamma(2^u,\xquot)\iff\bigvee_{j\in J}\hat\chi_{j}(u)\land\gamma_{j}(\xquot)\Big). 
    \end{equation}
    To use the double implication of~(\ref{eq:elimmaxvar-solve-primitive}), we show next that $\theta(\vec x)\land(x=\xquot \cdot 2^y+\xrema)\land(u=x-y)$
    entails $u\geq \xquot$ (recall that $\xquot$ ranges over~$\N$).
    
    When $\xquot = 0$, the inequality $u\geq \xquot$ follows from the inequality $2^x \geq 2^y$ appearing in the
    ordering $\theta(\vec x)$. If $\xquot$ is positive, then we have
    \begin{align*}  
        u=x-y&=\xquot\cdot2^y+\xrema-y
        &\text{delayed substitution}
        \\
        &\geq \xquot\cdot(y+1)+\xrema-y
        &\text{since } (2^y\geq y+1)\: \text{for every }y\in\N
        \\
        &= y\cdot(\xquot-1)+\xquot+\xrema\geq \xquot.
        &\text{since } \xquot \geq 1.
    \end{align*}
    Therefore, from the formula~\eqref{eq:elimmaxvar-solve-primitive} we obtain the equivalence 
    \begin{align} 
      \Psi(\vec x,\xquot \xrema)
      & 
      \iff \theta(\vec x)\land(x=\xquot\cdot2^y+\xrema)\land  \exists u \big(\big(\bigvee_{j\in J}\hat\chi_{j}(u)\land\gamma_{j}(\xquot)\big) \land (u={x-y})\big) 
      \label{eq:elimmaxvar-solve-primitive:eq2}  
    \end{align}

    Observe that, from the specification of \Cref{algo:linearize},
    every system $\hat\chi_{j}(u)$ has a very simple form: 
    it is either an equality $(u=a)$ or 
    a conjunction $(u\geq b) \land (d \mid u-r)$, for some $a,b,d,r\in\N$, with $b > 2$. We will use this fact twice in the remaining part of the proof.


    In line~\ref{line:elimmaxvar-delayed-subs}, the algorithm performs on $\hat\chi_{j}(u)$ the substitutions $\sub{x-y}{u}$ and $[\xquot \cdot 2^y + \xrema/x]$, in this order.
    For the moment, let us focus on the effects of the first substitution. Because of the form of $\hat\chi_{j}$, 
    the system $\hat\chi_{j}\sub{x-y}{u}$ entails $x \geq y$.
    This allows us to exclude $2^x \geq 2^y$ from the ordering $\theta(\vec x)$; or alternatively quantifying it away as $(\exists x \theta(\vec x))$. Starting from the equivalence~\eqref{eq:elimmaxvar-solve-primitive:eq2}, we thus obtain 
    \begin{align} 
      \Psi(\vec x,\xquot \xrema)
      & 
      \iff (\exists x\theta(\vec x)) \land (x=\xquot\cdot2^y+\xrema)\land \bigvee_{j\in J}\hat\chi_{j}\sub{x-y}{u}\land\gamma_{j}(\xquot), 
      \label{eq:elimmaxvar-solve-primitive:eq3}  
    \end{align}
    where we highlight the fact that $x$ still occurs free in both sides of the equivalence.

    We now consider the application of the second substitution $[\xquot \cdot 2^y + \xrema/x]$. 
    For every $j\in J$, define $\chi_j(y,x',z') \coloneqq \hat\chi_j\sub{x-y}{u}[\xquot \cdot 2^y + \xrema/x]$, i.e.,
    the system $\hat\chi_{j}(\xquot \cdot 2^y + \xrema-y)$.
    As~Step~(\ref{item:elimmaxvar-ii}) ends in line~\ref{line:elimmaxvar-delayed-subs}, 
    its~\textbf{global output} is the set of triples 
    $\{(\chi_{j}(y,\xquot,\xrema),\gamma_{j}(\xquot),\psi(\vec x\setminus x,\vec z'))\}_{j\in J}$. 
    It is easy to see that this set satisfies \Cref{item:elimmaxvar-chi-1,item:elimmaxvar-chi-2,item:elimmaxvar-chi-3} in Specification~\ref{spec:step2}: 
    \begin{itemize}
      \item \Cref{item:elimmaxvar-chi-1} follows from the form of $\hat\chi_{j}(u)$ and the substitutions applied to it.
      
      \item The first statement of~\Cref{item:elimmaxvar-chi-2} follows from the specification of \Cref{algo:linearize}. For the second statement, observe that by~\Cref{lemma:corr-step1} and according to~Specification~\ref{spec:step1}, the formula $\gamma(2^u,x')$ contains the inequality $x' \geq 0$. 
      Let us study the evolution of this inequality through~\Cref{algo:linearize}. 
      In line~\ref{line:linearize-decompose}, this inequality is part of the formula $\psi$. 
      Then, following the updates in lines~\ref{line:linearize-chi-equality} and~\ref{line:linearize-chi-div}, 
      $x' \geq 0$ appears in the formula $\gamma$ in output of~\Cref{algo:linearize}.
      Therefore, for every $j \in J$, $x' \geq 0$ appears in $\gamma_j$, as required.
      
      \item \Cref{item:elimmaxvar-chi-3} is direct from the fact that Step~(\ref{item:elimmaxvar-ii}) does not manipulate $\psi$.
    \end{itemize}
    Lastly, from the equivalence~\eqref{eq:elimmaxvar-solve-primitive:eq3} and the definition of $\chi_j$, we establish the equivalence~\eqref{eq:elimmaxvar-psi}
    in the specification:
    \begin{align*} 
      &\exists x \Psi(\vec x,\xquot \xrema)\\ 
      \iff{}& \exists x \big((\exists x\theta(\vec x)) \land (x=\xquot\cdot2^y+\xrema)\land \bigvee_{j\in J}\hat\chi_{j}\sub{x-y}{u}\land\gamma_{j}(\xquot)\big)\\
      \iff{}& \bigvee_{j\in J} (\exists x\theta(\vec x)) \land \chi_{j}(y,x',z')\land\gamma_{j}(\xquot).
      \qedhere
    \end{align*}
\end{proof}

\subsubsection*{Correctness of Step~(\ref{item:elimmaxvar-iii})}

We start by giving a high-level overview of Step~(\ref{item:elimmaxvar-iii}). As discussed in the body of the paper, this step can be seen as a simplified version of Step~(\ref{item:elimmaxvar-i}). Recall that Step~(\ref{item:elimmaxvar-i}) manipulates the linear-exponential system $\varphi(\vec x,\vec x',\vec z')$ from the 
input of~\Cref{algo:elimmaxvar}. 
Step~(\ref{item:elimmaxvar-iii}) manipulates instead the formula~$\chi(y,\xquot,\xrema)$ that is part of the output of Step~(\ref{item:elimmaxvar-ii}). 
The similarity between these two manipulations is reflected in the diagram from~\Cref{fig: flowchart}. 
The systems $\gamma$ and $\psi$ in input of Step~(\ref{item:elimmaxvar-iii}) are denoted in the diagram by 
$\gamma_3$ and $\psi_1$, respectively.
As shown in the diagram, within Step~(\ref{item:elimmaxvar-iii}) the system $\chi(y,\xquot,\xrema)$, which is 
in fact a quotient system induced by the ordering $\theta(\vec x)$, is (non-deterministically) split into 
its most significant part $\gamma_4(\xquot)$ and least significant part $\psi_2(y,\xrema)$. 
The former is conjoined with the formula $\gamma_3(\xquot)$.
Since both systems are linear systems with a single variable~$\xquot$, by calling~\Cref{algo:gaussianqe} (\GaussianQE)
we can check whether $\gamma_3(\xquot)\land\gamma_4(\xquot)$ is satisfiable over $\N$. 
If the answer is negative, the computations in this non-deterministic branch do not contribute to the global output. (We recall in passing that an empty disjunction is equivalent to the formula $\bot$. So, if the \textbf{global output} of Step~(\ref{item:elimmaxvar-iii}) is the empty set, its input corresponds to an unsatisfiable formula.)
If the answer is positive, the step returns the linear-exponential system $\psi_1(\vec x\setminus x ,\vec z')\land\psi_2(y,\xrema )$.

\begin{lemma}\label{lemma:corr-step3}
    Step~(\ref{item:elimmaxvar-iii}) of~\Cref{algo:elimmaxvar} complies with Specification~\ref{spec:step3}.
\end{lemma}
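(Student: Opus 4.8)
The plan is to transcribe the three blocks of lines~\ref{line:elimmaxvar-x1-start}--\ref{line:elimmaxvar-gauss-2} of~\Cref{algo:elimmaxvar} into an equivalence, case-splitting on the syntactic shape of the input formula~$\chi$. By the specification of Step~(\ref{item:elimmaxvar-ii}) (Specification~\ref{spec:step2}), the input triple $(\chi(y,\xquot,\xrema),\gamma(\xquot),\psi(\vec x\setminus x,\vec z'))$ is such that $\chi$ is either the equality $(-\xquot\cdot2^y-\xrema+y+c=0)$ with $c\in\N$, or the conjunction $(-\xquot\cdot2^y-\xrema+y+c\leq0)\land(d\divides \xquot\cdot2^y+\xrema-y-r)$ with $d\geq1$, $r\in\N$, and $c\geq3$; and $\gamma(\xquot)$ is a linear system in the single variable~$\xquot$ that contains $\xquot\geq0$. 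These two shapes correspond exactly to the \textbf{if}-branch (line~\ref{line:elimmaxvar-x1-start}) and the \textbf{else}-branch (line~\ref{line:elimmaxvar-x1-let}) of the pseudocode. By inspection, each non-deterministic branch returns the input system $\psi$ conjoined with a system~$\psi_k(y,\xrema)$ built in line~\ref{line:elimmaxvar-psi-equality-x1} or~\ref{line:elimmaxvar-psi-inequality-x1} (so the form requirement of Specification~\ref{spec:step3} on the \textbf{global output} holds on sight), and a branch contributes to the \textbf{global output} precisely when it does not abort at the \textbf{assert} of line~\ref{line:elimmaxvar-gauss-2}. Letting $K$ index these non-aborting branches, it remains to establish the equivalence~\eqref{eq:elimmaxvar-psi-x'} of Specification~\ref{spec:step3}: under $0\leq\xrema<2^y$ the formula $\exists\xquot(\chi\land\gamma)$ is equivalent to $\bigvee_{k\in K}\psi_k$.

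For the equality case I would first show that the guess $b\in[0,c]$ of line~\ref{line:elimmaxvar-guess-c} is exhaustive. Any $\xquot$ satisfying $\chi\land\gamma$ has $\xquot\cdot2^y=-\xrema+y+c$ with $\xquot\geq0$; since $\xrema\geq0$ this gives $\xquot\cdot2^y\leq y+c$, and were $\xquot\geq c+1$ the bound $2^y\geq y+1$ (valid for all $y\in\N$) would force $\xquot\cdot2^y\geq(c+1)(y+1)>y+c$, a contradiction. Hence $0\leq\xquot\leq c$, so $\exists\xquot(\chi\land\gamma)$ is equivalent to $\bigvee_{b=0}^{c}\bigl(\gamma\sub{b}{\xquot}\land(b\cdot2^y=-\xrema+y+c)\bigr)$. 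In the branch with guess~$b$ the algorithm sets $\gamma\gets\gamma\land(\xquot=b)$ and $\psi_b\gets(b\cdot2^y=-\xrema+y+c)$ and then calls~\GaussianQE on $(\xquot,\gamma)$; since $\gamma$ is a system in the single variable~$\xquot$, \Cref{lemma:corr-gaussianqe} shows that the disjunction of its outputs equals $\exists\xquot\,\gamma$, so the \textbf{assert} passes exactly when $\gamma\sub{b}{\xquot}$ is true. Deleting the (then false) disjuncts for which $\gamma\sub{b}{\xquot}$ fails leaves $\bigvee_{k\in K}\psi_k$, as required.

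For the inequality-plus-divisibility case I would split $\chi$ exactly as in Step~(\ref{item:elimmaxvar-i}): the inequality by Equivalence~\ref{step1:fund-equiv-2} from~\Cref{lemma:split:inequalities} (with $z=-\xquot$, $w=-\xrema+y+c$, and $[C,D]=[-1,c]$) and the divisibility by~\Cref{lemma:split:congruences} (with $z=\xquot$ and $M=d$). The hypotheses hold because $0\leq\xrema<2^y$, $c\geq3$, and $2^y\geq y+1$ together give $-2^y<-\xrema+y+c\leq c\cdot2^y$; this also shows $\ceil{(-\xrema+y+c)/2^y}\in\{0,\dots,c\}$ (the value $-1$ being impossible), matching the range of~$b$ in line~\ref{line:elimmaxvar-guess-times}, while residues modulo~$d$ lie in $[1,d]$, matching the range of~$g$. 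This would give
\[
  \exists\xquot(\chi\land\gamma)\ \iff\ \bigvee_{\substack{b\in[0,c]\\ g\in[1,d]}}\Bigl(\psi_{b,g}\ \land\ \exists\xquot\bigl(\gamma\land(\xquot\geq b)\land(d\divides\xquot-g)\bigr)\Bigr),
\]
where $\psi_{b,g}\coloneqq\bigl((b-1)\cdot2^y<-\xrema+y+c\bigr)\land\bigl(-\xrema+y+c\leq b\cdot2^y\bigr)\land\bigl(d\divides g\cdot2^y+\xrema-y-r\bigr)$ is precisely the system appended to $\psi$ in line~\ref{line:elimmaxvar-psi-inequality-x1}, and $\gamma\land(\xquot\geq b)\land(d\divides\xquot-g)$ is precisely the system passed to~\GaussianQE in line~\ref{line:elimmaxvar-gauss-2}. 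Invoking~\Cref{lemma:corr-gaussianqe} once more (and using $\xquot\geq0\in\gamma$ to read the existential over~$\N$), the \textbf{assert} passes iff that last existential holds; deleting the false disjuncts again yields $\bigvee_{k\in K}\psi_k$.

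The step I expect to be the main obstacle — though it is routine rather than deep — is the arithmetic bookkeeping that justifies the two guess ranges from the single hypothesis $0\leq\xrema<2^y$: ruling out the out-of-range ceiling value $b=-1$ in the inequality case, and pinning $\xquot\leq c$ in the equality case, both via $2^y\geq y+1$; and keeping straight, across the non-deterministic branches, which ones survive the \textbf{assert}, which is purely a matter of matching the NP-oracle semantics of \textbf{assert} against the correctness statement of~\GaussianQE. Everything else is a direct application of~\Cref{lemma:split:inequalities,lemma:split:congruences}. Together with~\Cref{lemma:corr-step1,lemma:corr-step2} and~\Cref{lemma:corr-elimmaxvar-cond}, this completes the proof that~\ElimMaxVar meets its specification.
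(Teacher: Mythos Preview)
Your proposal is correct and follows essentially the same approach as the paper: a case split on the two syntactic shapes of~$\chi$, reduction of each to a finite disjunction via~\Cref{lemma:split:inequalities,lemma:split:congruences}, and then an appeal to the correctness of~\GaussianQE to identify the surviving disjuncts with the non-aborting branches. Your handling of the guess ranges is in fact slightly more careful than the paper's: you bound $\xquot\le c$ directly in the equality case, and in the inequality case you take $[C,D]=[-1,c]$ and explicitly discard $b=-1$, whereas the paper asserts $(-\xrema+y+c)\in[0,c\cdot2^y]$ from bounds that only literally give $(-\xrema+y+c)>-2^y$ (the conclusion is still correct since this forces the ceiling to be nonnegative).
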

\begin{proof}
    This step takes as input the ordering
    $\theta(\vec x)$ that is part of the  
    input of~\Cref{algo:elimmaxvar}, together with a branch output of
    Step~(\ref{item:elimmaxvar-ii}). By~\Cref{lemma:corr-step2} and according to
    Specification~\ref{spec:step2}, the latter is a triple
    $(\chi(y,\xquot,\xrema),\gamma(\xquot),\psi(\vec x\setminus x ,\vec z'))$
    satisfying~\Cref{item:elimmaxvar-chi-1,item:elimmaxvar-chi-2,item:elimmaxvar-chi-3}
    from Specification~\ref{spec:step2}.

    To prove the lemma, consider first lines~\ref{line:elimmaxvar-x1-start}--\ref{line:elimmaxvar-psi-inequality-x1} (that is, Step~(\ref{item:elimmaxvar-iii}) without the call to~\Cref{algo:gaussianqe} performed in line~\ref{line:elimmaxvar-gauss-2}). 
    Define the \textbf{global output} of these lines as the union over all non-deterministic branches 
    of the pairs $(\gamma,\psi)$, where $\gamma$ and $\psi$ are the contents of the homonymous variables 
    when line~\ref{line:elimmaxvar-gauss-2} is reached (before this line is executed).
    Let us for the moment assume the following result:
    \begin{claim}
        \label{claim:elimmaxvar:stepiii}
        Let $(\chi(y,\xquot,\xrema),\gamma(\xquot),\psi(\vec x\setminus x ,\vec z'))$ 
        be an input of Step~(\ref{item:elimmaxvar-iii}), as described in Specification~\ref{spec:step3}. 
        Then, the \textbf{global output} of 
        lines~\ref{line:elimmaxvar-x1-start}--\ref{line:elimmaxvar-psi-inequality-x1}
        is a set of pairs $\{(\gamma_j \land \gamma ,\psi_j \land \psi)\}_{j\in J}$ such that, 
        for every $j \in J$ we have
        \begin{enumerate}[A.]
            \item $\gamma_j$ is a linear system in the single variable $\xquot$.\label{item:elimmaxvar-claim2-1} 
            \item $\psi_j$ is a linear-exponential system over the variables $(y,\xrema)$.\label{item:elimmaxvar-claim2-2}
            \item Over \N, $(0\leq \xrema<2^y)$ entails that $\chi(y,\xquot,\xrema)$ is equivalent 
            to $\bigvee_{j\in J}\gamma_j(\xquot)\land\psi_j(y,\xrema)$.\label{item:elimmaxvar-claim2-3}
        \end{enumerate}
    \end{claim}
    
    With the above claim at hand, it is simple to complete the proof. 
    Let $K \subseteq J$ be the subset of indices $j \in J$ such that $\gamma_j \land \gamma$ is satisfiable (over~$\N$). 
    Observe that, by~\Cref{item:elimmaxvar-claim2-1} and~\Cref{lemma:corr-step2}, $\gamma_j \land \gamma$ is a linear system, and moreover $\gamma(x')$ contains the inequality $x' \geq 0$ (this is ensured by~Specification~\ref{spec:step2}).
    Therefore, despite working over~$\Z$, \Cref{algo:gaussianqe} can be used to perform this satisfiability check over $\N$ (as it is done in line~\ref{line:elimmaxvar-gauss-2}). 
    
    The \textbf{global output} of Step~(\ref{item:elimmaxvar-iii}) 
    is the set $\{\psi_k(y,\xrema) \land \psi(\vec x \setminus x, \vec z') : k \in K\}$. 
    By~\Cref{item:elimmaxvar-claim2-2} and Specification~\ref{spec:step2}, this is a set of linear-exponential systems, as required by Specification~\ref{spec:step3}. 
    Lastly, formula~\eqref{eq:elimmaxvar-psi-x'} follows from the fact that $0 \leq z' < 2^y$ entails
    \begin{align*}  
        \exists \xquot (\chi(y,\xquot,\xrema) \land \gamma(x')) &\iff 
        \exists \xquot \bigvee_{j\in J}\gamma_j(\xquot)\land\psi_j(y,\xrema) \land \gamma(\xquot)
        &\text{by~\Cref{item:elimmaxvar-claim2-3}}\\
        &\iff \bigvee_{j\in J}\psi_j(y,\xrema) \land \exists \xquot \big(\gamma_j(\xquot)\land \gamma(\xquot)\big)\\
        &\iff \bigvee_{k \in K}\psi_k(y,\xrema)
        &\text{by def.~of~$K$.}
    \end{align*}
    This completes the proof~\Cref{lemma:corr-step3},
    subject to the proof of~\Cref{claim:elimmaxvar:stepiii}
    which is given below.
\end{proof}

\begin{claimproof}[Proof of~\Cref{claim:elimmaxvar:stepiii}.]
    The quotient system $\chi(y,\xquot,\xrema)$ induced by the ordering $\theta$ 
    may have one of the following two forms, which are handled in different lines of~Step~(\ref{item:elimmaxvar-iii}):
    \begin{enumerate}[(a)]
        \item lines~\ref{line:elimmaxvar-guess-c}--\ref{line:elimmaxvar-psi-equality-x1}: $(\xquot\cdot2^y+\xrema-y-c=0)$ for $c\in\N$; or\label{item:elimmaxvar-claim2-equality}
        \item lines~\ref{line:elimmaxvar-x1-let}--\ref{line:elimmaxvar-psi-inequality-x1}: $(-\xquot\cdot2^y-\xrema+y+c\leq0)\land (d\mid \xquot\cdot2^y+\xrema-y-r)$ for $c,d,r\in\N$ and $c\geq 3$.
        \label{item:elimmaxvar-claim2-divisibility}
    \end{enumerate}
    Below, we consider 
    the cases~(\ref{item:elimmaxvar-claim2-equality}) and~(\ref{item:elimmaxvar-claim2-divisibility}) separately. 
    Beforehand, we estimate the term $(-\xrema+y+c)$ assuming that the variables $\xrema$ and $y$ are such that 
    $(0\leq \xrema<2^y)$. We have:
    \begin{align*}
        -2^y < -\xrema & \leq(-\xrema+y+c) &\text{since $y,c \geq 0$}\\ 
        & \leq y + c < (1+c) \cdot 2^y.
    \end{align*}
    Therefore, under the hypothesis that $0\leq \xrema<2^y$, 
    we have $(-\xrema+y+c) \in [0 \cdot 2^y,c \cdot 2^y]$, 
    which allows us to handle cases~(\ref{item:elimmaxvar-claim2-equality}) and~(\ref{item:elimmaxvar-claim2-divisibility}) by relying on~\Cref{lemma:split:inequalities}.

    \emph{Case (\ref{item:elimmaxvar-claim2-equality}).} 
    For $b \in [0,c]$, define $\gamma_b \coloneqq (x' = b)$ and 
    $\psi_b \coloneqq (b \cdot 2^y = -z' + y + c)$.
    Following the guess done in~line~\ref{line:elimmaxvar-guess-c},
    the \textbf{global output} of lines~\ref{line:elimmaxvar-x1-start}--\ref{line:elimmaxvar-psi-inequality-x1} is the set of pairs $\{(\gamma_b \land \gamma,\psi_b \land \psi)\}_{b\in[0,c]}$.
    Then, Items~\ref{item:elimmaxvar-claim2-1} and~\ref{item:elimmaxvar-claim2-2} of the claim
    are obviously satisfied. By~\Cref{lemma:split:inequalities}, we have 
    \[ 
        (0 \leq z' < 2^y) \implies 
        \left( \chi \iff \textstyle\bigvee_{b=0}^c (\gamma_b \land \psi_b)\right),
    \]
    which shows Item~\ref{item:elimmaxvar-claim2-3}.

    \emph{Case~(\ref{item:elimmaxvar-claim2-divisibility}).} In this case $\chi$ is a conjunction of 
    an inequality $(-\xquot\cdot2^y-\xrema+y+c\leq0)$ and a divisibility constraint $(d\mid \xquot\cdot2^y+\xrema-y-r)$. 
    Given $(b,g) \in [0,c] \times [1,d]$, define 
    \begin{align*}
        \gamma_{(b,g)} &\coloneqq x' \geq b \land (d \divides x' - g)\,,\\ 
        \psi_{(b,g)} &\coloneqq ((b-1)\cdot2^y<-z'+y+c)\land(-z'+y+c\leq b\cdot2^y)\land(d\mid g\cdot2^y+z'-y-r)\,.
    \end{align*}
    Following the guess done by the algorithm in line~\ref{line:elimmaxvar-guess-times}, 
    the \textbf{global output} of lines~\ref{line:elimmaxvar-x1-start}--\ref{line:elimmaxvar-psi-inequality-x1} is the set of pairs ${\{(\gamma_K \land \gamma,\psi_K \land \psi)\}_{K \in [0,c] \times [1,d]}}$. Then, Items~\ref{item:elimmaxvar-claim2-1} and~\ref{item:elimmaxvar-claim2-2} of the claim
    are obviously satisfied. By~\Cref{lemma:split:congruences} 
    we have  
    \begin{equation*}
        (d\mid \xquot\cdot2^y+\xrema-y-r)\iff\bigvee_{g=1}^d(d\mid \xquot-g)\land(d\mid g\cdot2^y+\xrema-y-r),
    \end{equation*}
    and by~\Cref{lemma:split:inequalities}, we have 
    \begin{equation*}
        \begin{aligned}
            (0\leq \xrema<2^y)\implies
            \Big(&-\xquot\cdot2^y-\xrema+y+c\leq0\iff
            \\
            &\bigvee_{b\in [0,c]}(\xquot \geq b)\land((b-1)\cdot2^y<-\xrema+y+c\leq b\cdot2^y)\Big).
        \end{aligned}
    \end{equation*}
    Therefore, we conclude that 
    \[ 
        (0 \leq z' < 2^y) \implies 
        \left( \chi \iff \textstyle\bigvee_{K \in [0,c] \times [1,d]} (\gamma_K \land \psi_K)\right),
    \]
    which shows Item~\ref{item:elimmaxvar-claim2-3}, completing the proof of the claim.
\end{claimproof}

The correctness of~\ElimMaxVar now follows by
combining~\Cref{lemma:corr-step1,lemma:corr-step2,lemma:corr-step3} with~\Cref{lemma:corr-elimmaxvar-cond}.

\begin{lemma}\label{lemma:corr-elimmaxvar}
    \Cref{algo:elimmaxvar} (\ElimMaxVar) complies with its specification.
\end{lemma}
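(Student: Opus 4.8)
The plan is to obtain \Cref{lemma:corr-elimmaxvar} as an immediate consequence of the decomposition already set up in this subsection, with no new substantive argument. Recall that \Cref{lemma:corr-elimmaxvar-cond} is a purely structural statement: it asserts that \emph{if} Steps~(\ref{item:elimmaxvar-i}), (\ref{item:elimmaxvar-ii}), and~(\ref{item:elimmaxvar-iii}) comply with Specifications~\ref{spec:step1}, \ref{spec:step2}, and~\ref{spec:step3}, respectively, \emph{then} \ElimMaxVar complies with its own specification. Hence all that remains is to discharge the three hypotheses of \Cref{lemma:corr-elimmaxvar-cond}.

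First I would invoke \Cref{lemma:corr-step1}, which states precisely that lines~\ref{line:elimmaxvar-u}--\ref{line:elimmaxvar-gauss-1} (Step~(\ref{item:elimmaxvar-i})) comply with Specification~\ref{spec:step1}; then \Cref{lemma:corr-step2}, which gives that lines~\ref{line:elimmaxvar-2-u}--\ref{line:elimmaxvar-delayed-subs} (Step~(\ref{item:elimmaxvar-ii})) comply with Specification~\ref{spec:step2}; and finally \Cref{lemma:corr-step3}, which gives that lines~\ref{line:elimmaxvar-x1-start}--\ref{line:elimmaxvar-gauss-2} (Step~(\ref{item:elimmaxvar-iii})) comply with Specification~\ref{spec:step3}. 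With the three hypotheses in place, \Cref{lemma:corr-elimmaxvar-cond} yields the claim directly.

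The only thing to check is that the ``branch output'' of each step, as fixed in its specification, is exactly the data consumed by the next step's specification --- for instance, the pair $(\gamma,\psi)$ present at line~\ref{line:elimmaxvar-gauss-1} is the input of Step~(\ref{item:elimmaxvar-ii}), and the triple $(\chi,\gamma,\psi)$ present at line~\ref{line:elimmaxvar-delayed-subs} is the input of Step~(\ref{item:elimmaxvar-iii}); but this matching has already been performed inside the proof of \Cref{lemma:corr-elimmaxvar-cond} through the chained equivalences~\eqref{eq:elimmaxvar-after-s1-psi}, \eqref{eq:elimmaxvar-psi-i}, and~\eqref{eq:elimmaxvar-psi-x1-i}. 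Accordingly, there is no genuine obstacle at this stage: the delicate reasoning --- the ``change of alias'' between $u = 2^{x-y}$ and $u = x-y$, the verification of the precondition $x - y \geq x' \geq 0$ needed by \SolvePrimitive, and the reduction away from the memoisation map $\Delta$ (\Cref{lemma:remove-Delta}) --- has all been localised to the proofs of \Cref{lemma:corr-step1,lemma:corr-step2,lemma:corr-step3}, so the present lemma is a one-line composition.
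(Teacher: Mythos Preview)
Your proposal is correct and matches the paper's approach exactly: the paper states the lemma immediately after the sentence ``The correctness of \ElimMaxVar now follows by combining \Cref{lemma:corr-step1,lemma:corr-step2,lemma:corr-step3} with \Cref{lemma:corr-elimmaxvar-cond},'' with no further proof. Your additional remarks about the branch-output matching and the localisation of the delicate reasoning are accurate but purely expository.
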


\subsection{Correctness of Algorithm~\ref{algo:elimmaxvar} (\Master)}


\PropMasterCorrect*

\begin{proof}
  We prove that~\Cref{algo:master} (\Master) complies with its specification.

  Consider an input linear-exponential
  system $\phi(x_1,\dots,x_n)$ (with no divisibility constraints).
  Let $\Theta$ be the set of all orderings 
  $\theta(x_0,\dots,x_n) = {t_1 \geq t_2 \geq \dots
  \geq t_{n} \geq 2^{x_0} = 1}$ such that $(t_1,\dots,t_{n})$ is
  a permutation of the terms $2^{x_1},\dots,2^{x_n}$, and
  $x_0$ is a fresh variable used as a placeholder for $0$.
  The following equivalence is immediate: 
  \begin{equation*} 
    \phi \iff \bigvee_{\theta \in \Theta} \exists x_0 (\theta \land \phi).
  \end{equation*}
  Therefore, $\phi$ is satisfiable if and only if so is some $\exists x_0 (\theta \land \phi)$ with $\theta \in \Theta$.
  The algorithm starts by guessing such a $\theta$ (line~\ref{algo:master:guess-order}).
  Without loss of generality, let us
  assume for simplicity that the procedure guesses the
  ordering $\theta(x_0,\dots,x_n) = {2^{x_n} \geq 2^{x_{n-1}}
  \geq \dots \geq 2^{x_1} \geq 2^{x_0} = 1}$.  

  Throughout the proof, we write $\beta$ for a non-deterministic branch 
  of the procedure, represented as a list of line numbers of the algorithm decorated with guesses for lines featuring the non-deterministic \textbf{guess} statement.

  We remind the reader that we represent a non-deterministic branch~$\beta$ as a list of entries containing the name of the algorithm being executed, the line number being executed, and, for lines featuring the non-deterministic \textbf{guess} statement, the performed guess. We write $\beta = \beta_1\beta_2$ whenever $\beta$ can be decomposed on a prefix $\beta_1$ and suffix $\beta_2$.

  We define $B_i$ for the set of all non-deterministic branches of the procedure 
  ending with the entry $(\Master,\text{``line~\ref{algo:master:outer-loop}''})$ 
  after iterating  $i$ times the loop starting at line~\ref{algo:master:outer-loop} 
  (that is, the entry $(\Master,\text{``line~\ref{algo:master:outer-loop}''})$ appears $i+1$ times in the representation of the non-deterministic branch).
  Note that every $\beta \in B_i$ uniquely corresponds 
  to a system $\phi_\beta$ and an ordering $\theta_\beta$, 
  that are stored in the homonymous program variables. 
  Because of the above assumption on~$\theta$, 
  we can suppose $B_0$ to contain a single non-deterministic branch $\beta_0$ such that $\phi_{\beta_0} = \phi$, and $\theta_{\beta_0} = \theta$.
  
  We show that the loop of line~\ref{algo:master:outer-loop} enjoys the following loop invariant (where $i \in \N$, assuming that the loop executes at least $i$ times). 
  For every $\beta \in B_i$,  
  \begin{enumerate}
    \item\label{inv:it0-apx} the variables in $\phi_\beta$ are from either $\{x_0,\dots,x_{n-i}\}$ or from $\vec z_\beta$, where variables in $\vec z_\beta$ are different from $x_0,\dots,x_n$;
    \item\label{inv:it1-apx} all variables that occur
    exponentiated in $\phi_\beta$ are among $x_0,\dots,x_{n-i}$ (that is, the variables in $\vec z_\beta$ do not occur in exponentials);
    \item\label{inv:it3-apx} all variables $z$ belonging to~$\vec z_\beta$ 
    are such that $(z < 2^{x_{n-i}})$ is an inequality in
    $\phi_\beta$;
    \item\label{inv:it3half-apx} 
      the vector~$\vec z_\beta$ contains at most $i$ variables;
    \item\label{inv:it2-apx} $\theta_\beta$ is the ordering $
    {2^{x_{n-i}} \geq 2^{x_{n-i-1}} \geq \dots \geq 2^{x_1}
    \geq 2^{x_0} = 1}$.
    \item\label{inv:it4-apx} Moreover,  $\bigvee_{\beta \in B_i} \theta_{\beta} \land \phi_{\beta}$ is equisatisfiable with $\theta \land \phi$ over $\N$.
  \end{enumerate}
  First, observe that this invariant is trivially true for $B_0 = \{\beta_0\}$, for which $\vec z_{\beta_0}$ is the empty set of variables. 
  Second, observe that from~\Cref{inv:it2-apx}, for every $\beta_n \in B_n$, the ordering $\theta_{\beta_n}$ equals $2^{x_0} = 1$. 
  This causes the loop in line~\ref{algo:master:outer-loop}
  to exit after $n$ iterations. 
  Then, the definition of~$\theta_{\beta_n}$,
  together with~\Cref{inv:it1-apx,inv:it3-apx} of the
  invariant, force $x_0$ and all variables in $\vec z_{\beta_n}$ to be
  equal to zero, which in turn imply $\phi \land \theta$ to be
  equisatisfiable with $\phi_{\beta_n}(\vec 0)$, by~\Cref{inv:it4-apx}. 
  Therefore, in order to conclude the proof it suffices to show that the above invariant is indeed preserved at every iteration of the loop.

  Below, we assume the loop invariant to be true for some $(i-1) \in \N$, and that the loop is executed at least $i$ times.
  For $\beta \in B_{i-1}$, 
  we define $B_{i}^\beta \coloneqq \{\beta' \in B_{i} : \beta' = \beta\beta'' \text{ for some } \beta'' \}$, that is, 
  $B_{i}^{\beta}$ contains the non-deterministic branches of $B_{i}$ that are obtained by executing the body of the loop in line~\ref{algo:master:outer-loop} once, starting from $\beta$.
  We show that every $\beta' \in B_{i}^\beta$ 
  satisfies~Items~\mbox{\ref{inv:it0-apx}--\ref{inv:it2-apx}}, 
  and that $\bigvee_{\beta' \in B_{i}^\beta} \theta_{\beta'} \land \phi_{\beta'}$ is equisatisfiable with $\theta_\beta \land \phi_{\beta}$ over $\N$. 
  Observe that the latter statement implies~\Cref{inv:it4-apx}, 
  because of the identity $B_{i} = \bigcup_{\beta \in B_{i-1}} B_i^\beta$ that follows directly from the definition of $B_i$.

  Let $\beta \in B_{i-1}$.
  Note that $2^{x_{n-i+1}}$ is the leading exponential term of $\theta_\beta$ and $2^{x_{n-i}}$ is its successor (using the terminology in line~\ref{algo:master:extrY}). 
  For notational convenience, we rename the variables $x_{n-i+1}$ and $x_{n-i}$ with, respectively, $x$ and $y$ (these are the names used in the pseudocode, see lines~\ref{algo:master:extrX} and~\ref{algo:master:extrY}).
  During the $i$th iteration, the loop
  manipulates $\phi_\beta$ so that it becomes a
  quotient system induced by $\theta_\beta$, which
  is then fed to~\Cref{algo:elimmaxvar} in order to remove
  the variable~$x $. 
  By~\Cref{inv:it3-apx}, all occurrences of the modulo operator $(w \bmod 2^{x })$ can be simplified, i.e., 
  \begin{equation} 
    \label{eq:master:phi-prime}
    \phi_\beta \iff \phi_\beta'\,, \qquad\text{where } \phi_\beta' \coloneqq \phi_\beta\sub{w}{(w \bmod 2^{x })}.
  \end{equation}
  The procedure uses this equivalence in line~\ref{algo:master:large-mod-sub}. 
  Following line~\ref{algo:master:def-z}, 
  we now consider all variables~$z$ in $\phi_\beta$ such that $z$ is $x $ or $z$ does not appear in $\theta_\beta$. In other words, these are all the variables $\vec z_\beta$ from~\Cref{inv:it0-apx} of the invariant, say $z_1,\dots,z_m$, plus the variable $x $ (note: by~\Cref{inv:it3half-apx}, $\vec z_\beta$ has at most $i-1$ variables).
  For each variable $z_j$, with $j \in [1,m]$, consider two fresh variables $x_j'$ and $z_j'$. Furthermore, let $x_0'$ and $z_0'$ be two fresh variables to be used for replacing $x $. We see each variable $z_j'$, with $j \in [0,m]$, as remainders modulo $2^{y }$, whereas each $x_j'$ is a quotient of the division by $2^{y }$. 
  Let $\vec x' = (x_0',\dots,x_m')$ and $\vec z' = (z_0',\dots,z_m')$.
  The following equivalence (over $\N$) is straightforward:
  \begin{equation*} 
    \begin{aligned}
    \phi_\beta' \iff \exists \vec x' \exists \vec z' \Big(\phi_\beta' \land \bigwedge\nolimits_{j=1}^m (z_j &= x_j' \cdot 2^{y } + z_j' {} \land 0 \leq z_j' < 2^{y })\\
    {}\land {} (x  &= x_0' \cdot 2^{y } + z_0' \land 0 \leq z_0' < 2^{y }) \Big).
    \end{aligned}
  \end{equation*}
  By~\Cref{inv:it0-apx}, the variables $z_1,\dots,z_m$ only occur linearly in $\phi_\beta'$ and can thus be eliminated by substitution thanks to the equalities $z_j = x_j' \cdot 2^{y } + z_j'$ above. 
  We can also substitute the linear occurrences of $x $ in $\phi_\beta'$ with $x_0' \cdot 2^{y } + z_0'$, but since this variable may occur exponentiated we must preserve the equality 
  $x  = x_0' \cdot 2^{y } + z_0'$.
  Over $\N$, we have, 
  \begin{equation*}
    \begin{aligned}
      \exists \vec z_{\beta} \phi_\beta' \iff \exists \vec x' \exists \vec z' \Big(\phi_\beta'\sub{x_j' \cdot 2^{y } + z_j'}{z_j : j \in [1,m]}
      \sub{x_0' \cdot 2^{y } + z_0'}{x } 
      &\land {}\\
      \bigwedge\nolimits_{j=0}^m 0 \leq z_j' < 2^{y } &\land (x  = x_0' \cdot 2^{y } + z_0')\Big).
      \end{aligned}
  \end{equation*}
  Define $\phi_\beta'' \coloneqq \phi_\beta'\sub{x_j' \cdot 2^{y } + z_j'}{z_j : j \in [1,m]}
  \sub{x_0' \cdot 2^{y } + z_0'}{x }$.
  Following the ordering $\theta_\beta$,
  in $\phi_\beta''$ all occurrences of the modulo operator featuring terms $x_j' \cdot 2^{y } + z_j'$, with $j \in [0,m]$,
  can be simplified.
  In particular, $((x_j' \cdot 2^{y } + z_j') \bmod 2^{y })$ can be rewritten to $z_j'$. 
  Similarly, $((x_j' \cdot 2^{y } + z_j') \bmod 2^{w})$, where $w$ is a variable distinct form $y $ and such that $\theta_\beta$ implies $2^w \leq 2^{y }$, can be rewritten to $(z_j' \bmod 2^w)$. 
  Let $\phi_\beta'''$ be the system obtained from $\phi_\beta''$ after all these modifications.
  In the pseudocode, all these updates to $\phi_\beta'$ are done by the \textbf{foreach} loop from line~\ref{algo:master:inner-loop} of the procedure.
  Over $\N$, we have,
  \begin{equation}
    \label{eq:master:pre-elimmaxvar}
    \begin{aligned}
      &\exists \vec z_{\beta} (\theta_\beta \land \phi_\beta') \iff {}\\ 
      &\hspace{1.5cm}\exists \vec x' \exists \vec z' \Big(\theta_\beta \land (\phi_\beta''' \land \bigwedge\nolimits_{j=0}^m 0 \leq z_j' < 2^{y }) \land (x  = x_0' \cdot 2^{y } + z_0') \Big).
      \end{aligned}
  \end{equation}
  Observe that $\phi_\beta''' \land \bigwedge\nolimits_{j=0}^m 0 \leq z_j' < 2^{y }$ is a quotient system induced by $\theta_\beta$.
  In line~\ref{algo:master:call-elimmaxvar},
  the procedure calls~\Cref{algo:elimmaxvar}
  on input $\theta_\beta$, $(\phi_\beta''' \land \bigwedge\nolimits_{j=0}^m 0 \leq z_j' < 2^{y })$, and ${\sub{x_0' \cdot 2^{y } + z_0'}{x }}$.
  Over $\N$, by~\Cref{lemma:corr-elimmaxvar} and 
  following the \textbf{global output} of~\Cref{algo:elimmaxvar}, we obtain
  \begin{equation}
    \label{eq:master:post-elimmaxvar}
    \begin{aligned}
    \exists x  \exists \vec x' 
    \Big(\theta_\beta \land (\phi_\beta''' \land \bigwedge\nolimits_{j=0}^m 0 \leq z_j' < 2^{y }) \land (x  = x_0' \cdot 2^{y }+z_0')\Big) 
    &\iff {}\\
    &(\exists x  \theta_\beta) \land \bigvee\nolimits_{\beta' \in B_{i}^\beta}\psi_{\beta'},
    \end{aligned}
  \end{equation}
  where each \textbf{branch output} $\psi_{\beta'}$ 
  is a linear-exponential system in variables~$x_0,\dots,x_{n-i-1},y $ and $\vec z'$, where variables in $\vec z'$ do not occur in exponentials, 
  and $\psi_{\beta'}$ features inequalities $0 \leq z_j' < 2^{y }$ for every $j \in [0,m]$.

  By~\Cref{eq:master:phi-prime,eq:master:pre-elimmaxvar,eq:master:post-elimmaxvar}, over~$\N$ we have:
  \begin{equation}
    \label{eq:master:result}
    \exists \vec z_{\beta} \exists x (\theta_\beta \land \phi_\beta) 
    \iff 
    \exists \vec z' \left( 
      (\exists x  \theta_\beta) \land \bigvee\nolimits_{\beta' \in B_{i}^\beta}\psi_{\beta'} 
      \right).
  \end{equation}
  The algorithm then excludes $2^{x }$ from $\theta_\beta$ (line~\ref{algo:master:end-loop}). The iteration of the loop has been performed.
  
  To conclude the proof, it suffices to observe that, for every $\beta' \in B_{i}^\beta$, $\theta_{\beta'}$ and $\phi_{\beta'}$ satisfy the 
  invariant of the loop in line~\ref{algo:master:outer-loop}. 
  Recall that by $x$ and $y$ we denoted the variables $x_{n-i+1}$ and $x_{n-i}$, and that we are assuming the loop invariant to hold for $i-1$. 
  Then, with respect to $\phi_{\beta'}$, the sequence of variables $\vec z_{\beta'}$ is $\vec z'$ and
  \begin{itemize}
    \item \Cref{inv:it0-apx,inv:it1-apx,inv:it3-apx} follow directly from the definition of the formulae $\psi$ returned by~\Cref{algo:elimmaxvar}.
    \item \Cref{inv:it3half-apx} follows by definition of $\vec z'$ and from the fact that $\vec z_\beta$ has at most $i-1$ variables.
    \item \Cref{inv:it2-apx} follows directly from the definition of $\theta_{\beta}$.
    \item \Cref{inv:it4-apx} follows by the equivalence in~\eqref{eq:master:result}, which implies that 
    $\bigvee_{\beta' \in B_{i}^\beta} \theta_{\beta'} \land \phi_{\beta'}$  is equisatisfiable with 
    $\theta_\beta \land \phi_{\beta}$ over \N
    (as already discussed above, this suffices to establish~\Cref{inv:it4-apx}).
    \qedhere
  \end{itemize}
\end{proof}
\section{Proofs from Section~\ref{sec:complexity}: the complexity of Algorithm~\ref{algo:master} (\Master)}
\label{app:complexity-analysis}

To simplify the exposition, we borrow from~\cite{BenediktCM23} the use of ``parameter tables'' to describe the growth of the parameters. 
These tables have the following shape.
\begin{center}
  \renewcommand\arraystretch{1.2}
  \setlength{\tabcolsep}{3pt}
  \begin{tabular}{|g|c|c|c|c|c|c|}
      \hline
      \rowcolor{light-gray}
      & $f_1(\cdot)$ 
      & $f_2(\cdot)$
      & $\cdots$
      & $f_n(\cdot)$ \\
      \hline
      \hline
      $\phi$
      & $a_1$ 
      & $a_2$
      & $\cdots$
      & $a_n$
      \\
      \hline
      \hline
      $\psi_1$
      & $f_{1,1}(a_1,\dots,a_n)$
      & $f_{1,2}(a_1,\dots,a_n)$
      & $\cdots$
      & $f_{1,n}(a_1,\dots,a_n)$
      \\ 
      \hline
      $\cdots$ & $\cdots$ & $\cdots$ & $\cdots$ & $\cdots$\\ 
      \hline
      $\psi_m$
      & $f_{m,1}(a_1,\dots,a_n)$
      & $f_{m,2}(a_1,\dots,a_n)$
      & $\cdots$
      & $f_{m,n}(a_1,\dots,a_n)$\\ 
  \hline
  \end{tabular}
\end{center}
In this table, $\phi$,$\psi_1$,\dots,$\psi_m$ are systems,  $f_1(\cdot)$,\dots,$f_n(\cdot)$ are parameter functions 
from systems to $\N$, $a_1,\dots,a_n \in \N \setminus \{0\}$, and all $f_{j,k}$ are functions from $\N^n$ to $\N$. 
The system $\phi$ should be seen as the ``input'', whereas $\psi_1,\dots,\psi_m$ should be seen as ``outputs''.
The table states that 
\begin{center}
  if $f_i(\phi) \leq a_i$ for all $i \in [1,n]$, 
  then $f_k(\psi_j) \leq f_{j,k}(a_1,\dots,a_n)$ for all $j \in [1,m]$ and $k \in [1,n]$.
\end{center}
However, we will have \textbf{two exceptions to this semantics}:
\begin{itemize}
  \item for the parameter $\fmod(\phi)$, the table must be read as if $\fmod(\phi)$ \emph{divides} $a$ (with $a$ value in the row of $\phi$ corresponding to $\fmod(\phi)$), then $\fmod(\psi_j)$ \emph{divides} $f_j(a_1,\dots,a_n)$ for all $j \in [1,m]$.
  To repeat, the parameter table always encodes a~$\leq$ relationship between input and output, except for $\fmod(\phi)$ where this relationship concerns divisibility. 
  An example of this is given in~\Cref{lemm:complexity-linearize}, where the $\fmod(\cdot)$ column should be read as if $\fmod(\phi) \divides d$, then $\fmod(\chi) \divides \totient(d)$ and $\fmod(\gamma) \divides d$. Here $\totient$ stands for Euler's totient function, see below.

  \item we will sometimes track the parameter $\card(\lst(\cdot,\cdot))$, which takes two arguments instead of one as in the parameters $f_1,\dots,f_n$ above. In this case we will indicate what is the second argument inside the cells of the column of $\card(\lst(\cdot,\cdot))$. 
  An example of this is given in~\Cref{lemm:complexity-elimmaxvar}, where 
  $\lst(\cdot,\cdot)$ should be evaluated with respect to the ordering $\theta$ when considering $\phi$, 
  and with respect to the ordering $\theta'$ 
  when considering $\psi$.
\end{itemize}
Unfilled cells in the tables correspond to quantities that are not relevant for ultimately showing~\Cref{prop:master-in-np} (see, e.g., the cells 
for~$\linnorm{\phi}$ and~$\linnorm{\gamma}$ 
in~\Cref{lemm:complexity-linearize}).

Our analysis requires the use of Euler's totient function, which we indicate with the capital phi symbol~$\totient$ to not cause confusion with homonymous linear-exponential systems.
Given a natural number $a \geq 1$ whose prime decomposition is $\prod_{i = 1}^n p_i^{k_i}$ (here $p_1,\dots,p_n$ are distinct primes and all $k_i$ are at least $1$), 
$\totient(a) \coloneqq \prod_{i=1}^n (p_i^{k_i-1}(p_i-1))$. Remark that $\totient(1) = 1$.

\subsection{Analysis of Algorithm~\ref{algo:linearize} (\SolvePrimitive)}

We will need the following folklore result (which we prove for completeness):

\begin{lemma}
  \label{lemma:folklore-mult-ord}
  Let $d$ be an odd number. Consider $x \in \N \setminus \{0\}$ satisfying $d \divides 2^x - 1$. 
  Let $\ell \in [1,d-1]$ be the multiplicative order of $2$ modulo $d$. Then $\ell$ divides $x$.
\end{lemma}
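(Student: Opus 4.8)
\textbf{Proof plan for \Cref{lemma:folklore-mult-ord}.}
The plan is to use the division algorithm and the minimality of the multiplicative order $\ell$. First I would write $x = q \cdot \ell + s$ with $q \geq 0$ and $s \in [0, \ell - 1]$, which is always possible by Euclidean division. The goal is to show $s = 0$, which is exactly the statement that $\ell$ divides $x$.

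Next I would combine the two divisibility facts we have at hand: by hypothesis $d \divides 2^x - 1$, and by definition of the multiplicative order $d \divides 2^{\ell} - 1$, hence $d \divides 2^{q \cdot \ell} - 1$ (since $2^{q\ell} - 1 = (2^\ell - 1)(2^{(q-1)\ell} + \dots + 1)$). Then from $2^x - 1 = 2^{q\ell + s} - 1 = 2^{q\ell}(2^s - 1) + (2^{q\ell} - 1)$, I would deduce that $d$ divides $2^{q\ell}(2^s - 1)$. Since $d$ is odd, $d$ and $2^{q\ell}$ are coprime, so $d \divides 2^s - 1$. This mirrors exactly the kind of coprimality argument already used in the proof of \Cref{claim:arithmetic-progression} in the excerpt.

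Finally, I would invoke the minimality of $\ell$: $\ell$ is the \emph{smallest} positive integer such that $d \divides 2^{\ell} - 1$. Since $s \in [0, \ell-1]$ and $d \divides 2^s - 1$, if $s$ were positive it would contradict this minimality (note $2^0 - 1 = 0$ is trivially divisible by $d$, so $s = 0$ is consistent). Therefore $s = 0$, i.e., $\ell$ divides $x$, completing the proof.

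There is no real obstacle here; the only point requiring a moment's care is the coprimality step, which relies essentially on $d$ being odd, so that $\gcd(d, 2^{q\ell}) = 1$ and the factor $2^{q\ell}$ can be dropped from the divisibility. Everything else is routine manipulation of the division algorithm and the definition of multiplicative order.
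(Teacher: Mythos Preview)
Your proof is correct and follows essentially the same approach as the paper's: Euclidean division $x = q\ell + s$ followed by the observation that $d \divides 2^s - 1$ contradicts the minimality of $\ell$ unless $s = 0$. The only cosmetic difference is that the paper phrases this as a contradiction (assuming $s \in [1,\ell-1]$) and reaches $d \divides 2^s - 1$ via the congruence $2^{q\ell} \equiv 1 \pmod d$ rather than your explicit factorisation plus coprimality, but the argument is the same.
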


\begin{proof}
  Recall once more that $\ell$ is the smallest natural number in $[1,d-1]$ satisfying the divisibility constraint $d \divides 2^x - 1$. 
  \emph{Ad absurdum}, suppose that $\ell$ does not divide $x$. Then, there are $r \in [1,\ell-1]$ and $\lambda \in \N$ such that $x = \lambda \cdot \ell + r$. We have 
  \[ 
    d \divides 2^x - 1 
    \ \iff\ 
    d \divides 2^{\lambda \ell + r} - 1
    \ \iff\ 
    d \divides 2^{\lambda \ell} \cdot 2^r -1 
    \ \iff\ 
    d \divides 2^r - 1.
  \]
  Since $r \in [1,\ell-1]$, this contradicts the minimality of $\ell$. 
  Hence, $\ell$ divides $x$.
\end{proof}

\begin{lemma}
  \label{lemm:complexity-linearize}
  Consider a $(u,v)$-primitive linear-exponential system~$\phi$. On input $(u,v,\phi)$, \Cref{algo:linearize} (\SolvePrimitive) 
  returns a pair of linear-exponential systems $(\chi,\gamma)$ with bounds as shown below:
  \begin{center}
    \renewcommand\arraystretch{1.2}
    \setlength{\tabcolsep}{10pt}
    \begin{tabular}{|g|c|c|c|c|c|c|}
        \hline
        \rowcolor{light-gray}
        & $\card( \cdot )$ 
        & $\linnorm{ \cdot }$
        & $\fmod( \cdot )$ 
        & $\onenorm{ \cdot }$\\
        \hline
        \hline
        $\phi$
        & $s$ 
        &
        & $d$
        & $c$
        \\
        \hline
        \hline
        $\chi$
        & $2$
        & $1$
        & $\totient(d)$
        & $6 + 2 \cdot \log(\max(c,d))$
        \\  
        \hline
        $\gamma$
        & $s$
        & 
        & $d$
        & $2^6 \cdot \max(c,d)^3$\\ 
    \hline
    \end{tabular}
  \end{center}
  The algorithm runs in non-deterministic polynomial time in the bit size of~$\phi$.
\end{lemma}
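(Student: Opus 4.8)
\textbf{Proof plan for Lemma~\ref{lemm:complexity-linearize}.}
The plan is to track each parameter column by column, following the four cases of the algorithm's control flow: (1)~the guess $c \in [0,C-1]$ in line~\ref{line:linearize-guess-const}, and (2)~the case $c = \star$ which splits further according to whether the discrete-logarithm assertions in lines~\ref{line:assert-not-bottom}--\ref{line:linearize-assert} succeed. First I would establish the crucial size bound on $C$ itself (line~\ref{line:linearize-max-constant}): since $n \leq \lceil \log(\fmod(\phi))\rceil \leq \log d$ and each summand $3 + 2\lceil\log(\frac{|b|+|c|+1}{|a|})\rceil$ is at most $3 + 2\lceil\log(\onenorm{\phi}+1)\rceil$, we get $C \leq 3 + 2\log(\max(c,d)) + O(1)$, hence $C = O(\log(\max(c,d)))$, which is polynomial in the bit size of $\phi$. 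This bound on $C$ is what makes every subsequent estimate go through.

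Next I would handle the output systems column by column. For $\card$: in the first case $\chi = (u = c)$ has $1$ constraint, $\gamma = \phi\sub{2^c}{2^u}$ has $\card\phi = s$ constraints; in the $\star$-case $\chi = (u \geq C) \land (d' \divides u - r')$ has $2$ constraints and $\gamma = \psi\sub{2^n\cdot r}{2^u}$ has at most $\card\phi = s$ constraints (since $\psi$ is a subsystem of $\phi$). Taking the worst case gives $\card\chi \leq 2$, $\card\gamma \leq s$. For $\linnorm{\chi}$: both forms of $\chi$ only have coefficient $1$ on $u$, so $\linnorm\chi \leq 1$. For $\fmod$: in the first case $\fmod\chi$ divides $1$ (no divisibilities in $u=c$, or rather we may take it as $1$) and $\fmod\gamma = \fmod\phi$ divides $d$; in the $\star$-case $\fmod\chi = d'$, the multiplicative order of $2$ modulo $d$, and here I would invoke~\Cref{lemma:folklore-mult-ord} together with Lagrange's theorem (the multiplicative order divides $|(\Z/d\Z)^\times| = \totient(d)$) to conclude $d' \divides \totient(d)$, while $\fmod\gamma = \fmod\psi$ divides $\fmod\phi$ which divides $d$. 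For $\onenorm$: $\chi$ in the $\star$-case has constant term bounded by $\max(C, d') \leq \max(C,\totient(d)) \leq \max(C,d)$, so $\onenorm\chi \leq 1 + 1 + \max(C,d) \leq 6 + 2\log(\max(c,d))$ using the bound on $C$ (the first case $u=c$ gives an even smaller bound, $\leq 1 + C$). For $\onenorm\gamma$: substituting $2^c$ or $2^n\cdot r$ for $2^u$ multiplies the coefficient $b$ of $2^u$ by a number of absolute value at most $\max(2^C, 2^n\cdot r) \leq \max(2^C, \fmod\phi) \leq \max(2^C, d)$; since $2^C \leq 2^{6}\cdot\max(c,d)^2$ (squaring the $C = 6 + 2\log(\max(c,d))$ bound), the 1-norm of each term grows by a factor at most $2^6\max(c,d)^2$, and as the original coefficient is at most $c$, we get $\onenorm\gamma \leq c \cdot 2^6 \max(c,d)^2 \leq 2^6\max(c,d)^3$.

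Finally, for the running-time claim, I would note that the only potentially expensive operations are the two discrete-logarithm computations in lines~\ref{line:linearize-assert}--\ref{line:linearize-mult-ord}: computing the discrete logarithm $r'$ of $2^n\cdot r$ base $2$ modulo $d$, and computing the multiplicative order $d'$ of $2$ modulo $d$ (which is itself a discrete-log-type problem); both can be solved in non-deterministic polynomial time in the bit size of $d$ by~\cite{Joux14}, and $d \leq \fmod(\phi)$ has bit size polynomial in that of $\phi$. All other steps --- the factorisation $\fmod(\phi) = d\cdot 2^n$, the computation of $C$, the guesses (over ranges of polynomial bit size), and the substitutions --- are clearly polynomial-time. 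The main obstacle, though a mild one, is getting the exponent in the $\onenorm\gamma$ bound right: one must be careful that $2^C$ rather than $C$ enters the estimate (since we substitute $2^c$, not $c$), and that the stated bound $2^6\max(c,d)^3$ genuinely dominates $c\cdot 2^C$ after plugging in $C = 6 + 2\log(\max(c,d))$; a second minor subtlety is justifying $d' \divides \totient(d)$ cleanly rather than merely $d' \leq \totient(d)$, which is needed so that later iterations (in the proof of~\Cref{lemma:bounds-one-loop}) can chain the $\fmod$ bounds through $\lcm$ and $\totient$.
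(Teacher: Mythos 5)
Your proposal follows the same column-by-column strategy as the paper's proof, and almost all of the estimates match: the bound on $C$, the counts $\card\chi \le 2$ and $\card\gamma \le s$, $\linnorm{\chi} \le 1$, the chain $d' \divides \totient(\text{odd part of }\fmod(\phi)) \divides \totient(d)$ (the paper routes this through \Cref{lemma:folklore-mult-ord} and Euler's theorem rather than Lagrange, but one still needs, as you implicitly do, that $a \divides b$ implies $\totient(a) \divides \totient(b)$ to pass from the odd part of $\fmod(\phi)$ to the table's $d$), and the computation $\onenorm{\gamma} \le c \cdot \max(d, 2^C) \le 2^6\max(c,d)^3$ are all essentially identical to the paper's.

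There is, however, one step that fails as written: your estimate $\onenorm{\chi} \le 1 + 1 + \max(C,d') \le 6 + 2\log(\max(c,d))$. Since $d'$ can be as large as $\totient(d)$, the quantity $2 + \max(C,d')$ is in general \emph{not} bounded by $6 + 2\log(\max(c,d))$ (take $d$ large and $c$ small). The resolution is definitional: the paper's $1$-norm explicitly excludes integers appearing in divisibility constraints (they are accounted for separately via $\fmod$), so the constraint $d' \divides u - r'$ contributes nothing to $\onenorm{\chi}$ and only the inequality $u \ge C$ matters, giving $\onenorm{\chi} \le C + 1$, from which the stated bound follows. A second, smaller omission: for the runtime claim you treat the computation of the multiplicative order $d'$ as a plain discrete-logarithm instance, but one must also certify \emph{minimality}; the paper does this by guessing $d'$ together with its prime factorisation and checking that $2^{d'/p} \not\equiv 1 \pmod d$ for each prime $p \divides d'$, using \Cref{lemma:folklore-mult-ord}. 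Both issues are repairable without changing your overall argument.
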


\begin{proof}
  Let us start by showing the bounds in the table, column by column. 

  \begin{description}
    \item[Number of constraints.] Regarding~$\card(\cdot)$, the bound on $\card\chi$ is trivial and already given by the specification of the algorithm (proven correct in~\Cref{lemma:corr-linearize}).
    For $\card\gamma$, note that this system is defined either in line~\ref{line:linearize-chi-equality} or in line~\ref{line:linearize-chi-div}. In both cases, $\gamma$ is obtained by substitution into a subsystem of $\phi$ (in the first case it is exactly~$\phi$). 
    So, $\card\gamma \leq s$.

    \item[Linear norm.]
    We are only interested in~$\linnorm{\chi}$. Again following the specification of the algorithm, $\linnorm{\chi} = 1$.

    \item[Least common multiple of the divisibility constraints.]
    Regarding~$\fmod(\cdot)$, the case of $\gamma$ is trivial. Again this system is obtained by substituting terms $2^u$ form a subsystem of $\phi$, so $\fmod(\gamma) = \fmod(\phi)$. For $\chi$, we note that the only divisibility constraint this system can have is the divisibility $d' \divides u - r'$ from line~\ref{line:linearize-chi-case-2}. In that line, $d'$ is the multiplicative order of $2$ modulo $\frac{\fmod(\phi)}{2^n}$, where $n$ is the largest natural number dividing $\fmod(\phi)$.
    We have $\fmod(\chi) = d'$.
    By Euler's theorem, $(\frac{\fmod(\phi)}{2^n}) \divides 2^{\totient(\frac{\fmod(\phi)}{2^n})} - 1$, and therefore, by~\Cref{lemma:folklore-mult-ord},
    $d'$ divides $\totient(\frac{\fmod(\phi)}{2^n})$.
    Observe that for every $a,b \in \N \setminus \{0\}$ if $a \divides b$ then $\Phi(a) \divides \Phi(b)$.
    Then, $\totient(\frac{\fmod(\phi)}{2^n})$ divides $\totient(\fmod(\phi))$, which in turn divides $\totient(d)$. We conclude that $\fmod(\chi) \divides \totient(d)$, as required.
    
    \item[1-norm.]
    In the case of $\chi$, $\onenorm{\chi}$ is bounded by $C+1$, where $C$ is defined as in line~\ref{line:linearize-max-constant}. Then, 
    \begin{align*} 
      \onenorm{\chi} 
        &\leq \max(n, 3 + 2 \cdot \ceil{\log(\onenorm{\phi})}) + 1\\
        &\leq \max(\ceil{\log(\fmod(\phi))},  3 + 2 \cdot \ceil{\log(\onenorm{\phi})}) + 1\\
        &\leq 4 + 2 \cdot \ceil{\log(\max(\onenorm{\phi},\fmod(\phi)))}\\ 
        &\leq 6 + 2 \cdot \log(\max(c,d)).
    \end{align*}
    For $\onenorm{\gamma}$, we need to study the effects of the two substitutions $\sub{2^c}{2^u}$ and 
    $\sub{2^n \cdot r}{2^u}$ of lines~\ref{line:linearize-chi-equality} and~\ref{line:linearize-chi-div}. 
    Note that $2^n \cdot r \leq \fmod(\phi)$ and $2^c \leq 2^C$, and thus we have 
    \begin{align*} 
      \onenorm{\gamma} &\leq \onenorm{\phi} \cdot \max(\fmod(\phi),2^C)\\ 
      &\leq c \cdot \max(d, 2^{\max(n, 3 + 2 \cdot \ceil{\log(\onenorm{\phi})})})\\
      &\leq c \cdot \max(d,2^{6 + 2 \cdot \log(\max(c,d))}) 
      &\text{from the computation of $\onenorm{\chi}$}\\ 
      &\leq c \cdot \max(d, 2^6 \cdot \max(c,d)^2)
      \leq 2^6 \cdot \max(c,d)^3.
    \end{align*}
    This completes the proof of the parameter table. 
  \end{description}

  Let us now
  discuss the runtime of the procedure.
  Lines~\ref{line:linearize-decompose}--\ref{line:linearize-max-constant}
  clearly run in (deterministic) polynomial size in the bit
  size of $\phi$ (to construct the pair $(d,n)$ simply
  compute $\fmod(\gamma)$ and then iteratively divide it by
  $2$ until obtaining an odd number). To guess of $c$ in
  line~\ref{line:linearize-guess-c} it suffices to guess
  $\ceil{\log(C)} + 1$ many bits (the $+1$ is used to encode
  the case of $\star$); which can be done in
  non-deterministic polynomial time.
  Lines~\ref{line:linearize-gamma-equality},~\ref{line:linearize-chi-equality}
  and~\ref{line:assert-not-bottom} only require polynomial
  time in the bit size of $\phi$.
  To guess $r$ in line~\ref{line:linearize-guess-div} 
  it suffices guessing at most $\ceil{\fmod(\phi)}$ many bits. 
  We then arrive to lines~\ref{line:linearize-assert}--\ref{line:linearize-mult-ord}. These three lines correspond to discrete logarithm problems, which can be solved in non-deterministic polynomial time (in the bit size of $\phi$), see~\cite{Joux14}. For lines~\ref{line:linearize-assert} and~\ref{line:linearize-discrete-log}, one has to find $x \geq 0$ such that $d \divides 2^x - 2^n \cdot r$. 
  For line~\ref{line:linearize-mult-ord}, one has to find $x \in [1,d-1]$ \emph{minimal} such that $d \divides 2^{x} - 1$.
  Note that asking for the minimality here is not a problem and $x$ can be found in non-deterministic polynomial time: the non-deterministic algorithm simply guesses a number $\ell \in [1,d-1]$ and returns it together with its prime factorisation. 
  By~\Cref{lemma:folklore-mult-ord}, certifying in polynomial time that $\ell$ is the multiplicative order of $2$ modulo $d$ is trivial:
  \begin{enumerate}
    \item  check if $(2^\ell \bmod d)$ equals $1$ with a fast modular exponentiation algorithm; 
    \item check if the given prime factorisation is indeed the factorisation of $\ell$;
    \item check if dividing $\ell$ by any prime in its prime factorisation results in a number $\ell'$ such that $(2^\ell \bmod d)$ does not equal $1$.
  \end{enumerate}
  After line~\ref{line:linearize-mult-ord}, the algorithm runs in polynomial time in the bit size of $\phi$. 
  We conclude that, overall,~\Cref{algo:linearize} runs in non-deterministic polynomial time in the bit size of $\phi$.
\end{proof}

\subsection{Analysis of Algorithm~\ref{algo:elimmaxvar} (\ElimMaxVar)}

\begin{lemma}
  \label{lemm:complexity-elimmaxvar}
  Consider the execution of~\Cref{algo:elimmaxvar} (\ElimMaxVar)
  on input $(\theta,\phi,\sigma)$, 
  where  $\theta(\vec x)$ is an ordering of exponentiated variables, 
  $\phi(\vec x,\vec x', \vec z')$ is a quotient system induced by~$\theta$, with $n$ exponentiated variables~$\vec x$, $m$ quotient variables $\vec x'$, $m$ remainder variables $\vec z'$, 
  and $\sigma = \sub{x' \cdot 2^y + z'}{x}$ is a delayed substitution. 
  Let $\theta'$ be the ordering obtained from $\theta$ by removing its leading 
  term~$2^x$.
  The linear-exponential systems $\psi$ and $\gamma$ defined in the algorithm when its execution reaches line~\ref{line:elimmaxvar-gauss-2} satisfy the following bounds:
  \begin{center}
    \renewcommand\arraystretch{1.3}
    \setlength{\tabcolsep}{4.5pt}
    \begin{tabular}{|g|c|c|c|c|c|c|}
        \hline
        \rowcolor{light-gray}
        & $\card( \cdot )$ 
        & $\card\lst( \cdot , \cdot )$
        & $\linnorm{ \cdot }$
        & $\fmod( \cdot )$ 
        & $\onenorm{ \cdot }$\\
        \hline
        \hline
        $\phi$
        & $s$ 
        & $\theta \colon \ \ \ell$
        & $a$
        & $d$
        & $c$
        \\
        \hline
        \hline
        $\psi$
        & $s + 2 \cdot \ell  + 3$
        & $\theta' \colon \ \ \ell+2$
        & $a$
        & $\lcm(d,\totient(\alpha \cdot d))$
        & $\tocheck{16 \cdot (m+2)^2 \cdot (c+2)} + 4 \cdot \log(d)$
        \\  
        \hline
        $\gamma$
        & $s+m+2$
        & 
        & 
        & $\lcm(\alpha \cdot d, \totient(\alpha \cdot d))$
        & $\tocheck{(c + 3)^{14 \cdot (m+2)^2} \cdot d^3}$\\ 
    \hline
    \end{tabular}
  \end{center}
  where $\alpha \leq \tocheck{(a+2)^{(m+2)^2}}$.
  Moreover, the algorithm runs in non-deterministic polynomial time in the bit size of the input.
\end{lemma}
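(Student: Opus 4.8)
The plan is to follow the three-step decomposition of~\ElimMaxVar from~\Cref{app:procedure-correctness-elimmaxvar}---Step~\eqref{item:elimmaxvar-i} (elimination of $\vec x'\setminus x'$, lines~\ref{line:elimmaxvar-u}--\ref{line:elimmaxvar-gauss-1}), Step~\eqref{item:elimmaxvar-ii} (linearisation and elimination of $x$, lines~\ref{line:elimmaxvar-2-u}--\ref{line:elimmaxvar-delayed-subs}), Step~\eqref{item:elimmaxvar-iii} (elimination of $x'$, lines~\ref{line:elimmaxvar-x1-start}--\ref{line:elimmaxvar-gauss-2})---and to propagate the five parameters $\card(\cdot)$, $\card\lst(\cdot,\cdot)$, $\linnorm{\cdot}$, $\fmod(\cdot)$, $\onenorm{\cdot}$ through each step. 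For the \textbf{foreach}-loop bookkeeping I would reuse the structural description already proved in~\Cref{claim:elimmaxvar-3,claim:elimmaxvar-1}, and for the subroutine calls I would plug in~\Cref{lemma:complexity-gaussianqe} (for~\GaussianQE) and~\Cref{lemm:complexity-linearize} (for~\SolvePrimitive). The running-time claim then comes for free: every intermediate bound is polynomial in $c$, $d$, $m$ after taking a logarithm, hence of polynomial bit size in the input; the loop has at most $\card\phi$ iterations; the two~\GaussianQE calls and the~\SolvePrimitive call run in non-deterministic polynomial time by~\Cref{thm:gaussianQE-in-np} and~\Cref{lemm:complexity-linearize}; and all \textbf{guess} statements range over polynomially bounded sets.

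For Step~\eqref{item:elimmaxvar-i}, I would argue that each of the $\card\phi\le s$ constraints of $\phi$ contributes exactly one ``left part'' to $\gamma$ and, thanks to the memoisation via $\Delta$, at most two ``split'' inequalities to $\psi$ per distinct least-significant part, plus at most one further equality/divisibility per original constraint (which merely replaces it); with the $m$ inequalities $\vec x'\ge\vec 0$ added in~line~\ref{line:elimmaxvar-gauss-1} this yields $\card\gamma\le s+m$ and $\card\psi\le s+2\ell$ at the end of the loop, with $\linnorm{\cdot}\le a$ preserved, $\fmod(\cdot)\divides d$ (the new divisibility moduli are those of $\phi$), $\card\lst(\psi,\theta')\le\ell$ so far (the split constraints have least-significant parts $\pm\rho$ with $\rho$ a least-significant part of $\phi$), and, crucially, $\onenorm{\gamma}\le\onenorm{\phi}\le c$: indeed a constraint $a\cdot u+f(\vec x')+r\sim 0$ added to $\gamma$ has $1$-norm $|a|+\onenorm{f}+|r|\le|a|+\onenorm{f}+\onenorm{\rho}=\onenorm{\tau}$ since $|r|\le\onenorm{\rho}$. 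The call to~\GaussianQE in~line~\ref{line:elimmaxvar-gauss-1} eliminates the $m-1$ variables $\vec q=\vec x'\setminus x'$ from a linear system on the $m+1$ variables $u,x',\vec q$ whose coefficients of eliminated variables are $\le a$; by~\Cref{lemma:complexity-gaussianqe} this keeps $\card\gamma$ unchanged and gives $\fmod(\gamma)\divides\alpha\cdot d$ with $\alpha\le(a+2)^{(m+2)^2}$ (the Bareiss factor $\mu_k$ of~\GaussianQE, bounded by the Hadamard-style estimate in the proof of~\Cref{lemma:complexity-gaussianqe}) and $\onenorm{\gamma}\le(c+2)^{4(m+2)^2}\cdot d$. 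Step~\eqref{item:elimmaxvar-iii} is handled identically but is shorter: it splits the simple system $\chi$, adding $\le 2$ constraints to $\gamma$ (line~\ref{line:elimmaxvar-gamma-inequality-x1}) and $\le 3$ to $\psi$ (line~\ref{line:elimmaxvar-psi-inequality-x1}, contributing $\le 2$ new least-significant parts), and pushes the modulus of $\chi$ into both; the closing~\GaussianQE invocation of~line~\ref{line:elimmaxvar-gauss-2} is an \textbf{assert} and does not touch the tracked systems.

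Step~\eqref{item:elimmaxvar-ii} is essentially the single call to~\SolvePrimitive in~line~\ref{line:elimmaxvar-solve-primitive} (the alias swap in~line~\ref{line:elimmaxvar-2-u} and the delayed substitution in~line~\ref{line:elimmaxvar-delayed-subs} change neither moduli nor linear norms, only the shape of the constraints). Applying~\Cref{lemm:complexity-linearize} to the primitive system $\gamma(2^u,x')$---with $\onenorm{\cdot}\le(c+2)^{4(m+2)^2}d$ and $\fmod(\cdot)\divides\alpha d$ from Step~\eqref{item:elimmaxvar-i}---gives $\fmod(\chi)\divides\totient(\alpha d)$, $\fmod(\gamma)\divides\alpha d$, $\onenorm{\chi}=O\big((m+2)^2(c+2)+\log d\big)$, and $\onenorm{\gamma}\le 2^6\big((c+2)^{4(m+2)^2}d\big)^3\le(c+3)^{14(m+2)^2}d^3$. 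Propagating the moduli through Step~\eqref{item:elimmaxvar-iii}, where the divisibility modulus $d_\chi\divides\totient(\alpha d)$ of $\chi$ is reinstated into both $\gamma$ and $\psi$, one gets $\fmod(\gamma)\divides\lcm(\alpha d,\totient(\alpha d))$ and $\fmod(\psi)\divides\lcm(d,\totient(\alpha d))$; combining this with the count, $\card\lst$, linear-norm and $1$-norm increments from the three steps reproduces every cell of the table, after absorbing constants (e.g.\ $\onenorm{\psi}\le 16(m+2)^2(c+2)+4\log d$ from $\onenorm{\psi}\le 2c+1$ after the loop plus $O(\onenorm{\chi})$ from Step~\eqref{item:elimmaxvar-iii}, $\card\psi\le s+2\ell+3$, $\card\gamma\le s+m+2$, $\card\lst(\psi,\theta')\le\ell+2$). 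The running-time bound then follows from the first paragraph once these bit-size bounds are in place.

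The main obstacle is the exact tracking of $\fmod$ in interaction with Euler's totient function: one must correctly identify $\alpha$ as the Bareiss factor of the Step~\eqref{item:elimmaxvar-i} call to~\GaussianQE and bound it by $(a+2)^{(m+2)^2}$ using that exactly $m-1$ variables are eliminated from a system on $m+1$ variables, and then carefully thread the modulus through~\SolvePrimitive (which replaces $\fmod$ by a divisor of its totient) and through the divisibility constraints reinstated in Step~\eqref{item:elimmaxvar-iii}, so as to land on the sharp bounds $\lcm(d,\totient(\alpha d))$ and $\lcm(\alpha d,\totient(\alpha d))$ rather than a coarser estimate. A secondary technical point is composing the $(\cdot)^{4(n'+1)^2}$ blow-up of~\GaussianQE with the cubing in~\Cref{lemm:complexity-linearize} while keeping the exponent of $(c+3)$ at $14(m+2)^2$ and the power of $d$ at $3$, which is just bookkeeping of constants (using $a\le c$ and $m\ge 1$, so $(m+2)^2\ge 9$).
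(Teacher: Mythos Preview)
Your proposal is correct and follows essentially the same approach as the paper: a three-step analysis of \ElimMaxVar that propagates the five parameters through the \textbf{foreach} loop, then through \GaussianQE (via \Cref{lemma:complexity-gaussianqe}) and \SolvePrimitive (via \Cref{lemm:complexity-linearize}), and finally through the split of~$\chi$. The paper shifts the step boundaries slightly (its Step~(a) stops \emph{before} line~\ref{line:elimmaxvar-gauss-1}, grouping the \GaussianQE and \SolvePrimitive calls into a single Step~(b)), but this is purely presentational. One tiny slip: your claim $|r|\le\onenorm{\rho}$ forgets the increment $r\gets r+1$ in line~\ref{line:elimmaxvar-update-r}, so $\onenorm{\gamma}\le c+1$ rather than $c$ after the loop; this is absorbed by the $(c+3)$ base in the final bound and changes nothing.
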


\begin{proof}
  We remark that the formula $\psi$ in the statement of the lemma also corresponds to the formula in output 
  of~\Cref{algo:elimmaxvar}.

  Because of the length of the proof, it is useful to split~\Cref{algo:elimmaxvar} into three steps.
  \begin{alphaenumerate}
    \item\label{comp:elimmax:s1} We consider lines~\ref{line:elimmaxvar-u}--\ref{line:elimmaxvar-psi-divisibility}, and write down the parameter table with respect to the systems~$\psi$ and~$\gamma$ when the execution reaches line~\ref{line:elimmaxvar-gauss-1}, before calling~\Cref{algo:gaussianqe}.
    \item\label{comp:elimmax:s2} We consider lines~\ref{line:elimmaxvar-gauss-1}--\ref{line:elimmaxvar-delayed-subs} to discuss how $\gamma$ and the auxiliary system $\chi$ evolve after the calls to~\Cref{algo:gaussianqe} and~\Cref{algo:linearize}.
    \item\label{comp:elimmax:s3} We analyse lines~\ref{line:elimmaxvar-x1-start}--\ref{line:elimmaxvar-gauss-2}, completing the proof.
  \end{alphaenumerate}
  Observe that this division into Steps~(\ref{comp:elimmax:s1}), (\ref{comp:elimmax:s2}), and~(\ref{comp:elimmax:s3}) 
  is slightly different from the one into Steps~(\ref{item:elimmaxvar-i}), (\ref{item:elimmaxvar-ii}), and~(\ref{item:elimmaxvar-iii}), 
  which we used to prove correctness of the algorithm. 
  Now the first step does not include line~\ref{line:elimmaxvar-gauss-1}. 
  Since we have already performed the complexity analysis of~\Cref{algo:gaussianqe} and~\Cref{algo:linearize}, it is here convenient to analyse the two in a single step.

  \subparagraph*{Step~(\ref{comp:elimmax:s1}).}
  Let $\psi$ and $\gamma$ be as defined in the algorithm when its execution reaches
  line~\ref{line:elimmaxvar-gauss-1}, before~\Cref{algo:gaussianqe} is called.
  They satisfy the following bounds:
  \begin{center}
    \renewcommand\arraystretch{1.2}
    \setlength{\tabcolsep}{10pt}
    \begin{tabular}{|g|c|c|c|c|c|c|}
        \hline
        \rowcolor{light-gray}
        & $\card( \cdot )$ 
        & $\card\lst( \cdot , \cdot )$
        & $\linnorm{ \cdot }$
        & $\fmod( \cdot )$ 
        & $\onenorm{ \cdot }$\\
        \hline
        \hline
        $\phi$
        & $s$ 
        & $\theta \colon \ \ \ell$
        & $a$
        & $d$
        & $c$
        \\
        \hline
        \hline
        $\psi$
        & $s + 2 \cdot \ell$
        & $\theta' \colon \ \ \ell$
        & $a$
        & $d$
        & $2 \cdot c + 1$
        \\  
        \hline
        $\gamma\sub{2^u}{u}$
        & $s$
        & 
        & $a\,^{\star}$
        & $d$
        & $c + 1$\\ 
    \hline
    \end{tabular}
    \setlength{\tabcolsep}{2pt}
    \begin{tabular}{rp{0.7\linewidth}}
    $\star : $& for $\gamma$, we are only interested in the coefficients of variables distinct from $u$.
    Hence the ``weird'' substitution of $u$ with $2^u$.
    \end{tabular}
  \end{center}
  As remarked below the table, in fact more than being interested in $\gamma$, we are interested in $\gamma\sub{2^u}{u}$. 
  The reason for this is simple: after calling~\Cref{algo:gaussianqe} on $\gamma$, the variable $u$ 
  (which is not eliminated by this algorithm) is replaced with $2^u$ in line~\ref{line:elimmaxvar-2-u}.
  Because $u$ is a placeholder for an exponentiated, not a linearly occurring variable, the coefficient of 
  $u$ in $\gamma$ should not be taken into account as part of $\linnorm{\gamma}$. 
  As done for~\Cref{lemm:complexity-linearize}, we show that the bounds in this table are correct column by column, starting form the leftmost one.

  \begin{description}
    \item[Number of constraints.] 
      The bound $\card{\gamma} \leq s$ is simple to establish. The system $\gamma$ is initially defined as $\top$
      in line~\ref{line:elimmaxvar-initialize}. 
      Afterwards, in every iteration of the \textbf{foreach} loop of line~\ref{line:elimmaxvar-foreach}, 
      $\gamma$ is conjoined with an additional constraint, either an inequality (line~\ref{line:elimmaxvar-gamma-inequality}) or a divisibility constraint (line~\ref{line:elimmaxvar-gamma-divisibility}).
      Therefore, when the execution reaches line~\ref{line:elimmaxvar-gauss-1}, 
      the number of constraints in $\gamma$ is bounded by the number of iterations of 
      the \textbf{foreach} loop, that is, $\card{\phi}$.

      Let us move to $\card{\psi}$. Again, $\psi$ is initially defined as $\top$ in line~\ref{line:elimmaxvar-initialize}.
      Similarly to $\gamma$, the updates in line~\ref{line:elimmaxvar-strict-add-to-psi} (for strict inequalities), 
      line~\ref{line:elimmaxvar-psi-equality} (for equalities), and line~\ref{line:elimmaxvar-psi-divisibility} (for divisibilities)
      add to $\psi$ at most $\card{\phi}$ many constraints. 
      Although it is easy to see, notice that strict inequalities become non-strict in line~\ref{line:elimmaxvar-strict-ineq}; 
      hence, in this case, the \textbf{if} statement of line~\ref{line:elimmaxvar-psi-equality} is clearly not true, 
      and the formula $\psi$ is only updated in line~\ref{line:elimmaxvar-strict-add-to-psi}.
      
      We then need to account for the update done in line~\ref{line:elimmaxvar-psi-inequality}.
      This update adds two inequalities 
      whenever the guard ``$\Delta(\rho)$ is undefined'' in the \textbf{if} statement of line~\ref{line:elimmaxvar-delta-undefined}
      is true. Here $\rho$ is the least significant part of an (in)equality of $\phi$.
      Each least significant part is considered only once, because of the update done to the map~$\delta$ in line~\ref{line:elimmaxvar-delta-update}.
      Therefore, during the \textbf{foreach} loop of line~\ref{line:elimmaxvar-foreach}, 
      at most $2 \cdot \card{\lst(\phi)}$
      many inequalities can be added to $\psi$ because of line~\ref{line:elimmaxvar-psi-inequality}. 
      We conclude that $\card(\psi) \leq s + 2 \cdot \ell$.

    \item[Least significant terms.] 
      We are only interested in bounding $\card{\lst(\psi,\theta')}$. Note that, with respect 
      to the code of the algorithm, the leading exponential term in $\theta'$ is $2^y$.
      Then, let us look once more at the (in)equalities added to $\psi$ in lines~\ref{line:elimmaxvar-psi-inequality},~\ref{line:elimmaxvar-strict-add-to-psi}, 
      and~\ref{line:elimmaxvar-psi-equality}.
      They have all the form $g \cdot 2^y \pm \rho \sim 0$, with $\sim \{<,\leq,=\}$, $g \in \Z$
      and $\rho$ being the least significant part of some term in $\phi$ (with respect to $\theta$).
      By definition, $\pm \rho$ is the least significant part of $g \cdot 2^y \pm \rho \sim 0$ with respect to $\theta'$.
      We conclude that~$\lst(\psi,\theta')$ is the set containing all such $\rho$ and $-\rho$.
      By definition of $\lst$, 
      both $\rho$ and $-\rho$ occur in $\lst(\phi,\theta)$. Therefore,
      $\card{\lst(\psi,\theta')} \leq \card{\lst(\phi,\theta)} \leq \ell$.
    
    \item[Linear norm.] 
      For the aforementioned reasons, we look at $\gamma\sub{2^u}{u}$ instead of $\gamma$. 
      The linear norm of $\gamma\sub{2^u}{u}$ is solely dictated by the update \ $\gamma\gets\gamma\land(a \cdot u + f(\vec x')+r\sim0)$ \ done in line~\ref{line:elimmaxvar-gamma-inequality}. 
      Here, $f(\vec x')$ is a linear term $a_1 \cdot x_1' + \dots + a_m \cdot x_m' + a_{m+1}$
      that occur in a quotient term $a \cdot 2^x + f(\vec x') \cdot 2^y + \rho$ of $\phi$. 
      In $\phi$, the linear term $f$ is accounted for in defining the linear norm.
      We conclude that $\linnorm{\gamma\sub{2^u}{u}} \leq \linnorm{\phi} \leq a$.

      One bounds $\linnorm{\psi}$ in a similar way. 
      In the updates done to $\psi$ in lines~\ref{line:elimmaxvar-psi-inequality},~\ref{line:elimmaxvar-strict-add-to-psi}, 
      and~\ref{line:elimmaxvar-psi-equality}, the linear norm of the added (in)equalities is the linear norm of~$\rho$. 
      Here, $\rho$ is a subterm of $\phi$, and therefore we have $\linnorm{\psi} \leq \linnorm{\phi} \leq a$.

    \item[Least common multiple of the divisibility constraints.] 
      Bounding this parameter is simple. Divisibility constraints~$d \divides \tau$ are added to $\gamma$ and $\psi$ in lines~\ref{line:elimmaxvar-gamma-divisibility} and~\ref{line:elimmaxvar-psi-divisibility}. 
      The divisor $d$ is also a divisor of some divisibility constraint of $\phi$. 
      So, $\fmod(\psi) \divides \fmod(\phi)$ and $\fmod(\gamma) \divides \fmod(\phi)$.
    
    \item[$1$-norm.] 
      Regarding $\onenorm{\gamma}$, again we need to look at the updates caused by line~\ref{line:elimmaxvar-gamma-inequality}. 
      There, notice that the 1-norm of the term $a \cdot u + f(\vec x') + \rho$ is bounded by $\onenorm{\phi}$, 
      and the constant $r$ is bounded in absolute value by $\onenorm{\rho}+1$. 
      We conclude that $\onenorm{\gamma} \leq c + 1$.

      For $\onenorm{\psi}$ the analysis is similar. 
      The inequalities $g \cdot 2^y \pm \rho \sim 0$ added in lines~\ref{line:elimmaxvar-psi-inequality},~\ref{line:elimmaxvar-strict-add-to-psi}, and~\ref{line:elimmaxvar-psi-equality}
      are such that $\onenorm{\rho} \leq \onenorm{\phi}$ and $\abs{g} \leq \onenorm{\phi}+1$. 
      Then, $\onenorm{\psi} \leq 2 \cdot c + 1$.
  \end{description}

  We have now established that $\psi$ and $\gamma$ satisfy the bounds of the table for line~\ref{line:elimmaxvar-gauss-1} of the algorithm. 
  We move to the second step of the proof.

  \subparagraph*{Step~(\ref{comp:elimmax:s2}).}
    This step only involves the formula~$\gamma$, which is first manipulated by~\Cref{algo:gaussianqe} in order to remove all quotient variables in $\vec x'$ that are different from the quotient variable $x'$ appearing in the delayed substitution, 
    and then passed to~\Cref{algo:linearize}
    in order to ``linearise'' the occurrences of $u$. 

    Note that before the execution of line~\ref{line:elimmaxvar-gauss-1}, 
    $\gamma$ has $m+1$ variables (i.e., $u$ plus the $m$ quotient variables~$\vec x'$).
    Note moreover that in line~\ref{line:elimmaxvar-gauss-1} 
    we conjoin $\gamma$ with the system $\vec x' \geq \vec 0$ featuring $m$ inequalities. 
    Then, starting form the bounds on $\gamma$ obtained in the previous part of the proof, by~\Cref{lemma:complexity-gaussianqe},
    line~\ref{line:elimmaxvar-gauss-1}
    yields the following new bounds for $\gamma$: 

    \begin{center}
      \renewcommand\arraystretch{1.3}
      \setlength{\tabcolsep}{10pt}
      \begin{tabular}{|g|c|c|c|c|c|c|}
          \hline
          \rowcolor{light-gray}
          & $\card( \cdot )$ 
          & $\linnorm{\cdot}$
          & $\fmod( \cdot )$ 
          & $\onenorm{ \cdot }$\\
          \hline
          \hline
          $\phi$
          & $s$ 
          & $a$
          & $d$
          & $c$
          \\
          \hline
          \hline
          $\gamma$
          & $s+m$
          &
          & $\alpha \cdot d$
          & $\tocheck{(c + 3)^{4 \cdot (m+2)^2} \cdot d}$\\ 
      \hline
      \end{tabular}
      \setlength{\tabcolsep}{2pt}
      \begin{tabular}{rp{0.7\linewidth}}
      \end{tabular}
    \end{center}
    where $\alpha \in [1,\tocheck{(\linnorm{\phi}+2)^{(m+2)^2}}] \subseteq [1,\tocheck{(a+2)^{(m+2)^2}}]$.

    Line~\ref{line:elimmaxvar-2-u} replaces $u$ with $2^u$ and does not change the bounds above. The procedure then calls 
    \Cref{algo:linearize} on $\gamma$ (line~\ref{line:elimmaxvar-solve-primitive}). 
    The output of this algorithm updates $\gamma$ and produces the auxiliary system $\chi$. By~\Cref{lemm:complexity-linearize}, 
    these systems enjoy the following bounds:
    \begin{center}
      \renewcommand\arraystretch{1.3}
      \setlength{\tabcolsep}{10pt}
      \begin{tabular}{|g|c|c|c|c|c|c|}
          \hline
          \rowcolor{light-gray}
          & $\card( \cdot )$ 
          & $\linnorm{ \cdot }$
          & $\fmod( \cdot )$ 
          & $\onenorm{ \cdot }$\\
          \hline
          \hline
          $\phi$
          & $s$ 
          &
          & $d$
          & $c$
          \\
          \hline
          \hline
          $\chi$
          & $2$
          & $1$
          & $\totient(\alpha \cdot d)$
          & $6+2 \cdot \log(\max(\tocheck{(c + 3)^{4 \cdot (m+2)^2} \cdot d},\alpha \cdot d))$
          \\  
          \hline
          $\gamma$
          & $s+m$
          & 
          & $\alpha \cdot d$
          & $2^6 \cdot \max(\tocheck{(c + 3)^{4 \cdot (m+2)^2} \cdot d},\alpha \cdot d)^3$\\ 
      \hline
      \end{tabular}
    \end{center}
    Note that, by definition, $\linnorm{\phi} \leq \onenorm{\phi} \leq c$. 
    After~\Cref{algo:linearize}, 
    the procedure updates $\chi$ 
    to $\chi\sub{x-y}{u}\sub{x'\cdot2^y+z'}{x}$. In $\chi$, the coefficient of $u$ in inequalities is always $\pm 1$. 
    Then, this update cause the 
    $1$-norm of $\chi$ to increase by 
    $2$. This completes Step~\eqref{comp:elimmax:s2}. 

    Before moving to Step~\eqref{comp:elimmax:s3}, let us clean up a bit the bounds on $\onenorm{\chi}$ and $\onenorm{\gamma}$. We have,
    \begin{align*} 
      \max(\tocheck{(c + 3)^{4 \cdot (m+2)^2} \cdot d},\alpha \cdot d)
      & \leq \max(\tocheck{(c + 3)^{4 \cdot (m+2)^2} \cdot d},\tocheck{(c+2)^{(m+2)^2}}\cdot d)\\
      & \leq \tocheck{(c + 3)^{4 \cdot (m+2)^2} \cdot d}\,,
    \end{align*}
    and so,
    \begin{align*} 
      \onenorm{\chi} 
      &\leq 6+2 \cdot \log(\tocheck{(c + 3)^{4 \cdot (m+2)^2} \cdot d})+2\\
      &= 8 + \tocheck{8 \cdot (m+2)^2 \cdot \log(c + 3)} + 2 \cdot \log(d)\\
      &= \tocheck{8 \cdot (1 + (m+2)^2 \cdot \log(c + 3))} + 2 \cdot \log(d)\,;
    \end{align*}
    \begin{align*}
      \onenorm{\gamma}
      & \leq 2^6 \cdot (\tocheck{(c + 3)^{4 \cdot (m+2)^2} \cdot d})^3\\
      & \leq \tocheck{(c + 3)^{12 \cdot (m+2)^2+6} \cdot d^3}\\
      & \leq \tocheck{(c + 3)^{14 \cdot (m+2)^2} \cdot d^3}\,.
    \end{align*} 
    
  \subparagraph*{Step~(\ref{comp:elimmax:s3}).} 
  We now analyse lines~\ref{line:elimmaxvar-x1-start}--\ref{line:elimmaxvar-gauss-2}. 
  Note that in these lines the definition of $\chi$ is used to update $\gamma$ and $\psi$. 
  Since $\psi$ was not updated during~Step~\eqref{comp:elimmax:s2}, 
  it still has the bounds from~Step~\eqref{comp:elimmax:s1}.
  We show that, when the execution of the procedure reaches 
  line~\ref{line:elimmaxvar-gauss-2}, 
  $\psi$ and $\gamma$ satisfy the bounds in the statement of the lemma, here reproposed:
  \begin{center}
    \renewcommand\arraystretch{1.3}
    \setlength{\tabcolsep}{4.5pt}
    \begin{tabular}{|g|c|c|c|c|c|c|}
        \hline
        \rowcolor{light-gray}
        & $\card( \cdot )$ 
        & $\card\lst( \cdot , \cdot )$
        & $\linnorm{ \cdot }$
        & $\fmod( \cdot )$ 
        & $\onenorm{ \cdot }$\\
        \hline
        \hline
        $\phi$
        & $s$ 
        & $\theta \colon \ \ \ell$
        & $a$
        & $d$
        & $c$
        \\
        \hline
        \hline
        $\psi$
        & $s + 2 \cdot \ell  + 3$
        & $\theta' \colon \ \ \ell+2$
        & $a$
        & $\lcm(d,\totient(\alpha \cdot d))$
        & $\tocheck{16 \cdot (m + 2)^2 \cdot (c + 2)} + 4 \cdot \log(d)$
        \\  
        \hline
        $\gamma$
        & $s+m+2$
        & 
        & 
        & $\lcm(\alpha \cdot d, \totient(\alpha \cdot d))$
        & $\tocheck{(c + 3)^{14 \cdot (m+2)^2} \cdot d^3}$\\ 
    \hline
    \end{tabular}
  \end{center}
  where $\alpha \leq \tocheck{(a+2)^{(m+2)^2}}$.

  Observe that the procedure executes either 
  lines~\ref{line:elimmaxvar-guess-c}--\ref{line:elimmaxvar-psi-equality-x1}
  or lines~\ref{line:elimmaxvar-guess-times}--\ref{line:elimmaxvar-psi-inequality-x1}, 
  depending on whether $\chi$ is an equality.

  \begin{description}
    \item[Number of constraints.] 
      If $\chi$ is an equality, then a single constraint is added to $\psi$. 
      If $\chi$ is not an equality, 
      the procedures adds $3$ constraints. 
      Together with the bounds in the table of Step~\eqref{comp:elimmax:s1}, we conclude that $\card\psi \leq s + 2 \cdot \ell + 3$ when the procedure reaches line~\ref{line:elimmaxvar-gauss-2}.

      In the case of $\gamma$, at most $2$ more constraints are added (this corresponds to the case of $\chi$ not being an equality). Then, $\card\gamma \leq s + m + 2$.

    \item[Least significant terms.]
      We are only interested in $\lst(\psi,\theta')$. Independently on whether $\chi$ is an equality, 
      the updates done to $\psi$ 
      only add the least significant terms 
      $\pm(-z'+y+c)$. 
      Therefore, $\card{\lst(\psi,\theta')} \leq \ell + 2$.
    \item[Linear norm.]
      Again, we are only interested in $\psi$. Independently on the shape of $\chi$, in the inequalities added to $\psi$ by the procedure all linearly occurring variables ($z'$ and $y$ in the pseudocode) have $\pm 1$ as a coefficient.
      Since we are assuming $a \geq 1$, 
      from the table of Step~\eqref{comp:elimmax:s1} 
      we conclude $\linnorm{\psi} \leq a$.

    \item[Least common multiple of the divisibility constraints.]
      When $\chi$ is an equality, 
      no divisibility constraints are added to $\psi$ and $\gamma$. 
      When $\chi$ is not an equality, 
      a single divisibility constraint is added to both $\psi$ and $\gamma$, 
      with divisor $\fmod(\chi)$.

      From the tables in Step~\eqref{comp:elimmax:s1} and Step~\eqref{comp:elimmax:s2}, 
      we conclude that $\fmod(\psi)$ divides $\lcm(d,\totient(\alpha \cdot d))$, 
      and $\fmod(\gamma)$ divides $\lcm(\alpha \cdot d, \totient(\alpha \cdot d))$.
    \item[$1$-norm.]   
      The $1$-norm of the (in)equalities added to $\gamma$ 
      is bounded by $\onenorm{\chi}$. 
      Therefore, 
      \begin{align*}
        \onenorm{\gamma} &\leq \max\big(\tocheck{8 \cdot (1 + (m+2)^2 \cdot\log(c + 3))} + 2 \cdot \log(d)\,,\, 
        \tocheck{(c + 3)^{14 \cdot (m+2)^2} \cdot d^3}\big)\\
        & \leq \tocheck{(c + 3)^{14 \cdot (m+2)^2} \cdot d^3}.
      \end{align*}
      The $1$-norm of the (in)equalities added to $\psi$ is bounded by $2 \cdot \onenorm{\chi}$. Hence, 
      \begin{align*}
        \onenorm{\psi} &\leq \max\big(\tocheck{16 \cdot (1 + (m+2)^2 \cdot \log(c + 3))} + 4 \cdot \log(d)\,,\,2 \cdot c + 1\big)\\
        & \leq \tocheck{16 \cdot (m+2)^2 \cdot (c + 2)} + 4 \cdot \log(d).
      \end{align*}

      This completes the proof of the parameter table in the statement of the lemma.
  \end{description}

  Let us now discuss the runtime of the procedure. First,
  recall that the bit size of a system
  $\phi'(x_1,\dots,x_k)$ belongs to $O(\card\phi \cdot k^2
  \cdot \log(\onenorm{\phi}) \cdot \log(\fmod(\phi)))$. This
  means that all the formula we have analysed above have a
  bit size polynomial in the bit size of $\phi$.
  It is then quite simple to see that~\Cref{algo:elimmaxvar} runs in non-deterministic polynomial time:
  \begin{itemize}
    \item the \textbf{foreach} loop in line~\ref{line:elimmaxvar-foreach} simply iterates over all constraints of $\phi$, and thus runs in polynomial time in the bit size of $\phi$. 
    \item For the two guesses in lines~\ref{line:elimmaxvar-guess-rho} and~\ref{line:elimmaxvar-guess-mod} requires guessing at most $\ceil{\log(\onenorm{\phi})}+1$ and $\ceil{\log(\fmod(\phi))}$ many bits, respectively. This number is bounded by the bit size of $\phi$.
    \item From the tables above, the inputs of~\Cref{algo:gaussianqe} and~\Cref{algo:linearize} in lines~\ref{line:elimmaxvar-gauss-1} and~\ref{line:elimmaxvar-solve-primitive} 
    have a polynomial bit size with respect to the bit size of $\phi$. 
    Therefore, these two algorithms run in non-deterministic polynomial time in the bit size of $\phi$, by~\Cref{thm:gaussianQE-in-np} and~\Cref{lemm:complexity-linearize}.
    \item The call to~\Cref{algo:gaussianqe} done in line~\ref{line:elimmaxvar-gauss-2} takes again as input a formula with bit size polynomial in the bit size of $\phi$. 
    So, once more, this step runs in non-deterministic polynomial time.
    \item Every other line in the code runs in polynomial time in the bit size of $\phi$.
    \qedhere
  \end{itemize}
\end{proof}

\subsection{Analysis of Algorithm~\ref{algo:master} (\Master)}

We first compute the bounds for a single iteration of the \textbf{while} loop of line~\ref{algo:master:outer-loop}. 

\LemmaBoundsOneLoop*

\begin{proof}
  We prove this result by first analysing lines~\ref{algo:master:extrX}--\ref{algo:master:end-inner-loop} and then appealing to~\Cref{lemm:complexity-elimmaxvar}.

  Observe that, from the proof 
  of~\Cref{prop:master-correct}, the variables in $\phi_i$ are from either $\{x_0,\dots,x_{n-i}\}$, with $x_0$ being the fresh variable introduced in 
  line~\ref{algo:master:addxzero}, 
  or from a vector $\vec z_i$ of at most $i$ variables (see~\Cref{inv:it0-apx}  and~\Cref{inv:it3half-apx} of the invariant in the proof of~\Cref{prop:master-correct}).
  This already implies that $\phi_{i}$ 
  has at most $n+1$ variables, for every $i \in [0,n]$.

  Let $i \in [0,n-1]$.
  Starting from the system~$\phi_i$, we consider the execution of the 
  \textbf{while} loop of line~\ref{algo:master:outer-loop},
  until reaching the call to~\Cref{algo:elimmaxvar} in line~\ref{algo:master:call-elimmaxvar} (and before executing such a call). 
  Let $\phi_i'$ be the quotient system induced by~$\theta_i$ 
  passed in input to~\Cref{algo:elimmaxvar}
  at that point of the execution of the program. We show the following parameter table:

  \begin{center}
    \renewcommand\arraystretch{1.2}
    \setlength{\tabcolsep}{8pt}
    \begin{tabular}{|g|c|c|c|c|c|c|}
        \hline
        \rowcolor{light-gray}
        & $\card( \cdot )$ 
        & $\card\lst( \cdot , \cdot )$
        & $\linnorm{ \cdot }$
        & $\fmod( \cdot )$ 
        & $\onenorm{ \cdot }$\\
        \hline
        \hline
        $\phi_i$
        & $s$ 
        & $\theta \colon \ \ \ell$
        & $a$
        & $d$
        & $c$
        \\
        \hline
        \hline
        $\phi_i'$
        & $s + 2 \cdot (i+1)$
        & $\theta \colon \ \ \ell+ 2 \cdot (i+1)$
        & $3 \cdot a$
        & $d$
        & $2 \cdot (c+1)$
        \\ 
    \hline
    \end{tabular}
  \end{center}

  \begin{description}
    \item[Number of constraints.] 
      Constraints are only added in 
      line~\ref{algo:master:imposezlessy}, 
      as part of the \textbf{foreach} loop in line~\ref{algo:master:inner-loop}.
      Each iteration of this loop adds $2$ constraints, and the number of iterations depends on the cardinality of $\vec z$. 
      As explained in the proof 
      of~\Cref{prop:master-correct}, 
      $\vec z$ has at most $i+1$ variables,  
      and therefore $\card\phi_i' \leq \card\phi_i + 2 \cdot (i+1)$.
    \item[Least significant terms.] 
      Note that while $\psi_i$ is a linear-system, $\psi_i'$ is a quotient system induced by $\theta$, so in the latter any least significant part of a term is of the form ${b \cdot y + \rho(\vec x \setminus \{x,y\}, \vec z)}$, where $\vec x \setminus \{x,y\}$ is the vector obtained from $\vec x$ by removing $x$ and $y$.
      It is simple to see that the substitutions performed in lines~\ref{algo:master:large-mod-sub}, \ref{algo:master:elimmod} and~\ref{algo:master:linear-substitution} do not change the cardinality of $\lst$. 
      Indeed, if two inequalities had the same least significant part before one of these substitutions, they will have the same least significant part also after the substitution.
      Least significant parts are however introduced in line~\ref{algo:master:imposezlessy}. Indeed, 
      the least significant part of $0 \leq z'$ is $-z'$ and the least significant part of $z' < 2^y$ is $z'$. 
      Since the \textbf{foreach} loop in line~\ref{algo:master:inner-loop} executes $i+1$ times, 
      we conclude that $\card{\lst(\phi_i',\theta_i)} \leq \card{\lst(\phi_i,\theta_i)} + 2 \cdot (i+1)$.
    \item[Linear norm.] 
      We only consider the lines of the loop that might influence the linear norm. 
      Line~\ref{algo:master:large-mod-sub} 
      performs on $\phi_i$ the substitution $\sub{w}{(w \bmod 2^x) : w \text{ is a variable}}$. 
      This can (at most) double the linear norm, as terms of the form $a_1 \cdot w + a_2 \cdot (w \bmod 2^x)$ will 
      be rewritten into $(a_1 + a_2)\cdot w$. We observe however that the maximum absolute value of a coefficient of some modulo operator $(w' \bmod 2^{w''})$ is still bounded by $\linnorm{\phi_i}$ after this update. The terms in the inequalities of line~\ref{algo:master:imposezlessy} have a linear norm of $1$. Since we are assuming $a \geq 1$, this does not influence our bound for the linear norm of $\phi_i'$. 
      Line~\ref{algo:master:elimmod} does not influence the linear norm, as at this stage point in the execution of the procedure 
      $z'$ only occur in the inequalities added to the system in line~\ref{algo:master:imposezlessy}.
      Line~\ref{algo:master:simpmod} does not increase the linear norm, since before the substitution performed 
      in this line the variable $z'$ does not occur inside the modulo operator. 
      Line~\ref{algo:master:linear-substitution} does increase the linear norm: the substitution $\sub{(x'\cdot 2^y + z')}{z}$ 
      causes $\linnorm{\phi_i'}$ to be bounded by $3 \cdot a$ (after all iterations of the loop of line~\ref{algo:master:inner-loop}). 
      This is because, before performing the substitution, 
      the coefficients of $z$ in inequalities of $\phi_i'$ are bounded by $2 \cdot a$ (this stems from the update performed in line~\ref{algo:master:large-mod-sub}), 
      and the coefficient of $z'$ is bounded by~$a$ (this stems from the substitution in line~\ref{algo:master:elimmod}). 
    \item[Least common multiple of the divisibility constraints.] No line increases $\fmod(\cdot)$.
    \item[$1$-norm.] Recall that we are assuming $c \geq 1$. Observe that lines~\ref{algo:master:large-mod-sub}, \ref{algo:master:elimmod} and~\ref{algo:master:simpmod} 
    do not modify the $1$-norm. The inequalities added in line~\ref{algo:master:imposezlessy} have a $1$-norm of $2$, and are thus bounded by $c+1$. 
    The only other line that affects the $1$-norm is therefore line~\ref{algo:master:linear-substitution}, which doubles it.
  \end{description}
  We have now established that $\phi_i'$ satisfies the bounds on the parameter table. 
  Observe also that $\phi_i'$ has at most $(i+1)$ quotient variables and $(i+1)$ remainder variables (since~$\vec z$ has at most $(i+1)$ many variables). 
  Then, the bounds on $\phi_{i+1}$ provided in the statement of the lemma follow directly from~\Cref{lemm:complexity-elimmaxvar}.

  We conclude the proof of the lemma by discussing the runtime of the body of the \textbf{while} loop.
  Observe that lines~\ref{algo:master:extrX}--\ref{algo:master:end-inner-loop} run in deterministic polynomial time. Moreover, from the parameter table above, $\phi_i'$ is of bit size polynomial in the bit size of $\phi_i$. Therefore, when accounting for the execution of~\Cref{algo:elimmaxvar}, we conclude
  by~\Cref{lemm:complexity-elimmaxvar} 
  that the body of the \textbf{while} loop runs in non-deterministic polynomial time in the bit size of $\phi_i$.
\end{proof}

Before iterating the bounds of~\Cref{lemma:bounds-one-loop}, thus completing the analysis of the complexity of~\Cref{algo:master}, we prove~\Cref{lemma:totient-bound}.
The proof of this lemma requires a few auxiliary definitions that we now introduce. 

Let $a \in \N \setminus \{0\}$ and let $\prod_{i=1}^n p_i^{k_i}$ be its prime factorisation, 
where $p_1,\dots,p_n$ are distinct primes.
We recall that the \emph{radical} of $a$ is defined as $\rad(a) \coloneqq \prod_{i=1}^n p_i$. 
We introduce the notion of \emph{shifted radical} of $a$, defined as $\srad(a) \coloneqq \prod_{i=1}^n(p_i-1)$. Note that the map $x \mapsto \srad(x)$ over $\N$ corresponds to the sequence \href{https://oeis.org/A173557}{\underline{A173557}}
in Sloane's \textit{On-Line Encyclopedia of Integer Sequences} (OEIS).

By definition of Euler's totient function, we have $\totient(a) = \frac{a}{\rad(a)}\srad(a)$. Let us state a few properties of the shifted radical that follow directly from its definition. For every $a,b \in \N \setminus \{0\}$:

\begin{enumerate}[A.]
  \item\label{srad:p1} $\srad(a) \leq a$; 
  \item\label{srad:p2} if $a \mid b$ then $\srad(a) \mid \srad(b)$;
  \item\label{srad:p3} $\srad(a^k \cdot b) = \srad(a \cdot b)$;
  \item\label{srad:p4} $\srad(a \cdot b) \mid \srad(a) \cdot \srad(b)$. More interestingly, note that $\srad(a \cdot b) = \srad(\lcm(a,b)) = \frac{\srad(a) \cdot \srad(b)}{\srad(\gcd(a,b))}$.
\end{enumerate}
We also need the notion of the \emph{iterated shifted radical} function~$\israd(x,k)$:
\begin{align*}
  \israd(x,0) &\coloneqq 1\\
  \israd(x,k+1) &\coloneqq \srad(x \cdot \israd(x,k)).
\end{align*}
Regarding this function, we need the following two properties (\Cref{israd:p1,israd:p2}).

\begin{lemma}
  \label{israd:p1}
  For all $n \in \N \setminus \{0\}$ and $k \in \N$, 
  $\israd(n,k)$ divides $\israd(n,k+1)$. 
\end{lemma}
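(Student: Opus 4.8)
The statement to prove is \Cref{israd:p1}: for all $n \geq 1$ and $k \in \N$, $\israd(n,k) \divides \israd(n,k+1)$. The plan is to proceed by induction on $k$, using the definition $\israd(n,0) = 1$, $\israd(n,k+1) = \srad(n \cdot \israd(n,k))$, together with the divisibility-monotonicity property~\ref{srad:p2} of the shifted radical: if $a \divides b$ then $\srad(a) \divides \srad(b)$.

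For the base case $k = 0$, we must show $\israd(n,0) \divides \israd(n,1)$, i.e.\ $1 \divides \srad(n \cdot 1) = \srad(n)$, which is trivial since $1$ divides everything.

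For the inductive step, assume $\israd(n,k) \divides \israd(n,k+1)$ for some $k \geq 0$; we want $\israd(n,k+1) \divides \israd(n,k+2)$. Unfolding the definition, this is $\srad(n \cdot \israd(n,k)) \divides \srad(n \cdot \israd(n,k+1))$. From the induction hypothesis $\israd(n,k) \divides \israd(n,k+1)$ we get $n \cdot \israd(n,k) \divides n \cdot \israd(n,k+1)$ (multiplying both sides by $n$ preserves divisibility). Applying property~\ref{srad:p2} to this divisibility yields exactly $\srad(n \cdot \israd(n,k)) \divides \srad(n \cdot \israd(n,k+1))$, i.e.\ $\israd(n,k+1) \divides \israd(n,k+2)$, completing the induction.

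I do not expect any real obstacle here: the lemma is essentially immediate from the monotonicity of $\srad$ with respect to divisibility (property~\ref{srad:p2}) and a one-line induction. The only thing to be slightly careful about is keeping the roles of $k$ and $k+1$ straight in the inductive step and making sure the multiplication-by-$n$ step is stated explicitly so that property~\ref{srad:p2} can be invoked cleanly.
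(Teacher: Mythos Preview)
Your proposal is correct and follows essentially the same argument as the paper: induction on $k$, with the inductive step using the induction hypothesis to obtain $n \cdot \israd(n,k) \divides n \cdot \israd(n,k+1)$ and then applying property~\ref{srad:p2} to conclude. The only cosmetic difference is that the paper shifts the index (proving $\israd(n,k) \divides \israd(n,k+1)$ from the hypothesis at $k-1$), but this is immaterial.
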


\begin{proof}
  By induction on $k$.
  \begin{description}
    \item[base case $k = 0$:] $\israd(n,0) = 1$ divides $\srad(n) = \israd(n,1)$.
    \item[induction step $k \geq 1$:] 
      \begin{align*}
      \israd(n,k) &= \srad(n \cdot \israd(n, k-1))\\ 
      &\ \mid\ \srad(n \cdot \israd(n,k)) 
        &\text{by I.H.~and $a \mid b$ implies $\srad(a) \mid \srad(b)$}\\
      & = \israd(n,k+1).
      &&\qedhere
      \end{align*}
  \end{description}
\end{proof}

\begin{lemma}
  \label{israd:p2}
  For all $n \in \N \setminus \{0\}$ and $k \in \N$, 
  $\israd(n,k) \leq n^k$. 
\end{lemma}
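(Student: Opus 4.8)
\textbf{Proof plan for Lemma~\ref{israd:p2}.}

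The plan is to prove the bound $\israd(n,k) \leq n^k$ by induction on $k$, mirroring the structure already used for~\Cref{israd:p1}. The base case $k = 0$ is immediate: by definition $\israd(n,0) = 1 = n^0$. For the induction step, assume $\israd(n,k) \leq n^k$ and consider $\israd(n,k+1) = \srad\bigl(n \cdot \israd(n,k)\bigr)$. The key tool is Property~\ref{srad:p1}, namely $\srad(a) \leq a$ for every $a \in \N \setminus \{0\}$. Applying it with $a = n \cdot \israd(n,k)$ gives
\[
  \israd(n,k+1) = \srad\bigl(n \cdot \israd(n,k)\bigr) \leq n \cdot \israd(n,k) \leq n \cdot n^k = n^{k+1},
\]
where the last inequality uses the induction hypothesis. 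This closes the induction and establishes the claim.

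There is essentially no obstacle here: the argument is a one-line application of the monotonicity-style bound $\srad(a) \leq a$ together with the induction hypothesis. The only point requiring a moment of care is to ensure that $n \cdot \israd(n,k) \neq 0$ so that Property~\ref{srad:p1} applies; this holds because $n \geq 1$ and $\israd(n,k) \geq 1$ (the latter being clear from the definition, since $\israd(n,0) = 1$ and $\srad$ of a positive integer is a product of factors $p_i - 1 \geq 1$, hence positive, for $n \geq 2$; and for $n = 1$ the iterated shifted radical is constantly $1$). With this observation in place the induction goes through cleanly.

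(For later use in the complexity analysis, I note that combining this lemma with~\Cref{israd:p1} and the identity $\totient(a) = \frac{a}{\rad(a)}\srad(a)$ is what will ultimately yield~\Cref{lemma:totient-bound}: iterating the recurrence $b_{i+1} = \lcm(b_i, \totient(\alpha \cdot b_i))$ and using Property~\ref{srad:p4} to control how the shifted radical interacts with least common multiples, one relates $b_i$ to $\alpha \cdot \israd(\alpha, i)$ up to bounded factors, and then $\israd(\alpha, i) \leq \alpha^i$ feeds into the exponent bound $\alpha^{2 i^2}$. That argument, however, belongs to the proof of~\Cref{lemma:totient-bound} rather than to the present lemma.)
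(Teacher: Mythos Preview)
Your proof is correct and follows the same inductive skeleton as the paper, but your induction step is actually more direct than the paper's. The paper bounds $\srad(n \cdot \israd(n,k-1))$ by first applying Property~\ref{srad:p4} to split it as $\srad(n)\cdot\srad(\israd(n,k-1))$ and then applying Property~\ref{srad:p1} to each factor; you instead apply Property~\ref{srad:p1} once to the product $n \cdot \israd(n,k)$ and are done. Your version avoids the unnecessary detour through the submultiplicativity of $\srad$, so it is a small but genuine simplification. The digression about positivity of $\israd(n,k)$ and the forward pointer to~\Cref{lemma:totient-bound} are fine as commentary but not needed for the argument itself.
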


\begin{proof}
  By induction on $k$. 
  \begin{description}
    \item[base case $k = 0$:] $\israd(n,0) = 1 = n^0$. 
    \item[induction step $k \geq 1$:] 
    \begin{align*}
      \israd(n,k) &= \srad(n \cdot \israd(n, k-1))\\ 
      &\leq \srad(n) \cdot \srad(\israd(n,k-1)) 
        &\text{by $\srad(a \cdot b) \leq \srad(a) \cdot \srad(b)$}\\
      &\leq n \cdot \israd(n, k-1) \leq n^k
        &\text{by $\srad(a) \leq a$ and I.H.}
      &\qedhere
    \end{align*}
  \end{description}
\end{proof}

The proof of~\Cref{lemma:totient-bound} uses an auxiliary recurrence defined in the next lemma.
\begin{lemma}
  \label{israd:p3}
  Let $\alpha\geq 1$ be in $\N$. Let the integer sequence $d_1,d_2,...$ be defined by the recurrence 
  $d_1\coloneqq \alpha \cdot \srad(\alpha)$ and $d_{i+1}\coloneqq \alpha \cdot d_i \cdot \srad(d_i)$ for $i \geq 1$.
  For all $i\in \N \setminus \{0\}$, $d_i=\alpha^i\cdot\prod_{k=1}^i\israd(\alpha,k)$.   
\end{lemma}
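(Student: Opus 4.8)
The plan is to prove \Cref{israd:p3} by a straightforward induction on $i$, unwinding the definitions of the recurrence $d_i$ and of the iterated shifted radical $\israd(\alpha,k)$, and crucially exploiting property~\ref{srad:p3} of the shifted radical, namely $\srad(a^k \cdot b) = \srad(a \cdot b)$, which is what lets us absorb the repeated factors of $\alpha$ that accumulate in $d_i$ without changing the shifted-radical computations.

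First I would establish the base case $i = 1$: by definition $d_1 = \alpha \cdot \srad(\alpha)$, while the claimed formula gives $\alpha^1 \cdot \prod_{k=1}^1 \israd(\alpha,k) = \alpha \cdot \israd(\alpha,1) = \alpha \cdot \srad(\alpha \cdot \israd(\alpha,0)) = \alpha \cdot \srad(\alpha)$, since $\israd(\alpha,0) = 1$. So the two agree.

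For the inductive step, assume $d_i = \alpha^i \cdot \prod_{k=1}^i \israd(\alpha,k)$ for some $i \geq 1$. Then
\begin{align*}
  d_{i+1} &= \alpha \cdot d_i \cdot \srad(d_i)\\
          &= \alpha \cdot \alpha^i \cdot \Big(\prod_{k=1}^i \israd(\alpha,k)\Big) \cdot \srad\Big(\alpha^i \cdot \prod_{k=1}^i \israd(\alpha,k)\Big)\\
          &= \alpha^{i+1} \cdot \Big(\prod_{k=1}^i \israd(\alpha,k)\Big) \cdot \srad\Big(\alpha \cdot \prod_{k=1}^i \israd(\alpha,k)\Big),
\end{align*}
where the last equality uses property~\ref{srad:p3} to replace $\alpha^i$ by $\alpha$ inside $\srad$. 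Now I need to recognise the last $\srad$ factor as $\israd(\alpha, i+1)$. By~\Cref{israd:p1} the chain $\israd(\alpha,1) \mid \israd(\alpha,2) \mid \dots \mid \israd(\alpha,i)$ holds, so $\prod_{k=1}^i \israd(\alpha,k)$ and $\israd(\alpha,i)$ have the same radical (the same set of prime divisors), hence by property~\ref{srad:p3} again (or more directly because $\srad$ depends only on the set of primes, up to the structure in~\ref{srad:p3}/\ref{srad:p4}) we get $\srad(\alpha \cdot \prod_{k=1}^i \israd(\alpha,k)) = \srad(\alpha \cdot \israd(\alpha,i)) = \israd(\alpha,i+1)$ by definition of $\israd$. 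Plugging this in yields $d_{i+1} = \alpha^{i+1} \cdot \prod_{k=1}^{i+1} \israd(\alpha,k)$, completing the induction.

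The main obstacle I anticipate is the justification of the step $\srad\big(\alpha \cdot \prod_{k=1}^i \israd(\alpha,k)\big) = \srad\big(\alpha \cdot \israd(\alpha,i)\big)$: it requires being careful that the shifted radical really depends only on the set of prime divisors of its argument, and that the divisibility chain from~\Cref{israd:p1} guarantees $\prod_{k=1}^i \israd(\alpha,k)$ has exactly the same prime divisors as $\israd(\alpha,i)$ (the smaller terms contribute no new primes). If one prefers to avoid a "depends only on primes" argument, one can instead push the computation through properties~\ref{srad:p3} and~\ref{srad:p4} iteratively: since each $\israd(\alpha,k)$ divides $\israd(\alpha,i)$, write $\prod_{k=1}^i \israd(\alpha,k) = \israd(\alpha,i)^{\,?} \cdot (\text{something dividing a power of }\israd(\alpha,i))$ and collapse powers via~\ref{srad:p3}. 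Either way this is the one place where the argument is not purely mechanical; everything else is bookkeeping.
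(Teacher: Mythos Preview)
Your proposal is correct and follows essentially the same route as the paper's proof: induction on $i$, using property~\ref{srad:p3} to collapse $\alpha^i$ to $\alpha$ inside $\srad$, and then invoking \Cref{israd:p1} to replace $\prod_{k=1}^i \israd(\alpha,k)$ by $\israd(\alpha,i)$ inside $\srad$ so that the definition of $\israd(\alpha,i+1)$ applies. The step you flag as the ``main obstacle'' is exactly the one the paper justifies by citing \Cref{israd:p1}; your observation that $\srad$ depends only on the set of prime divisors (an immediate consequence of property~\ref{srad:p3}) is the right way to make that citation precise.
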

\begin{proof}
  By induction on $i \geq 1$. 
  \begin{description}
    \item[base case $i = 1$.] $d_1 = \alpha\cdot\srad(\alpha) = \alpha^1\cdot\israd(\alpha , 1)$. 
    \item[induction step.] 
      Assume that $d_{i-1} = \alpha^{i-1}\cdot\prod_{k=1}^{i-1}\israd(\alpha, k)$ for some $i > 1$. 
      Then we have:
      \begin{align*} 
        d_{i} 
          & = \alpha\cdot d_{i-1} \cdot \srad(d_{i-1}) = \alpha^i \cdot \prod_{k=1}^{i-1}\israd(\alpha,k)\cdot\srad(\alpha^{i-1}\cdot\prod_{k=1}^{i-1}\israd(\alpha,k))
          & \text{by I.H.}\\
          & = \alpha^i \cdot \prod_{k=1}^{i-1}\israd(\alpha,k)\cdot \srad(\alpha\cdot\prod_{k=1}^{i-1}\israd(\alpha,k))
          & \text{by $\srad(a^k\cdot b)=\srad(a\cdot b)$}\\
          & = \alpha^i \cdot \prod_{k=1}^{i-1}\israd(\alpha,k)\cdot \srad(\alpha\cdot\israd(\alpha,i-1))
          & \text{by \Cref{israd:p1}} \\ 
          & = \alpha^i \cdot \prod_{k=1}^{i-1}\israd(\alpha,k)\cdot \israd(\alpha,i)
          & \text{by def. of $\israd(x,n)$} \\
          & = \alpha^i \cdot \prod_{k=1}^{i}\israd(\alpha,k).
          &\qedhere
      \end{align*}
  \end{description}
\end{proof}

We are now ready to prove~\Cref{lemma:totient-bound}.

\LemmaTotientBound*

\begin{proof}
  We prove that for every $i \in \N \setminus \{0\}$, $b_i$ divides $d_i$ (defined in~\Cref{israd:p3}).
  
  The proof is by induction on $i \geq 1$.

  \begin{description}
    \item[base case $i = 1$.] We have $b_1 = \lcm(b_0,\totient(\alpha \cdot b_0)) = \lcm(1,\totient(\alpha \cdot 1)) = \totient(\alpha) 
    = \frac{\alpha}{\rad(\alpha)} \cdot \srad(\alpha)$. Therefore, $b_1$ divides $\alpha \cdot \srad(\alpha)=d_1$, proving the base case. 
    \item[induction step.] 
      Assume that $b_i$ divides $d_i$. 
      A well-known property of Euler's totient function applied to the product $\alpha \cdot d_i$
      gives us the equation
      \begin{equation}
        \label{eq:totiemt:lcm-gcd}
        \totient(\alpha \cdot d_i) = \gcd(\alpha, d_i) \cdot \totient(\lcm(\alpha, d_i)) = \alpha \cdot \totient(d_i),
      \end{equation}
      where the second equality follows from the fact that $\alpha$ is a divisor of $d_i$.
      Now we show that $b_{i+1}$ divides $d_{i+1}$. 
    \end{description}
      \begin{align*}
        b_{i+1}
        &=
        \lcm(b_i,\, \totient(\alpha \cdot b_i))\\
        &\ \mid \ \lcm(d_i,\, \totient(\alpha \cdot d_i))
        &\text{by I.H.~and $a \mid b$ implies $\totient(a) \mid \totient(b)$}\\
        & = \lcm(d_i,\, \alpha \cdot \totient(d_i))
        &\text{by \Cref{eq:totiemt:lcm-gcd}}\\
        &\ \mid \ \alpha \cdot\lcm(d_i,\, \totient(d_i))\\
        & = \alpha \cdot \lcm(d_i,\, \frac{d_i}{\rad(d_i)}\cdot \srad(d_i))
        &\text{def.~of~$\totient$}\\
        &\ \mid \ \alpha \cdot d_i\cdot \srad(d_i) = d_{i+1}.
      \end{align*}

  Therefore, we conclude that for every $i \geq 1$, $b_i$ is bounded as: 
  \begin{align*}
    b_i \leq d_i &= \alpha^i\cdot\prod_{k=1}^i\israd(\alpha,k)
    &\text{by~\Cref{israd:p3}}\\
    &\leq \alpha^i\cdot \israd(a,i)^i\leq \alpha^i \cdot \alpha^{i^2} \leq \alpha^{2 \cdot i^2}.
    &\text{by~\Cref{israd:p1,israd:p2}}
    &\qedhere
  \end{align*}
\end{proof}

We are now ready to complete the complexity analysis of~\Cref{algo:master}.

\PropMasterInNp* 

\begin{proof}
  Consider the execution of~\Cref{algo:master}
  on an input~$\phi(x_1,\dots,x_n)$, with $n \geq 1$. For $i \in [0,n]$, let $(\phi_i,\theta_i)$ be the pair of system and ordering
  obtained after the $i$th iteration of the \textbf{while} loop of~line~\ref{algo:master:outer-loop}, 
  where $\phi_0 = \phi$ and $\theta_0$ is the ordering guessed in line~\ref{algo:master:guess-order}. 

  By virtue of~\Cref{lemma:bounds-one-loop}, 
  in order to prove the proposition it suffices to show that, for every $i \in [0,n]$, the formula $\phi_i$ is of bit size polynomial in the bit size of the input formula $\phi$. 
  Indeed, recall that the \textbf{while} loop of~line~\ref{algo:master:outer-loop}
  is executed $n$ times, 
  and during its $i$th iteration, the loop body runs in non-deterministic polynomial time in the bit size of $\phi_i$.

  To prove that the bit size of $\phi_i$ 
  is polynomial in the bit size of $\phi$, it suffices to iterate $i$ times the bounds stated in~\Cref{lemma:bounds-one-loop}. That is, we establish the following parameter table:
  \begin{center}
    \renewcommand\arraystretch{1.3}
    \setlength{\tabcolsep}{3pt}
    \begin{tabular}{|g|c|c|c|c|c|c|}
        \hline
        \rowcolor{light-gray}
        & $\card\lst( \cdot , \cdot )$
        & $\card( \cdot )$ 
        & $\linnorm{ \cdot }$
        & $\fmod( \cdot )$ 
        & $\onenorm{ \cdot }$\\
        \hline
        \hline
        $\phi$
        & $\theta \colon \ \ell$
        & $s$ 
        & $a$
        & $1$
        & $c$
        \\
        \hline
        \hline
        $\phi_i$
        & $\theta_i \colon \ \ell + 5 \cdot i^2$
        & $s + 16 \cdot i \cdot (i+2)^2 + 2 \cdot i \cdot \ell$
        & $3^i a$
        & $\beta_i$
        & $\tocheck{(i+3)^{8 \cdot i} (\log(a+1) + c)}$
        \\ 
    \hline
    \end{tabular}
  \end{center}
  where $\beta_i \in [1,\tocheck{(3^i \cdot a + 2)^{18 \cdot i^5}}]$.

  Observe that the parameter table
  states that $\card{\phi_i}$, $\log(\onenorm{\phi_i})$ and $\log(\fmod(\phi_i))$ are bounded polynomially in the bit size of $\phi$.
  This does indeed imply that the bit size of $\phi_i$ is polynomial in the bit size of $\phi$, since the former is in 
  $O(\card\phi_i \cdot n^2 \cdot \log(\onenorm{\phi_i}) \cdot \log(\fmod(\phi_i)))$.

  To show that the parameter table above is correct,
  we first prove the $\linnorm{\cdot}$ column
  by induction on $i$. 
  Then, we switch to $\fmod(\cdot)$, which is proved by using~\Cref{lemma:totient-bound}. 
  Lastly, we tackle the cases of the remaining parameters by induction on $i$.

  \begin{description}
    \item[Linear norm.] 
      The base case is simple: $\linnorm{\phi_0} = \linnorm{\phi} \leq a$. 
      In the induction step, assume $\linnorm{\phi_i} \leq 3^i a$ by induction hypothesis ($i \geq 0$).
      By~\Cref{lemma:bounds-one-loop}, 
      $\linnorm{\phi_{i+1}} \leq 3^{i+1} a$.  
    \item[Least common multiple of the divisibility constraints.] 
      First, let us bound the value of $\alpha_i$ 
      in the statement of~\Cref{lemma:bounds-one-loop}. In that lemma, 
      $\alpha_i$ is stated to be in $[1,\tocheck{(3 \cdot a' + 2)^{(i+3)^2}}]$ where $a'$ is an upper bound to $\linnorm{\phi_i}$. 
      Hence, from the previous step of the proof, 
      we have $\alpha_i \in [1,\tocheck{(3^{i+1} a + 2)^{(i+3)^2}}]$. Now, following again the statement of~\Cref{lemma:bounds-one-loop}, we conclude that there is a sequence of integers $b_0,b_1,\dots$ such that 
      \begin{align*} 
        \fmod(\phi_0) &\divides b_0 &b_0 &= 1\\
        \fmod(\phi_{i+1}) &\divides b_{i+1} &b_{i+1} &= \lcm(b_i,\totient(\alpha_i \cdot b_i)).
      \end{align*}
      We almost have the sequence in~\Cref{lemma:totient-bound}, the only difference being that the value $\alpha_i$ changes for each $b_i$.
      Let $\overline{\alpha}_0 \coloneqq 1$ and for every positive 
      $j \in \N$, let $\overline{\alpha}_j \coloneqq \lcm(\alpha_0,\dots,\alpha_{j-1})$. For every $i \in \N$, we consider the sequence of $i+1$ terms $c_0,\dots,c_{i}$ given by 
      \[
        c_0 \coloneqq 1, \qquad c_{j+1} \coloneqq \lcm(c_j,\totient(\overline{\alpha}_{i}\cdot c_j)) \ \ \text{for } j \in [0,i-1].
      \]
      By~\Cref{lemma:totient-bound}, we conclude that $c_i \leq (\overline{\alpha}_{i})^{2 \cdot i^2}$. 

      Let us show that, for every $j \in [0,i]$, 
      $b_j \divides c_j$.
      This is done with a simple induction hypothesis. The base case for $b_0$ and $c_0$ is trivial. For the induction step, pick $j \in [0,i-1]$. We have:
      \begin{align*}
        b_{j+1} &= \lcm(b_j,\totient(\alpha_j \cdot b_j))\\ 
        &\ \divides\ \lcm(c_j,\totient(\alpha_j \cdot c_j))
        &\text{by~I.H.~and from ``$a \divides b$ implies $\totient(a) \divides \totient(b)$''}\\
        &\ \divides \lcm(c_j, \totient(\overline{\alpha}_{i} \cdot c_j))
        &\text{$\alpha_{j}$ divides $\overline{\alpha}_{i}$, and again ``$a \divides b$ implies $\totient(a) \divides \totient(b)$''}\\ 
        & = c_{j+1}.
      \end{align*}
      Then, from the definition of $\overline{\alpha}_i$ and the bound $c_i \leq (\overline{\alpha}_{i})^{2 \cdot i^2}$, 
      we conclude that $\fmod(\phi_i)$ divides some $\beta_i \in [1,\tocheck{(3^i \cdot a + 2)^{18 \cdot i^5}}]$.
  \end{description}

  We now show by induction on $i$ the bounds on the remaining three parameters in the table.
  The base case $i = 0$ is trivial, as $\phi_0 = \phi$. For the induction step, assume that the bounds in the table are correct for $\phi_i$. 
  We show that then $\phi_{i+1}$ follows its respective table.
  \begin{description}
    \item[Least significant terms.] By induction hypothesis, $\card{\lst(\phi_i,\theta_i)} \leq \ell + 5 \cdot i^2$. 
    We apply the bounds in~\Cref{lemma:bounds-one-loop}, and obtain  
      \begin{align*} 
        \card{\lst(\phi_{i+1},\theta_{i+1})} 
        & \leq (\ell + 5 \cdot i^2) + 2\cdot (i+2)\\ 
        & \leq \ell + 5 \cdot i^2 + 2\cdot i + 4\\ 
        & \leq \ell + 5 \cdot (i^2 + 2\cdot i + 1)\\
        & \leq \ell + 5 \cdot (i+1)^2.
      \end{align*}
    \item[Number of constraints.] 
        By induction hypothesis $\card\phi_i \leq s + 16 \cdot i \cdot (i+2)^2 + 2 \cdot i \cdot \ell$ and, as in the previous case, $\card{\lst(\phi_i,\theta_i)} \leq \ell + 5 \cdot i^2$.
        We apply the bounds in~\Cref{lemma:bounds-one-loop}, and obtain  
        \begin{align*} 
          \card\phi_{i+1}
          & \leq (s + 16 \cdot i \cdot (i+2)^2 + 2 \cdot i \cdot \ell) + 6(i+2) + 2 \cdot (\ell + 5 \cdot i^2)\\
          &    = s + (16 \cdot i \cdot (i+2)^2 + 6(i+2) + 10 \cdot i^2) + 2 \cdot (i+1) \cdot \ell\\ 
          & \leq s + (16 \cdot i \cdot (i+2)^2 + 16(i+2)^2) + 2 \cdot (i+1) \cdot \ell\\
          & \leq s + 16 \cdot (i+1) \cdot (i+3)^2 + 2 \cdot (i+1) \cdot \ell.  
        \end{align*}
    \item[$1$-norm.] 
    By applying the induction hypothesis, we have $\onenorm{\phi_i} \leq \tocheck{(i+3)^{8 \cdot i} (\log(a+1) + c)}$ and ${\fmod(\phi_i) \leq \tocheck{(3^i \cdot a + 2)^{18 \cdot i^5}}}$. Then, following~\Cref{lemma:bounds-one-loop}, 
    \begin{align*}
      \onenorm{\phi_{i+1}}
      \leq{}& 
      \tocheck{2^5(i+3)^2\big((i+3)^{8 \cdot i} (\log(a+1) + c)+2\big)} + 4 \cdot \log(\tocheck{(3^i \cdot a + 2)^{18 \cdot i^5}})\\ 
      \leq{}& \tocheck{
        2^5(i+3)^{8i+2} (\log(a+1) + c)
      } \, + \, \tocheck{2^6(i+3)^2} \, + \, \tocheck{4 \cdot 18 \cdot 3 \cdot i^6 \log(a+1)}\\
      &\hspace{4.4cm}\text{note: $\log(3a+2) \leq 3\log(a+1)$ for $a \geq 1$}\\
      \leq{}&  
        \tocheck{
        3 \cdot 2^5(i+3)^{8i+2} (\log(a+1) + c)
        }\\
      &\hspace{3.8cm} \text{last two summands are smaller than the first}\\
      \leq{}& \tocheck{
        (i+3)^{8(i+1)} (\log(a+1) + c)
      }.
      &&\qedhere
    \end{align*} 
  \end{description}
\end{proof}

\end{document}